\DeclareMathAlphabet\mathcal{OMS}{cmsy}{m}{n}
\SetMathAlphabet\mathcal{bold}{OMS}{cmsy}{b}{n}
\def\ve#1{\mathchoice{\mbox{\boldmath$\displaystyle\bf#1$}}
	{\mbox{\boldmath$\textstyle\bf#1$}}
	{\mbox{\boldmath$\scriptstyle\bf#1$}}
	{\mbox{\boldmath$\scriptscriptstyle\bf#1$}}}
\let\bfseries=\undefined
\DeclareRobustCommand\bfseries
\def\Orthant_j{{\mathcal O}_{j}}
\newcommand\vev{{\ve v}}
\newcommand{\OO}{{\mathcal{O}}}
\newenvironment{psmallmatrix}{\left(\smallmatrix}{\endsmallmatrix\right)}
\newcommand\FourBlockBig[5][\relax]{\begin{pmatrix}#2& #3\\#4&#5 \end{pmatrix}\ifx#1\relax\else^{(#1)}\fi}
\newcommand\FourBlock[5][\relax]{\begin{psmallmatrix}#2& #3\\#4&#5 \end{psmallmatrix}\ifx#1\relax\else{^{(#1)}}\fi}
\newcommand\TwoBlock[3][\relax]{\begin{psmallmatrix}#2\\#3 \end{psmallmatrix}\ifx#1\relax\else{^{(#1)}}\fi}
\newtheorem{theorem}{Theorem}
\newtheorem{claim}{Claim}
\newtheorem{corollary}{Corollary}
\newtheorem{lemma}{Lemma}
\newtheorem{definition}{Definition}
\newtheorem{observation}{Observation}
\newtheorem*{rep@theorem}{\rep@title}
\newcommand{\newreptheorem}[2]{%
	\newenvironment{rep#1}[1]{%
		\def\rep@title{#2 \ref{##1}}%
		\begin{rep@theorem}}%
		{\end{rep@theorem}}}
\title{Improved Approximation Schemes for (Un-)Bounded Subset-Sum and Partition}
\author{Xiaoyu Wu\thanks{Zhejiang University.
		\texttt{xiaoyu\_wu@zju.edu.cn}.}
	\ \ \
	Lin Chen\thanks{Texas Tech University.
		\texttt{chenlin198662@gmail.com}. }
}
\begin{document}
	\maketitle
	
	\thispagestyle{empty}
	
	\begin{abstract}
		We consider the SUBSET SUM problem and its important variants in this paper. In the SUBSET SUM problem, a (multi-)set $X$ of $n$ positive numbers and a target number $t$ are given, and the task is to find a subset of $X$ with the maximal sum that does not exceed $t$. It is well known that this problem is NP-hard and admits fully polynomial-time approximation schemes (FPTASs). In recent years, it has been shown that there does not exist an FPTAS of running time $\tilde\OO( 1/\epsilon^{2-\delta})$ for arbitrary small $\delta>0$ assuming ($\min$,+)-convolution conjecture~\cite{bringmann2021fine}. However, the lower bound can be bypassed if we relax the constraint such that the task is to find a subset of $X$ that can slightly exceed the threshold $t$ by $\epsilon$ times, and the sum of numbers within the subset is at least $1-\tilde\OO(\epsilon)$ times the optimal objective value that respects the constraint. Approximation schemes that may violate the constraint are also known as weak approximation schemes. For the SUBSET SUM problem, there is a randomized weak approximation scheme running in time $\tilde\OO(n+ 1/\epsilon^{5/3})$ [Mucha et al.'19]. For the special case where the target $t$ is half of the summation of all input numbers, weak approximation schemes are equivalent to approximation schemes that do not violate the constraint, and the best-known algorithm runs in $\tilde\OO(n+1/\epsilon^{{3}/{2}})$ time [Bringmann and Nakos'21].
		
		In this paper, we substantially improve the state-of-art results. We derive a deterministic weak approximation scheme of running time $\tilde\OO(n+1/\epsilon^{3/2})$ for the SUBSET SUM problem, which is the first deterministic approximation scheme of subquadratic running time. For unbounded SUBSET SUM where each input number can be used arbitrarily many times, we obtain an $\tilde\OO(n+1/\epsilon)$-time deterministic weak approximation scheme, which is the best possible. For PARTITION, we improve the existing result by establishing an $\tilde\OO(n+ 1/\epsilon^{5/4})$-time deterministic approximation scheme. 
		
		These results are built upon our main technical contributions: i). a number-theoretic rounding mechanism that leverages a number-theoretic property of integers called smoothness, and ii). a divide-and-conquer based framework that combines FFT (Fast Fourier Transform) with the number-theoretic property.

	\end{abstract}
	
	\vspace{2mm}
	\hspace{3.5mm}\textbf{Keywords:} Approximation scheme; Combinatorial optimization; Partition; Subset-Sum
	
	\clearpage
	\setcounter{page}{1}

	\section{Introduction}\label{Intro}
	We study approximation algorithms for the fundamental problem SUBSET SUM and its special cases in this paper. We first introduce the problems.

Let $\Sigma(Y)$ denote the sum of elements in a (multi-)set $Y$. SUBSET SUM is defined as follows:

\begin{definition}[SUBSET SUM]\label{def:subset-sum}
	Given a (multi-)set $X$ of $n$ positive integers and a target $t>0$, find a subset $X'\subset X$ which achieves the maximum sum among all subsets summing up to $t$. Formally, the task is to find $X'\subset X$ such that 
	$$
	\Sigma(X') = \max\{\Sigma{(Y)} \ | \ Y \subset   X, \Sigma(Y) \le t\},
	$$
\end{definition}

SUBSET SUM problem is a fundamental optimization problem in computer science and is one of Karp's initial list of 21 NP-complete problems~\cite{DBLP:books/daglib/p/Karp10}. An important field of study on NP-hard problems is finding efficient approximation algorithms. In particular, a Fully Polynomial Time Approximation Scheme (FPTAS) for a maximization problem is an algorithm that, given an instance of size $n$ and a parameter $\epsilon>0$, returns a solution whose value is at least $1-\epsilon$ times the optimal solution. Importantly, the run-time of an FPTAS is polynomial of $n$ and $1/\epsilon$. 

SUBSET SUM is one of the first NP-hard problems shown to possess FPTASs and there is a long line of research on finding faster FPTASs for SUBSET SUM. The first published FPTAS was designed by Ibarra and Kim~\cite{karthik2018parameterized} requires $\OO(n/\epsilon^2)$-time. Lawler~\cite{lawler1979fast} subsequently proposed an FPTAS
with improved time bound $\OO(n+1/\epsilon^4)$. Later, Gens and Levner obtained further improved schemes running in time $\OO(n/\epsilon)$~\cite{gens1978approximation,gens1979computational} and in time $\OO(\min\{n/\epsilon,n+1/\epsilon^3\})$~\cite{gens1994fast}, which was further improved by Kellerer et al.~\cite{DBLP:conf/isaac/KellererPS97,DBLP:journals/jcss/KellererMPS03} to a $\OO(\min\{n/\epsilon,n+1/\epsilon^2 \log(1/\epsilon)\})$-time algorithm. 
Very recently, Bringmann~\cite{bringmann2021fine} showed a conditioned lower bound on approximating SUBSET SUM: assuming the $(\min,+)$-convolution conjecture, SUBSET SUM has no approximation scheme in time $\OO((n+1/\epsilon)^{2-\delta})$ for any constant $\delta>0$. This strong lower bound relies on that the constraint of $\Sigma(Y) \le t$ is strict. If the constraint can be slightly violated, then substantially faster approximation schemes exist. Such approximation schemes are called weak approximation schemes. Formally, 

\begin{definition}[WEAK-APX for SUBSET SUM]\label{def:weak_apx}
	Let $X^*$ be an optimal solution for an instance $(X, t)$ of the SUBSET SUM problem. Given $(X,t)$, a weak $(1-\epsilon)$-approximation
	algorithm (or a weak approximation scheme) for SUBSET SUM returns $Y\subset X$ such that 
	$$(1-\epsilon)\Sigma(X^*)\le\Sigma(Y) \le (1+\epsilon)t .$$
\end{definition}

Recently, Mucha et al.~\cite{DBLP:conf/soda/MuchaW019} designed a randomized weak approximation scheme for SUBSET SUM with a strongly subquadratic running time of $\tilde\OO(n+1/\epsilon^{{5}/{3}})$, where the $\tilde\OO$ notation hides polylogarithmic factors in $n$ and $1/\epsilon$. Bringmann and Nakos~\cite{bringmann2021fine} mentioned in their paper that their technique should yield a randomized weak approximation scheme of running time $\tilde\OO(n+1/\epsilon^{{3}/{2}})$ for SUBSET SUM, but there is no formal proof. So far all existing subquadratic approximation schemes are randomized. It remains as an important open problem whether there is a deterministic weak subquadratic time approximation scheme for SUBSET SUM. 


We also study PARTITION, which is a fundamental special case of SUBSET SUM where $t$ is fixed to $\Sigma(X)/2$. Formally, 
\begin{definition}[PARTITION]\label{def:parti}
	Given a (multi-)set $X$ of $n$ positive integers, find a subset $X'\subset X$ that achieves the maximum sum among all subsets whose sums do not exceed $\Sigma(X)/2$. Formally, the task is to find $X'\subset X$ such that 
	$$\Sigma(X') = \max\{\Sigma(Y)\ | \ Y \subset X, \Sigma(Y) \le \Sigma(X)/2\}.$$
\end{definition}

PARTITION has many practical applications, including scheduling~\cite{coffman1993probabilistic}, minimization of circuit sizes and cryptography~\cite{DBLP:journals/tit/MerkleH78}, as well as game theory~\cite{hayes2002computing,DBLP:books/ox/06/Mertens06}. As a special case of SUBSET SUM, all SUBSET SUM algorithms also apply to PARTITION. A particularly important observation raised by Mucha et al.~\cite{DBLP:conf/soda/MuchaW019} is that any weak approximation scheme for SUBSET SUM is an approximation scheme for PARTITION.
Designing approximation algorithms specifically for PARTITION also has a long history. In 1980, Gens and Levner~\cite{gens1980fast} proposed an FPTAS with running time $\OO(\min\{n/\epsilon,n+1/\epsilon^2\})$.  Recently, Bringmann and Nakos~\cite{bringmann2021fine} obtained a deterministic FPTAS with a running time of $\tilde\OO(n+1/\epsilon^{{3}/{2}})$. Regarding the lower bound, PARTITION does not admit any approximation scheme of running time $\OO(poly(n)/\epsilon^{1-\delta})$ for any constant $\delta>0$, assuming the SETH~\cite{abboud2022seth} or the SetCover conjecture~\cite{DBLP:journals/talg/CyganDLMNOPSW16}.

An important observation~\cite{DBLP:conf/soda/MuchaW019} is that, since the target is $\Sigma(X)/2$ in PARTITION, a weak approximation scheme for PARTITION is also an approximation scheme. Hence, from an upper bound perspective, PARTITION and SUBSET SUM differ significantly: PARTITION admits a strongly subquadratic running time approximation scheme while SUBSET SUM does not.

The third problem we study in this paper is	the UNBOUNDED SUBSET SUM, which is another special case of SUBSET SUM, where every element has infinitely many copies. Formally,
\begin{definition}[UNBOUNDED SUBSET SUM]\label{def:u_subset_sum}
	Given a set $X=\{x_1,x_2,\cdots,x_n\}$ of $n$ different positive integers and a target $t>0$, find the set of non-negative integers $(m_1,m_2,\cdots, m_n)$ such that 
	$$
	\sum^{n}_{i=1}x_{i}m_i = \max\{\sum^{n}_{i=1} x_i y_i \le t \ | \ (y_1,y_2,\cdots,y_n) \in \mathbb{Z}^{n}_{\ge 0}  \}.
	$$
\end{definition}

UNBOUNDED SUBSET SUM can be reduced to (bounded) SUBSET SUM~\cite{DBLP:books/daglib/0010031}, thus algorithms designed for SUBSET SUM also work with UNBOUNDED SUBSET SUM. The best known (deterministic) approximation scheme is due to Jansen and Kraft~\cite{jansen2018faster}, running in time $\tilde{\OO}(n+1/\epsilon^2 )$. In terms of weak approximation schemes, Bringmann et al.~\cite{bringmann2022faster} showed a randomized $\tilde\OO(n+1/\epsilon^{{3}/{2}})$-time algorithm for a more general problem UNBOUNDED KNAPSACK. It is not clear whether UNBOUNDED SUBSET SUM admits a better approximation algorithm.

\paragraph{Related Work.}
SUBSET SUM is a special case of KNAPSACK. There is a long line of research on approximation schemes for KNAPSACK, see, e.g.,~\cite{ibarra1975fast,lawler1979fast,pisinger1998knapsack,gens1980fast}. 
Very recently, Chan~\cite{chan2018approximation} proposed an FPTAS for KNAPSACK with a running time of $\tilde\OO(n+1/\epsilon^{{12}/{5}})$~\cite{chan2018approximation}.  Later, Jin~\cite{DBLP:conf/icalp/Jin19} obtained an improved algorithm of running time $\tilde\OO(n+1/\epsilon^{{9}/{4}})$, which is the best-known so far.  Whether an $\tilde\OO(n+1/\epsilon^{2})$ time FPTAS exists is still a crucial open problem. 
Conditional lower bound has been obtained for KNAPSACK~\cite{bringmann2021fine}: an FPTAS with $\OO(n+1/\epsilon^{2-\delta}) \ (\delta >0)$ running time would refute the $(\min,+)$-convolution conjecture. 

While we focus on approximation algorithms in this paper, it is worth mentioning that exact algorithms for SUBSET SUM have also received extensive studies. 
Bellman showed in 1957 that KNAPSACK, and hence SUBSET SUM, can be solved in time $\OO(nt)$ by dynamic programming~\cite{bellman1957dynamic}. Important progress has been achieved in recent years for SUBSET SUM. In 2019, Koiliaris and Xu~\cite{koiliaris2019faster} obtained an $\tilde\OO(\sqrt{n}t,t^{4/3})$-time algorithm for SUBSET SUM, followed by a randomized $\tilde\OO(n+t)$ algorithm~\cite{bringmann2017near,karthik2018parameterized}. 
For UNBOUNDED SUBSET SUM, Bringmann~\cite{bringmann2017near} gave an exact algorithm with a running time of $\tilde{\OO}(t) $. Jansen and Rohwedder~\cite{DBLP:conf/innovations/JansenR19} showed later that there is an exact algorithm with a running time $\tilde{\OO}(a_n)$ for the stronger parameter $a_n$, where $a_n$ refers to the largest input number. Klein~\cite{DBLP:conf/soda/Klein22} showed an exact algorithm with a running time of $\tilde{\OO}(a_0^2)$ where $a_0$ refers to the smallest input number. In terms of lower bounds, assuming the strong exponential time hypothesis (SETH), Abboud et al.~\cite{abboud2022seth} proved that there is no algorithm for SUBSET SUM or UNBOUNDED SUBSET SUM with a running time of $\OO(t^{1-\delta})$ for any $\delta > 0$.

This paper is motivated by the open problem proposed by Bringmann and Nakos~\cite{bringmann2021fine}. They mentioned the possibility of extending their approximation scheme for PARTITION to a  randomized $\tilde\OO(n+\epsilon^{-\frac{3}{2}})$-time weak approximation for SUBSET SUM. However, they did not provide proof and left an open problem on the existence of a subquadratic deterministic weak approximation algorithm for SUBSET SUM.

\paragraph{Our contributions.} The main contribution of this paper is to obtain substantial improvement on the approximation schemes for three closely related fundamental problems:  SUBSET SUM, PARTITION and UNBOUNDED SUBSET SUM. More precisely, 
\begin{itemize}		
	\item We obtain a deterministic $\tilde\OO(n+\epsilon^{-\frac{3}{2}})$-time weak $(1-\epsilon)$-approximation algorithm for SUBSET SUM. 
	This gives the first deterministic subquadratic algorithm for weak approximating SUBSET SUM, and resolves the open problem raised by Bringmann and Nakos~\cite{bringmann2021fine}.
	\item We obtain an $\tilde\OO(n+\epsilon^{-\frac{5}{4}})$-time deterministic FPTAS for PARTITION, improving upon the existing $\tilde\OO(n+\epsilon^{-\frac{3}{2}})$-time deterministic FPTAS. Furthermore, unlike the prior algorithm that is highly tailored to PARTITION, our algorithm is also an $\tilde\OO(n+\epsilon^{-\frac{5}{4}})$-time weak $(1-\epsilon)$-approximation algorithm for SUBSET SUM when the target $t=\Theta(\Sigma(X))$. This indicates that for SUBSET SUM, the special case PARTITION where $t=\Sigma(X)/2$, is perhaps not too much different from the case, say, $t=\Sigma(X)/3$, despite that a weak approximation algorithm for PARTITION is naturally an approximation algorithm.  
	\item We obtain a deterministic $\tilde\OO(n+\epsilon^{-1})$-time weak $(1-\epsilon)$-approximation algorithm for UNBOUNDED SUBSET SUM. Note that by taking $\epsilon=\frac{1}{t+1}$, the weak approximation algorithm implies an exact algorithm, therefore the lower bound on the running time of exact algorithms also applies. In particular, for arbitrary small constant $\delta>0$, there does not exist any $\tilde\OO(n+\epsilon^{-1+\delta})$-time weak $(1-\epsilon)$-approximation algorithm for UNBOUNDED SUBSET SUM assuming SETH~\cite{abboud2022seth}, hence our algorithm is essentially the best possible. 
\end{itemize}



\paragraph{Overview of our techniques.} The study of approximation schemes for subset-sum dates back to the 1970s. Recent breakthrough results~\cite{DBLP:conf/soda/MuchaW019,bringmann2021fine} that break the barrier of the quadratic running time, $\tilde{\OO}(\frac{1}{\epsilon^2})$, all rely crucially on FFT (Fast Fourier Transform). However, FFT suffers from that the input numbers can be very large. Using the standard rounding technique, we can obtain a rounded instance with $\OO(\frac{1}{\epsilon})$ integers of value $\OO(\frac{1}{\epsilon})$, and a straightforward FFT only yields an $\OO(\frac{1}{\epsilon^2})$-time PTAS. Thus, FFT has to be carefully combined with some additive combinatoric result~\cite{DBLP:conf/soda/MuchaW019} or some sparsification technique~\cite{bringmann2021fine} to achieve a better running time. We improve FFT through a completely new observation. We observe that, if many input numbers share a large common divisor, then we can first scale down these numbers by dividing the common divisor, use FFT to compute their scaled-down subset-sums, and then scale up these subset-sums. Following this natural idea, we observe that FFT will benefit if all the input numbers are very ``smooth". In number theory, a $y$-smooth number is a number whose prime factors do not exceed $y$. If all the input numbers (of value $\OO(\frac{1}{\epsilon})$) are $\frac{1}{\epsilon^{\theta}}$-smooth, then we can write them into the form $h_1h_2\cdots h_d$ where $h_i$'s are $\OO(\frac{1}{\epsilon^{\theta}})$. Consequently, there are at most $\OO(\frac{1}{\epsilon^{\theta}})$ prime factors for these  $\OO(\frac{1}{\epsilon})$ input numbers to share, which means that many numbers will share a large common divisor. We can then establish a divide-and-conquer based framework that allows FFT to leverage the smoothness.     

The question is, if the input numbers are not smooth, can we round them to smooth numbers with a small loss? This is a very challenging problem in number theory. While it is generally believed that for arbitrary small $\theta>0$ and any $x\in [N]$ (where $[N]:=\{0,1,\cdots,N\}$) there should always exist $x^{\theta}$-smooth numbers that are sufficiently close to $x$, so far it is only known that $x^{\theta}$-smooth numbers are guaranteed to exist within $[x-\Theta({\sqrt{x}}),x+\Theta({\sqrt{x}})]$~\cite{matomaki2016multiplicative}. Moreover, there do not seem to exist good algorithms for computing such a smooth number except a straightforward bruteforce~\cite{granville2008smooth}. Both the error and running time, which are $\Theta(\sqrt{x})$, do not suffice for our needs. Therefore, we establish a number-theoretic lemma (Lemma~\ref{lemma:smooth_appro}) that substantially extends the result of Jin~\cite{DBLP:conf/icalp/Jin19}. We relax the notion of smoothness to ``semi-smoothness" in the sense that for any $\lambda>0$ and any set $X$ of numbers in $\Theta(\frac{1}{\epsilon^{2+\lambda}})$, we can always find $x_j'$ for each $x_j\in X$ such that $|x_j-x_j'|\le \epsilon x_j$, $x_j'=x_j^{common}\cdot x_j^{smooth}$ where $x_j^{smooth}$ is $\frac{1}{\epsilon^{\theta}}$-smooth for arbitrary small $\theta$, and there are only $\log|X|(\log\frac{1}{\epsilon})^{\OO(1)}$ distinct $x_j^{common}$'s. That is, we can always round the input numbers and then divide them into only logarithmically many groups such that by modulo the common divisor of numbers in each group, they all become smooth numbers. More importantly, the overall running time for rounding all the numbers is almost linear in $|X|+\frac{1}{\epsilon}$. We then combine FFT with such a number theoretic construction carefully to obtain improved algorithmic results for (un-)bounded SUBSET SUM and its special case PARTITION. Our method may be of separate interest to other related optimization problems, and also adds to the list of algorithmic applications of smooth numbers surveyed in~\cite{granville2008smooth}.  

	\paragraph{Organization of the paper.} 
	In Section~\ref{n_d} and Section~\ref{subsec:prep}, we give the notations and the definitions which are frequently used in this paper. In Section~\ref{sec:compute_sumset_}, we derive exact and approximation algorithms for computing the sum of multisets (see Definition~\ref{def:sumset}). In Section~\ref{sec:theoretic_lemma}, we introduce a number-theoretic rounding lemma (i.e., Lemma~\ref{lemma:smooth_appro}), which implies that input numbers can be effectively rounded to semi-smooth numbers. In Section~\ref{sec:alg-smooth}, we consider smooth numbers, and derive algorithms for computing (capped) subset-sums of smooth numbers. Then we consider general input instances. In Section~\ref{sec:proce}, we establish a lemma (see Lemma~\ref{obs:obs_pre}) which decomposes an arbitrary SUBSET SUM instance into a logarithmic number of sub-instances with a much simplified structure, and therefore it suffices to develop an (weak) approximation scheme for each sub-instance. The subsequent Section~\ref{sec:5/4_main} and Section~\ref{sec:3/2} are dedicated to designing improved approximation algorithms for PARTITION and SUBSET SUM, respectively. Finally in Section~\ref{sec:1/3_weak_apx}, we present the weak approximation scheme for UNBOUNDED SUBSET SUM.
	
	
	

	\subsection{Notations and Definitions}\label{n_d}
	We present notations and definitions that will be used throughout this paper. They mostly follow from prior works\cite{DBLP:conf/soda/MuchaW019}.
	
	Let $\mathbb{Z}$ be the set of all integers. Let $\mathbb{N} = \{0,1,2,\cdots\}$ be the set of all natural numbers and let $\mathbb{N}_{+}$ be the set of all positive integers. Let $\mathbb{R}$ be the set of all real numbers and $\mathbb{R}_{\ge 0}$ be the set of all non-negative real numbers. For a number $x\in \mathbb{R}_{\ge 0}$, we define $pow(x)$ as the largest power of 2 not exceeding $x$ (i.e. $2^{pow(x)} \le x < 2^{pow(x)+1}$).
	
	A \textit{multiset} is a set-like, unordered collection in which repetition of elements is allowed. Given a finite multiset $X$, we call the number of all elements contained in $X$ its cardinality and use $|X|$ to represent it.  We let $set(X)$ denote the set of all {\it distinct} elements in $X$. 
	
	Some \textit{operations} on multisets are defined in the following.
	
	\begin{definition}\label{def:subset}
		Given a finite multiset $X$ and an element $x$. If $x\in X$, we define $card_{X}[x]$ as the multiplicity of $x$ in $X$ and we let $card_{X}[x] = 0$ if $x\notin X$.  
		
		We call $Y$ a subset of the multiset $X$ if $set(Y)$ is a subset of $set(X)$ and $card_{Y}[x] \le card_{X}[x]$ holds for every $x\in X$. We use $Y \subset X$ to represent that $Y$ is a subset of $X$.
	\end{definition}

	\begin{definition}[Multiset Complement]\label{def:complement}
		Given $Y\subset X$, we define a multiset $X\backslash Y$ as follows: 
		for every $x\in X$, if $card_{X}[x]-card_{Y}[x] \neq 0$, then $x \in X\backslash Y$ and $card_{X\backslash Y}[x] = card_{X}[x]-card_{Y}[x]$; else if $card_{X}[x]-card_{Y}[x] = 0$, then $x \notin X\backslash Y$. We call $X \backslash Y$ the \textit{multiset-complement} of $Y$ in $X$. 
	\end{definition}
	
	\begin{definition}[Scalar Multiplication] Given a multiset $X:=\{x_1,x_2,\cdots, x_n\}$ and a constant $k$. We define $kX: = \{kx_1,kx_2,\cdots,kx_n\}$, which is the multiset obtained by multiplying each element in $X$ by $k$. 
	\end{definition}
	
	\begin{definition}[Multisets Union] Given multisets $X_1,X_2,\cdots, X_{\ell}$, where $\ell \ge 2$. 
		We denote $X_1 \cup X_2 \cup \cdots \cup X_{\ell}$ as a set containing all distinct elements in $X_1,X_2,\cdots,X_{\ell}$, that is, each element in $X_1 \cup X_2 \cup \cdots \cup X_{\ell}$ is distinct and $set(X_{i}) \subset X_1 \cup X_2 \cup \cdots \cup X_{\ell}$ for $i=1,2,\cdots, \ell$. We call 
		$X_1 \cup X_2 \cup \cdots \cup X_{\ell}$ the {\it union} of $X_i$'s.

		Moreover, we denote $X_1 \dot{\cup} X_2 \dot{\cup} \cdots  \dot{\cup} X_{\ell}$ as a multiset containing all elements in $X_1,X_2,\cdots X_{\ell}$. To be specific, $X_1 \dot{\cup} X_2 \dot{\cup} \cdots  \dot{\cup} X_{\ell}$ satisfies the followings: (1). $X_{i} \subset X_1 \dot{\cup} X_2 \dot{\cup} \cdots  \dot{\cup} X_{\ell}$ for $i=1,2,\cdots, \ell$; (2). the multiplicity of an element in $X_1 \dot{\cup} X_2 \dot{\cup} \cdots  \dot{\cup} X_{\ell}$ is the sum of the multiplicity of this element in each $X_i$, i.e.,
		$card_{X_1 \dot{\cup} X_2 \dot{\cup} \cdots  \dot{\cup} X_{\ell}}[x] = \sum^{\ell}_{i=1} card_{X_i}[x]$ for any $x\in X_1 \dot{\cup} X_2 \dot{\cup} \cdots  \dot{\cup} X_{\ell}$. We call $X_1 \dot{\cup} X_2 \dot{\cup} \cdots  \dot{\cup} X_{\ell}$ the {\it multiset-union} of $X_i$'s. 
		
		For simplicity, we sometimes abbreviate $X_1 {\cup} X_2 {\cup} \cdots  {\cup} X_{\ell}$ and $X_1 \dot{\cup} X_2 \dot{\cup} \cdots  \dot{\cup} X_{\ell}$ as ${\cup}^{\ell}_{i=1}X_i$ and $\dot{\cup}^{\ell}_{i=1}X_i$, respectively. 
	\end{definition}

	\begin{definition}
		For a multiset $X \subset \mathbb{R}_{\ge 0}$, we denote the maximum (resp. minimum) element in $X$ as $X^{max}:= \max\{x\ | \ x\in X\}$ (resp. $X^{min}:= \min\{x\ | \ x\in X\}$), and the sum of all elements in $X$ as $\Sigma(X):= \sum_{x\in X} x$.
	\end{definition}
	
	\begin{definition}[SUBSET-SUMS]\label{def:setsum}
		For a multiset $X \subset \mathbb{R}_{\ge 0}$, we define $S(X):= set\{\Sigma(Y)\ | \ Y \subset  X\}$ as the set of all possible subset sums of $X$, and call $S(X)$ the {\it subset-sums} of $X$. We call two multisets $X_1,X_2 \subset \mathbb{R}_{\ge 0}$ equivalent if $S(X_1) = S(X_2)$. 
		
		Given $[a,b]\subset \mathbb{R}_{\ge 0}$, we use $S(X;[a,b])$ to denote $ S(X)\cap [a,b]$. Specifically, for $S(X;[0,b])$, we call it the {\it $b$-capped subset-sums} of $X$. 
	\end{definition}
	
	\begin{definition}[Multisets Sum]\label{def:sumset} Given multisets $X_1,X_2,\cdots, X_{\ell} \subset \mathbb{R}_{\ge 0}$, where $\ell \ge 2$. We define $X_1 \oplus X_2 \oplus \cdots \oplus X_{\ell}:= set\{\sum^{\ell}_{i=1} x_i \ | \ x_i\in X_i\cup\{0\} \text{ \ for every \ } 1\le i \le \ell \}$ , and call it the {\it sumset} of $X_i$'s. For simplicity, we sometimes abbreviate $X_1 \oplus X_2 \oplus \cdots \oplus X_{\ell}$ as $\oplus^{\ell}_{i=1}X_i$.  Given $b \in \mathbb{R}_{\ge0}$, we call $(\oplus^{\ell}_{i=1}X_i) \cap [0,b]$ the {\it $b$-capped sumset} of $X_i$'s.
		
	\end{definition}
	
	Given $F$ as a set of functions, by $F \subset \Theta(g)$ we mean that $f= \Theta(g)$ holds for every $f\in F$. 	
	Unless otherwise specified, we use the notation $\tilde\OO(T)$ to denote a function $\OO(T (\log T)^c)$ for any constant $c>0$, i.e., $\tilde\OO$ suppresses polylogarithmic factors.
	
	Throughout this paper, $\epsilon$ refers to an arbitrarily small positive number.

	\subsection{Definitions of Approximate Set and Backtracking Oracle.}\label{subsec:prep} 


	For technical reason, we introduce two concepts in this subsection, the approximate set and the backtracking oracle. Roughly speaking, since the multiset of input numbers may be difficult to deal with directly, we will build an alternative set with a much simplified structure but is sufficient for the purpose of weak approximation. Such a set is the approximate set of the input multiset, further parameterized $r$ and $u$ as we specify in Definition~\ref{def:e_appr}. Our algorithm will work on the approximate set instead of the input multiset, but then the solution obtained from the approximate set needs to be transformed back to the input multiset. Such a transformation is achieved via the backtracking oracle. We remark that if we only want to ``approximately determine" SUBSET SUM or PARTITION, that is, to answer ``Yes" or ``No" instead of returning a feasible solution when the answer is ``Yes", then the backtracking oracle can be safely ignored. 
	
	
	\begin{definition}[$(r,u)$-APX-SET]\label{def:e_appr}
		Given a multiset $A \subset \mathbb{R}_{\ge 0}$, we call $C$ an $(r,u)$-approximate set of $A$ if and only if $C$ satisfies all of the following conditions: 
		\begin{itemize}
			\item [(i).]$C \subset [0, (1+r)u]$;
			\item [(ii).] for any $c \in C$, there exists $a\in A$ such that $|c-a| \le r u$;
			\item [(iii).] for any $a'\in A\cap [0,u]$, there exists $c'\in C$ such that $|c'-a'| \le r u$. 
		\end{itemize}
		In particular, we call $C$ an $r$-approximate set of $A$ if and only if $C$ is an $(r,A^{max})$-approximate set of $A$.
		
		We call $C$ an $(r,u)$-approximate set of $A$ with an additive error of $ERR$	if $C$ satisfies all of the following conditions: 
		\begin{itemize}
			\item [(i).]$C \subset [0, (1+r)u]$;
			\item [(ii).] for any $c \in C$, there exists $a\in A$ such that $|c-a| \le r u+ERR$;
			\item [(iii).] for any $a'\in A\cap [0,u]$, there exists $c'\in C$ such that $|c'-a'| \le r u + ERR$, where $ERR>0$ is an additive error. 
		\end{itemize}
	\end{definition}
	
	
	\begin{definition}[T-time ORACLE for Backtracking]\label{def:def_for_oracle}
		We give the definitions of \textbf{T-time oracle} for four cases. 
		\begin{itemize}
			\item[Def-1.] Given multisets $X_1,X_2,\cdots,X_{\ell}\subset \mathbb{R}_{\ge 0}$ and a set $C \subset \oplus^{\ell}_{i=1} X_i$, a \textbf{T-time oracle} for backtracking from $C$ to $\dot{\cup}X^{\ell}_{i=1}$ is an algorithm that given any $c\in C$, in $T$ processing time, it will return $(x_1,x_2,\cdots,x_{\ell})$ satisfying $\sum^{\ell}_{i=1}x_i= c$, where $x_i\in X_{i}\cup \{0\}$ for every $ i =1,2,\cdots, \ell$.  
			\item[Def-2.] Given a multiset $X\subset \mathbb{R}_{\ge 0}$ and a set $C'\subset S(X)$, a \textbf{T-time oracle} for backtracking from $C'$ to $X$  is an algorithm that given any $c'\in C'$, in $T$ processing time, it will return $X'\subset X$ satisfying $\Sigma(X')= c'$.
			\item[Def-3.] Given multisets $X_1,X_2,\cdots,X_{\ell}\subset \mathbb{R}_{\ge 0}$. Let $B$ be an $(r,u)$-approximate set of $\oplus^{\ell}_{i=1} X_i$ with an additive error of $ERR$. A \textbf{T-time oracle} for backtracking from $B$ to  $\dot{\cup}X^{\ell}_{i=1}$  is an algorithm that given any $b\in B$, in $T$ processing time, it will return $(x_1,x_2,\cdots,x_{\ell})$ satisfying $|b-\Sigma^{\ell}_{i=1}x_i|\le ru+ERR$, where $x_i\in X_{i}\cup \{0\}$ for every $ i =1,2,\cdots, \ell$.
			\item [Def-4.] Given a multiset $X\subset \mathbb{R}_{\ge 0}$. Let $B'$ be an $(r,u)$-approximate set of $S(X)$ with an additive error of $ERR$, an $(r,u)$-approximate \textbf{T-time oracle} for backtracking from $B'$ to  $X$  is an algorithm that given any $b'\in B'$, in $T$ processing time, it will return $X'\subset X$ satisfying $| b'-\Sigma(X') |\le ru+ERR$.
		\end{itemize}
	\end{definition}
	
	Now we give a very high-level description on how we leverage the two concepts introduced above in our algorithms. Towards that, we first present a simple observation.		
	\begin{lemma}\label{lemma:prea}
		Given an instance $(X,t)$ of SUBSET SUM. Let $OPT$ be the optimal objective value of $(X,t)$ and let $X_t=\{x\in X \ | \ x\le t\}$.	If $\Sigma(X_t) <  t/2$, then $OPT <  t/2$ and $X_t$ is the optimal solution of $(X,t)$. Else if $\Sigma(X_t) <  t/2$, we can assert that $OPT \ge t/2$.
		
	\end{lemma}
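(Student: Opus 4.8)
The plan is to reduce everything to two elementary observations about which elements can possibly occur in a feasible subset. First I would record that, since $X$ consists of positive integers and $t>0$, no element $x>t$ can lie in any feasible subset $Y$: a single copy of such an $x$ already has $\Sigma(\{x\})=x>t$. Hence every feasible $Y\subset X$ with $\Sigma(Y)\le t$ is in fact a subset of $X_t=\{x\in X\mid x\le t\}$, so $\Sigma(Y)\le\Sigma(X_t)$. This one remark settles the first assertion immediately: if $\Sigma(X_t)<t/2$, then $X_t$ itself is feasible (its sum lies below $t/2<t$), it dominates every feasible subset by the remark, and therefore $OPT=\Sigma(X_t)<t/2$ with $X_t$ an optimal solution. (The degenerate case $X_t=\emptyset$ falls here as well, since $\Sigma(\emptyset)=0<t/2$.)

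For the second assertion — which I read as the complementary case $\Sigma(X_t)\ge t/2$ — I would exhibit an explicit feasible subset of $X_t$ whose sum is at least $t/2$, which forces $OPT\ge t/2$. Split into two sub-cases. If some single element $x\in X_t$ satisfies $x\ge t/2$, then $\{x\}$ is feasible since $x\le t$, and $\Sigma(\{x\})=x\ge t/2$; done. Otherwise every element of $X_t$ is strictly below $t/2$. Enumerate $X_t=\{x_1,\dots,x_k\}$ and form the prefix sums $s_0=0$, $s_i=x_1+\dots+x_i$. Since $s_0=0<t/2$ while $s_k=\Sigma(X_t)\ge t/2$, there is a least index $j$ with $s_j\ge t/2$; then $s_{j-1}<t/2$, and because $x_j<t/2$ we get $s_j=s_{j-1}+x_j<t/2+t/2=t$. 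Thus $\{x_1,\dots,x_j\}$ is a feasible subset with sum in $[t/2,t)$, yielding $OPT\ge t/2$.

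I do not expect a genuine obstacle here; the proof is short. The only point requiring a moment of care is exactly the case distinction in the second part: a naive "add elements until the running sum reaches $t/2$" can overshoot past $t$ in a single step when one element is large, so one must first peel off the case of an element $\ge t/2$ and only then run the prefix-sum argument, after which each increment is bounded by $t/2$ and the overshoot is controlled. Verifying feasibility and maximality of $X_t$ in the first case is routine.
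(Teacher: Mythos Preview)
Your proof is correct. The first part matches the paper's reasoning exactly: any feasible subset lies inside $X_t$, so if $\Sigma(X_t)<t/2$ then $X_t$ is itself feasible and optimal with value below $t/2$.

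For the second assertion your argument is genuinely different from the paper's. The paper argues by contradiction: assuming $\Sigma(X_t)\ge t/2$ yet $OPT<t/2$, it inspects the optimal solution $X^*$ and shows that every element of $X\setminus X^*$ must exceed $t$ (otherwise either adjoining a small element to $X^*$, or taking a single element in $(t/2,t]$, would improve the objective). This forces $X_t\subset X^*$ and hence $\Sigma(X_t)\le OPT<t/2$, a contradiction. You instead give a direct, constructive argument: either a single element of $X_t$ already lies in $[t/2,t]$, or all elements are below $t/2$ and a prefix-sum sweep produces a subset with sum in $[t/2,t)$. Your route has the mild advantage of exhibiting an explicit witness subset rather than only certifying existence via optimality, and it avoids reasoning about the structure of $X^*$; the paper's route is slightly shorter and more in the spirit of exchange arguments. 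Both are entirely elementary and adequate here.
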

	\begin{proof}
		Given an instance $(X,t)$ of SUBSET SUM, let $X^*\subset X$ and $OPT$ be the optimal solution and optimal objective value of $(X,t)$, respectively.  Define $X_t := \{x\in X \ | \ x\le t\}$. It is easy to see that if $\Sigma(X_{t}) < t/2$, then apparently $OPT < t/2$ and $X_{t}$ is the optimal solution of $(X,t)$.  We claim that if $\Sigma(X_{t}) \ge t/2$, then $OPT \ge t/2$. 
		
		Assume that $\Sigma(X_{t}) \ge t/2$ and $OPT < t/2$. Recall that $X^*\subset X$ is the optimal solution of $(X,t)$. The following two observations show that $x >t$ holds for every $x \in X\backslash X^*$.
		\begin{itemize}
			\item if there exists $x\in X \backslash X^*$ with $x \le t/2$, then $\{x\} \dot{\cup}  X^*$ is a better solution;
			\item if there exists $x\in X\backslash X^*$ with $ t/2< x \le t $, then $\{x\}$ is a better solution. 
		\end{itemize}
		Hence $X_t = \{x\in X : x\le t\} \subset X \backslash (X \backslash X^*) = X^* $. Since $OPT < t/2$, we have $\Sigma(X_{t}) < t/2$, which contradicts the fact that  $\Sigma(X_{t}) \ge t/2$.  Thus if $\Sigma(X_{t}) \ge t/2$, then $OPT \ge t/2$. 
		
		Note that $X_t := \{x\in X \ | \ x\le t\}$ and $\Sigma(X_t)$ can be obtained in linear time, so far the proof of  Lemma~\ref{lemma:prea} is completed.\qed
	\end{proof}

	\noindent\textbf{How Definition~\ref{def:e_appr} and Definition~\ref{def:def_for_oracle} are used in our algorithms?} Consider any instance $(X,t)$ of SUBSET SUM. Let $X^*\subset X$ and $OPT$ be the optimal solution and optimal objective value of $(X,t)$, respectively. Taking advantage of Lemma~\ref{lemma:prea}, we only need to consider the case that $OPT \ge t/2$. Let $B_{\epsilon}$ be an $(\epsilon, t)$-approximate set of $S(X)$ and assume that there is a \textbf{T-time oracle} for backtracking from $B_{\epsilon}$ to $X$. Then according to Definition~\ref{def:e_appr} and Definition~\ref{def:def_for_oracle},  it is straightforward that a weak $(1-\OO(\epsilon))$-approximate solution of $(X,t)$ can be determined within $\OO(|B_{\epsilon}|+T)$ processing time.  
	Thus towards designing a  weak approximation scheme for SUBSET SUM with a running time of $\tilde\OO(n+1/\epsilon^{1+\delta})$, it suffices to design an $\tilde\OO(n+1/\epsilon^{1+\delta})$-time algorithm such that given any instance $(X,t)$ of SUBSET SUM, the algorithm can return: (i). an $(\tilde\OO(\epsilon),t)$-approximate set with cardinality of $\tilde\OO(n+1/\epsilon^{1+\delta})$ for $S(X)$, where recall that $S(X)$ is the set of all subset-sums of $X$; and (ii). an $\tilde\OO(n+1/\epsilon^{1+\delta})$-time oracle for backtracking from this approximate set to $X$.  Therefore, when designing the approximation algorithm for SUBSET SUM, we mainly consider two problems: one is to find the approximate set of $S(X)$ of a small cardinality, and the other is to build an efficient oracle for backtracking. 
	
	\section{Computing Sumset.} 
	
	The algorithms designed in this paper for (un-)bounded SUBSET SUM follow a general divide-and-conquer framework, where we need to frequently compute (approximate) sumset and give the corresponding oracle for backtracking.  In this section, we introduce exact and approximation algorithms for computing (capped) sumset, respectively. Both exact and approximation algorithms will be used in our subsequent analysis. In particular, the exact algorithm is (almost) linear in the largest integer in the input, and will thus be used when the input only involves very small numbers (which may be $o(1/\epsilon)$). On the other hand, the approximation algorithm is linear in $1/\epsilon$, which will be used when the input numbers are large.
	
	Note that sumset is always a set. Throughout this section, when we say compute a sumset, we mean to specify every element in this sumset.

	\subsection{Exact Algorithms for  Computing Sumset.}\label{sec:compute_sumset_}
	
	One basic approach to compute the sumset is to use Fast Fourier Transform (FFT). In particular, we can use the following lemma extended from \cite{bringmann2021fine}. 
	\begin{lemma}\label{lemma:sumset}
		Given sets 
		$X_1, X_2,\cdots,X_{\ell} \subset \mathbb{N}$, in $\OO(\sigma\log\sigma \log\ell)$ processing time, where $\sigma:= X^{max}_1+X^{max}_2+\cdots+X^{max}_{\ell}$, we can compute $X_1 \oplus X_2 \oplus \cdots \oplus X_{\ell}$ and build an $\OO(\sigma \log \sigma\log\ell)$-time oracle for backtracking from $\oplus^{\ell}_{i=1}X_{i}$ to $\dot{\cup}^{\ell}_{i=1}X_i$.  
	\end{lemma}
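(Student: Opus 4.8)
The plan is to build the sumset by a balanced binary divide-and-conquer on the $\ell$ sets, where at each internal node we combine the two children's sumsets via a single FFT-based Boolean convolution. Concretely, I would first represent each input set $X_i$ by its indicator (characteristic) vector of length $X_i^{max}+1$, i.e.\ a polynomial $p_i(z)=\sum_{x\in X_i}z^x$ (with the constant term always present since $0$ is allowed). For a subset $S$ of the indices, write $\sigma_S := \sum_{i\in S} X_i^{max}$ and let $P_S$ be the $0/1$ vector supported exactly on $\oplus_{i\in S}X_i$. Since membership in $\oplus X_i$ is precisely "realizable as a sum of one (possibly zero) element from each $X_i$," we have the convolution identity: $\mathrm{supp}\big(P_{S_1}\ast P_{S_2}\big)=\oplus_{i\in S_1\cup S_2}X_i$ when $S_1,S_2$ are disjoint. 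Thus a length-$O(\sigma_{S_1\cup S_2})$ FFT, followed by thresholding each coefficient to $\{0,1\}$, computes the merged sumset in $O(\sigma_{S_1\cup S_2}\log\sigma_{S_1\cup S_2})$ time.

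Next I would bound the total running time over the recursion tree. Split the index set $\{1,\dots,\ell\}$ into two halves of size $\lceil\ell/2\rceil$ and $\lfloor\ell/2\rfloor$, recurse, and merge. The key observation is that the vector length handled at a node equals (one plus) the sum of $X_i^{max}$ over the indices in that node's block, and these blocks at a fixed depth are disjoint, so the lengths at each of the $O(\log\ell)$ levels sum to at most $\sigma+\ell = O(\sigma)$ (using $X_i^{max}\ge 1$, or absorbing the $+1$'s; if some $X_i^{max}=0$ the set contributes nothing and can be discarded). Hence the work per level is $O(\sigma\log\sigma)$ and the total is $O(\sigma\log\sigma\log\ell)$, as claimed. (A minor point: the $\log$ of the length at a node is at most $\log\sigma$, so bounding each level's FFT cost by $O((\text{level length})\log\sigma)$ is legitimate.)

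For the backtracking oracle, at each internal node $v$ with children $v_1,v_2$ covering index blocks $S_1,S_2$, I would store, for every $c$ in the merged sumset at $v$, one witness pair $(a,b)$ with $a\in\oplus_{i\in S_1}X_i$, $b\in\oplus_{i\in S_2}X_i$, $a+b=c$; such a witness can be recorded while scanning the convolution output (for each output index with nonzero coefficient, find one contributing pair). Then, given a query $c\in \oplus_{i=1}^\ell X_i$ at the root, I descend the tree: at node $v$ I look up the stored pair $(a,b)$ for the current target, pass $a$ to the left child and $b$ to the right child, and recurse; at a leaf $X_i$ the current target is an element of $X_i\cup\{0\}$, which I output as the $i$-th coordinate. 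The descent touches $O(\ell)$ nodes, and each lookup is $O(1)$ after $O(\sigma)$-size hash tables (or arrays indexed by value) are built; building all the witness tables costs no more than the convolutions themselves, $O(\sigma\log\sigma\log\ell)$, so the oracle answers in $O(\ell)=O(\sigma\log\sigma\log\ell)$ time per query and fits within the stated preprocessing bound.

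The main obstacle is purely bookkeeping rather than conceptual: one must be careful that "one element from each $X_i$, with $0$ allowed" composes correctly under the recursion — this is why every polynomial $p_i$ carries the constant term, so that a convolution of two merged sumsets automatically allows taking $0$ from any subset of the $X_i$'s, and the leaf convention ($X_i\cup\{0\}$) is consistent. The second delicate point is the level-sum bound: I should explicitly note that sets with $X_i^{max}=0$ (equivalently $X_i=\emptyset$ or $X_i=\{0\}$) contribute $0$ to $\sigma$ and to every sumset, so they can be pruned, ensuring the $+1$ per node does not blow up the per-level total beyond $O(\sigma)$ when $\ell=O(\sigma)$; if $\ell>\sigma$ this degenerate case can be handled trivially. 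Everything else — FFT correctness, the $O(m\log m)$ convolution cost, thresholding to recover a set — is standard.
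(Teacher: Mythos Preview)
Your divide-and-conquer computation of the sumset is correct and essentially identical to the paper's argument: build a balanced binary tree over the $\ell$ sets, merge children via one FFT per internal node, and observe that at each of the $O(\log\ell)$ levels the array lengths sum to $O(\sigma)$, giving $O(\sigma\log\sigma\log\ell)$ total.

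The gap is in your backtracking oracle. You assert that at each internal node one can ``store, for every $c$ in the merged sumset, one witness pair $(a,b)$'' and that ``building all the witness tables costs no more than the convolutions themselves.'' This step is not justified and, with the method you sketch, is false. An FFT returns only the indicator (or multiplicity) vector of $A\oplus B$; it does not hand you a specific pair $(a,b)$ for each output $c$. Finding a witness for a single $c$ by scanning $a\in A$ and testing $c-a\in B$ costs $\Theta(|A|)$; doing this for \emph{every} $c$ at the root node alone already costs $\Theta(\sigma^2)$ in the worst case. One can recover all witnesses deterministically with $O(\log\sigma)$ additional FFTs per merge (reconstructing $a$ bit by bit), but that yields $O(\sigma\log^2\sigma\log\ell)$ preprocessing, not the stated $O(\sigma\log\sigma\log\ell)$.

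The paper sidesteps this entirely: it stores only the intermediate sumsets (as $0/1$ arrays) and does \emph{not} precompute witnesses. At query time it descends the tree, and at each internal node it finds a witness by the linear scan you described. Since at any fixed level the array lengths over all visited nodes sum to $O(\sigma)$, and there are $O(\log\ell)$ levels, the total scan work per query is $O(\sigma\log\sigma\log\ell)$, exactly the oracle time the lemma promises. Your target of $O(\ell)$ query time is stronger than required; the lemma only asks for $O(\sigma\log\sigma\log\ell)$, and the paper's simpler on-the-fly approach meets that without any unjustified preprocessing.
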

	\noindent\textbf{Remark.} Lemma~\ref{lemma:sumset} implies an algorithm for computing subset-sums of a  multiset. This is because that for any multiset $X\subset \mathbb{N}$, we have $S(X) = \oplus_{x \in set(X)} \{x\}$ and $X = \dot{\cup}_{x\in set(X)} \{x\}$. Then by Lemma~\ref{lemma:sumset}, in $\OO(\Sigma(X) \log \Sigma(X)\log |X|)$ processing time, we can compute $S(X)$  and build an $\OO(\Sigma(X) \log \Sigma(X)\log |X|)$-time oracle for backtracking from $S(X)$ to $X$.

	Bringmann et al. \cite{bringmann2021fine} have derived an algorithm for computing $X_1 \oplus X_2 \oplus \cdots \oplus X_{\ell}$. In the following, we show that this algorithm can be extended so that it also gives an oracle for backtracking from $\oplus^{\ell}_{i=1}X_{i}$ to $\dot{\cup}^{\ell}_{i=1}X_i$. We first present the following observation.
	\begin{observation}\label{obs:exa_sum}
		Given sets 
		$B_1,B_2\subset \mathbb{N}$, in $\OO((B^{max}_{1}+B^{max}_2)\log (B^{max}_{1}+B^{max}_2))$ processing time, we can compute $B_1 \oplus B_2$ and meanwhile build an $\OO((B^{max}_{1}+B^{max}_2)\log (B^{max}_{1}+B^{max}_2))$-time oracle for backtracking from $B_1 \oplus B_2$ to $B_1 \dot{\cup} B_2$.
	\end{observation}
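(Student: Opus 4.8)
The plan is to compute $B_1 \oplus B_2$ by a single multiplication of two polynomials, and to recover the backtracking information as a byproduct of the same FFT computation. First I would encode each set $B_i$ ($i=1,2$) as the indicator polynomial $p_i(z) = \sum_{b \in B_i \cup \{0\}} z^b$, so that $p_i$ has degree at most $B_i^{max}$ and all coefficients in $\{0,1\}$. The product $p(z) := p_1(z)\cdot p_2(z)$ has degree at most $B_1^{max}+B_2^{max}$, and the exponent $c$ appears with a nonzero coefficient in $p$ if and only if there exist $b_1 \in B_1 \cup \{0\}$ and $b_2 \in B_2 \cup \{0\}$ with $b_1 + b_2 = c$, i.e.\ if and only if $c \in B_1 \oplus B_2$. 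Hence reading off the support of $p$ gives exactly $B_1 \oplus B_2$. Computing $p$ via FFT costs $\OO(\sigma \log \sigma)$ time where $\sigma = B_1^{max}+B_2^{max}$, which matches the claimed bound.

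For the oracle, the key observation is that we do not need the coefficient \emph{values} of $p$ but only, for each $c$ in the support, a single \emph{witnessing pair} $(b_1,b_2)$. The cleanest way is to precompute a witness table in the same time budget: for each achievable sum $c \in B_1 \oplus B_2$ store one pair $(b_1,b_2)$ with $b_1 \in B_1\cup\{0\}$, $b_2 \in B_2 \cup\{0\}$, $b_1+b_2 = c$. One way to fill this table within $\OO(\sigma\log\sigma)$ time is a divide-and-conquer / bit-splitting trick on $B_1$: recursively partition $B_1$ (together with $\{0\}$) into two halves by the value of a chosen bit, compute the two sub-sumsets with $B_2$ by FFT, and at each merge step decide for each newly-achievable $c$ which half supplied a valid $b_1$; this localizes $b_1$ to a single value after $\OO(\log \sigma)$ levels, each level costing an FFT of total size $\OO(\sigma)$, for $\OO(\sigma \log \sigma)$ overall, and then $b_2 = c - b_1$ is read off directly (it lies in $B_2 \cup \{0\}$ by construction). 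With this table in hand, the oracle answers any query $c$ in $\OO(1)$ time — well within the stated $\OO(\sigma\log\sigma)$ bound — by returning the stored pair and discarding any zero component (a zero means the corresponding $B_i$ contributes nothing). Finally, since $B_1$ and $B_2$ are sets, the multiset-union $B_1 \dot\cup B_2$ simply lists each element of $B_1$ and each element of $B_2$; the returned pair $(b_1,b_2)$ is a valid selection from $B_1 \cup \{0\}$ and $B_2 \cup \{0\}$ as required by Def-1 of Definition~\ref{def:def_for_oracle} with $\ell = 2$.

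The main obstacle is the witness-recovery step: a naive convolution only tells us \emph{how many} pairs sum to $c$, not \emph{which} one, and separately searching for a witness per element of the support could cost $\Theta(\sigma^2)$ in the worst case. The bit-splitting scheme above avoids this, but one must be careful that the recursion depth is $\OO(\log \sigma)$ and that the total FFT work across a level telescopes to $\OO(\sigma)$ (the sub-instances at a fixed level have disjoint $B_1$-parts, so their $B_1^{max}$-contributions are bounded by partitioning value ranges, not cardinalities — here one should split $B_1$ by the high bit of the \emph{value}, so that a part contributing to sums near $c$ has bounded max element). Handling the additive $\{0\}$ in each $B_i \cup \{0\}$ is a routine detail: it only ever adds the constant term $1$ to each polynomial and corresponds to ``select nothing from $B_i$''. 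Once these bookkeeping points are settled, the time bound and correctness follow immediately.
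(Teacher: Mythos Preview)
Your FFT computation of the sumset is correct and identical to the paper's. The divergence is entirely in the backtracking oracle, and there you have both over-engineered the solution and introduced a gap.

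The paper's oracle is much simpler than yours: it stores the indicator arrays $L_1,L_2$ (with $L_i[j]=1$ iff $j\in B_i\cup\{0\}$) and, on a query $c$, linearly scans all $b$ with $L_1[b]=1$ and checks whether $L_2[c-b]=1$. This is $\OO(\sigma)$ per query, comfortably inside the allowed $\OO(\sigma\log\sigma)$. No witness table is precomputed at all. You misread the requirement: the statement only asks for an oracle that runs in $\OO(\sigma\log\sigma)$ time \emph{per query}, not for $\OO(\sigma\log\sigma)$ preprocessing followed by fast queries. So the ``main obstacle'' you identify --- that naively finding a witness for every support element costs $\Theta(\sigma^2)$ --- is not an obstacle here, because the oracle is only ever asked about one $c$ at a time.

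Separately, your bit-splitting scheme does not achieve the claimed bound. When you split $B_1$ by the high value-bit into $B_1^{lo}$ and $B_1^{hi}$, the sub-FFTs $B_1^{lo}\oplus B_2$ and $B_1^{hi}\oplus B_2$ each still involve the \emph{full} $B_2$, so each has output degree at least $B_2^{\max}$. After $k$ levels you have $2^k$ sub-instances, each of FFT size at least $B_2^{\max}$, so the total work at level $k$ is at least $2^k\cdot B_2^{\max}\log\sigma$, which grows geometrically rather than telescoping. Your parenthetical about ``$B_1^{\max}$-contributions being bounded'' is correct but irrelevant: it is the $B_2^{\max}$-contribution that fails to shrink. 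Deterministic all-witness recovery for sumsets in near-linear time is possible, but not via this recursion; and in any case it is unnecessary for this observation.
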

	\begin{proof}
		We first construct arrays $L_1$ and $L_2$ with a length of $B^{max}_{1}+B^{max}_2$. Then we store $B_i (i=1,2)$ in arrays $L_i $ such that (1). $L_i [0] = 1$; (2). $L_i [j]=1$ if $B_i$ contains $j$; (3). $L_i [j]=0$ otherwise.  
		Note that the length of $L_i (i=1,2)$ is $B^{max}_{1}+B^{max}_2$. Perform FFT (Fast Fourier Transform) on $L_1$ and $L_2$, we can obtain $B_1 \oplus B_2$ in $\OO((B^{max}_{1}+B^{max}_2)\log (B^{max}_{1}+B^{max}_2))$ processing time.
		
		
		For the second part of the observation, the oracle built for backtracking from $ B_1 \oplus B_2$ to $B_1 \dot{\cup} B_2$ works as follows: given any $c \in B_1 \oplus B_2$, consider every $b$ satisfying $L_1[b]=1$, i.e., $b\in B_1 \cup \{0\}$. If $L_2 [c-b] = 1$, i.e., $c-b \in B_2 \cup \{0\}$, let $b_1 = b$ and $b_2= c-b$, it follows $c = b_1+b_2$. It is easy to see that the time to determine $b_1$ and $b_2$ is $\OO((B^{max}_{1}+B^{max}_2)\log (B^{max}_{1}+B^{max}_2))$.\qed
	\end{proof}
	Now we are ready to prove Lemma~\ref{lemma:sumset}.
	\begin{proof}[Proof of Lemma~\ref{lemma:sumset}]
		We design an iterative approach to compute $\oplus^{\ell}_{i=1} X_i$ and build an oracle for backtracking. 
		
		Define $X_j:= \{0\}$ for $j=\ell+1,\ell+2,\cdots,2^{pow(\ell)+1}$. It holds that $\oplus^{\ell}_{i=1} X_i = \oplus^{2^{pow(\ell)+1}}_{i=1} X_i$. We build a tree structure of $\OO(pow(\ell))$ layers as follows:
		\begin{itemize}
			\item At the beginning, we create $2^{pow(\ell)+1}$ leaf nodes and let the $i$-th leaf node contain $X_i$.
			\item  At iteration-1, we use Observation~\ref{obs:exa_sum} to compute pairwise sumset $X_{2j_1 -1} \oplus X_{2j_1}$ and derive an $\OO( (X^{max}_{2j_1 -1}+ X^{max}_{2j_1})\log(X^{max}_{2j_1 -1}+ X^{max}_{2j_1}))$-time oracle for backtracking from $X_{2j_1 -1} \oplus X_{2j_1}$ to $X_{2j_1 -1} \dot{\cup} X_{2j_1}$, where $j_1 = 1,2,\cdots, 2^{pow(\ell)}$. For two nodes containing $X_{2j_1 -1}$ and $X_{2j_1}$ separately,  we create a parent node of these two nodes, and let the parent node contain $X_{2j_1 -1} \oplus X_{2j_1}$ and the oracle for backtracking from $X_{2j_1 -1} \oplus X_{2j_1}$ to $X_{2j_1 -1} \dot{\cup} X_{2j_1}$. The processing time for one pair $X_{2j_1 -1} \oplus X_{2j_1}$ is $\OO((X^{max}_{2j_1 -1}+ X^{max}_{2j_1})\log (X^{max}_{2j_1 -1}+ X^{max}_{2j_1}))$, thus the total  processing time at iteration-1 is $\OO(\sigma\log\sigma)$. We then obtain a reduced instance with size of $2^{pow(\ell)}$, i.e., $X_{1} \oplus X_{2}, X_{3} \oplus X_{4}, \cdots, X_{2^{pow(\ell)+1}-1} \oplus X_{2^{pow(\ell)+1}}$. 
			\item Using the same approach in iteration-1 recursively, and iteratively create tree nodes. After $pow(\ell)+1$ such rounds we have built a tree structure whose root node containing $\oplus^{2^{pow(\ell)+1}}_{i=1} X_i$ and an oracle for backtracking from  $\oplus^{2^{pow(\ell)+1}}_{i=1} X_i$ to $(\oplus^{2^{pow(\ell)}}_{i=1} X_i) \dot{\cup} (\oplus^{2^{pow(\ell)+1}}_{i=2^{pow(\ell)}+1} X_i)$. 
		\end{itemize}
		Observe that the total processing time at each iteration is $\OO(\sigma\log\sigma)$, since there are $pow(\ell)+1 = \OO(\log \ell)$ iterations, thus the overall processing time to build this tree is $\OO(\sigma\log\sigma \log\ell)$. 
		
		Note that we have obtained $\oplus^{2^{pow(\ell)+1}}_{i=1} X_i$, which is contained in the root node. It remains to show that we have designed an $\OO(\sigma \log\sigma \log\ell)$-time oracle for backtracking from $\oplus^{2^{pow(\ell)+1}}_{i=1} X_i$ to $\dot{\cup}^{\ell}_{i=1} X_i$. The oracle works as follows: given any $c\in \oplus^{2^{pow(\ell)+1}}_{i=1} X_i$, through backtracking recursively from the root node to leaf nodes, a simple calculation shows that in total $\OO(\sigma \log\sigma \log\ell)$-processing time, one can determine $(c_1,c_2,\cdots, c_{\ell})$ such that $c = \Sigma^{\ell}_{i=1} c_i$, where $c_i \in X_i\cup \{0\}$ for every $ i=1,2,\cdots,\ell, \cdots,2^{pow(\ell)+1}$. \qed \end{proof}

	Sometimes we only care about computing $\omega$-capped sumset, i.e., $(\oplus^{\ell}_{i=1}X_i)\cap [0, \omega]$, and hope to design a customized algorithm with running time decreases as $\omega$ decreases. Towards this, we develop the following Lemma~\ref{lemma:cap_sumset}.
	
	
	\begin{lemma}\label{lemma:cap_sumset}
		Given sets $X_1, X_2,\cdots,X_{\ell} \subset \mathbb{N}$. For any $\omega \in \mathbb{N}$, in $\OO(\ell \omega \log \omega)$ processing time, we can compute 
		$(\oplus^{\ell}_{i=1}X_i)\cap [0, \omega]$ and meanwhile build an $\OO(\ell \omega \log \omega)$-time oracle for backtracking from $(\oplus^{\ell}_{i=1}X_i)\cap [0, \omega]$ to $ \dot{\cup}^{\ell}_{i=1} X_i$.
	\end{lemma}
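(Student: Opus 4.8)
The plan is to fold the sets $X_1,\dots,X_\ell$ in one at a time, always keeping only the portion of the running sumset that lies in $[0,\omega]$. Set $S_0:=\{0\}$ and, for $k=1,\dots,\ell$, let $S_k:=(\oplus_{i=1}^k X_i)\cap[0,\omega]$. The key identity I would use is $S_k = \bigl(S_{k-1}\oplus(X_k\cap[0,\omega])\bigr)\cap[0,\omega]$, which holds because all elements are non-negative: any $x_k\in X_k$ with $x_k>\omega$ can never occur in a sum that stays within $[0,\omega]$, so deleting such elements from $X_k$ changes nothing; and any element of $\oplus_{i=1}^k X_i$ that is at most $\omega$ splits as $s'+x_k$ with $s'\le\omega$, hence $s'$ is already retained in $S_{k-1}$. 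I would establish this by a short induction on $k$.

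Given the identity, each step is a single capped sumset of two sets contained in $\{0,1,\dots,\omega\}$, which I would compute exactly via the FFT/indicator-array routine of Observation~\ref{obs:exa_sum}: represent $S_{k-1}$ and $X_k\cap[0,\omega]$ as $0/1$ arrays of length $O(\omega)$ (with entry $0$ forced to $1$ to encode the ``pick nothing'' option built into Definition~\ref{def:sumset}), multiply the corresponding polynomials, read off the support, and discard entries exceeding $\omega$. This costs $O(\omega\log\omega)$ per step, hence $O(\ell\omega\log\omega)$ overall, and $S_\ell=(\oplus_{i=1}^\ell X_i)\cap[0,\omega]$ is the output.

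For the backtracking oracle I would keep, for every $k$, the two indicator arrays used at step $k$ (extra space $O(\ell\omega)$). On a query $c\in S_\ell$, I invoke the oracle supplied by Observation~\ref{obs:exa_sum} for the step-$\ell$ merge to obtain $x_\ell\in(X_\ell\cap[0,\omega])\cup\{0\}\subset X_\ell\cup\{0\}$ with $c-x_\ell\in S_{\ell-1}$ in $O(\omega\log\omega)$ time (a direct scan over the step-$\ell$ array also works, in $O(\omega)$ time); I then recurse on $c-x_\ell\in S_{\ell-1}$, peeling off one set per level. After $\ell$ levels this yields $(x_1,\dots,x_\ell)$ with $x_i\in X_i\cup\{0\}$ and $\sum_{i=1}^\ell x_i=c$, which is exactly what a $T$-time oracle for backtracking from $(\oplus_{i=1}^\ell X_i)\cap[0,\omega]$ to $\dot{\cup}_{i=1}^\ell X_i$ must return (Definition~\ref{def:def_for_oracle}, Def-1); the total query time is $O(\ell\omega\log\omega)$.

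The only step that calls for real care is the capping identity above --- verifying that truncating both the intermediate sumsets and the element sets $X_k$ at $\omega$ is lossless --- and this is precisely where non-negativity enters and a brief induction is warranted. Everything else merely re-uses the indicator-array mechanics already developed for Observation~\ref{obs:exa_sum} and Lemma~\ref{lemma:sumset}, so I anticipate no genuine obstacle. (A balanced binary merge tree in the style of Lemma~\ref{lemma:sumset} would also work but yields no asymptotic saving, since each internal node still manipulates sets of size $\Theta(\omega)$; the plain left-to-right sweep already attains the stated bound.)
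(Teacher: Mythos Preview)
Your proposal is correct and complete. The paper's own proof instead uses a balanced binary merge tree (exactly the structure of Lemma~\ref{lemma:sumset}): it pads to $2^{pow(\ell)+1}$ sets, computes capped pairwise sumsets level by level, and backtracks by recursing from the root through both children down to all leaves. Your left-to-right fold is the linear-chain degeneration of that tree. Both rely on the same capping identity $(A\oplus B)\cap[0,\omega]=((A\cap[0,\omega])\oplus(B\cap[0,\omega]))\cap[0,\omega]$ and the same FFT primitive from Observation~\ref{obs:exa_sum}, and both hit the target $\OO(\ell\omega\log\omega)$ for computation and for a single backtracking query, since every merge---whether at an internal tree node or at a fold step---costs $\Theta(\omega\log\omega)$ once the cap is in place. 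Your approach is slightly more elementary (no padding, no tree bookkeeping) and, as you note yourself, the balanced tree buys nothing asymptotically here; the paper presumably chose it for uniformity with the surrounding lemmas.
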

	\begin{proof}
		Given any sets $A,B \subset \mathbb{N}$, note that $(A \oplus B) \cap [0, \omega] = \left((A\cap [0, \omega]) \oplus (B \cap [0, \omega])\right) \cap [0, \omega]$. Observation~\ref{obs:exa_sum} guarantees that in $\OO(\omega \log \omega)$ processing time, we can compute $A \oplus B \cap [0, \omega]$, and meanwhile build an $\OO(\omega\log \omega)$-time oracle for backtracking from $(A \oplus B) \cap [0, \omega]$ to $A \dot{\cup} B$.
		
		We design an iterative approach to compute $(\oplus^{\ell}_{i=1} X_i)\cap [0, \omega]$ and build an oracle for backtracking.

		Define $X_j:= \{0\}$ for $j=\ell+1,\ell+2,\cdots,2^{pow(\ell)+1}$. It holds that $\oplus^{\ell}_{i=1} X_i = \oplus^{2^{pow(\ell)+1}}_{i=1} X_i$. We build a tree structure of $\OO(pow(\ell))$ layers as follows:
		\begin{itemize}
			\item At the beginning, we create $2^{pow(\ell)+1}$ leaf nodes and let the $i$-th leaf node contain $X_i$.
			\item At iteration-1, we use Observation~\ref{obs:exa_sum} to compute 
			pairwise $\left(X_{2j_1 -1} \oplus X_{2j_1} \right) \cap [0, \omega]$
			and derive an $\OO(\omega\log\omega)$-time oracle for backtracking from $\left(X_{2j_1 -1} \oplus X_{2j_1} \right) \cap [0, \omega]$ to $X_{2j_1 -1} \dot{\cup} X_{2j_1}$, where $j_1 = 1,2,\cdots, 2^{pow(\ell)}$. For two nodes containing $X_{2j_1 -1}$ and $X_{2j_1}$ separately, we create a parent node of these two nodes, and let the parent node contain $\left(X_{2j_1 -1} \oplus X_{2j_1} \right) \cap [0, \omega]$ and the oracle for backtracking from $\left(X_{2j_1 -1} \oplus X_{2j_1} \right) \cap [0, \omega]$ to $X_{2j_1 -1} \dot{\cup} X_{2j_1}$. The processing time for one pair $\left(X_{2j_1 -1} \oplus X_{2j_1} \right) \cap [0, \omega]$ is $\OO(\omega \log \omega)$, thus the total 
			processing time at iteration-1 is $\OO(2^{pow(\ell)}\omega \log \omega)$.  We then obtain a reduced instance with size of $2^{pow(\ell)}$, i.e., $\left(X_{1} \oplus X_{2}\right)\cap [0,\omega], \left(X_{3} \oplus X_{4}\right)\cap [0,\omega], \cdots, \left(X_{2^{pow(\ell)+1}-1} \oplus X_{2^{pow(\ell)+1}}\right)\cap [0,\omega]$. 
			\item Using the same approach in iteration-1 recursively, and iteratively create tree nodes. After $pow(\ell)+1$ such rounds we have built a tree structure whose root node containing  $\oplus^{2^{pow(\ell)+1}}_{i=1} X_i \cap [0,\omega]$ and an $\OO(\omega\log\omega)$-time oracle for backtracking from  $\oplus^{2^{pow(\ell)+1}}_{i=1} X_i \cap [0,\omega]$ to $(\oplus^{2^{pow(\ell)}}_{i=1} X_i) \dot{\cup} (\oplus^{2^{pow(\ell)+1}}_{i=2^{pow(\ell)}+1} X_i)$.  
		\end{itemize}
		Observe that the total processing time at iteration-$i$ is $\OO(2^{pow(\ell)+1-i}\omega \log \omega)$, since there are $pow(\ell)+1$ iterations, thus the overall processing time to build this tree is $\OO((\sum^{pow(\ell)+1}_{i=1}2^{pow(\ell)+1-i})\omega \log \omega ) = \OO(\ell \omega \log \omega )$. 
		
		Note that we have obtained $\oplus^{2^{pow(\ell)+1}}_{i=1} X_i \cap [0,\omega]$, which is contained in the root node. It remains to show that we have designed an 
		$\OO(\ell \omega \log \omega )$-time oracle for backtracking from $\oplus^{2^{pow(\ell)+1}}_{i=1} X_i \cap [0,\omega]$ to $\dot{\cup}^{\ell}_{i=1} X_i$. The oracle works as follows: given any $c\in \oplus^{2^{pow(\ell)+1}}_{i=1} X_i \cap [0,\omega]$, through backtracking recursively from the root node to leaf nodes, a simple calculation shows that in total $\OO(\ell \omega  \log \omega)$-processing time, one can determine $(c_1,c_2,\cdots,c_{\ell})$ such that $c = \Sigma^{\ell}_{i=1} c_i$, where $c_i \in X_{i} \cup \{0\}$ for every $i=1,2,\cdots,\ell$.  \qed\end{proof}

	\subsection{Approximation Algorithms for  Computing Sumset.}
	\subsubsection{Approximating Sumset.}
	In this section, we aim to prove the following lemma, which implies an approxmation algorithm for computing sumset.
	\begin{lemma}\label{lemma:tree-fashion_}
		Given multisets $C_1, C_2,\cdots, C_{\ell} \subset \mathbb{R}_{\ge 0}$, in $\OO(\sum^{\ell}_{i=1}|C_i| +\frac{\ell}{\epsilon}\log \frac{1}{\epsilon})$ processing time, we can 
		\begin{itemize}
			\item[(i).] Compute an $\OO(\epsilon \log \ell + \epsilon^2 \ell)$-approximate set with cardinality of $\OO(\frac{1}{\epsilon})$ for $ C_1 \oplus C_2 \oplus \cdots \oplus C_{\ell}$;
			\item[(ii).] Meanwhile build an $\OO(\frac{\ell}{\epsilon}\log\frac{1}{\epsilon})$-time oracle for backtracking from this approximate set to $\dot{\cup}^{\ell}_{i=1} C_i$.
		\end{itemize}
	\end{lemma}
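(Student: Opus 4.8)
The plan is to prove Lemma~\ref{lemma:tree-fashion_} by a balanced-binary-tree divide-and-conquer in which every node stores a rounded, size-$\OO(1/\epsilon)$ summary of the sumset of the leaves beneath it together with the composed backtracking oracle, merging children by the FFT of Observation~\ref{obs:exa_sum} (equivalently Lemma~\ref{lemma:sumset} with two sets). Put $u:=\sum_{i=1}^{\ell}C_i^{max}$. Choosing the maximum from every $C_i$ realises the value $u$ and no choice exceeds $u$, so $\oplus_{i=1}^{\ell}C_i\subseteq[0,u]$ with $(\oplus_{i=1}^{\ell}C_i)^{max}=u$; hence an ``$r$-approximate set'' is exactly an $(r,u)$-approximate set, and it suffices to output a set of cardinality $\OO(1/\epsilon)$ that, within additive error $\OO((\epsilon\log\ell+\epsilon^{2}\ell)u)$, both is contained in true subset sums and contains a representative of every true subset sum.

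\emph{Leaf rounding.} First I round each $C_i$: replace every $x\in C_i$ by the nearest integer multiple of $s_i:=\Theta(\epsilon C_i^{max})$, obtaining a set $C_i'$ of $\OO(1/\epsilon)$ distinct values, and for each surviving value keep one witness element of $C_i$. Using a length-$\OO(1/\epsilon)$ bucket array per $i$, all the $C_i'$ are built in $\OO(\sum_i|C_i|+\ell/\epsilon)$ time, and since each element moves by at most $s_i$, any subset sum moves by at most $\sum_i s_i=\OO(\epsilon u)$; so it remains to build an $\OO(\epsilon\log\ell)$-approximate set of $\oplus_{i=1}^{\ell}C_i'$ (the extra $\epsilon^{2}\ell$ slack in the stated bound is exactly the error one would instead spend by rounding all $C_i$ to a single common grid of width $\Theta(\epsilon^{2}u)$, a simpler alternative).

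\emph{FFT tree.} Pad the list with copies of $\{0\}$ to a power of two and place the $C_i'$ at the leaves of a balanced binary tree; for a node $v$ let $M_v$ be the sum of leaf-maxima below $v$ and set $s_v:=\Theta(\epsilon M_v)$. I maintain the invariant that $v$ stores a set $D_v\subseteq[0,(1+\OO(\epsilon))M_v]$ consisting of multiples of $s_v$ (hence $|D_v|=\OO(1/\epsilon)$) together with an $\OO(\tfrac1\epsilon\log\tfrac1\epsilon)$-time oracle for backtracking from $D_v$ to $\dot{\cup}_{\,i\text{ below }v}C_i$, and that $D_v$ approximates $\oplus(\text{leaves below }v)$ with some additive error $E_v$. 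To merge children $v_1,v_2$ into $v$: re-round $D_{v_1},D_{v_2}$ to the coarser common grid $s_v$ (moving each value by $\le s_v$, recording the old-to-new map), divide by $s_v$ to obtain two sets of non-negative integers whose maxima, and hence whose pairwise-sumset support, are $\OO(1/\epsilon)$, apply Observation~\ref{obs:exa_sum} to compute that sumset and its $\OO(\tfrac1\epsilon\log\tfrac1\epsilon)$-time pairwise oracle, and multiply back by $s_v$. This costs $\OO(\tfrac1\epsilon\log\tfrac1\epsilon)$ per node, so with $\OO(\ell)$ nodes the tree phase is $\OO(\tfrac\ell\epsilon\log\tfrac1\epsilon)$. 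The merge adds error $\OO(s_v)$, so $E_v\le E_{v_1}+E_{v_2}+\OO(s_v)$; using $M_v=M_{v_1}+M_{v_2}$, $s_v=\Theta(\epsilon M_v)$ and depth $\OO(\log\ell)$ this telescopes to $E_{\text{root}}=\OO(\epsilon u\log\ell)$, while $|D_{\text{root}}|=\OO(1/\epsilon)$ since $s_{\text{root}}=\Theta(\epsilon u)$. Together with the leaf step this gives the claimed quality and cardinality, and conditions (i)--(iii) of Definition~\ref{def:e_appr} are then immediate: each element of $D_{\text{root}}$ is a (rounded) subset sum, and every true subset sum is tracked because the FFT at each node enumerates \emph{all} pairwise sums of its children's stored sets.

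\emph{Oracle and main obstacle.} The oracle from $D_{\text{root}}$ to $\dot{\cup}_{i=1}^{\ell}C_i$ is the composition of the per-node pairwise oracles: given $d\in D_{\text{root}}$, split $d=d_1+d_2$ at the root, pull $d_1,d_2$ back through the re-rounding maps to genuine elements of $D_{v_1},D_{v_2}$, recurse into both subtrees, and at each leaf output the stored witness; this visits each of the $\OO(\ell)$ nodes once, runs in $\OO(\tfrac\ell\epsilon\log\tfrac1\epsilon)$ time, and its output $(x_1,\dots,x_\ell)$ obeys $|d-\sum_i x_i|=\OO((\epsilon\log\ell+\epsilon^{2}\ell)u)$ by the same telescoping. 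I expect the crux to be reconciling the error accounting with the time budget: every node's work must be $\tilde\OO(1/\epsilon)$ regardless of how large $M_v$ is, which forces all FFTs and re-roundings to be carried out in the scaled ``divide by $s_v$'' coordinates, and the grid widths $s_v$ along a root-to-leaf path must be chosen so that their displacements \emph{add} (to $\OO(\epsilon u\log\ell)$) rather than compound; a secondary nuisance is threading the witness and re-rounding maps through the recursion so that the composed oracle stays within $\OO(\tfrac\ell\epsilon\log\tfrac1\epsilon)$.
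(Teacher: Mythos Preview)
Your proposal is correct and follows essentially the same balanced-binary-tree FFT scheme as the paper: place the $C_i$ at leaves, merge two children at a time via a rounded FFT of support $\OO(1/\epsilon)$, and compose the pairwise backtracking oracles down the tree.

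The one genuine difference worth noting is in how the rounding grid is chosen. You fix $s_v=\Theta(\epsilon M_v)$ with $M_v$ the sum of the \emph{original} leaf maxima below $v$; since $M_v=M_{v_1}+M_{v_2}$ exactly, your per-merge error $\OO(s_v)$ sums level-by-level to $\OO(\epsilon u)$ per level and hence $\OO(\epsilon u\log\ell)$ overall, with no multiplicative compounding. The paper instead invokes Observation~\ref{obs:cc_1} at each merge, which rounds relative to the \emph{current} stored maxima $B_1^{max}+B_2^{max}$; those maxima can exceed the true subtree maxima by the accumulated factor $1+f^{h-1}(\epsilon)$, so the error obeys the multiplicative recurrence $f^h(\epsilon)=\epsilon+(1+\epsilon)f^{h-1}(\epsilon)$, which over $\OO(\log\ell)$ levels yields the stated $\OO(\epsilon\log\ell+\epsilon^2\ell)$. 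In other words, your variant actually achieves the slightly tighter $\OO(\epsilon\log\ell)$ bound and the $\epsilon^2\ell$ term in the lemma is slack under your accounting, exactly as you observed. Either choice fits the lemma as stated; your grid-by-$M_v$ version is a clean way to avoid the compounding, at the (minor) cost of having to track $M_v$ explicitly at each node.
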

	Before proving Lemma~\ref{lemma:tree-fashion_}, we first present the following observations.
	\begin{observation}\label{obs:cc_1}
		Given multisets $B_1,B_2\subset \mathbb{R}_{\ge 0}$, in $\OO(|B_1|
		+|B_2|+\frac{1}{\epsilon}\log \frac{1}{\epsilon})$ processing time, we can compute an ${\epsilon}$-approximate set with cardinality of $\OO(\frac{1}{\epsilon})$ for $B_1 \oplus B_2$, and meanwhile build an $\OO(\frac{1}{\epsilon}\log\frac{1}{\epsilon})$-time oracle for backtracking from this approximate set to $B_1 \dot{\cup} B_2$.

	\end{observation}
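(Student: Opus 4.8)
The plan is to reduce the problem to the exact sumset computation of Observation~\ref{obs:exa_sum} by first rounding the elements of $B_1$ and $B_2$ to a common coarse grid. Concretely, let $u := \max\{B_1^{max}, B_2^{max}\}$ and set the grid width $g := \lceil \epsilon u \rceil$ (I would in fact use $g \approx \epsilon u / 2$ so that the accumulated error stays below $\epsilon u$). For $i = 1,2$, replace each $b \in B_i$ by $\lfloor b / g \rfloor$, collect the distinct values into a set $\tilde B_i \subset \{0, 1, \dots, \lceil 1/\epsilon \rceil + 1\}$, and record, for each rounded value, one witness preimage in $B_i$ (this is the bookkeeping that will drive the oracle). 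This rounding step costs $\OO(|B_1| + |B_2|)$ time, and each $\tilde B_i$ has cardinality $\OO(1/\epsilon)$.

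Next I would invoke Observation~\ref{obs:exa_sum} on $\tilde B_1, \tilde B_2 \subset \mathbb{N}$: since $\tilde B_i^{max} = \OO(1/\epsilon)$, it computes $\tilde B_1 \oplus \tilde B_2$ together with an $\OO(\frac{1}{\epsilon}\log\frac{1}{\epsilon})$-time backtracking oracle in $\OO(\frac{1}{\epsilon}\log\frac{1}{\epsilon})$ time. Finally, scale back: define $C := \{ g \cdot s \ | \ s \in \tilde B_1 \oplus \tilde B_2 \}$, discarding the (at most one) element exceeding $(1+\epsilon)u$ if necessary. I then need to verify the three conditions of Definition~\ref{def:e_appr} for $C$ to be an $\epsilon$-approximate set of $B_1 \oplus B_2$ (i.e.\ an $(\epsilon, u)$-approximate set with $u = (B_1 \oplus B_2)^{max} = B_1^{max} + B_2^{max}$, up to adjusting constants). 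For (ii) and (iii): if $a = b_1 + b_2$ with $b_i \in B_i \cup \{0\}$, then $g\lfloor b_1/g\rfloor + g\lfloor b_2/g\rfloor$ differs from $a$ by less than $2g \le \epsilon u$ (with the $g \approx \epsilon u/2$ choice), and it lies in $C$; conversely every element of $C$ arises as such a rounded sum, so it is within $\epsilon u$ of a genuine subset sum. Condition (i) is immediate from the discard step. The cardinality bound $|C| = \OO(1/\epsilon)$ follows because $\tilde B_1 \oplus \tilde B_2 \subset \{0,\dots,\OO(1/\epsilon)\}$.

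For the oracle: given $c \in C$, recover $s = c/g \in \tilde B_1 \oplus \tilde B_2$, run the Observation~\ref{obs:exa_sum} oracle to get $s = s_1 + s_2$ with $s_i \in \tilde B_i \cup \{0\}$, and output the recorded witness preimages $b_i \in B_i \cup \{0\}$ of $s_i$. The pair $(b_1, b_2)$ satisfies $|c - (b_1 + b_2)| < 2g \le \epsilon u$, which is exactly what Def-3 of Definition~\ref{def:def_for_oracle} demands (with $ERR = 0$), and the whole lookup runs in $\OO(\frac{1}{\epsilon}\log\frac{1}{\epsilon})$ time. I expect the main obstacle to be purely one of careful constant-tracking: making sure that the two roundings (one per operand) together contribute an error of at most $\epsilon u$ rather than $2\epsilon u$, and that ``$\epsilon$-approximate'' is being measured against the correct scale $u$ — the $\max$ of the two maxima versus their sum — so that the statement as written (plain ``$\epsilon$-approximate set'', i.e.\ $u = (B_1\oplus B_2)^{max}$) comes out with the right constant. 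There is no structural difficulty beyond this; the FFT and the grid rounding are both standard, and the novelty in the paper lies in the later lemmas that deploy this observation in a divide-and-conquer recursion.
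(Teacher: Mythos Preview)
Your proposal is correct and follows essentially the same approach as the paper: round both multisets to a grid of width $\Theta(\epsilon\cdot(B_1\oplus B_2)^{\max})$, compute the integer sumset via FFT, and scale back. The paper uses $\sigma = B_1^{\max}+B_2^{\max}$ directly (grid width $\epsilon\sigma/2$) and redoes the FFT/array bookkeeping in place rather than black-boxing Observation~\ref{obs:exa_sum}, but since your $u=\max\{B_1^{\max},B_2^{\max}\}$ satisfies $u\le\sigma\le 2u$, your error bound $2g\le\epsilon u\le\epsilon\sigma$ lands exactly where it needs to, and no discard step is actually required (the paper notes $C^{\max}\le B_1^{\max}+B_2^{\max}$ automatically). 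Your anticipated ``constant-tracking'' worry is therefore harmless.
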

	\begin{proof}
		Let $\sigma: = B^{max}_1+B^{max}_2$ and we define $\tilde{B}_{i} = set \left \{\left \lfloor \frac{x}{{(\epsilon \sigma)}/{2}} \right\rfloor :  x\in B_i \right\}$ for $i=1,2$. We first construct array $L_i$ with a length of $ \left \lfloor\frac{2}{\epsilon} \right\rfloor$ for $i=1,2$. 
		Then we store $\tilde{B}_{i}(i=1,2)$ in array $L_i$ such that (1). $L_i[0] =1$; (2). $L_i[j] = 1$ if $\tilde{B}_{i}$ contains $j$; (3). $L_i[j] = 0$ otherwise.  In the meantime, we build auxiliary array $L'_{i}$ with a length of $ \left \lfloor\frac{2}{\epsilon} \right\rfloor$ for $i=1,2$.  
		Then we  store ${B}_{i}(i=1,2)$ in array $L'_i$ such that (1). $L'_i=\{0\}$ ; (2). $L'_i[j] = \{x \in B_i\cup\{0\} :  \left \lfloor \frac{x}{{(\epsilon \sigma)}/{2}} \right\rfloor = j \}$ if $\tilde{B}_{i}$ contains $j$; (3). $L'_i[j] = \emptyset$ otherwise.  
		
		Perform FFT on $L_1$ and $L_2$, we can obtain $\tilde{B}_{1} \oplus \tilde{B}_{2}$. 
		Let $C := \frac{\epsilon \sigma}{2} (\tilde{B}_{1} \oplus \tilde{B}_{2})$. We first show that $C$ is an $\epsilon$-approximate set of $B_1 \oplus B_2$. It suffices to observe the followings:
		\begin{itemize}
			\item  $0\le C^{\min} \le C^{\max} \le B^{\max}_1+B^{\max}_2$, thus $C \subset [0,(1 +\epsilon )(B_1 \oplus B_2)^{\max} ]$.
			
			\item  Consider any $c=c_1+c_2 \in C$, where $c_i \in \frac{\epsilon \sigma}{2}(\tilde{B}_i\cup\{0\})$ for $i=1,2$. For each $c_i$, there exists $c'_i\in B_i\cup\{0\}$ satisfying $\left\lfloor \frac{c'_i}{{(\epsilon \sigma)}/{2}}\right\rfloor = \frac{c_i}{{(\epsilon \sigma)}/{2}}$. Let $c' = c'_1+c'_2$, it follows that $c' \in B_1 \oplus B_2$ and 
			$|c-c'| = \left| (c_1+c_2)-(c'_1+c'_2) \right| \le \frac{\epsilon \sigma}{2} \left(\left| \frac{c_1}{{(\epsilon \sigma)}/{2}}- \frac{c'_1}{{(\epsilon \sigma)}/{2}} \right|+\left| \frac{c_2}{{(\epsilon \sigma)}/{2}}-\frac{c'_2}{{(\epsilon \sigma)}/{2}} \right|\right) \le \epsilon \sigma.$
			
			\item Consider any $b = b_1+b_2 \in B_1 \oplus B_2$, where $b_i \in B_i\cup\{0\}$ for $i=1,2$. For each $b_i$, it holds that $\left\lfloor \frac{b_i}{{(\epsilon \sigma)}/{2}}\right\rfloor \in \tilde{B}_i\cup\{0\}$. Let $b'_i = \frac{\epsilon \sigma}{2}\left\lfloor \frac{b_i}{{(\epsilon \sigma)}/{2}}\right\rfloor (i=1,2)$ and let $b' = b'_1+b'_2$, it follows that $b' \in \frac{(\epsilon \sigma)}{2} (\tilde{B}_1 \oplus \tilde{B}_2)$ and 
			$|b'-b| = |(b_1+b_2)-(b'_1+b'_2)|\le  \frac{\epsilon \sigma}{2} \left( \left|
			\left\lfloor \frac{b_1}{{(\epsilon \sigma)}/{2}}\right\rfloor-
			\frac{b_1}{{(\epsilon \sigma)}/{2}}\right|
			+\left|\left \lfloor \frac{b_2}{{(\epsilon \sigma)}/{2}}\right\rfloor- \frac{b_2}{{(\epsilon \sigma)}/{2}}\right| 
			\right) \le  {\epsilon} \sigma.$
			
		\end{itemize}
		
		Now we show that an $\OO(\frac{1}{\epsilon} \log \frac{1}{\epsilon})$-time oracle for backtracking from $C$ to $B_1 \dot{\cup} B_2$ has been built and works as follows: given any $c\in C$, consider every $b$ satisfying $L_1[b] =1$, i.e., $b \in \tilde{B}_1\cup\{0\}$. If $L_2[\frac{2c}{\epsilon \sigma}-b] = 1$, i.e., $\frac{2c}{\epsilon \sigma}-b \in \tilde{B}_2\cup\{0\}$, choose any $b_1 \in L'_1[b]$ and any $b_2 \in L'_2[\frac{2c}{\epsilon \sigma}-b]$, it follows that $b_1+b_2 \in B_1\oplus B_2$ and 
		$|c-(b_1+b_2)| \le \frac{\epsilon \sigma}{2}\left(\left|b- \frac{2b_1}{\epsilon \sigma}\right|+\left|(\frac{2c}{\epsilon \sigma}-b)- \frac{2b_2}{\epsilon \sigma}\right|\right)\le \epsilon \sigma$.  It is easy to see that the time to determine $b_1$  and $b_2$ is $\OO(\frac{1}{\epsilon} \log \frac{1}{\epsilon})$.
		
		It remains to prove that the total processing time is $\OO(|B_1|+|B_2|+\frac{1}{\epsilon}\log \frac{1}{\epsilon})$.  One can easily prove that the time to build $L_i(i=1,2)$ and $L'_i(i=1,2)$ is $\OO(|B_1|+|B_2|)$. Recall that the length of $L_i(i=1,2)$ is  $\left \lfloor\frac{2}{\epsilon} \right\rfloor$, thus the FFT to compute $\tilde{B}_{1} \oplus \tilde{B}_{2}$ runs in $\OO(\frac{1}{\epsilon}\log\frac{1}{\epsilon})$ time. Note that  $\tilde{B}_{1} \oplus \tilde{B}_{2}$ is a set and $\tilde{B}_{1} \oplus \tilde{B}_{2} \subset [0,\frac{4}{\epsilon}]$, thus we can obtain $C$ in $\OO(\frac{1}{\epsilon})$ time. To summarize, the total processing time is $\OO(|B_1|
		+|B_2|+\frac{1}{\epsilon}\log \frac{1}{\epsilon})$. 	\qed\end{proof}

	\begin{observation}\label{obs:cc_2}
		Given multisets $A_1, A_2 \subset \mathbb{R}_{\ge 0}$, let $B_1$ and $B_2$ be $f(\epsilon)$-approximate sets of $A_1$ and $A_2$, respectively, where $f$ is an arbitrary computable function. Then any $\epsilon$-approximate set of $B_1 \oplus B_2$ is an $(\epsilon+(1+\epsilon)f(\epsilon))$-approximate set of $A_1 \oplus A_2$.
	\end{observation}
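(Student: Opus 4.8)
The plan is to unwind Definition~\ref{def:e_appr} and chase two triangle inequalities; nothing deeper is involved. Write $f := f(\epsilon)$ for brevity, put $u := A^{max}_1 + A^{max}_2$, and note that by Definition~\ref{def:sumset} we have $(A_1\oplus A_2)^{max} = u$, so what must be shown is that an arbitrary $\epsilon$-approximate set $D$ of $B_1 \oplus B_2$ is an $(r,u)$-approximate set of $A_1\oplus A_2$ with $r := \epsilon + (1+\epsilon)f$. The one algebraic identity that makes all the bounds line up is
\[
(1+\epsilon)(1+f) \;=\; 1 + \epsilon + (1+\epsilon)f \;=\; 1+r .
\]
I would also record at the outset the two facts about $B_i$ being an $f$-approximate set of $A_i$ that get used repeatedly: from clause (i) of Definition~\ref{def:e_appr}, $B^{max}_i \le (1+f)A^{max}_i$, hence $B^{max}_1+B^{max}_2 \le (1+f)u$; and, trivially, $0$ lies in both $A_i\cup\{0\}$ and $B_i\cup\{0\}$, so a zero coordinate in a sumset element is always matchable by a zero coordinate.

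Next I would verify the three clauses of Definition~\ref{def:e_appr} for $D$ against $A_1\oplus A_2$ with parameters $(r,u)$. Clause (i): $D\subset[0,(1+\epsilon)(B^{max}_1+B^{max}_2)] \subset [0,(1+\epsilon)(1+f)u] = [0,(1+r)u]$ by the identity above. Clause (ii): given $d\in D$, first apply clause (ii) of the $\epsilon$-approximation of $B_1\oplus B_2$ to get some $b=b_1+b_2\in B_1\oplus B_2$ (with $b_i\in B_i\cup\{0\}$, using Definition~\ref{def:sumset}) with $|d-b| \le \epsilon(B^{max}_1+B^{max}_2) \le \epsilon(1+f)u$; then, coordinatewise, pick $a_i\in A_i\cup\{0\}$ with $|b_i-a_i|\le f A^{max}_i$ — taking $a_i=0$ when $b_i=0$, and the witness from clause (ii) of the $f$-approximation of $A_i$ otherwise — so that $a:=a_1+a_2\in A_1\oplus A_2$ satisfies $|b-a|\le f(A^{max}_1+A^{max}_2) = fu$. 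The triangle inequality then gives $|d-a| \le \epsilon(1+f)u + fu = \big(\epsilon + (1+\epsilon)f\big)u = ru$. Clause (iii) is the mirror image: since $A_1\oplus A_2\subset[0,u]$ the clause ranges over all of $A_1\oplus A_2$, so given such an $a=a_1+a_2$ I run the same two steps in reverse — match each $a_i$ with $b_i\in B_i\cup\{0\}$ via clause (iii) of the $f$-approximation to land on some $b\in B_1\oplus B_2$ with $|b-a|\le fu$, then invoke clause (iii) of the $\epsilon$-approximation (which, since $B_1\oplus B_2\subset[0,B^{max}_1+B^{max}_2]$, ranges over all of $B_1\oplus B_2$) to get $d\in D$ with $|d-b|\le\epsilon(1+f)u$ — and conclude $|d-a|\le ru$ once more.

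There is no genuine obstacle: the argument is a two-hop estimate $A_i \leftrightarrow B_i \leftrightarrow D$ (the second hop mediated by the sumset), with the coordinatewise decomposition of sumset elements doing the routine work of lifting per-coordinate bounds to a bound on the sum. The only place that needs care is bookkeeping the scale: the inner $\epsilon$-approximation is measured against $B^{max}_1+B^{max}_2$, not against $u$, so one must bound it by $(1+f)u$ before combining, and it is exactly this factor $(1+f)$ that turns the contribution into $(1+\epsilon)f$ rather than a bare $f$ in the final error. I would also note in passing the degenerate cases — e.g. an empty $A_i$, where the statement is trivial once $A^{max}_i$ is read as $0$ — so that the later invocations of this observation are unambiguous.
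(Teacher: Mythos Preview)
Your proof is correct and follows essentially the same approach as the paper: both verify the three clauses of Definition~\ref{def:e_appr} by a two-hop triangle inequality through $B_1\oplus B_2$, using $B^{max}_i\le(1+f)A^{max}_i$ to convert the inner scale to the outer one. Your explicit isolation of the identity $(1+\epsilon)(1+f)=1+r$ is a nice bit of bookkeeping that the paper leaves implicit, but the substance of the argument is identical.
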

	\begin{proof}
		Given multisets $A_1, A_2 \subset \mathbb{R}_{\ge 0}$, let $B_1$ and $B_2$ be $f(\epsilon)$-approximate sets of $A_1$ and $A_2$, respectively, where $f$ is an arbitrary computable function.  We have $B^{\max}_{i} \le (1+f(\epsilon))A^{\max}_{i}$ for $i=1,2$. Denote by $C$ an $\epsilon$-approximate set of $B_1 \oplus B_2$. It suffices to observe the followings:
		\begin{itemize}
			\item $C^{max} \le (1+\epsilon)(B^{max}_1+B^{max}_2)\le (1+\epsilon)(1+f(\epsilon))(A^{max}_1+A^{max}_2) = (1+\epsilon+(1+\epsilon)f(\epsilon))(A^{max}_1+A^{max}_2)$;
			\item Consider any $c \in C$, since $C$ is an $\epsilon$-approximate set of $B_1 \oplus B_2$, there exists $c'=c'_1+c'_2 \in B_1 \oplus B_2$ where $c'_i\in B_i\cup\{0\}$, such that $|c-c'| \le \epsilon (B^{max}_1+B^{max}_2) \le \epsilon(1+f(\epsilon))(A^{max}_1+A^{max}_2)$. For each $c'_i$,
			since $B_i$ is an $f(\epsilon)$-approximate set of $A_i$, there exists $c''_i \in A_i \cup\{0\}$ such that $|c''_i-c'_i| \le f(\epsilon)A^{max}_i$. Let $c'' = c''_1+c''_2$, it follows that $c'' \in A_1 \oplus A_2 $ and $|c''-c'| \le f(\epsilon)(A^{max}_1+A^{max}_2)$. Then we have
			$|c'' -c| \le |c''-c'|+|c'-c| \le (\epsilon+ (1+\epsilon)f(\epsilon))(A^{max}_1+A^{max}_2).$
			\item Consider any sum $a = a_1+a_2 \in A_1 \oplus A_2$ where $a_i \in A_i \cup\{0\}$. For each $a_i$, since $B_i$ is an $f(\epsilon)$-approximate set of $A_i$, there exists $a'_i \in B_i \cup\{0\}$ such that $|a_i-a'_i| \le f(\epsilon)A^{max}_i$, it follows that $
			|(a'_1+a'_2) -a| \le f(\epsilon)(A^{max}_1+A^{max}_2 )$. Let $a' = a'_1+a'_2$, note that $a'\in B_1 \oplus B_2$, since $C$ is an $\epsilon$-approximate set of $B_1 \oplus B_2$, thus there exists $a'' \in C$ such that $|a''-a'| \le \epsilon (B^{max}_1+ B^{max}_2) \le \epsilon (1+f(\epsilon))(A^{max}_1+ A^{max}_2)$. Then we have
			$|a''-a|\le |a''-a'|+|a'-a| \le (\epsilon+(1+\epsilon)f(\epsilon))(A^{max}_1+ A^{max}_2).$
		\end{itemize} 
		\qed\end{proof}
	Now we are ready to prove Lemma~\ref{lemma:tree-fashion_}. 
	
	
	\begin{proof}[Proof of Lemma~\ref{lemma:tree-fashion_}]
		Given multisets $C_1, C_2,\cdots,C_{\ell} \subset \mathbb{R}_{\ge 0}$. In the following, we design an iterative approach to compute an approximate set for $\oplus^{\ell}_{i=1}C_i$ and build an oracle for backtracking.
		
		Define $C_j := \{0\}$ for $j = \ell+1, \ell+2,2^{pow(\ell)+1}$. It holds that $\oplus^{\ell}_{i=1}C_i = \oplus^{2^{pow(\ell)+1}}_{i=1}C_i$. We build a tree structure of $\OO(pow(\ell))$ layers as follows:
		\begin{itemize}
			\item  At the beginning, we create $2^{pow(\ell)+1}$ leaf nodes and let the $i$-th leaf node contain $C_i$.
			\item At iteration-1, for each pair $C_{2j_{1}-1}$ and $ C_{2j_{1}}$, where $j_{1} = 1,2,\cdots,2^{pow(\ell)}$, we use Observation~\ref{obs:cc_1} to compute an $\epsilon$-approximate set with cardinality of $\OO(\frac{1}{\epsilon})$ for $C_{2j_{1}-1}\oplus C_{2j_{1}}$ and derive an $\OO(\frac{1}{\epsilon}\log\frac{1}{\epsilon})$-time oracle for backtracking from this approximate set to $C_{2j_{1}-1}\dot{\cup} C_{2j_{1}}$.  Denote by $U^1_{j_1}$ this approximate set. For two nodes containing $C_{2j_{1}-1}$ and $ C_{2j_{1}}$ separately, we create a parent node of these two nodes, and let the parent node contain $U^1_{j_1}$ and the oracle for backtracking from $U^1_{j_1}$ to $C_{2j_{1}-1}\dot{\cup} C_{2j_{1}}$. 
			The processing time for one pair $C_{2j_{1}-1}$ and $ C_{2j_{1}}$ is $\OO(|C_{2j_{1}-1}|+|C_{2j_{1}}|+\frac{1}{\epsilon}\log \frac{1}{\epsilon})$ by Observation~\ref{obs:cc_1}, thus the total processing time at iteration-1 is $\OO(\Sigma^{\ell}_{i=1}|C_i|+2^{pow(\ell)}\log \frac{1}{\epsilon})$.  A reduced instance with size of $2^{pow(\ell)}$ , i.e., $U^1_{1},U^1_{2}, \cdots,U^1_{2^{pow(\ell)}}$, is obtained.

			\item Before proceeding to iteration $h$ where $h \ge 2$, we assume the following things have been done:
			\begin{itemize}
				\item Let $U^0_{j_0}$ denote $C_{j_0}$, where $j_0 = 1,2,\cdots,2^{pow(\ell)+1}$. 
				
				We have obtained $\{U^{k}_{j_{k}}  |   j_{k}=1,2,\cdots, 2^{pow(\ell)+1-k} \text{\ and\ } k=1, 2,3,\cdots, h-1\}$, where $U^{k}_{j_{k}}$ is an $\epsilon$-approximate set with cardinality of $\OO(\frac{1}{\epsilon})$ for $U^{k-1}_{2j_{k}-1} \oplus U^{k-1}_{2j_{k}}$.
				
				\item For each $U^{k}_{j_{k}}$, where $j_{k}=1,2,\cdots, 2^{pow(\ell)+1-k} \text{\ and\ } k=1,2,3,\cdots, h-1$, we have built an $\OO(\frac{1}{\epsilon}\log\frac{1}{\epsilon})$-time oracle for backtracking from $U^{k}_{j_{k}}$ to $U^{k-1}_{2j_{k}-1} \dot{\cup} U^{k-1}_{2j_{k}}$. Moreover, a node is created for containing $U^{k}_{j_{k}}$ and the oracle for backtracking from  $U^{k}_{j_{k}}$ to $U^{k-1}_{2j_{k}-1} \dot{\cup} U^{k-1}_{2j_{k}}$ .
			\end{itemize}
			Now we start iteration-$h$. For each pair $U^{h-1}_{2j_{h}-1}$ and $U^{h-1}_{2j_{h}}$, where $j_{h} = 1,2,\cdots,2^{pow(\ell)+1-h}$, we use Observation~\ref{obs:cc_1} to compute an $\epsilon$-approximate set with cardinality of $\OO(\frac{1}{\epsilon})$ for $U^{h-1}_{2j_{h}-1} \oplus U^{h-1}_{2j_{h}}$ and derive an $\OO(\frac{1}{\epsilon}\log\frac{1}{\epsilon})$-time oracle for backtracking from  this approximate set to $U^{h-1}_{2j_{h}-1}\dot{\cup} U^{h-1}_{2j_{h}}$.  Denote by $U^{h}_{j_h}$ this approximate set. For two nodes containing $U^{h-1}_{2j_{h}-1}$ and $U^{h-1}_{2j_{h}}$ separately, we create a parent node of these two nodes, and let the parent node contain $U^{h}_{j_h}$ and the oracle for backtracking from  $U^h_{j_h}$ to $U^{h-1}_{2j_{h}-1}\dot{\cup} U^{h-1}_{2j_{h}}$. Notice that 
			$|U^{h-1}_{2j_{h}-1}|= \OO(\frac{1}{\epsilon})$ and $|U^{h-1}_{2j_{h}}|= \OO(\frac{1}{\epsilon})$ for every $j_{h} = 1,2,\cdots,2^{pow(\ell)+1-h}$, then  the processing time for one pair $U^{h-1}_{2j_{h}-1}$ and $U^{h-1}_{2j_{h}}$ is $\OO(|U^{h-1}_{2j_{h}-1}|+|U^{h-1}_{2j_{h}}|+\frac{1}{\epsilon}\log \frac{1}{\epsilon}) = \OO(\frac{1}{\epsilon}\log \frac{1}{\epsilon})$  by Observation~\ref{obs:cc_1}, thus the total processing time at iteration-$h$ is $\OO(2^{pow(\ell)+1-h} \frac{1}{\epsilon}\log\frac{1}{\epsilon})$.   A reduced instance with size of $2^{pow(\ell)+1-h}$, i.e., $U^{h}_{1},U^{h}_{2},\cdots,U^{h}_{2^{pow(\ell)+1-h}}$, is obtained.

			\item Using the same approach in iteration-$h$ recursively, and iteratively create tree nodes. After $pow(\ell)+1$ such rounds, we stop and have built a tree structure whose root node contains (i). $U^{pow(\ell)+1}_1$, which is an $\epsilon$-approximate set with cardinality of $\OO(\frac{1}{\epsilon})$ for $U^{pow(\ell)}_1 \oplus U^{pow(\ell)}_2$; (ii).  an $\OO(\frac{1}{\epsilon}\log\frac{1}{\epsilon})$-time oracle for backtracking from $U^{pow(\ell)+1}_1$ to $U^{pow(\ell)}_1 \dot{\cup} U^{pow(\ell)}_2$.

			
			To summarize, the total processing time is $$\OO(\Sigma^{\ell}_{i=1}|C_i|+(2^1+2^2+\cdots+2^{pow(\ell)}) \frac{1}{\epsilon}\log \frac{1}{\epsilon}) = \OO(\Sigma^{\ell}_{i=1}|C_i|+\frac{\ell }{\epsilon}\log \frac{1}{\epsilon}).
			$$  
		\end{itemize}

		
		Consider the functions in $\{f^h (\epsilon) \ | \ h=1,2,\cdots,  pow(\ell)+1 \}$, whch are defined by the following  recurrence relation: $f^{1} (\epsilon) = \epsilon$ and 
		$f^{h}(\epsilon) = \epsilon+(1+\epsilon)f^{h-1}(\epsilon) \text{ \ for \ } 2\le h \le pow(\ell)+1$.  According to Observation~\ref{obs:cc_2}, given integer $1\le k \le pow(\ell)$,
		if $U^{k}_{j_{k}}$ is an $f^{k}(\epsilon)$-approximate set of $\oplus^{2^k}_{t=1}C_{2^{k}(j_{k}-1)+t}$ for every $j_{k}=1,2,\cdots, 2^{pow(\ell)+1-k}$,  then $U^{k+1}_{j_{k+1}}$ is an $(\epsilon+(1+\epsilon)f^{k}(\epsilon))$-approximate set of $\oplus^{2^{k+1}}_{t=1}C_{2^{k+1}(j_{k+1}-1)+t}$, where  $j_{k+1}=1,2,\cdots, 2^{pow(\ell)-k}$. Recall that $U^1_{j_1}$ is an $f^1 (\epsilon)$-approximate set of $\oplus^{2}_{t=1}C_{2(j_{1}-1)+t}$ for every $j_{1}=1,2,\cdots, 2^{pow(\ell)}$, it can be proved by recursion that  $U^{h}_{j_{h}}$ is an $f^{h}(\epsilon)$-approximate set of $\oplus^{2^h}_{t=1}C_{2^{h}(j_{h}-1)+t}$, where  $j_{h}=1,2,\cdots, 2^{pow(\ell)+1-h}$ and $h =1,2,\cdots,   pow(\ell)+1$. In particular, $U^{pow(\ell)+1}_1$ is an $f^{pow(\ell)+1}(\epsilon)$-approximate set of $\oplus^{2^{pow(\ell)+1}}_{i=1}C_i$ and a simple calculation shows that $f^{pow(\ell)+1}(\epsilon) = \OO(\epsilon \log \ell + \epsilon^2 \ell)$.

		With the help of this tree structure,  an $\OO(\frac{\ell}{\epsilon}\log\frac{1}{\epsilon})$-time oracle for backtracking from $U^{pow(\ell)+1}_1$ to $\dot{\cup}^{\ell}_{i=1} C_i$ is derived and works as follows:
		\begin{itemize}
			\item For any $c \in U^{pow(\ell)+1}_1$, we backtrace from the root of this tree. Note that root node contains an $\OO(\frac{1}{\epsilon}\log\frac{1}{\epsilon})$-time oracle for backtracking from $U^{pow(\ell)+1}_1$ to  $U^{pow(\ell)}_1 \dot{\cup} U^{pow(\ell)}_2$.
			Thus in $\OO(\frac{1}{\epsilon}\log\frac{1}{\epsilon})$ processing time, we can determine
			$y^{(pow(\ell);1)} \in U^{pow(\ell)}_1\cup\{0\}$ and $y^{(pow(\ell);2)} \in U^{pow(\ell)}_2\cup\{0\}$ such that 
			$$|c-(y^{(pow(\ell);1)}+y^{(pow(\ell);1)})| \le \epsilon  (U^{pow(\ell)}_1+U^{pow(\ell)}_2)^{max}\le \epsilon(1+f^{pow(\ell)}(\epsilon))\Sigma^{\ell}_{i=1} C^{max}_i.$$ 
			\item Backtrace recursively. 
			
			Given $1\le h \le pow(\ell)$, assume that we have determined $y^{(h;j_{h})} \in U^{h}_{j_{h}}\cup\{0\}$ for every $ j_{h} = 1,2,3, \cdots, 2^{pow(\ell)+1-h}$, such that 
			$$
			|c-\Sigma^{2^{pow(\ell)+1-h}}_{j_{h}=1}y^{(h;j_{h})}| \le \epsilon \Sigma^{pow(\ell)}_{k=h} (1+f^{k}(\epsilon))\Sigma^{\ell}_{i=1} C^{max}_i.
			$$
			For each $y^{(h;j_{h})}\in U^{h}_{j_{h}} \cup \{0\}$, if $y^{(h;j_{h})} = 0$, then we let $y^{(h-1;2j_{h}-1)} = 0$ and $y^{(h-1;2j_{h})} = 0$. If $y^{(h;j_{h})} \in  U^{h}_{j_{h}}$, 
			note that the node containing $U^{h}_{j_{h}}$ also contains an $\OO(\frac{1}{\epsilon}\log\frac{1}{\epsilon})$-time oracle for backtracking from $U^{h}_{j_{h}}$ to $U^{h-1}_{2j_h -1} \dot{\cup} U^{h-1}_{2j_h}$. Thus
			in $\OO(\frac{1}{\epsilon}\log\frac{1}{\epsilon})$ processing time, we can determine $y^{(h-1;2j_h -1)} \in U^{h-1}_{2j_h -1} \cup\{0\}$ and $y^{(h-1;2j_h)} \in U^{h-1}_{2j_h}\cup\{0\}$ such that 
			$$|y^{(h;j_h)} -(y^{(h-1;2j_h -1)} + y^{(h-1;2j_h)} )| \le \epsilon( (U^{h-1}_{2j_h -1})^{max}+ (U^{h-1}_{2j_h})^{max}).
			$$
			Then in $\OO(2^{pow(\ell)+1-h}\frac{1}{\epsilon}\log\frac{1}{\epsilon})$ processing time, we can determine $y^{(h-1;j_{h-1})} \in U^{h-1}_{j_{h-1}}\cup\{0\}$ for every $j_{h-1} = 1,2,3, \cdots, 2^{pow(\ell)+2-h}$, 
			such that 
			$$|\Sigma^{2^{pow(\ell)}+1-h}_{j_h = 1}y^{(h;j_{h})} - \Sigma^{2^{pow(\ell)}+2-h}_{j_{h-1} = 1}y^{(h-1;j_{h-1})}|
			\le \epsilon\Sigma^{2^{pow(\ell)+2-h}}_{j_{h-1}=1}(U^{h-1}_{j_{h-1}})^{max} \le \epsilon(1+f^{h-1}(\epsilon))\Sigma^{\ell}_{i=1}C^{max}_i.$$
			It follows that 
			$$
			|c-\Sigma^{2^{pow(\ell)+2-h}}_{j_{h-1}=1}(y^{(h-1;j_{h-1})})| \le \epsilon \Sigma^{pow(\ell)}_{k=h-1}  (1+f^{k}(\epsilon))\Sigma^{\ell}_{i=1}C^{max}_i.
			$$
			\item After $pow(\ell)$ such rounds, we stop and have determined $c_i \in C_i\cup\{0\}$ for every 
			$i = 1,2,3, \cdots, 2^{pow(\ell)+1}$, such that			
			$$
			|c- \Sigma^{\ell}_{i=1}c_i| \le \epsilon \left(1+ \Sigma^{pow(\ell)}_{k=1}  (1+f^{k}(\epsilon))\right)\Sigma^{\ell}_{i=1}C^{max}_i=  \OO(\epsilon \log \ell + \epsilon^2 \ell)\Sigma^{\ell}_{i=1}C^{max}_i.$$
			To summarize, the total processing time is $\OO((1+2^1 + 2^2 + \cdots+ 2^{pow(\ell)+1})\frac{1}{\epsilon}\log\frac{1}{\epsilon}) = \OO(\frac{\ell}{\epsilon} \log\frac{1}{\epsilon})$.
		\end{itemize}

		In conclude, within $\OO(\Sigma^{\ell}_{i=1}|C_i|+\frac{\ell }{\epsilon}\log \frac{1}{\epsilon})$ processing time, we will build a tree structure whose root node contains an $\OO(\epsilon \log \ell + \epsilon^2 \ell)$-approximate set with cardinality of $\OO(\frac{1}{\epsilon})$ for $\oplus^{2^{pow(\ell)+1}}_{i=1}C_i$. Meanwhile, with the help of this tree structure, an $\OO(\frac{\ell}{\epsilon}\log\frac{1}{\epsilon})$-time oracle for backtracking from this approximate set to $\dot{\cup}^{\ell}_{i=1} C_i$ is derived.		
		\qed\end{proof}
	
	Lemma~\ref{lemma:tree-fashion_} implies the following corollary, which allows us to build the approximate set of $\dot{\cup}^{\ell}_{i=1}X_i$ from the approximate set of each $X_i$.
	\begin{corollary}\label{coro:tree-fashion_}
		Given multisets $X_1, X_2,\cdots,X_{\ell} \subset \mathbb{R}$. Let $C_i$ be an $\tilde\OO(\epsilon)$-approximate set of $S(X_i)$ with an additive error of $ERR_i$, where $i = 1,2,\cdots, \ell$. Assume that for each $C_i$, there is a $T_i$-time oracle for backtracking from $C_i$ to $X_i$. Then in $\OO(\sum^{\ell}_{i=1}|C_i| +\frac{\ell}{\epsilon}\log \frac{1}{\epsilon})$ processing time, within an additive error of $(1+\OO(\epsilon \log \ell + \epsilon^2 \ell)) \Sigma^{\ell}_{i=1}ERR_i$, we can 
		\begin{itemize}
			\item[(i).] Compute an $\tilde\OO(\epsilon \log \ell + \epsilon^2 \ell)$-approximate set with cardinality of $\OO(\frac{1}{\epsilon})$ for $S(\dot{\cup}^{\ell}_{i=1}X_i)$;
			\item[(ii).] Meanwhile build an $\OO(\sum^{\ell}_{i=1}T_i+\frac{\ell}{\epsilon}\log\frac{1}{\epsilon})$-time oracle for backtracking from this approximate set to $\dot{\cup}^{\ell}_{i=1}X_i$.
		\end{itemize}
		
	\end{corollary}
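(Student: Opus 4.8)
The plan is to derive the corollary from Lemma~\ref{lemma:tree-fashion_} by combining one structural identity with a bookkeeping argument that tracks how the relative and additive errors attached to the $C_i$'s propagate through the tree built in that lemma. First I would record the identity $S(\dot{\cup}_{i=1}^{\ell}X_i)=\oplus_{i=1}^{\ell}S(X_i)$. Indeed, any $Y\subset\dot{\cup}_{i=1}^{\ell}X_i$ can be written as $Y=\dot{\cup}_{i=1}^{\ell}Y_i$ with $Y_i\subset X_i$ (distribute the multiplicity of each element greedily among the $X_i$'s), so $\Sigma(Y)=\sum_i\Sigma(Y_i)\in\oplus_iS(X_i)$; conversely, for $s_i\in S(X_i)$ pick a witness $Y_i\subset X_i$ with $\Sigma(Y_i)=s_i$, whence $\sum_is_i=\Sigma(\dot{\cup}_iY_i)\in S(\dot{\cup}_iX_i)$, and since $0\in S(X_i)$ the ``$\cup\{0\}$'' in the definition of $\oplus$ is vacuous here. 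Thus it suffices to build an $\tilde\OO(\epsilon\log\ell+\epsilon^2\ell)$-approximate set of $\oplus_{i=1}^{\ell}S(X_i)$ of cardinality $\OO(\frac1\epsilon)$ together with a backtracking oracle from it to $\dot{\cup}_{i=1}^{\ell}X_i$.

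Next I would apply Lemma~\ref{lemma:tree-fashion_} directly to the multisets $C_1,\dots,C_\ell$ themselves. This produces, in $\OO(\sum_i|C_i|+\frac{\ell}{\epsilon}\log\frac1\epsilon)$ processing time, a tree whose root contains an $\OO(\epsilon\log\ell+\epsilon^2\ell)$-approximate set $U$ of $\oplus_{i=1}^{\ell}C_i$ with $|U|=\OO(\frac1\epsilon)$, together with an $\OO(\frac{\ell}{\epsilon}\log\frac1\epsilon)$-time oracle backtracking from $U$ to $\dot{\cup}_{i=1}^{\ell}C_i$. It then remains to argue that $U$ is in fact a good approximate set of $\oplus_iS(X_i)$, and to chain the given $T_i$-time oracles ($C_i\to X_i$) onto the oracle from the lemma.

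For the approximate-set part I would first prove an additive-error strengthening of Observation~\ref{obs:cc_2}: if $B_j$ is an $f(\epsilon)$-approximate set of $A_j$ with additive error $E_j$ for $j=1,2$, then every $\epsilon$-approximate set of $B_1\oplus B_2$ is an $(\epsilon+(1+\epsilon)f(\epsilon))$-approximate set of $A_1\oplus A_2$ with additive error $E_1+E_2$; the three case-estimates in the proof of Observation~\ref{obs:cc_2} go through verbatim, the terms $E_j$ simply adding whenever a sum is split into its two coordinates, while no relative bound changes. Feeding this into the same bottom-up induction over the tree that proves Lemma~\ref{lemma:tree-fashion_} — now regarding each leaf $C_i$ as an $\tilde\OO(\epsilon)$-approximate set of $S(X_i)$ with additive error $ERR_i$, so the recurrence $f^h(\epsilon)=\epsilon+(1+\epsilon)f^{h-1}(\epsilon)$ is run from $f^0(\epsilon)=\tilde\OO(\epsilon)$ — shows that $U$ is an $\tilde\OO(\epsilon\log\ell+\epsilon^2\ell)$-approximate set of $\oplus_iS(X_i)=S(\dot{\cup}_iX_i)$ with additive error $\sum_iERR_i\le(1+\OO(\epsilon\log\ell+\epsilon^2\ell))\sum_iERR_i$; here I also use $C_i^{\max}\le(1+\tilde\OO(\epsilon))\Sigma(X_i)$ to rewrite the relevant scale $\sum_iC_i^{\max}$ as $(1+\tilde\OO(\epsilon))\Sigma(\dot{\cup}_{i=1}^{\ell}X_i)$, which is the ``$u$'' of the target set $S(\dot{\cup}_iX_i)$. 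For the oracle: on input $c\in U$, run the Lemma~\ref{lemma:tree-fashion_} oracle to obtain $(c_1,\dots,c_\ell)$ with $c_i\in C_i\cup\{0\}$ and $|c-\sum_ic_i|\le\OO(\epsilon\log\ell+\epsilon^2\ell)\sum_iC_i^{\max}$, then for each $i$ with $c_i\ne0$ invoke the $T_i$-time oracle to get $X_i'\subset X_i$ with $|c_i-\Sigma(X_i')|\le\tilde\OO(\epsilon)\Sigma(X_i)+ERR_i$ (set $X_i'=\emptyset$ when $c_i=0$), and return $\dot{\cup}_{i=1}^{\ell}X_i'\subset\dot{\cup}_{i=1}^{\ell}X_i$. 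The triangle inequality gives $|c-\Sigma(\dot{\cup}_{i=1}^{\ell}X_i')|\le\tilde\OO(\epsilon\log\ell+\epsilon^2\ell)\,\Sigma(\dot{\cup}_{i=1}^{\ell}X_i)+\sum_iERR_i$, the query time is $\OO(\sum_iT_i+\frac{\ell}{\epsilon}\log\frac1\epsilon)$, and the preprocessing time is the $\OO(\sum_i|C_i|+\frac{\ell}{\epsilon}\log\frac1\epsilon)$ inherited from Lemma~\ref{lemma:tree-fashion_}.

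The main obstacle I anticipate is purely the bookkeeping in the third step: correctly carrying both the additive terms $ERR_i$ and the ``$u$''-scales through the $\OO(\log\ell)$ levels of the tree while keeping the relative error at $\tilde\OO(\epsilon\log\ell+\epsilon^2\ell)$, which amounts to re-running the recurrence of Lemma~\ref{lemma:tree-fashion_} with an extra additive component in the invariant. Beyond stating and verifying the additive-error analogue of Observation~\ref{obs:cc_2}, there is no new conceptual difficulty.
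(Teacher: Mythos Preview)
Your proposal is correct and follows essentially the same approach as the paper: apply Lemma~\ref{lemma:tree-fashion_} to the $C_i$'s to obtain $U$, verify that $U$ is the claimed approximate set of $S(\dot\cup_i X_i)=\oplus_i S(X_i)$, and chain the per-leaf oracles onto the tree oracle. One small correction: in your additive-error version of Observation~\ref{obs:cc_2} the additive error propagates as $(1+\epsilon)(E_1+E_2)$ rather than exactly $E_1+E_2$ (because $B_j^{\max}\le(1+f(\epsilon))A_j^{\max}+E_j$ feeds into the $\epsilon(B_1^{\max}+B_2^{\max})$ term), and it is this compounding over $\OO(\log\ell)$ levels that yields the stated $(1+\OO(\epsilon\log\ell+\epsilon^2\ell))\sum_iERR_i$ factor; the paper avoids re-running the induction by instead directly checking the three approximate-set conditions for $U$ against $S(\dot\cup_i X_i)$ at the end, but the content is the same.
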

	\begin{proof}

		Given multisets $X_1, X_2,\cdots,X_{\ell} \subset \mathbb{R}$. Let $C_i$ be an $\tilde\OO(\epsilon)$-approximate set of $S(X_i)$ with an additive error of $ERR_i$, where $i = 1,2,\cdots, \ell$. Assume that for each $C_i$, there is a $T_i$-time oracle for backtracking from $C_i$ to $X_i$. The approach to approximate $S(\dot{\cup}^{\ell}_{i=1}X_i)$ and build an oracle for backtracking is very similar to the one we have designed in the proof of Lemma~\ref{lemma:tree-fashion_}. The only difference is that when creating leaf nodes, we let the $i$-th node contain not only $C_i$ but also the $T_i$-time oracle for backtracking from $C_i$ to $X_i$. Then we use the above iterative approach to approximate $\oplus^{\ell}_{i=1}C_i$, in $\OO(\Sigma^{\ell}_{i=1}|C_i|+\frac{\ell }{\epsilon}\log \frac{1}{\epsilon})$ processing time, we will build an augmented tree structure whose root node contains an $\OO(\epsilon \log \ell + \epsilon^2 \ell)$-approximate set with cardinality of $\OO(\frac{1 }{\epsilon})$ for $\oplus^{\ell}_{i=1}C_i$.  Denote by $U$ this approximate set.
		Meanwhile, with the help of this tree structure, an $\OO(\frac{\ell}{\epsilon}\log \frac{1}{\epsilon})$-time oracle for backtracking from $U$ to $\dot{\cup}^{\ell}_{i=1}C_i$ is derived.  
		
		We claim that within an additive error of $(1+\OO(\epsilon \log \ell + \epsilon^2 \ell)) \Sigma^{\ell}_{i=1}ERR_i$, $U$ is an $\tilde\OO(\epsilon \log \ell + \epsilon^2 \ell)$-approximate set of $S(\dot{\cup}^{\ell}_{i=1}X_i)$. Towards the claim, it is sufficient to observe the followings:
		\begin{itemize}
			\item Since $U$ is an $\OO(\epsilon \log \ell + \epsilon^2 \ell)$-approximate set  of $\oplus^{\ell}_{i=1}C_i$, then $U^{\max} \le  (1+\OO(\epsilon \log \ell + \epsilon^2 \ell)) \Sigma^{\ell}_{i=1}C^{\max}_i$. For each $S(X_i)$, note that $C_i$ is an $\tilde\OO(\epsilon)$-approximate set of $S(X_i)$ with an additive error of $ERR_i$, we have $C^{max}_i \le (1+\tilde\OO(\epsilon))\Sigma(X_i)+ERR_i$ and 
			\begin{align*}
				U^{\max} &\le  (1+\OO(\epsilon \log \ell + \epsilon^2 \ell)) \Sigma^{\ell}_{i=1}C^{\max}_i\\
				& \le  (1+\OO(\epsilon \log \ell + \epsilon^2 \ell))( (1+\tilde\OO(\epsilon))\sum^{\ell}_{i=1}\Sigma(X_i)+\sum^{\ell}_{i=1}ERR_i )\\
				&\le (1+\tilde\OO(\epsilon \log \ell + \epsilon^2 \ell))\sum^{\ell}_{i=1}\Sigma(X_i)+ (1+\OO(\epsilon \log \ell + \epsilon^2 \ell)) \sum^{\ell}_{i=1}ERR_i
			\end{align*}
			
			\item Consider any $s\in S(\dot{\cup}^{\ell}_{i=1}X_i)$. There exists $(s_1,s_2,\cdots,s_{\ell})$ where $s_i \in S(X_i)$ for $i=1,2,\cdots,\ell$, such that $s = \Sigma^{\ell}_{i=1}s_i$. For each $S(X_i)$, recall that $C_i$ is an $\tilde\OO(\epsilon)$-approximate set of $S(X_i)$ with an additive error of $ERR_i$, then $C^{max}_i \le (1+\tilde\OO(\epsilon))\Sigma(X_i)+ERR_i$. Moreover, for each $s_i \in S(X_i)$, there exits $s'_i\in C_i$ such that $|s'_i-s_i|\le \tilde\OO(\epsilon)\Sigma(X_i)+ERR_i$, it follows that $|s-\Sigma^{\ell}_{i=1}s'_i|\le \tilde\OO(\epsilon)\sum^{\ell}_{i=1}\Sigma(X_i)+\Sigma^{\ell}_{i=1}ERR_i.$
			Let $s' = \Sigma^{\ell}_{i=1}s'_i$. Note that $s'\in \oplus^{\ell}_{i=1} C_i$ and $U$ is an $\OO(\epsilon \log \ell + \epsilon^2 \ell)$-approximate set of  $\oplus^{\ell}_{i=1}C_i$, thus there exists $s''\in U$ such that 
			$|s''-s'|\le \OO(\epsilon \log \ell + \epsilon^2 \ell)\Sigma^{\ell}_{i=1}C^{max}_i \le \tilde\OO(\epsilon \log \ell + \epsilon^2 \ell)\sum^{\ell}_{i=1}\Sigma(X_i) + \OO(\epsilon \log \ell + \epsilon^2 \ell) \Sigma^{\ell}_{i=1}ERR_i.$
			Furthermore, we have
			$$|s''-s|\le |s''-s'|+|s'-s|\le  \tilde\OO(\epsilon \log \ell + \epsilon^2 \ell)\sum^{\ell}_{i=1}\Sigma(X_i) + (1+\OO(\epsilon \log \ell + \epsilon^2 \ell)) \Sigma^{\ell}_{i=1}ERR_i.$$
			\item Consider any $y\in U$. Since $U$ is an $\OO(\epsilon \log \ell + \epsilon^2 \ell)$-approximate set  of $\oplus^{\ell}_{i=1}C_i$, there exists $(c_1,c_2,\cdots,c_{\ell})$ where $c_i\in C_i\cup\{0\}$ for $i=1,2,\cdots$, such that 	
			$|y-\Sigma^{\ell}_{i=1}c_i|\le  \OO(\epsilon \log \ell + \epsilon^2 \ell)\Sigma^{\ell}_{i=1}C^{max}_i \le \tilde\OO(\epsilon \log \ell + \epsilon^2 \ell)\sum^{\ell}_{i=1}\Sigma(X_i) + \OO(\epsilon \log \ell + \epsilon^2 \ell) \Sigma^{\ell}_{i=1}ERR_i$. For each $c_i\in C_i\cup\{0\}$, if $c_i = 0$, let $X'_i = \emptyset$, apparently $X'_i \subset X_i$ and $|c_i -\Sigma(X'_i)|\le \tilde\OO(\epsilon) \Sigma(X_i)+ERR_i$. Else if $c_i\in C_i$, note that $C_i$ is an $\tilde\OO(\epsilon)$-approximate set of $S(X_i)$ with an additive error of $ERR_i$, there exists  $X'_i \subset X_i$ such that $|c_i -\Sigma(X'_i)|\le \tilde\OO(\epsilon) \Sigma(X_i)+ERR_i$.  Furthermore, we have

			
			\begin{align*}
				|y- \sum^{\ell}_{i=1}\Sigma(X'_i)|&\le |y- \sum^{\ell}_{i=1}c_i|+|\sum^{\ell}_{i=1}c_i-\sum^{\ell}_{i=1}\Sigma(X'_i)| \\
				&\le  \tilde\OO(\epsilon \log \ell + \epsilon^2 \ell)\sum^{\ell}_{i=1}\Sigma(X_i) + \OO(\epsilon \log \ell + \epsilon^2 \ell) \Sigma^{\ell}_{i=1}ERR_i + \tilde\OO(\epsilon)\Sigma^{\ell}_{i=1}\Sigma(X_i)+\Sigma^{\ell}_{i=1}ERR_i\\
				&= \tilde\OO(\epsilon \log \ell + \epsilon^2 \ell)\sum^{\ell}_{i=1}\Sigma(X_i) + (1+\OO(\epsilon \log \ell + \epsilon^2 \ell)) \sum^{\ell}_{i=1}ERR_i.
			\end{align*}		
			It is easy to see that $\dot{\cup}^{\ell}_{i=1} X'_{i} \subset \dot{\cup}^{\ell}_{i=1} X_i$ and $\Sigma(\dot{\cup}^{\ell}_{i=1} X'_{i} ) = \sum^{\ell}_{i=1}\Sigma(X'_i) \in S(\dot{\cup}^{\ell}_{i=1} X_i)$.
		\end{itemize}
		
		
		In the following, we show that with the help of the above augmented tree structure, an $\OO(\Sigma^{\ell}_{i=1}T_i+\frac{\ell}{\epsilon}\log\frac{1}{\epsilon})$-time oracle for backtracking from $U$ to $\dot{\cup}^{\ell}_{i=1} X_i$ can be derived. The oracle works as follows. Given any $y\in U$, note that $U$ is an $\OO(\epsilon\log\ell+\epsilon^2 \ell)$-approximate set of $\oplus^{\ell}_{i=1}C_i$ and with the help of augmented tree structure, an $\OO(\frac{\ell}{\epsilon}\log\frac{1}{\epsilon})$-time oracle for backtracking from $U$ to $\dot{\cup}^{\ell}_{i=1} C_i$ has been derived. Thus in $\OO(\frac{\ell}{\epsilon}\log\frac{1}{\epsilon})$ processing time, we can determine $c_i\in C_i\cup\{0\}$ for every $i=1,2,\cdots,\ell$ such that $|y-\Sigma^{\ell}_{i=1}c_i|\le  \OO(\epsilon \log \ell + \epsilon^2 \ell)\Sigma^{\ell}_{i=1}C^{max}_i \le \tilde\OO(\epsilon \log \ell + \epsilon^2 \ell)\sum^{\ell}_{i=1}\Sigma(X_i) + \OO(\epsilon \log \ell + \epsilon^2 \ell) \Sigma^{\ell}_{i=1}ERR_i$. For each $c_i\in C_i\cup\{0\}$, if $c_i = 0$, let $X'_i = \emptyset$, apparently $X'_i \subset X_i$ and $|c_i -\Sigma(X'_i)|\le \tilde\OO(\epsilon) \Sigma(X_i)+ERR_i$. Else if $c_i\in C_i$, recall that the leaf node containing $C_i$ also contains a $T_i$-time oracle for backtracking from $C_i$ to $X_i$, thus in $T_i$ processing time, we can determine $X'_i \subset X_i$ such that $|c_i -\Sigma(X'_i)|\le \tilde\OO(\epsilon) \Sigma(X_i)+ERR_i$. To summarize, with the help of augmented tree structure, in total $\OO(\Sigma^{\ell}_{i=1}T_i+\frac{\ell}{\epsilon}\log\frac{1}{\epsilon})$  processing time, we can determine $X'_i \subset X_i$ for every $i=1,2,\cdots,\ell$ such that   
		$$|y- \sum^{\ell}_{i=1}\Sigma(X'_i)| \le \tilde\OO(\epsilon \log \ell + \epsilon^2 \ell)\sum^{\ell}_{i=1}\Sigma(X_i) + (1+\OO(\epsilon \log \ell + \epsilon^2 \ell)) \sum^{\ell}_{i=1}ERR_i.$$ 
		Note that $ \dot{\cup}^{\ell}_{i=1} X'_i \subset  \dot{\cup}^{\ell}_{i=1} X_i$ and $\Sigma( \dot{\cup}^{\ell}_{i=1} X'_i) = \sum^{\ell}_{i=1} \Sigma(X'_i)$, hence Corollary~\ref{coro:tree-fashion_} is proved.		
		\qed\end{proof}
	
	\subsubsection{Approximating Capped Sumset.}
	Sometimes, we only care about approximating  $\omega$-capped sumset, e.g., $(\oplus^{\ell}_{i=1}X_i)\cap [0, \omega]$. Towards this, we develop the following Lemma~\ref{lemma:sub_sum_top}. 
	
	\begin{lemma}\label{lemma:sub_sum_top}
		Given multisets $C_1,C_2,\cdots,C_{\ell} \subset  \mathbb{R}_{\ge 0}$ 
		and a parameter $\omega >0$. 
		In $\OO(\sum^{\ell}_{i=1}|C_i| + \frac{\ell}{\epsilon}\log \frac{1}{\epsilon})$ processing time, we can 
		\begin{itemize}
			\item[(i).] Compute a $(\OO(\epsilon \ell + \epsilon^2 \ell^2), \omega)$-approximate set with cardinality of $\OO(\frac{1}{\epsilon})$ for $C_1 \oplus C_2 \oplus \cdots \oplus C_{\ell}$;
			\item[(ii).] Meanwhile build an $\OO(\frac{\ell}{\epsilon}\log\frac{1}{\epsilon})$-time oracle for backtracking from this approximate set to  $\dot{\cup}^{\ell}_{i=1}C_i$.
		\end{itemize}
	\end{lemma}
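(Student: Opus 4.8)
The plan is to adapt the divide-and-conquer scheme behind Lemma~\ref{lemma:tree-fashion_}, inserting a truncation at $\omega$ after every merge so that each set we manipulate stays inside an interval of length $\OO(\omega)$. The starting observation is that if a sum $\sum_i c_i$ lies in $[0,\omega]$ then every $c_i\le\omega$; hence removing from each $C_i$ the elements exceeding $\omega$ leaves $(\oplus_{i=1}^{\ell}C_i)\cap[0,\omega]$ unchanged. Let $\hat C_1,\dots,\hat C_{\ell}$ be these truncated multisets, so $\hat C_i^{\max}\le\omega$. Pad with copies of $\{0\}$ to a power of two, build a balanced binary tree over $\hat C_1,\dots,\hat C_{\ell}$, and store at each leaf --- exactly as in the proof of Observation~\ref{obs:cc_1} --- the auxiliary arrays recording, for each discretized value, some genuine element of $C_i$ that realizes it; these arrays are what the backtracking oracle will ultimately consult.

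At every internal node I would merge the two children's sets with the FFT routine of Observation~\ref{obs:cc_1} --- whose granularity automatically equals $\Theta(\epsilon\omega)$ because the children's maxima are $\Theta(\omega)$, and which rounds \emph{down} --- and then intersect the output with $[0,(1+r_h)\omega]$, where $r_h$ is the relative error already accumulated at that level. Since every set along the way lies in an interval of length $\OO(\omega)$ sampled at scale $\Theta(\epsilon\omega)$, it has cardinality $\OO(1/\epsilon)$, so each merge together with its sub-oracle costs $\OO(\frac{1}{\epsilon}\log\frac{1}{\epsilon})$. With $\OO(\ell)$ internal nodes, plus $\OO(\sum_{i=1}^{\ell}|C_i|)$ time to build the leaves and their arrays, the total running time is $\OO(\sum_{i=1}^{\ell}|C_i|+\frac{\ell}{\epsilon}\log\frac{1}{\epsilon})$, and the root stores a set of size $\OO(1/\epsilon)$.

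For correctness I would check the three conditions of Definition~\ref{def:e_appr} by a bottom-up induction. Rounding down forces every discretized partial sum to be at most the exact one, so for any target $a'=\sum_i c_i\le\omega$ all of its partial sums along the tree stay $\le\omega$ and survive every truncation; this delivers condition (iii). Conditions (i) and (ii) follow because each merge moves values by $\OO(\epsilon\omega)$, plus a lower-order $\OO(\epsilon\cdot(\text{error so far}))$ term caused by the slightly inflated truncation threshold. Writing $E_h$ for the absolute error of a level-$h$ node, this yields $E_h\le 2E_{h-1}+\OO(\epsilon\omega+\epsilon E_{h-1})$ with $E_0=\OO(\epsilon\omega)$, and solving over the $\OO(\log\ell)$ levels gives $E_{\mathrm{root}}=\OO\!\big((\epsilon\ell+\epsilon^2\ell^2)\omega\big)$, i.e. the root set is a $(\OO(\epsilon\ell+\epsilon^2\ell^2),\omega)$-approximate set of $\oplus_{i=1}^{\ell}C_i$. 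The backtracking oracle descends the tree, calling each node's $\OO(\frac{1}{\epsilon}\log\frac{1}{\epsilon})$-time sub-oracle once per level to peel off partial sums, and finally uses the leaf arrays to convert the recovered discretized leaf values into actual elements of the $C_i$; this takes $\OO(\frac{\ell}{\epsilon}\log\frac{1}{\epsilon})$ in total.

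I expect the delicate part to be the bookkeeping around the cut-off $\omega$: one must fix the rounding direction and the per-level truncation threshold so that (a) no partial sum that could still be completed to a legitimate target $\le\omega$ is ever thrown away, (b) the sets never grow past $\OO(1/\epsilon)$, and (c) the error, which roughly doubles at each level, stays under control --- it is precisely this doubling over $\log\ell$ levels that turns the benign $\epsilon\cdot\mathrm{polylog}(\ell)$ loss of the uncapped Lemma~\ref{lemma:tree-fashion_} into the quadratic $\epsilon^2\ell^2$ loss here. (A leaner alternative sidesteps the tree: truncate each $C_i$ at $\omega$, rescale by $\epsilon\omega$ to integers in $[0,\OO(1/\epsilon)]$, apply Lemma~\ref{lemma:cap_sumset} once with parameter $\omega'=\OO(1/\epsilon)$, rescale back, and lift the resulting oracle through the leaf arrays; this already produces a $(\OO(\epsilon\ell),\omega)$-approximate set and the required oracle within the same time bound.)
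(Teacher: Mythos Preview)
Your main approach is essentially the paper's proof. The paper builds the same balanced binary tree over $C_1,\dots,C_\ell$, and at every internal node applies a ``merge-then-cap'' step (packaged there as Observation~\ref{obs:cap_apx_ss}) using a threshold that grows with the accumulated error; the inductive error propagation is isolated as Observation~\ref{obs:apx_apx_ss}, giving exactly your recurrence $f^h(\epsilon)=\epsilon+2(1+\epsilon)f^{h-1}(\epsilon)$ and hence the $\OO(\epsilon\ell+\epsilon^2\ell^2)$ bound. Your additional ``round-down'' observation for condition~(iii) is correct but, once the threshold at level $h$ is set to $(1+2f^{h-1}(\epsilon))\omega$, it is not actually needed---the growing threshold alone guarantees that the approximation of any true partial sum of a target $\le\omega$ survives every cap. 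One small slip: the backtracking oracle calls each node's sub-oracle once per \emph{node}, not once per level; the total $\OO(\frac{\ell}{\epsilon}\log\frac{1}{\epsilon})$ is still right since the tree has $\OO(\ell)$ nodes.

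Your parenthetical ``leaner alternative'' is a genuinely different and cleaner argument than the paper's. Discretizing every $\hat C_i$ to the grid $\epsilon\omega\mathbb{Z}$, applying the exact capped-sumset routine of Lemma~\ref{lemma:cap_sumset} once with cap $\OO(1/\epsilon)$, and lifting through the leaf arrays indeed yields an $(\OO(\epsilon\ell),\omega)$-approximate set of size $\OO(1/\epsilon)$ with an $\OO(\frac{\ell}{\epsilon}\log\frac{1}{\epsilon})$-time backtracking oracle, in the same overall time. This avoids the tree entirely, eliminates the $\epsilon^2\ell^2$ term, and is strictly stronger than the statement; the paper's tree-based route has the advantage of reusing the exact machinery (Observations~\ref{obs:cap_apx_ss}--\ref{obs:apx_apx_ss}) that parallels the uncapped Lemma~\ref{lemma:tree-fashion_}, but your alternative is the more economical proof.
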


	The proof idea of Lemma~\ref{lemma:sub_sum_top} is similar to the proof of Lemma~\ref{lemma:tree-fashion_}. We iteratively construct a tree structure such that the root node contains the desired approximate set of $\oplus^{\ell}_{i=1} C_i$ and the corresponding backtracking oracle. 
	When building the tree structure, the biggest difference is that in the proof of Lemma~\ref{lemma:tree-fashion_}, each layer-$h$ node is obtained by computing the $\epsilon$-approximate set of the sumset of its two child nodes, while in the proof of Lemma~\ref{lemma:sub_sum_top}, each layer-$h$ node is obtained by computing the $(\epsilon,(1+2f^{h}(\epsilon))\omega)$-approximate set of the sumset of its two child nodes, where $f^h(\epsilon)$'s are defined by the following recurrence relation: $f^0(\epsilon) = 0$  and $f^{t}(\epsilon) = \epsilon+2(1+\epsilon)f^{t-1}(\epsilon) \text{ \ for \ } 1\le t \le  h-1$. 
	
	Before proving Lemma~\ref{lemma:sub_sum_top}, we first present the following observations. In particular, Observation~\ref{obs:cap_apx_ss} provides an efficient algorithm for computing a $(\epsilon,(1+2f^{h}(\epsilon))\omega)$-approximate set of the sumset of given two multisets, and Observation~\ref{obs:apx_apx_ss}  guarantees that the set in root node is actually the $(\OO(\epsilon\ell+\epsilon^2 \ell^2),\omega)$-approxiamte set of $\oplus^{\ell}_{i=1}C_i$. 
	
	
	
	

	\begin{observation}\label{obs:cap_apx_ss}
		Given multisets $B_1,B_2\subset \mathbb{R}_{\ge 0}$ 
		and a parameter $\omega > 0$. In $\OO(|B_1| +|B_2|+\frac{1}{\epsilon}\log \frac{1}{\epsilon})$ processing time, we can compute an $(\epsilon,\omega)$-approximate set with cardinality of $\OO(\frac{1}{\epsilon})$ for $B_1 \oplus B_2$, while building an  $\OO(\frac{1}{\epsilon}\log\frac{1}{\epsilon})$-time oracle for backtracking from this approximate set to $B_1 \dot{\cup} B_2$.
	\end{observation}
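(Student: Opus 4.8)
The plan is to mimic the proof of Observation~\ref{obs:cc_1}, with the rounding scale and an extra truncation step both governed by the cap $\omega$ rather than by $B_1^{\max}+B_2^{\max}$. Fix the bucket width $\delta := \epsilon\omega/2$. From each $B_i$ discard every element exceeding $\omega$ (by non-negativity these can never be a summand of any element of $(B_1\oplus B_2)\cap[0,\omega]$), and for each remaining $x$ record its bucket index $\lfloor x/\delta\rfloor\in\{0,1,\dots,\lfloor 2/\epsilon\rfloor\}$. Store these indices in a $0/1$ array $L_i$ of length $\lfloor 2/\epsilon\rfloor+1$ with $L_i[0]=1$, together with an auxiliary array $L'_i$ in which $L'_i[j]$ holds one witness $x\in B_i\cup\{0\}$ with $\lfloor x/\delta\rfloor=j$ (and $L'_i[0]=\{0\}$). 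Running FFT on $L_1,L_2$ yields $\tilde B_1\oplus\tilde B_2\subset[0,\lfloor 4/\epsilon\rfloor]$; we then keep only the indices $j$ with $j\delta\le(1+\epsilon)\omega$ and set $C:=\delta\cdot\{\,j\in \tilde B_1\oplus\tilde B_2 : j\delta\le(1+\epsilon)\omega\,\}$.

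Next I would verify the three defining properties of an $(\epsilon,\omega)$-approximate set of $B_1\oplus B_2$. Property (i), $C\subset[0,(1+\epsilon)\omega]$, is immediate from the truncation, and $|C|\le\lfloor(1+\epsilon)\cdot 2/\epsilon\rfloor+1=\OO(1/\epsilon)$. For property (ii), any $c\in C$ has the form $\delta(j_1+j_2)$ with $j_i\in\tilde B_i\cup\{0\}$; choosing $a_i\in B_i\cup\{0\}$ with $\lfloor a_i/\delta\rfloor=j_i$ gives $a_1+a_2\in B_1\oplus B_2$ and, since $0\le a_i-\delta j_i<\delta$, we get $|c-(a_1+a_2)|<2\delta=\epsilon\omega$. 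For property (iii), let $a'=a'_1+a'_2\in(B_1\oplus B_2)\cap[0,\omega]$ with $a'_i\in B_i\cup\{0\}$; since $a'_1,a'_2\ge 0$ and $a'_1+a'_2\le\omega$, each $a'_i\le\omega$, so $a'_i$ was retained (or equals $0$, which corresponds to the always-present bucket $0$), and its bucket index is stored. Then $b':=\delta\lfloor a'_1/\delta\rfloor+\delta\lfloor a'_2/\delta\rfloor$ satisfies $0\le b'\le a'\le\omega\le(1+\epsilon)\omega$, hence $b'\in C$, and $|b'-a'|<2\delta=\epsilon\omega$.

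The backtracking oracle is the capped analogue of the one in Observation~\ref{obs:cc_1}: given $c=\delta j\in C$, scan every $j_1$ with $L_1[j_1]=1$, and as soon as $L_2[j-j_1]=1$ output $b_1\in L'_1[j_1]$ and $b_2\in L'_2[j-j_1]$; then $b_1+b_2\in B_1\oplus B_2$ and, exactly as above, $|c-(b_1+b_2)|<\epsilon\omega$, all in $\OO(\frac1\epsilon\log\frac1\epsilon)$ time. For the running time: scanning $B_1,B_2$ and initializing the length-$\OO(1/\epsilon)$ arrays costs $\OO(|B_1|+|B_2|+\frac1\epsilon)$, the FFT on arrays of length $\OO(1/\epsilon)$ costs $\OO(\frac1\epsilon\log\frac1\epsilon)$, and truncating and rescaling costs $\OO(1/\epsilon)$, for a total of $\OO(|B_1|+|B_2|+\frac1\epsilon\log\frac1\epsilon)$.

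The only genuine difference from Observation~\ref{obs:cc_1}, and thus the one point deserving care, is the interaction between discarding elements above $\omega$ and truncating the FFT output at $(1+\epsilon)\omega$: one must confirm this does not break property (iii). It does not, precisely because non-negativity forces every summand of an $\omega$-capped sum to be itself at most $\omega$, and because rounding each summand down never increases the sum, so the rounded representative still lies in $[0,\omega]$ and survives the truncation.
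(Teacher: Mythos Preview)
Your proof is correct and follows essentially the same approach as the paper. The only difference is packaging: the paper first restricts to $B_i^{\omega}=B_i\cap[0,\omega]$, invokes Observation~\ref{obs:cc_1} with parameter $\epsilon/2$ as a black box (so the error $\frac{\epsilon}{2}(B_1^{\omega}\oplus B_2^{\omega})^{\max}\le\epsilon\omega$), and then truncates the output at $(1+\epsilon)\omega$, whereas you unroll that call and tie the bucket width directly to $\omega$; the underlying computation and verification are identical.
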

	\begin{proof}
		Define $B^{\omega}_{i} = \{b \ | \ b \in B_{i} \text{ \ and \ } b \le \omega \}$ for $i=1,2$.  Recall Observation~\ref{obs:cc_1}, in $\OO(|B^{\omega}_{1} |+|B^{\omega}_{2} |+\frac{1}{\epsilon}\log\frac{1}{\epsilon})$ processing time, we can compute an $\frac{\epsilon}{2}$-approximate set with cardinality of $\OO(\frac{1}{\epsilon})$ for $B^{\omega}_{1} \oplus B^{\omega}_{2}$, and meanwhile build an $\OO(\frac{1}{\epsilon}\log \frac{1}{\epsilon})$-time oracle for backtracking from this approximate set to $B^{\omega}_1 \dot{\cup} B^{\omega}_2$. Denote by $C$ this approximate set. Let $C^{\omega}= C\cap[0,(1+\epsilon)\omega]$.   We claim that $C^{\omega}$ is an $(\epsilon,\omega)$-approximate set of $B_1 \oplus B_2$. It is sufficient to observe the followings:
		\begin{itemize}	
			\item $C^{\omega} = C\cap[0,(1+\epsilon)\omega] \subset [0, (1+\epsilon)\omega]$;
			\item Consider any $c \in C^{\omega}$. Note that $C^{\omega}\subset C$ and $C$ is an $\frac{\epsilon}{2}$-approximate set of $B^{\omega}_{1} \oplus B^{\omega}_{2}$, thus there exist $c' \in B^{\omega}_{1} \oplus B^{\omega}_{2}$ such that $|c-c'| \le \frac{\epsilon}{2} ( B^{\omega}_{1} \oplus B^{\omega}_{2})^{\max} \le \epsilon \omega$. 
			\item Consider any $b \in (B_1 \oplus B_{2} )\cap [0,\omega]$. It is easy to see that $b \le \omega$ and $b \in B^{\omega}_{1} \oplus B^{\omega}_{2}$. By the fact that $C$ is an $\frac{\epsilon}{2}$-approximate set of $B^{\omega}_{1} \oplus B^{\omega}_{2}$, there exists $b'\in C$ such that $|b-b'|\le \frac{\epsilon}{2} ( B^{\omega}_{1} \oplus B^{\omega}_{2})^{\max} \le \epsilon \omega$. It follows that $b' \le b +\epsilon \omega \le (1+\epsilon)\omega$, thus $b' \in C^{\omega}$.
		\end{itemize}
		
		Moreover, note that $C^{\omega} \subset C$ and $B^{\omega}_1 \dot{\cup} B^{\omega}_2 \subset B_1 \dot{\cup} B_2$, the $\OO(\frac{1}{\epsilon}\log\frac{1}{\epsilon})$-time oracle built for backtracking from $C$ to $B^{\omega}_1 \dot{\cup} B^{\omega}_2$ diretly yields an  $\OO(\frac{1}{\epsilon}\log\frac{1}{\epsilon})$-time oracle for backtracking from $C^{\omega}$ to $B_1 \dot{\cup} B_2$.
		
		Note that the time to obtain $B^{\omega}_{1}$ and $B^{\omega}_2$ is $\OO(|B_{1}|+|B_{2}|)$. Since $|C| = \OO(\frac{1}{\epsilon})$, the time to obtain $C^{\omega}$ from $C$ is $\OO(\frac{1}{\epsilon})$. To summarize, within $\OO(|B_{1}|+|B_{2}|+\frac{1}{\epsilon}\log\frac{1}{\epsilon})$ processing time,  we can compute an $(\epsilon,\omega)$-approximate set with cardinality of $\OO(\frac{1}{\epsilon})$ for $B_1 \oplus B_2$, while building an  $\OO(\frac{1}{\epsilon}\log\frac{1}{\epsilon})$-time oracle for backtracking from this approximate set to $B_1 \dot{\cup} B_2$.
		\qed\end{proof}
	
	\begin{observation}\label{obs:apx_apx_ss}
		Given multisets $A_1, A_2 \subset \mathbb{R}_{\ge 0}$ 
		and a parameter $\omega>0$. Let $B_1$ and $B_2$ be $(f(\epsilon),\omega)$-approximate sets of $A_1$ and $A_2$, respectively, where $f$ is an arbitrary computable function. Then any $(\epsilon,(1+2f(\epsilon))\omega)$-approximate set of $B_1 \oplus B_2$ is an $(\epsilon+2(1+\epsilon)f (\epsilon),\omega)$-approximate set of $A_1 \oplus A_2$.
		
	\end{observation}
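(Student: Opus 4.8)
The plan is to run the same error-propagation argument as in Observation~\ref{obs:cc_2}, but this time carefully tracking the interval cap at each stage, since the statement is the ``capped'' analog. Fix an $(\epsilon,(1+2f(\epsilon))\omega)$-approximate set $C$ of $B_1\oplus B_2$, abbreviate $u':=(1+2f(\epsilon))\omega$ and $r:=\epsilon+2(1+\epsilon)f(\epsilon)$, and verify the three conditions of Definition~\ref{def:e_appr} for $C$ relative to $A_1\oplus A_2$ with the parameter pair $(r,\omega)$.

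Condition (i) will follow from the algebraic identity $(1+\epsilon)(1+2f(\epsilon))=1+r$, so that $C\subset[0,(1+\epsilon)u']=[0,(1+r)\omega]$; this identity is exactly why the constant in the conclusion is $\epsilon+2(1+\epsilon)f(\epsilon)$ and nothing looser.

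For conditions (ii) and (iii) I would argue elementwise, invoking the triangle inequality twice. Given $c\in C$, use condition (ii) of ``$C$ is an $(\epsilon,u')$-approximate set of $B_1\oplus B_2$'' to obtain $b=b_1+b_2\in B_1\oplus B_2$ with $b_i\in B_i\cup\{0\}$ and $|c-b|\le\epsilon u'$; then replace each nonzero $b_i$ by a nearby $a_i\in A_i$ with $|a_i-b_i|\le f(\epsilon)\omega$ via condition (ii) of ``$B_i$ approximates $A_i$'', getting $a=a_1+a_2\in A_1\oplus A_2$ and $|c-a|\le\epsilon u'+2f(\epsilon)\omega=r\omega$. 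For condition (iii), start from $a=a_1+a_2\in(A_1\oplus A_2)\cap[0,\omega]$ with $a_i\in A_i\cup\{0\}$; since the $a_i$ are nonnegative and sum to at most $\omega$, each $a_i$ is either $0$ or lies in $A_i\cap[0,\omega]$, so condition (iii) of ``$B_i$ approximates $A_i$'' yields $b_i\in B_i\cup\{0\}$ with $|b_i-a_i|\le f(\epsilon)\omega$, whence $b=b_1+b_2$ satisfies $b\le\omega+2f(\epsilon)\omega=u'$, i.e. $b\in(B_1\oplus B_2)\cap[0,u']$; then condition (iii) of the $(\epsilon,u')$-approximate set $C$ supplies $c\in C$ with $|c-b|\le\epsilon u'$, and the triangle inequality again gives $|c-a|\le\epsilon u'+2f(\epsilon)\omega=r\omega$.

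The argument is pure bookkeeping, and the only point that genuinely needs attention — and the reason the cap must be inflated from $\omega$ to $(1+2f(\epsilon))\omega$ at this level of the recursion — lies in condition (iii): an element of $B_1\oplus B_2$ that approximates a $\omega$-capped element of $A_1\oplus A_2$ can overshoot $\omega$ by as much as $2f(\epsilon)\omega$, so for it to be recaptured by $C$ it must still fall under the larger cap $u'$. Once this is noticed, no step presents a real obstacle.
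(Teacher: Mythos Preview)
Your proposal is correct and follows essentially the same approach as the paper's own proof: verify the three conditions of Definition~\ref{def:e_appr} in turn, using the algebraic identity $(1+\epsilon)(1+2f(\epsilon))=1+r$ for condition~(i), and two triangle-inequality passes for conditions~(ii) and~(iii). Your explicit remark about why the cap must be inflated to $(1+2f(\epsilon))\omega$ in condition~(iii) is exactly the point the paper's argument hinges on as well.
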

	\begin{proof}
		Given multisets $A_1, A_2 \subset \mathbb{R}_{\ge 0}$ 
		and a parameter $\omega>0$. Let $B_1$ and $B_2$ be $(f(\epsilon),\omega)$-approximate sets of $A_1$ and $A_2$, respectively, where $f$ is an arbitrary computable function.  We have $B^{\max}_{i} \le (1+f(\epsilon))\omega$ for $i=1,2$. Denote by $C$ an $(\epsilon,(1+2f(\epsilon))\omega)$-approximate set of $B_1 \oplus B_2$. Towards Observation~\ref{obs:apx_apx_ss}, it is sufficient to observe the followings:
		
		\begin{itemize}
			\item Since $C$ is an $(\epsilon,(1+2f(\epsilon))\omega)$-approximate set of $B_1 \oplus B_2$, we have  $C \subset [0, (1+\epsilon)(1+2f(\epsilon))\omega]= [0, (1+\epsilon+2(1+\epsilon)f (\epsilon))\omega].$
			\item  Consider any $c\in C$. Since $C$ is an $(\epsilon,(1+2f(\epsilon))\omega)$-approximate set of $B_1 \oplus B_2$,  there exists $c' = c'_1 + c'_2 \in B_1 \oplus B_2$ where $c'_i \in B_i \cup\{0\}$, such that $|c-c'|\le \epsilon (1+2f(\epsilon))\omega$. Recall that $B_i (i=1,2)$ is an $(f (\epsilon),\omega)$-approximate set of $A_i$, then for each $c'_i$, there exists $c''_i \in A_i \cup\{0\}$ such that $|c''_i -c'_i | \le f (\epsilon)\omega$. Let $c'' = c''_1+c''_2$, it follows that $c''\in A_1 \oplus A_2$ and  $|c''-c'| = |(c''_1+c''_2)-(c'_1+c'_2)|\le 2f (\epsilon) \omega$. Furthermore, we have $
			|c''-c|\le |c''-c'|+|c'-c| \le 2f (\epsilon)\omega+ \epsilon (1+2f(\epsilon))\omega = (\epsilon+2(1+\epsilon)f (\epsilon))\omega.$
			
			\item Consider any $a \in (A_1 \oplus A_2)\cup [0,\omega]$. There exist $a_1 \in (A_{1}\cup\{0\})\cap[0,\omega]$ and  $a_2 \in (A_{2}\cup\{0\})\cap[0,\omega]$ such that $a = a_1+a_2$. For each $a_i$, recall that $B_i$ is an $(f (\epsilon),\omega)$-approximate set of $A_i$, there exists $a'_i \in B_i \cup\{0\}$ such that $|a_i -a'_i|\le f(\epsilon)\omega$, it follows that $|(a'_1+a'_2)-a| \le |a'_1-a_1|+|a'_2-a_2| \le 2f(\epsilon)\omega. $
			Let $a' = a'_1+a'_2$, it holds that
			$a' = a'_1+a'_2\le a+2f(\epsilon)\omega \le (1+2f(\epsilon))\omega$, thus $a' \in (B_1 \oplus B_2)\cap [0,(1+2f(\epsilon))\omega]$. Recall that $C$ is an $(\epsilon,(1+2f(\epsilon))\omega)$-approximate set of $B_1 \oplus B_2$,  thus for $a'$, there exists $a'' \in C$ such that $|a''-a'| \le \epsilon (1+2f(\epsilon))\omega$. To summarize, we have  $|a''-a|\le |a''-a'|+|a'-a| \le  \epsilon (1+2f(\epsilon))\omega+2f(\epsilon)\omega = (\epsilon+2(1+\epsilon)f (\epsilon))\omega.$
		\end{itemize}		
		\qed\end{proof}
	
	Now we are ready to prove Lemma~\ref{lemma:sub_sum_top}. 

	\begin{proof}[Proof of Lemma~\ref{lemma:sub_sum_top}] 
		Given multisets $C_1,C_2,\cdots,C_{\ell}\subset \mathbb{R}_{\ge 0}$ 
		and a parameter $\omega \in \mathbb{R}_{>0}$. In the following, we design an iterative approach to compute an approximate set for $\oplus^{\ell}_{i=1}C_i$ and build an oracle for backtracking.
		
		Define $C_j := \{0\}$ for $j = \ell+1, \ell+2,2^{pow(\ell)+1}$. It holds that $\oplus^{\ell}_{i=1}C_i = \oplus^{2^{pow(\ell)+1}}_{i=1}C_i$. We build a tree structure of $\OO(pow(\ell))$ layers as follows:
		\begin{itemize}
			\item At the beginning, we create $2^{pow(\ell)+1}$ leaf nodes and let the $i$-th node contain $C_i$.  Let $f^{0} (\epsilon) := 0$.
			\item 	At iteration-1, for each pair $C_{2j_{1}-1}$ and $C_{2j_{1}}$, where $j_{1} = 1,2,\cdots,2^{pow(\ell)}$, we use Observation~\ref{obs:cap_apx_ss} to compute an $(\epsilon,\omega)$-approximate set with cardinality of $\OO(\frac{1}{\epsilon})$ for $C_{2j_{1}-1}\oplus C_{2j_{1}}$ and derive an $\OO(\frac{1}{\epsilon}\log\frac{1}{\epsilon})$-time oracle for backtracking from this approximate set to $C_{2j_{1}-1}\dot{\cup} C_{2j_{1}}$. Denote by $U^1_{j_1}$ this approximate set.
			For two nodes containing $C_{2j_{1}-1}$ and $ C_{2j_{1}}$ separately, we create a parent node of these two nodes, and let the parent node contain $U^1_{j_1}$ and the oracle for backtracking from $U^1_{j_1}$ to $C_{2j_{1}-1}\dot{\cup} C_{2j_{1}}$. The processing time for one pair $C_{2j_{1}-1}$ and $ C_{2j_{1}}$ is $\OO(|C_{2j_{1}-1}|+|C_{2j_{1}}|+\frac{1}{\epsilon}\log \frac{1}{\epsilon})$ by Observation~\ref{obs:cap_apx_ss}, thus the total processing time at iteration-1 is $\OO(\Sigma^{\ell}_{i=1}|C_i| +2^{pow(\ell)} \frac{1}{\epsilon}\log \frac{1}{\epsilon})$. 
			A reduce instance with size of $2^{pow({\ell})}$, i.e., $U^1_{1}, U^1_{2}, \cdots, U^1_{2^{pow(\ell)}}$, is obtained.   Let $f^{1}(\epsilon) := \epsilon$.
			\item Before proceeding to iteration $h$ where $h\ge 2$, we assume the following things have been done:
			\begin{itemize}
				\item 
				Let $U^0_{j_0}$ denote $C_{j_0}$, where $j_0 = 1,2,\cdots, 2^{pow(\ell)+1}$. 
				
				We have obtained $\{U^{k}_{j_{k}} \ |  \ j_{k}=1,2,\cdots, 2^{pow(\ell)+1-k} \text{\ and\ } k=1,2,3,\cdots, h-1\}$, where $U^{k}_{j_{k}}$ is an $(\epsilon, (1+2f^{(k-1)}(\epsilon))\omega)$-approximate set with cardinality of $\OO(\frac{1}{\epsilon})$ for $U^{k-1}_{2j_{k}-1} \oplus U^{k-1}_{2j_{k}}$.
				
				Functions in $\{f^{t}(\epsilon) \ | \ t=0,1,\cdots,h-1 \}$ are defined by the following recurrence relation: $f^0(\epsilon) = 0$, $f^{1}(\epsilon) = \epsilon$ and
				$f^{t}(\epsilon) = \epsilon+2(1+\epsilon)f^{t-1}(\epsilon) \text{ \ for \ } 2\le t \le  h-1.$
				\item For each $U^{k}_{j_{k}}$, where $j_{k}=1,2,\cdots, 2^{pow(\ell)+1-k} \text{\ and \ } k=1,2,3\cdots, h-1$, we have built an $\OO(\frac{1}{\epsilon}\log\frac{1}{\epsilon})$-time oracle for backtracking from $U^{k}_{j_{k}}$ to $U^{k-1}_{2j_{k}-1} \dot{\cup} U^{k-1}_{2j_{k}}$. Moreover, a node is created for containing $U^{k}_{j_{k}}$ and the oracle for backtracking from  $U^{k}_{j_{k}}$ to $U^{k-1}_{2j_{k}-1} \dot{\cup} U^{k-1}_{2j_{k}}$.
			\end{itemize}
			
			Now we start iteration-$h$. For each pair $U^{h-1}_{2j_{h}-1}$ and $U^{h-1}_{2j_{h}}$, where $j_{h} = 1,2,\cdots,2^{pow(\ell)+1-h}$, we use Observation~\ref{obs:cap_apx_ss} to compute an $(\epsilon, (1+2f^{h-1}(\epsilon))\omega)$-approximate set with cardinality of $\OO(\frac{1}{\epsilon})$ for $U^{h-1}_{2j_{h}-1} \oplus U^{h-1}_{2j_{h}}$ and build an $\OO(\frac{1}{\epsilon}\log\frac{1}{\epsilon})$-time oracle for backtracking from $U^{h}_{j_h}$ to $U^{h-1}_{2j_{h}-1} \dot{\cup} U^{h-1}_{2j_{h}}$. Denote by $U^{h}_{j_h}$ this approximate set. For two nodes containing $U^{h-1}_{2j_{h}-1}$ and $U^{h-1}_{2j_{h}}$ separately, we create a parent node of these two nodes, and let the parent node contain $U^{h}_{j_h}$ and the oracle for backtracking from $U^{h}_{j_h}$ to $U^{h-1}_{2j_{h}-1} \dot{\cup} U^{h-1}_{2j_{h}}$. Notice that $|U^{h-1}_{2j_{h}-1}| = \OO(\frac{1}{\epsilon})$ and $|U^{h-1}_{2j_{h}}|=\OO(\frac{1}{\epsilon})$ for every $j_h = 1,2,\cdots, 2^{pow(\ell)+1-h}$, then
			the processing time for one pair $U^{h-1}_{2j_{h}-1}$ and $U^{h-1}_{2j_{h}}$ is $\OO(\frac{1}{\epsilon}\log\frac{1}{\epsilon})$ by Observation~\ref{obs:cap_apx_ss}, thus  the total processing time at iteration-$h$ is $\OO(2^{pow(\ell)+1-h} \frac{1}{\epsilon}\log\frac{1}{\epsilon})$.  A reduced instance with size of $2^{pow(\ell)+1-h}$, i.e., $U^{h}_1, U^{h}_2, \cdots, U^{h}_{2^{pow(\ell)+1-h}}$, is obtained. Let $f^{h}(\epsilon) := \epsilon+2(1+\epsilon)f^{h-1}(\epsilon)$.
			
			\item Using the same approach in iteration-$h$ recursively, and iteratively create tree nodes. After $pow(\ell)+1$ such rounds we stop and have built a tree structure whose root node contains (i). $U^{pow(\ell)+1}_1$, which is an $(\epsilon, (1+2f^{pow(\ell)}(\epsilon))\omega)$-approximate set with cardinality of $\OO(\frac{1}{\epsilon})$ for $U^{pow(\ell)}_1\oplus U^{pow(\ell)}_2$; (ii). an $\OO(\frac{1}{\epsilon}\log\frac{1}{\epsilon})$-time oracle for backtracking from  $U^{pow(\ell)+1}_1$ to $U^{pow(\ell)}_1\dot{\cup} U^{pow(\ell)}_2$. 
			
			To summarize, the total processing time is 
			$$\OO(\Sigma^{\ell}_{i=1}|C_i|+(2^1+2^2+\cdots+2^{pow(\ell)}) \frac{1}{\epsilon}\log \frac{1}{\epsilon}) = \OO(\Sigma^{\ell}_{i=1}|C_i|+\frac{\ell }{\epsilon}\log \frac{1}{\epsilon}).$$ 
		\end{itemize}
		
		\noindent Consider the functions in $\{f^{h}(\epsilon)\ | \ h=0,1,2,\cdots,pow(\ell)+1\}$, which are defined by the following recurrence relation:  $f^0(\epsilon) = 0$, $f^{1}(\epsilon) = \epsilon$ and
		$f^{h}(\epsilon) = \epsilon+2(1+\epsilon)f^{h-1}(\epsilon) \text{ \ for \ } 2\le h \le pow(\ell)+1$. According to Observation~\ref{obs:apx_apx_ss}, given integer $1\le k \le pow(\ell)$, if $ U^{k}_{j_{k}}$ is an $(f^{k}(\epsilon),\omega)$-approximate set of $\oplus^{2^k}_{t = 1}C_{2^{k}(j_{k}-1)+t}$, where $j_{k}=1,2,\cdots, 2^{pow(\ell)+1-k}$ and $k=1,2,\cdots, h-1$, then $U^{k+1}_{j_{k+1}}$ is an $(\epsilon+2(1+\epsilon)f^{k}(\epsilon),\omega)$-approximate set of 
		$ \oplus^{2^{k+1}}_{t = 1}C_{2^{h}(j_{k+1}-1)+t}=\left(\oplus^{2^{k}}_{t = 1}C_{2^{k}(2j_{k+1}-2)+t}\right)\oplus \left(\oplus^{2^{k}}_{t = 1}C_{2^{k}(2j_{k+1}-1)+t}\right)$, where $j_{k+1} = 1,2,\cdots, 2^{pow(\ell)-k}$. Recall that  $U^1_{j_1}$ is
		an $(f^{1}(\epsilon),\omega)$-approximate set  of $C_{2j_{1}-1}\oplus C_{2j_{1}}$ for every $j_1 = 1,2,\cdots, 2^{pow(\ell)}$, it can be proved by recursion that $U^h_{j_h}$ is an $(f^{h}(\epsilon),\omega)$-approximate set  of $\oplus^{2^h}_{t = 1}C_{2^{h}(j_{h}-1)+t}$, where $j_{h}=1,2,\cdots, 2^{pow(\ell)+1-h}$ and $h=1,2,\cdots, pow(\ell)+1$. In particular, $U^{pow(\ell)+1}_1$ is an $(f^{pow(\ell)+1}(\epsilon),\omega)$-approximate set of $\oplus^{2^{pow(\ell)+1}}_{i = 1}C_{i}$ and a simple calculation shows that  $f^{pow(\ell)+1}(\epsilon) = \OO(\epsilon \ell+ \epsilon^2 \ell^2)$.

		
		With the help of this tree structure, an $\OO(\frac{\ell}{\epsilon}\log \frac{1}{\epsilon})$-time oracle for backtracking from $U^{pow(\ell)+1}_1$ to $\dot{\cup}^{\ell}_{i=1}C_i$ is derived and works as follows:
		\begin{itemize}
			\item For any $c \in U^{pow(\ell)+1}_1$, we back trace from the root of the tree. Note that the root node contains an $\OO(\frac{1}{\epsilon} \log \frac{1}{\epsilon})$-time oracle for backtracking from $U^{pow(\ell)+1}_1$ to $U^{pow(\ell)}_1 \dot{\cup}U^{pow(\ell)}_2$. Thus in $\OO(\frac{1}{\epsilon} \log \frac{1}{\epsilon})$ processing time, we can find numbers
			$y^{(pow(\ell);1)} \in U^{pow(\ell)}_1 \cup\{0\}$ and $y^{(pow(\ell);2)} \in U^{pow(\ell)}_2 \cup\{0\}$ such that $$|c-(y^{(pow(\ell);1)}+y^{(pow(\ell);1)})| \le \epsilon(1+2f^{pow(\ell)}(\epsilon)) \omega.$$ 
			
			\item Back track recursively.
			
			Given $1\le h \le pow(\ell)$, assume that we have determined $y^{(h;j_{h})} \in U^{h}_{j_{h}} \cup\{0\}$ for every $ j_{h} = 1,2,3, \cdots, 2^{pow(\ell)+1-h}$  such that 
			$$
			|c-\Sigma^{2^{pow(\ell)+1-h}}_{j_{h}=1}y^{(h;j_{h})}| \le \epsilon \Sigma^{pow(\ell)}_{k=h}2^{pow(\ell)-k} (1+2f^{k}(\epsilon))\omega.
			$$
			For each $y^{(h;j_{h})}\in U^{h}_{j_{h}} \cup \{0\}$, if $y^{(h;j_{h})} = 0$, then we let $y^{(h-1;2j_{h}-1)} = 0$ and $y^{(h-1;2j_{h})} = 0$. If $y^{(h;j_{h})} \in  U^{h}_{j_{h}}$, 
			note that the node containing $U^{h}_{j_{h}}$ also contains an $\OO(\frac{1}{\epsilon}\log \frac{1}{\epsilon})$-time oracle for backtracking from $U^{h}_{j_{h}}$ to $U^{h-1}_{2j_h-1} \dot{\cup} U^{h-1}_{2j_h}$, thus in $\OO(\frac{1}{\epsilon}\log \frac{1}{\epsilon})$ processing time, we can determine $y^{(h-1;2j_h -1)} \in U^{h-1}_{2j_h -1}\cup\{0\}$ and $y^{(h-1;2j_h)} \in U^{h-1}_{2j_h} \cup\{0\}$ such that 
			$$|y^{(h;j_h)} -(y^{(h-1;2j_h -1)} + y^{(h-1;2j_h)} )| \le \epsilon(1+2f^{h-1}(\epsilon))\omega.
			$$
			Then in $\OO(2^{pow(\ell)+1-h}\frac{1}{\epsilon}\log \frac{1}{\epsilon})$ processing time, we can determine $y^{(h-1;j_{h-1})} \in U^{h-1}_{j_{h-1}}\cup\{0\}$ for every $j_{h-1} = 1,2,3, \cdots, 2^{pow(\ell)+2-h}$  such that 
			$$|\Sigma^{2^{pow(\ell)}+1-h}_{j_h = 1}y^{(h;j_{h})} - \Sigma^{2^{pow(\ell)}+2-h}_{j_{h-1} = 1}y^{(h-1;j_{h-1})}|\le 2^{pow(\ell)+1-h} \epsilon(1+2f^{h-1}(\epsilon))\omega.$$
			It follows that 
			$$
			|c-\Sigma^{2^{pow(\ell)+2-h}}_{j_{h-1}=1}y^{(h-1;j_{h-1})}| \le \epsilon \Sigma^{pow(\ell)}_{k=h-1} 2^{pow(\ell)-k}(1+2f^{k}(\epsilon))\omega.
			$$
			
			\item After $pow(\ell)$ such rounds, we stop and have determined $c_i \in C_i\cup\{0\}$ for every $ i = 1,2,3, \cdots, {\ell}$  such that
			$$
			|c- \sum^{\ell}_{i=1}c_i| \le \epsilon \Sigma^{pow(\ell)}_{k=0} 2^{pow(\ell)-k} (1+2f^{k}(\epsilon))\omega = \OO(\epsilon \ell + \epsilon^2 \ell^2)\omega.$$ 
			
			To summarize, the total processing time is $\OO((1+2^1 + 2^2 + \cdots+ 2^{pow(\ell)})\frac{1}{\epsilon}\log \frac{1}{\epsilon} ) = \OO(\frac{\ell}{\epsilon} \log \frac{1}{\epsilon} )$ .
		\end{itemize} 
		
		In conclude, within $ \OO(\Sigma^{\ell}_{i=1}|C_i|+\frac{\ell }{\epsilon}\log \frac{1}{\epsilon})$ processing time, we will build a tree structure whose root node contains an $( \OO(\epsilon \ell+ \epsilon^2 \ell^2),\omega )$-approximate set with cardinality of $\OO(\frac{1}{\epsilon})$ for $\oplus^{\ell}_{i = 1}C_{i}$ . Meanwhile, with the help of this tree structure, an $\OO(\frac{\ell}{\epsilon} \log \frac{1}{\epsilon} )$-time oracle for backtracking from this approximate set to $\dot{\cup}^{\ell}_{i=1}C_i$ is derived.			
		\qed\end{proof}
	
	Lemma~\ref{lemma:sub_sum_top} implies the following corollary, which allows us to build the approximate set of $S(\dot{\cup}^{\ell}_{i=1}X_i)$ from the approximate set of each $S(X_i)$.  
	\begin{corollary}\label{coro:sub_sum_top}
		Given multisets $X_1,X_2,\cdots,X_{\ell} \subset  \mathbb{R}_{\ge 0}$. Let $C_i$ be an $(\tilde\OO(\epsilon), \omega)$-approximate set of $S(X_i)$, where $i=1,2,\cdots, \ell$. Assume that for each $C_i$, there is a $T_i$-time oracle for backtracking from $C_i$ to $X_i$. Then in  $\OO(\Sigma^{\ell}_{i=1}|C_i| + \frac{\ell}{\epsilon}\log \frac{1}{\epsilon})$ processing time, we can 
		\begin{itemize}
			\item[(i).] Compute an $(\tilde\OO(\epsilon \ell + \epsilon^3 \ell^3), \omega)$-approximate set with cardinality of $\OO(\frac{1}{\epsilon})$ for $S(\dot{\cup}^{\ell}_{i=1}X_i)$;
			\item[(ii).] Meanwhile build an $\OO(\Sigma^{\ell}_{i=1}T_i + \frac{\ell}{\epsilon}\log\frac{1}{\epsilon})$-time oracle for backtracking from this approximate set to $\dot{\cup}^{\ell}_{i=1}X_i$.
		\end{itemize}
	\end{corollary}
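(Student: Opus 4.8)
The plan is to reduce to Lemma~\ref{lemma:sub_sum_top} exactly as Corollary~\ref{coro:tree-fashion_} reduces to Lemma~\ref{lemma:tree-fashion_}, with one extra idea needed to handle the cap. Fix a function $r(\epsilon)=\tilde\OO(\epsilon)$ such that each $C_i$ is an $(r(\epsilon),\omega)$-approximate set of $S(X_i)$, and (as in the proof of Corollary~\ref{coro:tree-fashion_}, cf.\ Definition~\ref{def:def_for_oracle}, Def-4) assume the given $T_i$-time oracle returns, for $c_i\in C_i$, a set $X_i'\subset X_i$ with $|c_i-\Sigma(X_i')|\le r(\epsilon)\omega$. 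First I would set the inflated cap $\omega':=(1+2\ell\, r(\epsilon))\,\omega$ and apply Lemma~\ref{lemma:sub_sum_top} to $C_1,\dots,C_\ell$ with parameter $\omega'$, additionally hanging the $T_i$-time oracle from $C_i$ to $X_i$ off the $i$-th leaf of the tree it builds. In $\OO(\sum_{i}|C_i|+\frac{\ell}{\epsilon}\log\frac1\epsilon)$ time this produces an $(\OO(\epsilon\ell+\epsilon^2\ell^2),\omega')$-approximate set $U$ of $\oplus_{i=1}^{\ell}C_i$ of cardinality $\OO(1/\epsilon)$, together with an $\OO(\frac{\ell}{\epsilon}\log\frac1\epsilon)$-time oracle for backtracking from $U$ to $\dot{\cup}_{i=1}^{\ell}C_i$.

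Next I would show that this same $U$ is an $(\tilde\OO(\epsilon\ell+\epsilon^3\ell^3),\omega)$-approximate set of $S(\dot{\cup}_{i=1}^{\ell}X_i)$. The key identity is $S(\dot{\cup}_{i=1}^{\ell}X_i)=\oplus_{i=1}^{\ell}S(X_i)$ (any subset of $\dot{\cup}_i X_i$ splits into subsets of the $X_i$), so I only need the $\ell$-fold, capped analogue of Observation~\ref{obs:apx_apx_ss}, relating $\oplus_i C_i$ to $\oplus_i S(X_i)$. Conditions (i) and (ii) of Definition~\ref{def:e_appr} follow by decomposing an element of $\oplus_i C_i$ and replacing each coordinate $c_i$ by some $a_i''\in S(X_i)\cup\{0\}$ with $|c_i-a_i''|\le r(\epsilon)\omega$ (using condition (ii) for $C_i$). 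For condition (iii) the inflated cap is exactly what is needed: given $a\in S(\dot{\cup}_i X_i)\cap[0,\omega]$, write $a=\sum_i a_i$ with $a_i\in S(X_i)\cap[0,\omega]$, pick $c_i'\in C_i\cup\{0\}$ with $|a_i-c_i'|\le r(\epsilon)\omega$ (condition (iii) for $C_i$), and observe that $c':=\sum_i c_i'\in\oplus_i C_i$ satisfies $c'\le a+\ell r(\epsilon)\omega\le\omega'$, so $c'$ lies in the range in which $U$ is guaranteed to approximate $\oplus_i C_i$; chasing (iii) for $U$ then yields the required $c''\in U$. In every case the relative error against $\omega$ is at most $\OO(\epsilon\ell+\epsilon^2\ell^2)(1+2\ell r(\epsilon))+\ell r(\epsilon)=\tilde\OO(\epsilon\ell+\epsilon^2\ell^2+\epsilon^3\ell^3)$, and since $x+x^3\ge x^2$ for all $x\ge0$ (take $x=\epsilon\ell$) this simplifies to $\tilde\OO(\epsilon\ell+\epsilon^3\ell^3)$; the cubic term is precisely the product of the Lemma~\ref{lemma:sub_sum_top} error $\OO(\epsilon^2\ell^2)$ with the cap inflation $\Theta(\epsilon\ell)$.

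Finally I would assemble the backtracking oracle from $U$ to $\dot{\cup}_i X_i$ by composition, verbatim as in Corollary~\ref{coro:tree-fashion_}: given $u\in U$, run the Lemma~\ref{lemma:sub_sum_top} oracle to get $(c_1,\dots,c_\ell)$ with $c_i\in C_i\cup\{0\}$ and $|u-\sum_i c_i|\le\OO(\epsilon\ell+\epsilon^2\ell^2)\omega'$; for each $i$ put $X_i'=\emptyset$ if $c_i=0$, else call the leaf oracle to obtain $X_i'\subset X_i$ with $|c_i-\Sigma(X_i')|\le r(\epsilon)\omega$; return $\dot{\cup}_i X_i'$. Then $\dot{\cup}_i X_i'\subset\dot{\cup}_i X_i$, $\Sigma(\dot{\cup}_i X_i')=\sum_i\Sigma(X_i')\in S(\dot{\cup}_i X_i)$, the triangle inequality gives error $\tilde\OO(\epsilon\ell+\epsilon^3\ell^3)\,\omega$, and the running time is $\OO(\sum_i T_i+\frac{\ell}{\epsilon}\log\frac1\epsilon)$. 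The one genuinely non-routine step is realizing that one must pass the inflated cap $\omega'=(1+\Theta(\epsilon\ell))\omega$ rather than $\omega$ into Lemma~\ref{lemma:sub_sum_top} — otherwise the sumset of the per-part approximations can overshoot $\omega$ and condition (iii) fails — and then tracking how this inflation multiplies the lemma's $\OO(\epsilon^2\ell^2)$ error into the stated $\epsilon^3\ell^3$ term; the rest is a mechanical repeat of the proof of Corollary~\ref{coro:tree-fashion_}.
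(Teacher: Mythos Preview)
Your proposal is correct and matches the paper's own proof essentially step for step: the paper likewise inflates the cap to $\tilde\omega=(1+\tilde\OO(\epsilon\ell))\omega$, applies Lemma~\ref{lemma:sub_sum_top} with this inflated parameter while attaching the $T_i$-time oracles to the leaves, verifies the three conditions of Definition~\ref{def:e_appr} exactly as you outline (including the observation that $s'\le\tilde\omega$ in the third condition), and composes the oracles for backtracking. Your identification of the cap inflation as the one non-routine step, and of the $\epsilon^3\ell^3$ term as arising from the product of the lemma's $\OO(\epsilon^2\ell^2)$ error with the $\Theta(\epsilon\ell)$ inflation, is precisely the paper's reasoning.
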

	\begin{proof}
		Given multisets $X_1, X_2,\cdots,X_{\ell} \subset \mathbb{R}_{\ge 0}$. Let $C_1,C_2,\cdots,C_{\ell}$ be $(\tilde\OO(\epsilon), \omega)$-approximate sets of $S(X_1), S(X_2),\cdots,S(X_{\ell})$, respectively. Assume that for each $C_i$, there is a $T_i$-time oracle for backtracking from $C_i$ to $X_i$. The approach to approximate $S(\dot{\cup}^{\ell}_{i=1}X_i)$ and build oracle for backtracking is very similar to the one we have designed in the proof of Lemma~\ref{lemma:sub_sum_top}. The only difference is that when creating leaf nodes, we let the $i$-th node contain not only $C_i$ but also the $T_i$-time oracle for backtracking from $C_i$ to $X_i$. Let $\tilde{\omega} = (1+\tilde\OO(\epsilon\ell))\omega$. Then we use the above iterative approach to approximate $\oplus^{\ell}_{i=1}C_i$, in $\OO(\Sigma^{\ell}_{i=1}|C_i| + \frac{\ell}{\epsilon} \log \frac{1}{\epsilon})$ processing
		time, we will build an augmented tree structure whose root node contains a  $(\OO(\epsilon\ell+\epsilon^2 \ell^2), \tilde{\omega})$-approximate set with cardinality of $\OO(\frac{1}{\epsilon})$ for $\oplus^{\ell}_{i=1} C_i$. Denote by  $U$ this approximate set. Meanwhile, with the help of this tree structure, an $\OO(\frac{\ell}{\epsilon}\log \frac{1}{\epsilon})$-time oracle for backtracking from $U$ to $\dot{\cup}^{\ell}_{i=1}C_i$ is derived.
		
		We claim that $U$ is an $(\tilde\OO(\epsilon \ell + \epsilon^3 \ell^3), \omega)$-approximate set of $S(\dot{\cup}^{\ell}_{i=1}X_i)$. Towards the claim, it is sufficient to observe the followings:
		

		\begin{itemize}
			\item Since $U$ is an $(\OO(\epsilon\ell+\epsilon^2 \ell^2), \tilde{\omega})$-approximate set of $\oplus^{\ell}_{i=1} C_i$, we have $U \subset [0, (1+\OO(\epsilon\ell+\epsilon^2 \ell^2))\tilde{\omega}] = [0, (1+\tilde\OO(\epsilon \ell + \epsilon^3 \ell^3)))\omega]$.
			\item Consider any $s \in S(\dot{\cup}^{\ell}_{i=1}X_i)\cap[0,\omega]$. There exists $(s_1,s_2,\cdots,s_{\ell})$ where  $s_i\in S(X_i)\cap [0,\omega]$ for $i=1,2,\cdots,\ell$ such that $s = \sum^{\ell}_{i=1} s_i$. For each $S(X_i)$, recall that $C_i$ is an $(\tilde\OO(\epsilon), \omega)$-approximate set of $S(X_i)$, thus for $s_i \in S(X_i) \cap [0,\omega]$, there exists  $s'_i\in C_i$ such that $|s_i-s'_i|\le \tilde\OO(\epsilon) \omega$. It follows that  $|s -\Sigma^{\ell}_{i=1}s'_i|\le \tilde\OO(\epsilon\ell)\omega$. Let $s' = \Sigma^{\ell}_{i=1}s'_i$, we have  $s' \le s+ \tilde\OO(\epsilon\ell)\omega \le  (1+\tilde\OO(\epsilon\ell))\omega =\tilde{\omega}$. Thus $s'\in (\oplus^{{\ell}}_{i=1}C_i)\cap [0,\tilde{\omega}]$. Recall that $U$ is an $(\OO(\epsilon\ell+\epsilon^2 \ell^2), \tilde{\omega})$-approximate set for $\oplus^{\ell}_{i=1} C_i$, thus there exists $s''\in U$ such that $|s''-s'|\le \OO(\epsilon\ell+\epsilon^2 \ell^2)\tilde{\omega}  = \tilde\OO(\epsilon \ell + \epsilon^3 \ell^3)\omega$. Furthermore, we have 
			$|s-s''|\le \tilde\OO(\epsilon \ell + \epsilon^3 \ell^3)\omega.$
			
			\item Consider any $c\in U$. Since $U$ is an $(\OO(\epsilon\ell+\epsilon^2 \ell^2), \tilde{\omega})$-approximate set of $\oplus^{\ell}_{i=1} C_i$, there exists $(c'_1,c'_2,\cdots,c'_{\ell})$ where $c'_i \in C_i \cup\{0\}$ for $i=1,2,\cdots,\ell$ such that $|c-\Sigma^{\ell}_{i=1}c'_i|\le \OO(\epsilon\ell+\epsilon^2 \ell^2) \tilde{\omega} = \tilde\OO(\epsilon \ell+\epsilon^3 \ell^3)\omega$. For each $c'_i \in C_i \cup\{0\}$, if $c'_i = 0$, let $X'_i = \emptyset$, apparently $X'_i\subset X_i$ and $|c'_i -\Sigma(X'_i)| \le \tilde\OO(\epsilon)\omega$. Else if $c'_i \in C_i$, note that $C_i$ is an $(\tilde\OO(\epsilon), \omega)$-approximate set of $S(X_i)$, then there exists $X'_i \subset X_i$ such that $|c'_i -\Sigma(X'_i)| \le \tilde\OO(\epsilon)\omega$. Furthermore, we have 
			\begin{align*}
				|c-\sum^{\ell}_{i=1}\Sigma(X'_i)| &\le|c-\sum^{\ell}_{i=1}c'_i|+|\sum^{\ell}_{i=1}\Sigma(X'_i)-\sum^{\ell}_{i=1}c'_i| \\
				& \le \tilde\OO(\epsilon \ell+\epsilon^3 \ell^3)\omega + \tilde\OO(\epsilon \ell){\omega} =
				\tilde\OO(\epsilon \ell + \epsilon^3 \ell^3)\omega
			\end{align*}
			It is easy to see that $\dot{\cup}^{\ell}_{i=1} X'_i \subset \dot{\cup}^{\ell}_{i=1} X_i$ and $\Sigma(\dot{\cup}^{\ell}_{i=1} X'_i) = \sum^{\ell}_{i=1}\Sigma(X'_i) \in S(\dot{\cup}^{\ell}_{i=1} X_i)$.

			
		\end{itemize} 
		
		In the following, we show that with the help of the above augmented tree structure, an $\OO(\Sigma^{\ell}_{i=1}T_i + \frac{\ell}{\epsilon}\log \frac{1}{\epsilon})$-time oracle for backtracking from $U$ to $\dot{\cup}^{\ell}_{i=1}X_i$ can be derived. The oracle works as follows.  Given any $c\in U$, note that $U$ is a  $(\OO(\epsilon\ell+\epsilon^2 \ell^2), \tilde{\omega})$-approximate set of $\oplus^{\ell}_{i=1} C_i$ and  with the help of augmented structure, an $\OO(\frac{\ell}{\epsilon}\log \frac{1}{\epsilon})$-time oracle for backtracking from $U$ to $ \dot{\cup}^{\ell}_{i=1} C_i$ has been derived. Thus in $\OO(\frac{\ell}{\epsilon}\log \frac{1}{\epsilon})$-processing time, we can determine $c_i\in C_i \cup\{0\}$ for every $i=1,2,\cdots,\ell$  such that $|c-\Sigma^{\ell}_{i=1}c_i|\le \OO(\epsilon\ell+\epsilon^2 \ell^2) \tilde{\omega} = \tilde\OO(\epsilon \ell+\epsilon^3 \ell^3)\omega$. For each $c_i$, if $c_i = 0$, let $X'_i = \emptyset$, apparently $X'_i\subset X_i$ and $|c_i -\Sigma(X'_i)| \le \tilde\OO(\epsilon)\omega$. Else if $c_i \in C_i$, recall that $C_i$ is an $(\tilde\OO(\epsilon), \omega)$-approximate set of $S(X_i)$ and the leaf node containing $C_i$ also contains a $T_i$-time oracle for backtracking from $C_i$ to $X_i$, thus in $T_i$ time, we can determine $X'_i \subset X_i$ such that $|c_i -\Sigma(X'_i)| \le \tilde\OO(\epsilon)\omega$. 
		To summarize, with the help of augmented tree structure, in total $\OO(\Sigma^{\ell}_{i=1}T_i +\frac{\ell}{\epsilon}\log \frac{1}{\epsilon})$ processing time, we can determine $X'_i \subset X_i$ for every $i=1,2,\cdots,\ell$ such that 
		$$|c-\sum^{\ell}_{i=1}\Sigma(X'_i)| \le |c-\Sigma^{\ell}_{i=1}c_i|+|\sum^{\ell}_{i=1}\Sigma(X'_i)-\sum^{\ell}_{i=1}c_i|  \le\tilde\OO(\epsilon \ell + \epsilon^3 \ell^3)\omega.$$
		Note that $\dot{\cup}^{\ell}_{i=1}X'_i \subset \dot{\cup}^{\ell}_{i=1}X_i$ and $\Sigma(\dot{\cup}^{\ell}_{i=1}X'_i) = \sum^{\ell}_{i=1} \Sigma(X'_i)$, hence Corollary~\ref{coro:sub_sum_top} is proved.
		\qed\end{proof}

	\section{Number-theoretic construction}\label{sec:theoretic_lemma}

	In this section, we introduce one of our main technical contributions, namely, the following number-theoretic rounding lemma.
	\begin{lemma}[Number-Theoretic Rounding Lemma]\label{lemma:smooth_appro}
		Given a multiset $X\subset [\frac{1}{\epsilon^{2+\lambda}},\frac{2}{\epsilon^{2+\lambda}}]$, where   $\lambda \in [-1,+\infty)$ is a parameter.
		For any $\alpha \in [ 1+\lambda, 2+\lambda)$ and any $d \in \mathbb{N}_{+}$, 
		let $\bar{d}$ be the integer such that $2+\lambda-\alpha \in (\frac{\bar{d}}{d},\frac{\bar{d} +1}{d}]$. Then there exists a set $\Delta \subset \mathbb{R}$ with $\Delta\subset \Theta (\epsilon^{-\alpha})$
		and  $|\Delta| =  \OO(\log(|X|) \cdot (\log \frac{1}{\epsilon})^{\OO(\bar{d}+1)})$, such that each $x\in X$ can be rounded to the form $\rho h_1 h_2 \cdots h_{\bar{d} + 1}$, where $\rho \in \Delta$ and $h_i$'s satisfying the following conditions: 
		\begin{subequations}
			\begin{align}
				&  h_1 h_2 \cdots h_{\bar{d} + 1} \in \mathbb{N}_{+}\cap [\frac{1}{4\cdot \epsilon^{2+\lambda-\alpha}}, \frac{1}{\epsilon^{2+\lambda-\alpha}}].  \label{eq:cond_0} \\
				& h_{\bar{d}+1} \in \mathbb{N}_{+} \cap [\frac{1}{2\cdot \epsilon^{2+\lambda-\alpha-\frac{\bar{d}}{d}}},\frac{2}{\epsilon^{2+\lambda-\alpha-\frac{\bar{d}}{d}}}]\ \text{and if \ } \bar{d} >0 \text{ \ then \ } 
				h_{i} \in \mathbb{N}_{+} \cap [\frac{1}{2\cdot \epsilon^{\frac{1}{d}}},\frac{2}{\epsilon^{\frac{1}{d}}}] \text{ \ for \ } i = 1,2,\cdots,\bar{d}. \label{eq:cond_1}\\
				& |x-\rho h_1 h_2 \cdots h_{\bar{d}+1}| \le \epsilon^{2+\lambda -\alpha} x. \label{eq:cond_2}
			\end{align}
		\end{subequations}		
		Moreover, there exists a deterministic algorithm, which can return $\Delta$ and round every $x \in X$ to the form $\rho h_1 h_2\cdots h_{\bar{d}+1}$ in $\OO((|X|+\frac{\alpha(\alpha+1)}{\epsilon}) \cdot
		\log(|X|)\cdot (\log\frac{1}{\epsilon} )^{\OO(\bar{d}+1)})$ time.
	\end{lemma}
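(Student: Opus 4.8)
The plan is to reduce the construction to a \emph{covering problem} over a small shared set of multiplicative scaling factors, fed by a single number-theoretic density estimate. Write $\beta:=2+\lambda$, $m:=2+\lambda-\alpha$ and $m':=m-\tfrac{\bar d}{d}$; by the definition of $\bar d$ we have $m\in(\tfrac{\bar d}{d},\tfrac{\bar d+1}{d}]$, $m'\in(0,\tfrac1d]$, $\alpha=\beta-m$, and since $\alpha\ge 1+\lambda$ we have $m\le 1$ always, so an integer of the target ``smooth'' size $\epsilon^{-m}$ has value at most $\epsilon^{-1}$. For $x\in[\epsilon^{-\beta},2\epsilon^{-\beta}]$ we want $x\approx\rho\cdot P$ with $\rho=\Theta(\epsilon^{-\alpha})$ and $P$ an \emph{admissible} integer of size $\Theta(\epsilon^{-m})$; note that for a \emph{fixed} admissible $P$, any $x$ with $x/P$ in the target $\rho$-range is rounded \emph{exactly} by taking $\rho=x/P$. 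The allowed relative error $\epsilon^{m}$ is precisely what lets one $\rho\in\Delta$ serve many different $x$: it suffices that some admissible $P$ lie within relative error $\epsilon^{m}$ of $x/\rho$. Thus everything reduces to showing (i) there are \emph{many} admissible products $P$, so a short list $\Delta$ can ``reach'' every $x$, and (ii) $\Delta$ and the factorizations are computable quickly.

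The core is a density estimate. Let $\mathcal{P}$ be the set of integers $P\in[\tfrac14\epsilon^{-m},\epsilon^{-m}]$ admitting a factorization $P=h_1\cdots h_{\bar d+1}$ with $h_i\in\mathbb{N}_+\cap[\tfrac12\epsilon^{-1/d},2\epsilon^{-1/d}]$ for $i\le\bar d$ and $h_{\bar d+1}\in\mathbb{N}_+\cap[\tfrac12\epsilon^{-m'},2\epsilon^{-m'}]$. I claim $|\mathcal{P}|=\Omega\big(\epsilon^{-m}/(\log\tfrac1\epsilon)^{O(\bar d+1)}\big)$, i.e.\ an inverse-polylog fraction of integers of size $\Theta(\epsilon^{-m})$ is admissible. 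To prove it I would count ordered tuples $(h_1,\dots,h_{\bar d+1})$ in the prescribed ranges whose product lies in $[\tfrac14\epsilon^{-m},\epsilon^{-m}]$; a short computation gives $\Theta(\epsilon^{-m})$ of them, up to a factor depending only on $\bar d$ (harmless, being dominated by the final $(\log\tfrac1\epsilon)^{O(\bar d+1)}$). Writing $r(P)$ for the number of such tuples with product $P$, we get $\sum_P r(P)=\Theta(\epsilon^{-m})$, while $\sum_P r(P)^2\le\sum_{h_1,\dots,h_{\bar d+1}}d_{\bar d+1}\!\big(\prod_i h_i\big)\le\prod_i\big(\sum_{h_i}d_{\bar d+1}(h_i)\big)=O\big(\epsilon^{-m}(\log\tfrac1\epsilon)^{O(\bar d+1)}\big)$, where $d_k(\cdot)$ counts ordered factorizations into $k$ parts, each inner sum is over the admissible range for $h_i$, and we use sub-multiplicativity of $d_k$ and the classical bound $\sum_{h\le Y}d_k(h)=O(Y(\log Y)^{k-1})$. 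Cauchy--Schwarz then yields $|\mathcal{P}|\ge\big(\sum_P r(P)\big)^2/\sum_P r(P)^2=\Omega\big(\epsilon^{-m}/(\log\tfrac1\epsilon)^{O(\bar d+1)}\big)$. This estimate, which strictly generalizes Jin's lemma (essentially the case of a single extra factor), is the step I expect to be the main obstacle; keeping the exponent of $\log\tfrac1\epsilon$ at $O(\bar d+1)$ rather than $O((\bar d+1)^2)$ is what forces the careful use of the range restrictions on the $h_i$. (For $\bar d=0$ the estimate is trivial, $\mathcal{P}=\mathbb{N}_+\cap[\tfrac12\epsilon^{-m},\epsilon^{-m}]$ has density $1$.)

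Given the estimate, $\Delta$ is produced greedily. Fix a window $W=\Theta(\epsilon^{-\alpha})$ large enough to contain $\big[\tfrac xP(1-\epsilon^m),\tfrac xP(1+\epsilon^m)\big]$ for all $x\in X$, $P\in\mathcal{P}$ (possible since $x/P=\Theta(\epsilon^{-\alpha})$). For $\rho\in W$, say $x$ is \emph{covered by $\rho$} if some $P\in\mathcal{P}$ has $|x-\rho P|\le\epsilon^m x$; the set $R(x)\subseteq W$ of such $\rho$ contains $\bigcup_{P\in\mathcal{P}}\big[\tfrac xP(1-\epsilon^m),\tfrac xP(1+\epsilon^m)\big]$, a union of $|\mathcal{P}|$ pairwise-disjoint intervals each of length $\Theta(\epsilon^{m-\alpha})$, hence $R(x)$ has measure $\Omega\big(|W|/(\log\tfrac1\epsilon)^{O(\bar d+1)}\big)$. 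Averaging over a grid of $\rho$-candidates in $W$ of spacing $\ll\epsilon^{m-\alpha}$, some $\rho^\star$ covers at least a $1/(\log\tfrac1\epsilon)^{O(\bar d+1)}$-fraction of the still-uncovered elements of $X$; iterating and discarding covered elements exhausts $X$ in $O\big(\log|X|\cdot(\log\tfrac1\epsilon)^{O(\bar d+1)}\big)$ rounds, and the chosen $\rho^\star$ form $\Delta$, which has the claimed size and lies in $\Theta(\epsilon^{-\alpha})$. For each $x$, having been covered by some $\rho\in\Delta$ witnessed by $P\in\mathcal{P}$, we output $x\mapsto\rho\cdot h_1\cdots h_{\bar d+1}$ with $P=h_1\cdots h_{\bar d+1}$ the stored factorization: \eqref{eq:cond_0} and \eqref{eq:cond_1} hold by the definition of $\mathcal{P}$, and \eqref{eq:cond_2} is exactly $|x-\rho P|\le\epsilon^m x$.

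For the algorithm I would first build $\mathcal{P}$, one factorization of each element, and a lookup table sending each integer in $[\tfrac14\epsilon^{-m},\epsilon^{-m}]$ to the nearest admissible product, via a sieve enumerating the $\Theta(\epsilon^{-m})=O(\epsilon^{-1})$ admissible tuples; accounting for $O(\alpha^2)$-word arithmetic on the $O(\alpha\log\tfrac1\epsilon)$-bit magnitudes involved, this costs $O\big(\tfrac{\alpha(\alpha+1)}{\epsilon}(\log\tfrac1\epsilon)^{O(\bar d+1)}\big)$. Each greedy round then buckets the $\rho$-candidates in $W$ multiplicatively at ratio $1+\Theta(\epsilon^m)$, uses the lookup table to test in $O(1)$ big-number operations whether a bucket covers a given $x$, picks the heaviest bucket, and removes the covered $x$'s; this is $O\big((|X|+\tfrac{\alpha(\alpha+1)}{\epsilon})(\log\tfrac1\epsilon)^{O(\bar d+1)}\big)$ per round over $O\big(\log|X|(\log\tfrac1\epsilon)^{O(\bar d+1)}\big)$ rounds, which sums to the stated running time. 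In the base case $\bar d=0$ one can even skip the covering: a single $\rho=\Theta(\epsilon^{-\alpha})$ with $h_1=\mathrm{round}(x/\rho)$ already gives $|x-\rho h_1|\le\rho/2\le\epsilon^m x$ (plus $O(1)$ further $\rho$'s for the few $x$ near the ends of the range), so $|\Delta|=O(1)$ there.
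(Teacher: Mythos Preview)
Your high-level plan---show that admissible products $P$ are dense, then greedily select $\rho$'s each covering a polylog-inverse fraction of the remaining $x$'s---is exactly the paper's strategy, and the existence argument is essentially sound. Two points need attention.

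First, the density bound. Your Cauchy--Schwarz route via $\sum_P r(P)^2\le\prod_i\sum_{h_i}d_{\bar d+1}(h_i)$ gives only $(\log\tfrac1\epsilon)^{\OO(\bar d(\bar d+1))}$ in the denominator, since each of the $\bar d{+}1$ inner sums contributes a factor $(\log)^{\bar d}$; you flag this and promise ``careful use of the range restrictions'' but do not say what that use is. The paper sidesteps the issue entirely by importing Jin's Lemma~\ref{lemma:jin_ce}, which directly counts integers admitting a telescoping product structure $n_1\cdots n_i\in(T_i/2,T_i]$ and already carries the exponent $\OO(\bar d{+}1)$. For the downstream applications $\bar d\le d-1$ is a fixed constant, so the distinction is cosmetic there, but it matters for the lemma \emph{as stated}.

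Second---and this is the real gap---your per-round cost of $\OO\big((|X|+\tfrac{\alpha(\alpha+1)}{\epsilon})(\log\tfrac1\epsilon)^{\OO(\bar d+1)}\big)$ is not justified. There are $\Theta(\epsilon^{-m})$ $\rho$-buckets, and the fact that one $(\rho,x)$ compatibility test is $\OO(1)$ via your lookup table still leaves $|X|\cdot\epsilon^{-m}$ tests to determine the heaviest bucket; in the regime $m\to 1$ that is $|X|/\epsilon$ per round, precisely the quadratic barrier the lemma is designed to beat. The paper's key algorithmic idea, which you are missing, is that after snapping both $X$ and $\mathcal P$ to the grid $(1+\epsilon^m)^{\mathbb Z}$, the coverage count of bucket $\rho=(1+\epsilon^m)^c$ is the number of pairs $(x,P)$ with $\log_{1+\epsilon^m}x-\log_{1+\epsilon^m}P=c$, i.e.\ a \emph{convolution} of two integer histograms of length $\OO(\epsilon^{-m})$. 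One FFT per round therefore delivers all bucket counts simultaneously in $\tilde\OO(\epsilon^{-m})\le\tilde\OO(\epsilon^{-1})$ time; concretely the paper encodes the (rounded) exponents of the surviving $x$'s and of the $H_v\in\mathscr H$ as two polynomials $f^j,g^j$ and reads off the coefficient of $x^c$ in their product. Your greedy analysis and covering structure are perfectly compatible with this fix, but without the FFT step the running-time claim does not hold.
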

	
	\noindent\textbf{Remark 1.}  Since $\epsilon$ is a sufficiently small positive number, we can assume that $\epsilon^{-\frac{1}{d}} \ge 2$ (by setting $\epsilon \le 2^{-{d}}$). It follows that $\mathbb{N}_{+} \cap [\frac{1}{2\cdot \epsilon^{\frac{1}{d}}},\frac{2}{\epsilon^{\frac{1}{d}}}] \neq \emptyset$ 
	holds for every $i = 1,2,\cdots,\bar{d}$.  Moreover, note that $2+\lambda-\alpha \in (0,1]$, then $2+\lambda-\alpha \in (\frac{\bar{d}}{d},\frac{\bar{d} +1}{d}]$ implies (i).  $0\le \bar{d} \le d-1$; 
	(ii).$ \frac{1}{\epsilon^{2+\lambda-\alpha-\frac{\bar{d}}{d}}} \ge 1$, hence $\mathbb{N}_{+} \cap [\frac{1}{2\cdot \epsilon^{2+\lambda-\alpha-\frac{\bar{d}}{d}}},\frac{2}{\epsilon^{2+\lambda-\alpha-\frac{\bar{d}}{d}}}] \neq \emptyset$.


	\noindent\textbf{Remark 2.} The rounding procedure we derived in the proof of Lemma~\ref{lemma:smooth_appro} guarantees that numbers with the same value in multiset $X$ will be rounded to the same number.

	Roughly speaking, Lemma~\ref{lemma:smooth_appro} states that any multiset of given numbers within an interval of $[b,2b]$ for some $b>0$ can be rounded into ``semi-smooth" numbers (relative error within $\OO(\epsilon^{2+\lambda-\alpha})$), which is a multiplication of two components -- a common divisor component (i.e., $\rho$) and a smooth number component (i.e., $h_1h_2\cdots h_{\bar{d}+1}$). Moreover, there are only a few distinct common divisors (i.e., $|\Delta|$ is small). Here $\alpha$ is a parameter that can be adjusted depending on the application. In particular, we will take two specific values for $\alpha$ when we apply Lemma~\ref{lemma:smooth_appro} in the future, namely $1+\lambda$ and $3/2$. 
	
	

	It is worth mentioning that a similar number-theoretic result has been obtained by Jin~\cite{DBLP:conf/icalp/Jin19} for solving the knapsack problem. However, Jin's result is not sufficient for SUBSET SUM related problems when our target is a running time below $\OO(\epsilon^{-2})$. We briefly explain the bottleneck. Using Jin's number-theoretic construction, one can also compute a subset $\Delta' $ and round every $x\in X$ to the form $\rho' h'_1 h'_2 \cdots h'_{\bar{d}+1}$, where $\delta'\in \Delta'$ and $h'_i$'s satisfying above conditions~\eqref{eq:cond_1} and \eqref{eq:cond_2}. However, there is a trade-off between the size of $\Delta'$ and the time for constructing it. That is, for any $\gamma\in (0,1)$, if $|\Delta'|=\tilde{\OO}(\epsilon^{\gamma-1})$, then the time to compute $\Delta'$ is $\tilde{\OO}(\epsilon^{-\gamma-1})$. For the purpose of SUBSET SUM, we want $\gamma$ to approach to $1$ (so that $\Delta'$ has logarithmic cardinality), then Jin's method requires $\tilde{\OO}(\epsilon^{-2})$ time, which is too large. 

	The rest of this section is dedicated to proving Lemma~\ref{lemma:smooth_appro}. The proof is divided into two steps: we first prove the existence of $\Delta$, and then derive an algorithm which computes $\Delta$ and rounds every $x \in X$ to the form $\rho h_1 h_2\cdots h_{d}$. 
	Step 1, the existence proof, utilizes the idea of Chan~\cite{chan2018approximation} and Jin~\cite{DBLP:conf/icalp/Jin19}, which constructs the set $\Delta$ by iteratively determining the existence an integer that can be a common divisor to a subset of numbers that are close to sufficiently many input numbers. We generalize their method in a parameterized way that allows us to control how large the common divisor and the smooth component are, which will also facilitate Step 2, the computation of $\Delta$. 
	
	\begin{proof}[Proof of Lemma~\ref{lemma:smooth_appro}]~{}
		\paragraph{Step 1 - Existence of $\Delta$.}
		We proceed with a constructive proof that finds the elements in $\Delta$ in an iterative way. In each iteration, we show that the current numbers can be rounded such that a significant fraction (i.e., a fraction of $1/(\log\frac{1}{\epsilon^{2+\lambda-\alpha}})^{\OO(\bar{d}+1)}$) of them share one large common divisor. Hence, $\OO(\log|X| (\log \frac{1}{\epsilon})^{\OO(\bar{d}+1)})$ iterations suffice, which is the size of $\Delta$.
		
		Let $X = \{x_1,x_2,\cdots, x_{m}\}\subset [\frac{1}{\epsilon^{2+\lambda}},\frac{2}{\epsilon^{2+\lambda}}]$, where $m = |X|$ and  $\lambda \in [-1,+\infty)$ is a parameter. 
		Given $\alpha \in [1+\lambda, 2+\lambda)$, we have $2+\lambda-\alpha \in (0 ,1]$. Then for each $x_i \in X$, there exists $k_i\in \mathbb{N}$ 
		such that $x_i \in [(1+\epsilon^{2+\lambda-\alpha})^{k_i}, (1+\epsilon^{2+\lambda-\alpha})^{k_i+1})$, we round $x_i$ down to $\bar{x}_i=(1+\epsilon^{2+\lambda-\alpha})^{k_i}$. Let $\overline{X} = \{\bar{x}_1,\bar{x}_2, \cdots, \bar{x}_m \}$. 
		It follows that 
		\begin{subequations}
			\begin{align}
				&\bar{x}_i \le x_i < (1+\epsilon^{2+\lambda-\alpha})\bar{x}_i \text{\ \  for \ \  } i=1,2,\cdots,m. \label{eq:x-val-1}\\
				&\frac{1}{\epsilon^{2+\lambda}(1+\epsilon^{2+\lambda-\alpha})} \le \bar{x}_i \le \frac{2}{\epsilon^{2+\lambda}} \text{\ \  for \ \  } i=1,2,\cdots,m. \label{eq:x-val-2}
			\end{align}
		\end{subequations}
		Given $d\in \mathbb{N}_{+}$, let $0\le \bar{d} < d$ be the integer such that $2+\lambda-\alpha \in (\frac{\bar{d}}{d},\frac{\bar{d} +1}{d}]$, then $2+\lambda-\alpha- \frac{\bar{d}}{d} \in (0, \frac{1}{d}]$.
		
		Consider positive integer $H$ that can be written as $H = h_{1}h_{2}\cdots h_{\bar{d} +1}$, where $h_{\bar{d}+1} \in \mathbb{N}_{+}\cap [\frac{1}{2\cdot \epsilon^{2+\lambda-\alpha-\frac{\bar{d}}{d}}},\frac{2}{\epsilon^{2+\lambda-\alpha-\frac{\bar{d}}{d}}}]$ and if $\bar{d} >0$ then $h_{i} \in \mathbb{N}_{+}\cap[\frac{1}{2\cdot \epsilon^{\frac{1}{d}}},\frac{2}{\epsilon^{\frac{1}{d}}}]$ for $i =1,2,\cdots,\bar{d}$. Denote by $\mathscr{H}$ the set of all different integers with this property within $[\frac{1}{4\cdot \epsilon^{2+\lambda-\alpha}},\frac{1}{\epsilon^{2+\lambda-\alpha}}]$. It is easy to see that $|\mathscr{H}| \le \frac{1}{\epsilon^{2+\lambda-\alpha}}$. Next we show that $|\mathscr{H}|$ is relatively big, more precisely, we claim the following.
		
		\begin{claim}\label{claim:number1}
			$|\mathscr{H}| \ge \frac{1}{4\epsilon^{2+\lambda-\alpha}}/ \left(\log\frac{1}{\epsilon^{2+\lambda-\alpha}}\right)^{\OO(\bar{d}+1)} $. 
		\end{claim}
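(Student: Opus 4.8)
The plan is to prove Claim~\ref{claim:number1} by exhibiting, inside $\mathscr{H}$, an explicitly and \emph{injectively} parametrized family of integers whose size already matches the asserted bound, rather than by counting tuples and dividing by a multiplicity (see the last paragraph for why the latter fails). Write $\beta:=2+\lambda-\alpha\in(0,1]$, $M:=\epsilon^{-1/d}$, and $M':=\epsilon^{-(\beta-\bar d/d)}$; by the definition of $\bar d$ we have $0\le\bar d\le d-1$, $M'\le M$, and $M^{\bar d}M'=\epsilon^{-\beta}$. If $\bar d=0$ the claim is immediate: then $\mathscr{H}=\mathbb{N}_{+}\cap[\tfrac12\epsilon^{-\beta},\epsilon^{-\beta}]$, which has at least $\tfrac14\epsilon^{-\beta}$ elements once $\epsilon$ is small. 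So from now on assume $\bar d\ge1$.

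The construction I would use is to take $h_1,\dots,h_{\bar d}$ to be $\bar d$ \emph{distinct primes}, each drawn from the window $J:=[\,4^{-1/(\bar d+1)}M,\;M\,]$, and to take $h_{\bar d+1}=g$ range over the integers in $J':=[\,4^{-1/(\bar d+1)}M',\;M'\,]$ all of whose prime factors are strictly below $\tfrac12M$. The two windows are calibrated so that every product satisfies $h_1\cdots h_{\bar d+1}\in[\tfrac14\epsilon^{-\beta},\epsilon^{-\beta}]$, i.e.\ \eqref{eq:cond_0}; and since $4^{-1/(\bar d+1)}\ge\tfrac12$ for $\bar d\ge1$ we have $J\subseteq[\tfrac12M,2M]$ and $J'\subseteq[\tfrac12M',2M']$, so \eqref{eq:cond_1} holds with $h_i=p_i$ and $h_{\bar d+1}=g$; hence each $H:=p_1\cdots p_{\bar d}\,g$ really does lie in $\mathscr{H}$. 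The decisive point is that the map $(p_1<\dots<p_{\bar d};\,g)\mapsto H$ is injective: each $p_i$ lies in $[\tfrac12M,M]$ while $g$ has all prime factors below $\tfrac12M$, so from $H$ one recovers $\{p_1,\dots,p_{\bar d}\}$ as \emph{exactly} the prime divisors of $H$ that are $\ge\tfrac12M$ (each to the first power, since the $p_i$ are distinct and none divides $g$), and then $g=H/(p_1\cdots p_{\bar d})$ is determined.

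To finish, one counts the parameter space. The interval $J$ has length $\Theta(M/(\bar d+1))$, so by standard prime-counting estimates it contains at least $c_1M/((\bar d+1)\log M)$ primes; and $J'$, of length $\Theta(M'/(\bar d+1))$, contains at least $c_2M'/(\bar d+1)$ integers with largest prime factor below $\tfrac12M$ (if $M'\le\tfrac14M$ every integer of $J'$ qualifies, and otherwise the integers of $J'$ divisible by some prime in $[\tfrac12M,2M']$ form only an $\OO(1/\log M)$ fraction, by Mertens). Therefore
\[
  |\mathscr{H}|\;\ge\;\binom{c_1M/((\bar d+1)\log M)}{\bar d}\cdot\frac{c_2M'}{\bar d+1}
  \;\ge\;\frac{c_3^{\bar d}\,M^{\bar d}M'}{\bar d!\,(\bar d+1)^{\bar d+1}(\log M)^{\bar d}}
  \;=\;\frac{c_3^{\bar d}\,\epsilon^{-\beta}}{\bar d!\,(\bar d+1)^{\bar d+1}(\log M)^{\bar d}}
\]
for an absolute constant $c_3>0$, valid once $\epsilon$ is small enough (relative to $d$) that $c_1M/((\bar d+1)\log M)\ge2\bar d$. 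Since $\bar d\ge1$ forces $\log M=\tfrac1d\log\tfrac1\epsilon\le\beta\log\tfrac1\epsilon=\log\tfrac1{\epsilon^{\beta}}$, and since $\bar d!\,(\bar d+1)^{\bar d+1}(\log M)^{\bar d}c_3^{-\bar d}\le(\log\tfrac1{\epsilon^{\beta}})^{\OO(\bar d+1)}$ whenever $\epsilon$ is sufficiently small (it suffices, e.g., that $(\bar d+1)^2\le\log\tfrac1{\epsilon^{\beta}}$, which holds for $\epsilon$ small since $\bar d<d$), the right-hand side is at least $\tfrac{1}{4\epsilon^{\beta}}\big/\big(\log\tfrac1{\epsilon^{\beta}}\big)^{\OO(\bar d+1)}$, i.e.\ exactly Claim~\ref{claim:number1} (recall $\beta=2+\lambda-\alpha$).

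The step that genuinely needs care is the injectivity. The ``obvious'' route --- count all tuples $(h_1,\dots,h_{\bar d+1})$ with each factor in its prescribed dyadic window, then divide by the largest number of ways a single product can arise --- does not work, because an integer $H$ of size $\Theta(\epsilon^{-\beta})$ can have as many as $H^{\Theta(1/\log\log H)}$ divisors, hence far more than $(\log\tfrac1\epsilon)^{\OO(\bar d+1)}$ representations in the worst case, so the worst-case multiplicity is not polylogarithmic. Forcing the first $\bar d$ factors to be distinct primes is precisely what circumvents this: it makes the factorization essentially unique, at a cost of only a single $\log$-factor per prime slot (the density of primes), which is what keeps the final polylogarithmic loss at exponent $\OO(\bar d+1)$ rather than something larger. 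The residual bookkeeping --- the window calibration, the smooth-number count for $g$, and the harmless standing assumption that $\bar d$ is small compared with $\log\tfrac1\epsilon$ --- is routine.
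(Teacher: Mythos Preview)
Your proof is correct and takes a genuinely different route from the paper's. The paper simply invokes Jin's Lemma~12 (restated there as Lemma~\ref{lemma:jin_ce}): given thresholds $T_1,\dots,T_\ell$ with $T_1\ge2$ and $T_{i+1}\ge2T_i$, there are at least $T_\ell/(\log T_\ell)^{\OO(\ell)}$ integers $t$ admitting a factorization $t=n_1\cdots n_\ell$ with each partial product $n_1\cdots n_i\in(T_i/2,T_i]$. Setting $T_i=\epsilon^{-i/d}$ for $i\le\bar d$ and $T_{\bar d+1}=\epsilon^{-\beta}$, the paper reads off Claim~\ref{claim:number1} directly from that lemma (with a short case split when $M'<2$, where they force $n_{\bar d+1}=1$). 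No injectivity argument appears explicitly --- that work is buried inside the cited lemma.

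Your construction is more hands-on: by forcing the first $\bar d$ factors to be \emph{distinct primes} in $[M/2,M]$ and the last factor to be $\tfrac12M$-smooth, you make the product-to-tuple map injective by unique factorization, and then count tuples directly via the prime number theorem. This is self-contained (no appeal to Jin) and makes transparent \emph{why} the polylog loss is only $(\log\tfrac1\epsilon)^{\OO(\bar d+1)}$: one $\log$ per prime slot from prime density, plus $\bar d!$- and $(\bar d{+}1)^{\bar d+1}$-type combinatorial factors that get absorbed once $\epsilon$ is small relative to $d$. The paper's route is shorter on the page but outsources exactly this reasoning; your route is longer but explains the mechanism. Your closing remark about why the naive ``count tuples and divide by multiplicity'' approach fails (divisor-function blowup) is a nice observation that neither the paper nor Jin's statement makes explicit. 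The minor edge cases you flag (behavior when $M'$ is very small, and the need for $\bar d+1\lesssim\log M$ in the short-interval prime estimate) are real but, as you note, fall under the standing ``$\epsilon$ small relative to $d$'' assumption that the paper also makes (Remark~1 after Lemma~\ref{lemma:smooth_appro}).
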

		Before proving Claim~\ref{claim:number1}, we first import the following lemma from \cite{DBLP:conf/icalp/Jin19}.
		\begin{lemma}[CF. Lemma 12 from \cite{DBLP:conf/icalp/Jin19}]\label{lemma:jin_ce}
			Let $T_{1}, T_{2}, \ldots, T_{\ell}$ be positive real numbers satisfying $T_{1} \geq 2$ and $T_{i+1} \geq 2 T_{i}$. There exist at least $T_{\ell} /\left(\log T_{\ell}\right)^{\OO(\ell)}$ integers $t$ satisfying the following condition: $t$ can be written as a product of integers $t=n_{1} n_{2} \cdots n_{\ell}$, such that $n_{1} n_{2} \cdots n_{i} \in\left(T_{i} / 2, T_{i}\right]$ for every $1 \leq i \leq \ell$.
		\end{lemma}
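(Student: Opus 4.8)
The plan for proving Lemma~\ref{lemma:jin_ce} is to count only a restricted family of admissible $t$: those for which \emph{every} factor $n_i$ is prime. If $n_i=p_i$ is prime for all $i$, then $t=p_1\cdots p_\ell$ is a product of exactly $\ell$ primes (counted with multiplicity), and the number of ordered ways to write such a $t$ as a product of $\ell$ primes is a multinomial coefficient, hence at most $\ell!$ (imposing the prefix conditions only decreases it). Therefore the number of \emph{distinct} integers produced is at least $D/\ell!$, where $D$ denotes the number of prime tuples $(p_1,\dots,p_\ell)$ with $p_1\cdots p_i\in(T_i/2,T_i]$ for every $i$. Since $T_1\ge 2$ and $T_{i+1}\ge 2T_i$ force $T_\ell\ge 2^\ell$, so that $\ell\le\log_2 T_\ell$ and $\ell!\le\ell^\ell\le(\log T_\ell)^{\OO(\ell)}$, it suffices to show $D\ge T_\ell/(\log T_\ell)^{\OO(\ell)}$.

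To bound $D$ from below I would grow the tuple one prime at a time and count the resulting branching tree. A node at level $i$ is a valid prefix $(p_1,\dots,p_{i-1})$ with partial product $s\in(T_{i-1}/2,T_{i-1}]$ (taking $s=1$ and $T_0:=1$ at the root), and its children are the primes $p_i$ with $p_i\in(T_i/(2s),\,T_i/s]$. Writing $x=T_i/s$, which lies in $[T_i/T_{i-1},\,2T_i/T_{i-1})$, the number of children is $\pi(x)-\pi(x/2)$: this is $\ge 1$ for every $x\ge 2$ by Bertrand's postulate (and $x\ge T_i/T_{i-1}\ge 2$ here), and $\ge c\,x/\log x$ once $x$ exceeds an absolute constant by Chebyshev's estimates. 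Combining these, every level-$i$ node has at least $L_i:=\max\!\big(1,\;c\,(T_i/T_{i-1})/\log T_i\big)$ children, so $D\ge\prod_{i=1}^{\ell}L_i$.

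It then remains to verify $\prod_{i=1}^{\ell}L_i\ge T_\ell/(\log T_\ell)^{\OO(\ell)}$, which is elementary but requires some care with the $\max(1,\cdot)$. I would split the indices into the ``good'' ones, where $c(T_i/T_{i-1})/\log T_i\ge 1$, and the ``bad'' ones, where $T_i/T_{i-1}<(\log T_\ell)/c$; dropping the bad factors (each $L_i\ge 1$) and telescoping $\prod_i(T_i/T_{i-1})=T_\ell$ over the good ones, each surviving $1/\log T_i$ factor and each removed ratio costs at most one factor of $(\log T_\ell)/c$ per index, giving $\prod_i L_i\ge T_\ell/\big((\log T_\ell)/c\big)^{\OO(\ell)}=T_\ell/(\log T_\ell)^{\OO(\ell)}$ (absorbing the constant, or trivially once $\log T_\ell>1/c$; the finitely many instances with $T_\ell=\OO(1)$ are handled directly, Bertrand's postulate already exhibiting one admissible $t$). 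Chaining this with the two reductions above yields the stated count.

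The crux is the restriction to prime factors. Without it, the overcounting of a given $t$ is governed by the number of divisors of $t$ lying in a single dyadic window $(T_i/2,T_i]$, which can be as large as $T_\ell^{o(1)}$ — far more than any fixed power of $\log T_\ell$ — so a naive inductive count collapses; forcing the $n_i$ prime replaces this by the harmless $\ell!$. The secondary obstacle is the bookkeeping in the last paragraph: the per-level yield $(T_i/T_{i-1})/\log T_i$ can dip below $1$ whenever consecutive $T_i$ differ by only a constant factor, so one must isolate the good indices rather than telescope blindly. The number-theoretic inputs themselves, Bertrand's postulate and Chebyshev's prime-counting bounds, are classical and used off the shelf.
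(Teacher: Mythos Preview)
The paper does not prove Lemma~\ref{lemma:jin_ce}; it imports the statement from~\cite{DBLP:conf/icalp/Jin19} as a black box (see the sentence immediately preceding the lemma in the proof of Claim~\ref{claim:number1}), so there is no in-paper argument to compare your proposal against.

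That said, your proof is correct and is essentially the standard argument. Restricting to prime $n_i$ is precisely the right device: it caps the overcount of each resulting $t$ by $\ell!\le(\log_2 T_\ell)^\ell$, sidestepping the divisor-function blowup you correctly flag as the obstruction to a naive count. The per-level branching bound via Bertrand and Chebyshev, uniform over $s\in(T_{i-1}/2,T_{i-1}]$ since $x=T_i/s\ge T_i/T_{i-1}\ge 2$, gives the stated $L_i\ge\max\bigl(1,\,c(T_i/T_{i-1})/\log T_i\bigr)$; your good/bad split then cleanly recovers the telescoping product $\prod_i(T_i/T_{i-1})=T_\ell$ while losing only a factor of $(\log T_\ell)/c$ per index, yielding $D\ge T_\ell\,(c/\log T_\ell)^\ell$. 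Combined with $D/\ell!$ and $\ell\le\log_2 T_\ell$, this gives $T_\ell/(\log T_\ell)^{\OO(\ell)}$ as required. The boundary bookkeeping (absorbing the constant $c$ once $\log T_\ell$ exceeds a fixed threshold; Bertrand alone producing at least one admissible $t$ otherwise) is handled adequately.
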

		Now we are ready to prove Claim~\ref{claim:number1}.
		\begin{proof}[Proof of Claim~\ref{claim:number1}]
			If $\bar{d} = 0$, which implies that $\mathscr{H} = \mathbb{N}_{+} \cap [\frac{1}{2\cdot \epsilon^{2+\lambda-\alpha}},\frac{1}{\epsilon^{2+\lambda-\alpha}}]$, then we have $|\mathscr{H}| \ge \frac{1}{4\cdot\epsilon^{2+\lambda-\alpha}}$. 
			Otherwise $\bar{d} \ge 1$. Recall Lemma~\ref{lemma:jin_ce}, let $T_{\bar{d}+1} = \frac{1}{\epsilon^{2+\lambda-\alpha}}= {\epsilon^{-\frac{\bar{d}}{d}}} \cdot \frac{1}{\epsilon^{2+\lambda-\alpha-\frac{\bar{d}}{d}}}$ and
			$T_i = \epsilon^{-\frac{i}{d}}$ for $1\le i\le \bar{d}$. Since $\epsilon$ is a sufficiently small positive number, we can assume that $\epsilon^{-\frac{1}{d}} \ge 2$, then $T_1 =\epsilon^{-\frac{1}{d}} \ge 2$. Next, we will deal with  $\frac{1}{\epsilon^{2+\lambda-\alpha-\frac{\bar{d}}{d}}} \ge 2$ and $\frac{1}{\epsilon^{2+\lambda-\alpha-\frac{\bar{d}}{d}}} <2$ separately. 
			
			First we consider the case that $\bar{d} \ge 1$ and  $\frac{1}{\epsilon^{2+\lambda-\alpha-\frac{\bar{d}}{d}}} \ge 2$. 
			It is easy to see that $T_{\bar{d}+1} ={\epsilon^{-\frac{\bar{d}}{d}}} \cdot \frac{1}{\epsilon^{2+\lambda-\alpha-\frac{\bar{d}}{d}}}\ge 2T_{\bar{d}}$. Moreover, the following two observations guarantee that $T_1 \ge 2$ and $T_{i+1}\ge 2T_{i}$ holds for every $1\le i \le\bar{d}$. 
			\begin{itemize}
				\item If $\bar{d} = 1$, then apparently $T_1 \ge 2$ and $T_{i+1}\ge 2T_{i}$ holds for every $1\le i \le\bar{d}$.
				\item If $\bar{d} \ge 2$, we have  $T_1 \ge 2$, $T_{\bar{d}+1} \ge 2T_{\bar{d}}$ and $T_{i+1} = \epsilon^{-\frac{i+1}{d}} = \epsilon^{-\frac{1}{d}}\cdot \epsilon^{-\frac{i}{d}} \ge 2 T_{i}$ holds for every $1\le i\le \bar{d}-1$. 
			\end{itemize}
			According to Lemma~\ref{lemma:jin_ce}, there are at least $
			T_{\bar{d}+1}/(\log T_{\bar{d}+1})^{\OO(\bar{d}+1)} $ different integers $t$ satisfying the following condition: $t$ can be written as a product of integers $t=n_{1} n_{2} \cdots n_{\bar{d}+1}$, such that $n_{1} n_{2} \cdots n_{i} \in\left(T_{i} / 2, T_{i}\right]$ for every $1 \leq i \leq \bar{d}+1$. These imply that $(\frac{1}{2\epsilon^{2+\lambda-\alpha}},\frac{1}{\epsilon^{2+\lambda-\alpha}}]$ contains at least $
			\frac{1}{\epsilon^{2+\lambda-\alpha}}/ \left(\log \frac{1}{\epsilon^{2+\lambda-\alpha}} \right)^{\OO(\bar{d}+1)}$ different integers $t$ satisfying the following condition: $t$ can be written as a product of integers $t=n_{1} n_{2} \cdots n_{\bar{d}+1}$, where $ n_{\bar{d}+1} \in  \mathbb{N}_{+}\cap[\frac{1}{2\cdot \epsilon^{2+\lambda-\alpha-\frac{\bar{d}}{d}}},\frac{2}{\epsilon^{2+\lambda-\alpha-\frac{\bar{d}}{d}}}]$ and $n_i \in \mathbb{N}_{+}\cap[\frac{1}{2\cdot \epsilon^{\frac{1}{d}}},\frac{2}{\epsilon^{\frac{1}{d}}}] \ \text{for} \ 1\le i \le \bar{d}$.
			By the definition of $\mathscr{H}$, $\mathscr{H}$ contains at least $
			\frac{1}{\epsilon^{2+\lambda-\alpha}}/ \left(\log \frac{1}{\epsilon^{2+\lambda-\alpha}} \right)^{\OO(\bar{d}+1)}$ elements. 

			Now we consider the case that $\bar{d} \ge 1$ and $\frac{1}{\epsilon^{2+\lambda-\alpha-\frac{\bar{d}}{d}}} < 2$ (i.e., $\frac{1}{4\epsilon^{2+\lambda-\alpha}}\le \frac{\epsilon^{-\frac{\bar{d}}{d}}}{2}$).  Recall that $ \frac{1}{ \epsilon^{2+\lambda-\alpha-\frac{\bar{d}}{d}}} \ge 1$,  
			we have $ [ \frac{\epsilon^{-\frac{\bar{d}}{d}}}{2}, {\epsilon^{-\frac{\bar{d}}{d}}}] \subset [\frac{1}{4\cdot \epsilon^{2+\lambda-\alpha}},\frac{1}{\epsilon^{2+\lambda-\alpha}}]$ and $1 \in  \mathbb{N}_{+}\cap [\frac{1}{2\cdot \epsilon^{2+\lambda-\alpha-\frac{\bar{d}}{d}}},\frac{2}{\epsilon^{2+\lambda-\alpha-\frac{\bar{d}}{d}}}]$. Thus to prove that $\mathscr{H}$ contains at least $
			\frac{1}{4\epsilon^{2+\lambda-\alpha}}/ \left(\log \frac{1}{\epsilon^{2+\lambda-\alpha}} \right)^{\OO(\bar{d}+1)}$ elements, we only need to prove that 
			$[ \frac{\epsilon^{-\frac{\bar{d}}{d}}}{2}, {\epsilon^{-\frac{\bar{d}}{d}}}] $ contains at least  $\frac{\epsilon^{-\frac{\bar{d}}{d}}}{2}/ \left(\log \epsilon^{-\frac{\bar{d}}{d}} \right)^{\OO(\bar{d}+1)}$ different integers $t$ satisfying the following condition: $t$ can be written as a product of integers $t=n_{1} \cdots n_{\bar{d}} n_{\bar{d}+1}$, where $n_{\bar{d}+1}=1$ and $n_i \in \mathbb{N}_{+}\cap[\frac{1}{2\cdot \epsilon^{\frac{1}{d}}},\frac{2}{\epsilon^{\frac{1}{d}}}]$ for $1\le i \le \bar{d}$.  Towards this, it is sufficient to observe the followings:\begin{itemize}
				\item If $\bar{d} = 1$, then $\frac{\epsilon^{-\frac{\bar{d}}{d}}}{2} = \frac{\epsilon^{-\frac{1}{d}}}{2}$. Note that $[ \frac{\epsilon^{-\frac{\bar{d}}{d}}}{2}, {\epsilon^{-\frac{\bar{d}}{d}}}]$ contains at least   $\frac{\epsilon^{-\frac{\bar{d}}{d}}}{2}/ \left(\log \epsilon^{-\frac{\bar{d}}{d}} \right)^{\OO(\bar{d}+1)}$ different integers, and any integer $t$ in $ [\frac{\epsilon^{-\frac{\bar{d}}{d}}}{2},\epsilon^{-\frac{\bar{d}}{d}}]$, i.e., in $[\frac{\epsilon^{-\frac{{1}}{d}}}{2},\epsilon^{-\frac{{1}}{d}}]$, satisfies the condition: $t$ can be written as a product of integers $t=n_{1} n_{2}\cdots  n_{\bar{d}+1}=n_{1}n_{2}$, where $n_{\bar{d}+1}=n_{2}=1$ and $n_i =t \in \mathbb{N}_{+}\cap[\frac{1}{2\cdot \epsilon^{\frac{1}{d}}},\frac{2}{\epsilon^{\frac{1}{d}}}]$ for $1\le i \le \bar{d}$.
				\item If $\bar{d} \ge 2$, recall that $T_1 \ge 2$ and $T_{i+1} = \epsilon^{-\frac{i+1}{d}} = \epsilon^{-\frac{1}{d}}\cdot \epsilon^{-\frac{i}{d}} \ge 2 T_{i}$ holds for every $1\le i\le \bar{d}-1$.  According to Lemma~\ref{lemma:jin_ce}, there are at least $
				T_{\bar{d}}/(\log T_{\bar{d}})^{\OO(\bar{d})} =
				\epsilon^{-\frac{\bar{d}}{d}}/ \left(\log\epsilon^{-\frac{\bar{d}}{d}} \right)^{\OO(\bar{d})}$ different integers $t$ satisfying the following condition: $t$ can be written as a product of integers $t=n_{1} n_{2} \cdots n_{\bar{d}}$, such that $n_{1} n_{2} \cdots n_{i} \in\left(T_{i} / 2, T_{i}\right]$ for every $1 \leq i \leq \bar{d}$. These imply that $ [ \frac{\epsilon^{-\frac{\bar{d}}{d}}}{2}, {\epsilon^{-\frac{\bar{d}}{d}}}]$ contains at least $
				\epsilon^{-\frac{\bar{d}}{d}}/ \left(\log \epsilon^{-\frac{\bar{d}}{d}} \right)^{\OO(\bar{d})}\ge \frac{\epsilon^{-\frac{\bar{d}}{d}}}{2}/ \left(\log \epsilon^{-\frac{\bar{d}}{d}} \right)^{\OO(\bar{d}+1)}$ different integers $t$ satisfying the following condition: $t$ can be written as a product of integers $t=n_{1} n_{2}\cdots n_{\bar{d}}n_{\bar{d}+1}$, where $n_{\bar{d}+1} =1$ and $n_i \in \mathbb{N}_{+}\cap[\frac{1}{2\cdot \epsilon^{\frac{1}{d}}},\frac{2}{\epsilon^{\frac{1}{d}}}]$ for $1\le i \le \bar{d}$. 
			\end{itemize}
			To summarize, $\mathscr{H}$ contains at least $
			\frac{1}{4\epsilon^{2+\lambda-\alpha}}/ \left(\log \frac{1}{\epsilon^{2+\lambda-\alpha}} \right)^{\OO(\bar{d}+1)}$ elements. \qed\end{proof}
		
		Let $\mathscr{H} = \{H_1,H_2,\cdots,H_{\tau}\}$, where $\frac{1}{4\epsilon^{2+\lambda-\alpha}} /(\log \frac{1}{\epsilon^{2+\lambda-\alpha}})^{\OO(\bar{d}+1)} \le \tau \le \frac{1}{\epsilon^{2+\lambda-\alpha}}$. For each $H_v \in \mathscr{H}$, there exists $s_v\in \mathbb{N}$ such that $H_v \in [(1+\epsilon^{2+\lambda-\alpha})^{s_v}, (1+\epsilon^{2+\lambda-\alpha})^{s_v+1})$, we round $H_v$ down to $\overline{H}_v=(1+\epsilon^{2+\lambda-\alpha})^{s_v}$. Let $\overline{\mathscr{H}}= \{\overline{H}_1,\overline{H}_2,\cdots,\overline{H}_{\tau}\}$. It follows that 
		\begin{subequations}
			\begin{align}
				&\overline{H}_{v}\le H_{v} < (1+\epsilon^{2+\lambda-\alpha})\overline{H}_v \text{\ \ for \ \  } v= 1,2,\cdots,\tau. \label{eq:h-val-1}\\
				&\frac{1}{(1+\epsilon^{2+\lambda-\alpha})\cdot 4\cdot \epsilon^{2+\lambda-\alpha}} \le \overline{H}_{v} \le  \frac{1}{\epsilon^{2+\lambda-\alpha}} \text{\ \  for \ \  } v = 1,2,\cdots,\tau. \label{eq:h-val-2}
			\end{align}
		\end{subequations}
		Notice that elements in $\overline{\mathscr{H}}$ are different from each other. This is because that if there exist two different integers $H_{v_1},H_{v_2} \in \mathscr{H}$ satisfying $H_{v_1}, H_{v_2} \in [(1+\epsilon^{2+\lambda-\alpha})^{s},(1+\epsilon^{2+\lambda-\alpha})^{s+1})$, then we have $1 \le |H_{v_1}-H_{v_2}| < \epsilon^{2+\lambda-\alpha}\cdot (1+\epsilon^{2+\lambda-\alpha})^{s} \le 
		\epsilon^{2+\lambda-\alpha}\cdot  \frac{1}{\epsilon^{2+\lambda-\alpha}} = 1,$ which is impossible. 

		Consider the following $\tau \times m $ table, denote it by $Table^{(1)}$.
		
		\begin{table}[H]
			\renewcommand{\arraystretch}{1.1}
			\centering
			\begin{tabular}{|c|cccc|}
				\hline
				\diagbox{$\overline{\mathscr{H}}$}{$\overline{X}$} 
				&   $\bar{x}_1 = (1+\epsilon^{2+\lambda-\alpha})^{k_1} $  & $\bar{x}_2 = (1+\epsilon^{2+\lambda-\alpha})^{k_2}$
				& $\cdots$  &  $\bar{x}_{m} = (1+\epsilon^{2+\lambda-\alpha})^{k_{m}}$ \\
				\hline
				$\overline{{H}}_1 = (1+\epsilon^{2+\lambda-\alpha})^{s_{1}}$ &${k_{1}-s_{1}}$ &${k_{2}-s_{1}}$   &$\cdots$   &${k_{m}-s_{1}}$ \\
				$\overline{{H}}_2 = (1+\epsilon^{2+\lambda-\alpha})^{s_{2}}$ &${k_{1}-s_{2}}$ &${k_{2}-s_{2}}$   &$\cdots$   &${k_{m}-s_{2}}$ \\
				$\vdots$ &$\vdots$   & $\vdots$ & $\vdots$  &$\vdots$ \\
				$\overline{{H}}_{\tau}  =(1+\epsilon^{2+\lambda-\alpha})^{s_{\tau}}$ &${k_{1}-s_{\tau}}$ &${k_{2}-s_{\tau}}$  &$\cdots$   &${k_{m}-s_{\tau}}$ \\
				\hline
			\end{tabular}
		\end{table}
		\noindent Since elements in $\overline{\mathscr{H}}$ are different from each other, the elements of the same column in $Table^{(1)}$ are different from each other, i.e., integers in $\{k_i -s_v \ | \ v = 1,2,\cdots, \tau  \}$ are different from each other for every $i=1,2,\cdots,m$. Let $\mathscr{C} = set \{k_i -s_v\ | \ i = 1,2,\cdots,m \text{ \ and \ } v = 1,2,\cdots, \tau \}$, which is the set of all distinct entries in $Table^{(1)}$. Recall \eqref{eq:x-val-2} and \eqref{eq:h-val-2}, for every $i=1,2,\cdots,m$ and every $v= 1,2,\cdots,\tau$, we have  $(1+\epsilon^{2+\lambda-\alpha})^{k_i - s_v} = \frac{\bar{x}_i }{\overline{H}_v} \in [\frac{1}{(1+\epsilon^{2+\lambda-\alpha})\epsilon^{\alpha}}, \frac{8\cdot (1+\epsilon^{2+\lambda-\alpha})}{\epsilon^{\alpha}}]$, thus $|\mathscr{C}| = \OO(\frac{1}{\epsilon^{2+\lambda-\alpha}})$.
		Let $ \mathscr{C} = \{c_1, c_2, \cdots,c_{\mu}\}$, where $\mu = \OO(\frac{1}{\epsilon^{2+\lambda-\alpha}}).$   
		
		We first find in $\mathscr{C}$ the element with the most occurrences in $Table^{(1)}$ (we break tie arbitrarily), let's say $c_1$. Then we find the columns containing $c_1$ in $Table^{(1)}$ and build $Table^{(2)}$ by dropping these columns from $Table^{(1)}$.  Let $n^{(1)}_{i}$ be the number of columns containing $c_{i}$ in $Table^{(1)}$ for every $i = 1,2,\cdots,\mu$. Note that elements of the same column in $Table^{(1)}$ are different from each other, thus we have $
		\sum_{i=1}^{\mu} n^{(1)}_{i} = \tau  m.$ Since $c_1$ is the element with the most occurrences in $Table^{(1)}$, it follows that $n^{(1)}_{1} \ge \frac{\tau  m}{\mu}$ and the number of columns in $Table^{(2)}$ is 
		$$
		m - n^{(1)}_{1} \le (1-\frac{\tau }{\mu}) m.
		$$

		Repeat the above operations iteratively. Assume that we have constructed $Table^{(1)}, \cdots, Table^{(j)}$. Besides, assume that we have found $c_1,\cdots,c_{j}$, which are the elements with the most occurrences in $Table^{(1)},\cdots, Table^{(j)}$, respectively. 
		Let $n^{(t)}_{i}$ denote the number of columns containing $c_{i}$ in $Table^{(t)}$, where $i = t,t+1,\cdots, \mu $ and $t = 1,2,\cdots, j$. It can be inductively proved that the number of columns in $Table^{(t)}$ is 
		$$m-\sum^{t-1}_{i=1}n^{(i)}_{i} \le (1-\frac{\tau}{\mu})^{t-1}  m, \text{\ where \ }t = 1,2,\cdots,j.$$
		
		We stop building $Table^{(j+1)}$ as soon as $m-\sum^{j}_{i=1}n^{(i)}_i = 0$. Then $(1-\frac{\tau}{\mu})^{j-1}  m \ge m-\sum^{j-1}_{i=1}n^{(i)}_i \ge 1$. Recall that $\mu  =\OO(\frac{1}{\epsilon^{2+\lambda-\alpha}})$ and $\frac{1}{4\epsilon^{2+\lambda-\alpha}} / (\log \frac{1}{\epsilon^{2+\lambda-\alpha}})^{\OO(\bar{d}+1)}\le\tau \le \frac{1}{\epsilon^{2+\lambda-\alpha}}$, a careful calculation shows that $j = \OO(\log(m) \cdot (\log \frac{1}{\epsilon})^{\OO(\bar{d}+1)}) = \OO(\log(|X|) \cdot (\log \frac{1}{\epsilon})^{\OO(\bar{d}+1)})$. 
		In the meantime, we have obtained $c_1,\cdots,c_{j}$, which are the elements with the most occurrences in $Table^{(1)},\cdots, Table^{(j)}$, respectively. We claim that
		$\{(1+\epsilon^{2+\lambda-\alpha})^{c_i} \ |\ i=1,2,\cdots,j\}$ is a required $\Delta$ in the Lemma~\ref{lemma:smooth_appro}. The claim is guaranteed by the followings:
		\begin{itemize}
			\item  $(1+\epsilon^{2+\lambda-\alpha})^{c_i}  \in  [\frac{1}{(1+\epsilon^{2+\lambda-\alpha})\epsilon^{\alpha}}, \frac{8\cdot (1+\epsilon^{2+\lambda-\alpha})}{\epsilon^{\alpha}}],$ i.e.,  $(1+\epsilon^{2+\lambda-\alpha})^{c_i} = \Theta(\epsilon^{-\alpha})$, holds for every $1\le i\le j$.
			\item Consider any $x_i\in X$, whose rounded value is $\bar{x}_i = (1+\epsilon^{2+\lambda-\alpha})^{k_i}$. According to \eqref{eq:x-val-1}, we have 
			$$(1+\epsilon^{2+\lambda-\alpha})^{k_i} \le x_{i} < (1+\epsilon^{2+\lambda-\alpha})^{k_i+1}.$$
			Since $m-\sum^{j}_{i=1}n^{(i)}_i = 0$, the column corresponding to $\bar{x}_i$ 
			in $Table^{(1)}$ must be dropped from some $Table^{(t)}\ (t \le j)$. Recall that $c_t$ is the element with the most occurrences in $Table^{(t)}$ and all columns dropped from $Table^{(t)}$ contain $c_t$, thus $c_t $ is in the column corresponding to $\bar{x}_i$. Hence  $ (1+\epsilon^{2+\lambda-\alpha})^{k_i-c_{t}} =\frac{\bar{x}_i}{(1+\epsilon^{2+\lambda-\alpha})^{c_{t}}}\in \overline{\mathscr{H}}$. Note that $(1+\epsilon^{2+\lambda-\alpha})^{k_i-c_{t}}$ is a rounded value of some $h_1 h_2 \cdots h_{\bar{d}+1} \in \mathscr{H}$, it thus follows that $h_1 h_2 \cdots h_{\bar{d}+1} \in [(1+\epsilon^{2+\lambda-\alpha})^{k_i-c_{t}}, (1+\epsilon^{2+\lambda-\alpha})^{k_i-c_{t}+1})$, then we have $ \frac{x_i}{1+\epsilon^{2+\lambda-\alpha}}\le (1+\epsilon^{2+\lambda-\alpha})^{k_i} \le  (1+\epsilon^{2+\lambda-\alpha})^{c_{t}}\cdot h_1 h_2\cdots h_{\bar{d}+1} \le (1+\epsilon^{2+\lambda-\alpha})^{k_i+1}\le (1+\epsilon^{2+\lambda-\alpha}) x_{i}.$
			Thus $x_i$ can be rounded to $(1+\epsilon^{2+\lambda-\alpha})^{c_t}\cdot h_1 h_2,\cdots h_{\bar{d}+1}$ and it holds that $ |x_i - (1+\epsilon^{2+\lambda-\alpha})^{c_t}\cdot h_1 h_2,\cdots h_{\bar{d}+1}|\le \epsilon^{2+\lambda-\alpha}x_{i}.$ 
		\end{itemize}

		\paragraph{Step 2 - $\tilde\OO(|X|+\frac{\alpha(\alpha+1)}{\epsilon})$-time algorithm to compute $\Delta$ and round every $x \in X$ to the form $\rho h_1 h_2\cdots h_{\bar{d}+1}$.} According to the discussion above, we can obtain $\Delta$ and round every $x \in X$ to the form $\rho h_1 h_2\cdots h_{\bar{d}+1}$ by iteratively solving the following two problems for every  $j=\OO(\log(|X|) \cdot (\log \frac{1}{\epsilon})^{\OO(\bar{d}+1)})$:
		\begin{itemize}
			\item [$P_1^{(j)}:$] Find the element with the most occurrences in $Table^{(j)}$, say, $c_j$. 
			\item [$P_2^{(j)}:$] Determine the columns containing $c_{j}$ in $Table^{(j)}$, and construct $Table^{(j+1)}$ if the iteration continues.  
		\end{itemize}
		A brute-force method that enumerates all the entries in each $Table^{(j)}$ can work, but it is too expensive. In the following, we aim to derive a more efficient algorithm. 
		
		We need to build $Table^{(1)}$ before iteration. Towards this, we first compute $\overline{X}$ and sort the elements in $\overline{X}$ in ascending order. Without loss of generality, we assume that $\bar{x}_{1} \le \bar{x}_{2} \le \cdots \le \bar{x}_{m}$, i.e., $k_1 \le k_2 \le \cdots \le k_{m}$. 
		Then we compute $\mathscr{H}$ and $\overline{\mathscr{H}}$, and sort the elements in $\overline{\mathscr{H}}$ in descending order. Without loss of generality, we assume that $\overline{H}_{1} \ge \overline{H}_{2} \ge \cdots \ge \overline{H}_{\tau}$, i.e., $s_1 \ge s_2 \ge \cdots \ge s_{\tau}$. 
		Finally, $Table^{(1)}$ will be built in an implicit way without specifying each entry, in particular, only $\overline{X}$ and $\overline{\mathscr{H}}$ are stored (from $\overline{X}$ and $\overline{\mathscr{H}}$ it is sufficient to recover the whole table as each entry can be uniquely determined). The total time of building $Table^{(1)}$ is $\OO(|X|\log |X|+\frac{1}{\epsilon}\log \frac{1}{\epsilon})$.

		Now we consider an arbitrary iteration $j$, and solve $P_1^{(j)}$ and $P_2^{(j)}$. At the beginning of iteration $j$, we have obtained $Table^{(j)}$, whose columns are exactly columns $l^j_1,l^j_2,\cdots,l^j_{m_j}$ of $Table^{(1)}$, where $l^j_1 \le l^j_2 \le \cdots \le l^j_{m_j}$. Let $\overline{X}^{j} = \{\bar{x}_{l^j_1},\bar{x}_{l^j_2},\cdots, \bar{x}_{l^j_{m_j}}\}$, then $Table^{(j)}$ is as follows. 
		\begin{center}
			\begin{tabular}{|c|cccc|}
				\hline
				\diagbox{$\overline{\mathscr{H}}$}{$\overline{X}^{j}$} & $\bar{x}_{l^j_1} = (1+\epsilon^{2+\lambda-\alpha})^{k_{l^j_1}}$ & $\bar{x}_{l^j_2}= (1+\epsilon^{2+\lambda-\alpha})^{k_{l^j_{2}}}$  & $\cdots$  & $\bar{x}_{l^j_{m_{j}}}=(1+\epsilon^{2+\lambda-\alpha})^{k_{l^j_{m_{j}}}}$ \\
				\hline
				$\overline{{H}}_1 = (1+\epsilon^{2+\lambda-\alpha})^{s_{1}}$ &${k_{l^j_1}-s_{1}}$ &${k_{l^j_{2}}-s_{1}}$   &$\cdots$   &${k_{l^j_{m_{j}}}-s_{1}}$ \\
				$\overline{{H}}_2 = (1+\epsilon^{2+\lambda-\alpha})^{s_{2}}$ &${k_{l^j_1}-s_{2}}$ &${k_{l^j_{2}}-s_{2}}$   &$\cdots$   &${k_{l^j_{m_{j}}}-s_{2}}$ \\
				$\vdots$ &$\vdots$   & $\vdots$ & $\vdots$  &$\vdots$ \\
				$\overline{{H}}_{\tau}  =(1+\epsilon^{2+\lambda-\alpha})^{s_{\tau}}$ &${k_{l^j_1}-s_{\tau}}$ &${k_{l^j_{2}}-s_{\tau}}$  &$\cdots$   &${k_{l^j_{m_{j}}}-s_{\tau}}$ \\
				\hline
			\end{tabular}
		\end{center} 
		
		Towards $P_1^{(j)}$, we define the following two polynomials:
		$$f^{j}(x) = x^{k_{l^j_1}-s_{1}}+x^{k_{l^j_1}-s_{2}}+x^{k_{l^j_1}-s_{3}}+ \cdots +x^{k_{l^j_1}-s_{\tau}}$$   
		$$
		g^{j}(x) =1+ x^{k_{l^j_{2}}-k_{l^j_{1}}}+x^{k_{l^j_{3}}-k_{l^j_{1}}}+x^{k_{l^j_{4}}-k_{l^j_{1}}}+\cdots +x^{k_{l^j_{m_{j}}}-k_{l^j_{1}}}$$
		Consider the product $f^{j}(x)\cdot g^{j}(x)$ and the coefficient of an arbitrary term $x^c$ in the product. By the definition of polynomial multiplication, the coefficient of $x^c$ counts all the pairs $(a,b)$ such that the $a$-th term of $f^j$ and the $b$-th term of $g^j$ multiply to $x^c$. Equivalently, the coefficient of $x^c$ counts all the pairs $(a,b)$ such that the exponent of $a$-th term of $f^j$, which is $k_{l^j_1}-s_a$, and the exponent of $b$-th term of $g^j$, which is $k_{l_{b}^j}-k_{l^j_1}$, add up to $c$. Observe that the two exponents, $k_{l^j_1}-s_a$ and $k_{l_{b}^j}-k_{l^j_1}$, add up to exactly $k_{l_{b}^j}-s_a$, which is the element in the $b$-th column of Table$^{(j)}$. Hence, the coefficient of $x^c$ counts the number of the occurrences of element $c$ in $Table^{(j)}$.  
		
		Recall that $\mathscr{C} = set \{k_i -s_v \ | \ i= 1,2,\cdots,m \text{ \ and \ } v= 1,2,\cdots, \tau \}$ denotes the set of all distinct entries in $Table^{(1)}$, we claim that $ 0 \le \mathscr{C}^{\min} \le \mathscr{C}^{\max}=\OO(\frac{\alpha+1}{\epsilon^{2+\lambda-\alpha}} \log \frac{1}{\epsilon}) $.  The claim is guaranteed by the followings:
		\begin{itemize}
			\item By the facts that $\alpha \in [1+\lambda,2+\lambda)$ and $\lambda \in [-1,+\infty)$, we have  $\alpha \ge 0$. Recall that  $X \subset [\frac{1}{\epsilon^{2+\lambda}},\frac{2}{\epsilon^{2+\lambda}}]$ and $\mathscr{H} \subset [\frac{1}{4\epsilon^{2+\lambda-\alpha}},\frac{1}{\epsilon^{2+\lambda-\alpha}})$, we have ${\overline{\mathscr{H}}}^{\max} \le \overline{X}^{\min}$.
			\item Notice that $\min_{c\in \mathscr{C} }(1+\epsilon^{2+\lambda-\alpha})^{c} = {  \overline{X}^{\min}}/{ {\overline{\mathscr{H}}}^{\max} }  \ge 1$,  we have $\mathscr{C}^{\min} \ge 0 $.
			\item Recall \eqref{eq:x-val-2} and \eqref{eq:h-val-2},  we have $\max_{c\in \mathscr{C} }(1+\epsilon^{2+\lambda-\alpha})^{c} = {  \overline{X}^{\max}}/{ {\overline{\mathscr{H}}}^{\min} }    \in  [\frac{1}{(1+\epsilon^{2+\lambda-\alpha})\epsilon^{\alpha}}, \frac{8\cdot (1+\epsilon^{2+\lambda-\alpha})}{\epsilon^{\alpha}}]$, which imlpies that $ \mathscr{C}^{\max}=\OO( \frac{\alpha+1}{\epsilon^{2+\lambda-\alpha}} \log \frac{1}{\epsilon}) $. 
		\end{itemize}
		Since ${\overline{\mathscr{H}}}^{\max} \le \overline{X}^{\min}$, we have $k_{l^j_{1}} \ge s_{\tau}$, hence $k_{l^j_{m_j}}-k_{l^j_1}\le k_{l^j_{m_j}} -s_{\tau}$. Notice that $\mathscr{C}^{\min} \le k_{l^j_1} - s_{1} \le k_{l^j_1} - s_{2} \le \cdots \le  k_{l^j_1} - s_{\tau}\le \mathscr{C}^{\max}  $ and $0\le k_{l^j_2}-k_{l^j_1}  \le  k_{l^j_3}-k_{l^j_1}  \le \cdots \le k_{l^j_{m_j}}-k_{l^j_1}\le k_{l^j_{m_j}} -s_{\tau} \le \mathscr{C}^{\max}$.  
		Then in 
		$ \OO(\mathscr{C}^{\max} \log(\mathscr{C}^{\max} ))=\OO(\frac{\alpha(\alpha+1)}{\epsilon^{2+\lambda-\alpha}}(\log \frac{1}{\epsilon})^2)$ 
		processing time, Fast Fourier Transform can return the product $f^{j}(x)\cdot g^{j}(x)$. We pick the term $x^{c_j}$ whose coefficient is the largest in $f^{j}(x)\cdot g^{j}(x)$ (we break tie arbitrarily), then $c_j$ is the element with the most occurrences in $Table^{(j)}$. Note that the time to construct $f^{j}(x)$ and $g^{j}(x)$ is $\OO(|\overline{X}^j|+\frac{1}{\epsilon^{2+\lambda-\alpha}})$, and the time to pick $x^{c_j}$ from $f^{j}(x)\cdot g^{j}(x)$ is  
		$\OO( \mathscr{C}^{\max}) =\OO( \frac{\alpha+1}{\epsilon^{2+\lambda-\alpha}} \log \frac{1}{\epsilon})$. To summarize, the total time to find $c_j$, i.e., fine the element with the most occurrences in $Table^{(j)}$, is $\OO(|\overline{X}^j|+\frac{\alpha(\alpha+1)}{\epsilon^{2+\lambda-\alpha}}(\log \frac{1}{\epsilon})^2)$.

		
		Now we consider $P_2^{(j)}$. Given $c_{j}$ as the element with the most occurrences in $Table^{(j)}$, we want to identify all columns that contain $c_j$. For every $ \bar{x}_{l^j_u} \in \overline{X}_{j}$, we use binary search to check whether $\frac{\bar{x}_{l^j_u}}{(1+\epsilon^{2+\lambda-\alpha})^{c_{j}}} $ 
		is in $\overline{\mathscr{H}}$. If the answer is ``yes", i.e., there exists $\overline{H}_{v} \in \overline{\mathscr{H}}$ such that $\overline{H}_{v} =  \frac{\bar{x}_{l^j_u}}{(1+\epsilon^{2+\lambda-\alpha})^{c_{j}}}$,  then column $l^j_{u}$ contains $c_j$ and we can round $x_{l^j_u}$ to $(1+\epsilon^{2+\lambda-\alpha})^{c_{j}}\cdot  H_{v}$. Binary search takes logarithmic time, therefore in $\OO(| \overline{X}^{j}|\log \frac{1}{\epsilon})$ time, we can find all columns containing $c_j$ in $Table^{(j)}$, and meanwhile round the elements in $X$ corresponding to these columns accordingly. Moreover, by dropping these columns from $Table^{(j)}$, we can obtain $Table^{(j+1)}$ and go to the next iteration. In general, for $Table^{j+1}$ we only store $\overline{X}^{j+1}$ and $\overline{\mathscr{H}}$. From $\overline{X}^{j+1}$ and $\overline{\mathscr{H}}$ it is sufficient to recover the whole table as each entry can be uniquely determined.
		
		To summarize, there are at most $\OO(\log(|X|)\cdot (\log\frac{1}{\epsilon})^{\OO(\bar{d}+1)})$ iterations, a simple calculation shows that the total processing time is $\OO((|X|+\frac{\alpha(\alpha+1)}{\epsilon}) \cdot 
		\log(|X|) \cdot(\log\frac{1}{\epsilon} )^{\OO(\bar{d}+1)})$. This completes the proof of Lemma~\ref{lemma:smooth_appro}.\qed\end{proof}
	
	\section{Algorithms for computing subset-sums of smooth numbers}\label{sec:alg-smooth}
	
	Given any multiset $X$, recall that the subset-sums of $X$ is $S(X) = set\{\Sigma(Y) \ | \  Y   \subset X\}$, representing the set of all possible subset sums of $X$.  We show that, if all the input numbers have a nice number-theoretic property, then their subset-sums can be computed (approximately) in a more efficient way. More precisely, given $\epsilon>0$, $d\in \mathbb{N}_{+}$ and $\bar{d}\in \mathbb{N}\cap [0, d-1]$, we define $(\epsilon,d,\bar{d})$-\textit{smooth numbers} as the integers that can and have been factorized as $h_1 h_2 \cdots h_{\bar{d}+1}$, where  $h_{\bar{d}+1} \in \mathbb{N}\cap [1, {2\epsilon^{-\frac{1}{d}}}]$ and 
	$h_i \in \mathbb{N}\cap [\frac{1}{2}\epsilon^{-\frac{1}{d}}, 2\epsilon^{-\frac{1}{d}}]$ for $i =1,2,\cdots, \bar{d}$. For ease of presentation, we refer to $(\epsilon,d,\bar{d})$-{smooth numbers} as smooth numbers when $d,\bar{d}$ are clear from the context. The goal of this section is to prove the following Lemma on algorithms for approximating subset-sums of smooth numbers.  
	
	\begin{lemma}\label{lemma:e_apx_sm}
		Given $d \in \mathbb{N}_{+}$ and $\bar{d} \in \mathbb{N}\cap [0,d-1]$. Let $A $ be a multiset of $(\epsilon,d,\bar{d})$-{smooth numbers}, that is, every element in $A$ can and have been factorized as $h_1 h_2 \cdots h_{\bar{d}+1}$, where  $h_{\bar{d}+1}\in \mathbb{N}_{+}\cap [1,2\epsilon^{-\frac{1}{d}}]$ and if $\bar{d} \ge 1$ then $h_{i} \in \mathbb{N}_{+}\cap [\frac{1}{2}\epsilon^{-\frac{1}{d}},{2\epsilon^{-\frac{1}{d}}}]$ for $i=1,2,\cdots,\bar{d}$. 
		
		Then for any $k \in \mathbb{N}\cap [0,\bar{b}]$, in $\OO(d\cdot |A|+\Sigma(A) {\epsilon^{\frac{k}{d}}} \cdot \log (|A|)\cdot \log (\Sigma(A) {\epsilon^{\frac{k}{d}}})+ {\epsilon^{-(1+\frac{k}{d})}}\log \frac{1}{\epsilon})$ processing time, we can 
		\begin{enumerate}
			\item[(i).] Compute an $\OO(\epsilon (\log \frac{1}{\epsilon})^k)$-approximate set with cardinality of $\OO(\frac{1}{\epsilon})$ for $S(A)$;
			\item[(ii).] Meanwhile build a $T^{(1)}$-time oracle for backtracking from this approximates set to $A$, where $T^{(1)}=\OO(\Sigma(A) {\epsilon^{\frac{k}{d}}} \cdot \log (|A|)\cdot \log (\Sigma(A) {\epsilon^{\frac{k}{d}}})+ {\epsilon^{-(1+\frac{k}{d})}}\log \frac{1}{\epsilon})$.
		\end{enumerate}
	\end{lemma}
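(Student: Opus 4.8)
I plan to prove Lemma~\ref{lemma:e_apx_sm} by induction on the parameter $k$, the underlying idea being to repeatedly ``peel off'' the first smooth factor as a large common divisor. For the base case $k=0$ (so $\epsilon^{k/d}=1$) I would simply use the remark following Lemma~\ref{lemma:sumset}: in $\OO(\Sigma(A)\log\Sigma(A)\log|A|)$ time we obtain the exact subset-sums $S(A)$ together with an $\OO(\Sigma(A)\log\Sigma(A)\log|A|)$-time backtracking oracle from $S(A)$ to $A$, and then round $S(A)$ to multiples of $\epsilon\,\Sigma(A)/2$ exactly as in Observation~\ref{obs:cc_1}, storing one preimage per rounded value for the oracle. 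This yields an $\epsilon$-approximate set of cardinality $\OO(1/\epsilon)$ and an $\OO(\frac{1}{\epsilon}\log\frac{1}{\epsilon})$-time oracle, so the total is within $\OO(d|A|+\Sigma(A)\log\Sigma(A)\log|A|+\frac{1}{\epsilon}\log\frac{1}{\epsilon})$, as claimed (the $d|A|$ term just covers reading the factorizations).

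For the inductive step $k\ge 1$ we have $\bar d\ge 1$, so every $x\in A$ factors as $x=h_1h_2\cdots h_{\bar d+1}$ with $h_1\in\mathbb{N}\cap[\frac12\epsilon^{-1/d},2\epsilon^{-1/d}]$; hence $h_1$ takes only $\ell:=\OO(\epsilon^{-1/d})$ distinct values. First I would partition $A=\dot{\cup}_v A_v$ by the value $v$ of the first factor (a bucket sort in $\OO(|A|+\ell)$ time). Every element of $A_v$ is divisible by $v$, and dividing through by $v$ deletes the factor $h_1$, so $B_v:=A_v/v$ is a multiset of $(\epsilon,d,\bar d-1)$-smooth numbers (with the factorization inherited). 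I would then apply the lemma recursively to each $B_v$ with budget $k-1$, which is legal since $k-1\le\bar d-1$, obtaining an $\OO(\epsilon(\log\frac{1}{\epsilon})^{k-1})$-approximate set $C_v$ of $S(B_v)$ of cardinality $\OO(1/\epsilon)$ with its oracle; scaling preserves relative errors and maxima, so $vC_v$ is an equally good approximate set of $S(A_v)=v\,S(B_v)$. Finally I would feed the $\ell$ sets $\{vC_v\}_v$ to Corollary~\ref{coro:tree-fashion_}, which returns an approximate set of $S(\dot{\cup}_v A_v)=S(A)$ of cardinality $\OO(1/\epsilon)$, together with a backtracking oracle to $\dot{\cup}_v A_v$ and hence to $A$. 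Writing $r_k$ for the relative error produced on a budget-$k$ instance, we get $r_0=\OO(\epsilon)$ and $r_k=\tilde\OO(r_{k-1}\log\ell+r_{k-1}^2\ell)$; since $k\le\bar d<d$ the quadratic term is subdominant and each level multiplies the error by only $\OO(\log\frac{1}{\epsilon})$, giving $r_k=\OO(\epsilon(\log\frac{1}{\epsilon})^k)$ and, throughout the recursion, $r_k=\tilde\OO(\epsilon)$ — which is precisely the hypothesis Corollary~\ref{coro:tree-fashion_} imposes on its inputs.

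For the running time I would unroll the recursion into a tree of depth $k\le d$ whose leaves are budget-$0$ instances indexed by common divisors $\rho=h_1\cdots h_k$, with $\rho\in[2^{-k}\epsilon^{-k/d},2^k\epsilon^{-k/d}]$ and scaled-down multiset $B_\rho=A_\rho/\rho$. The decisive estimate is $\rho=\Omega(\epsilon^{-k/d})$: then $\sum_\rho\Sigma(B_\rho)=\sum_\rho\Sigma(A_\rho)/\rho=\OO(\epsilon^{k/d}\Sigma(A))$, so the exact FFTs at the leaves cost $\OO(\Sigma(A)\epsilon^{k/d}\log|A|\log(\Sigma(A)\epsilon^{k/d}))$ in total and the ensuing roundings add only $\OO(\Sigma(A)\epsilon^{k/d}+\epsilon^{-1-k/d})$. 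At level $j$ there are $\OO(\epsilon^{-j/d})$ internal nodes, $\OO(\epsilon^{-(k-1)/d})$ in all, each paying the $\OO(\frac{\ell}{\epsilon}\log\frac{1}{\epsilon})=\OO(\epsilon^{-1-1/d}\log\frac{1}{\epsilon})$ cost of one invocation of Corollary~\ref{coro:tree-fashion_}, for a geometric series summing to $\OO(\epsilon^{-1-k/d}\log\frac{1}{\epsilon})$; the partitioning across all levels is $\OO(d|A|+\epsilon^{-k/d})$. Adding these up gives the stated bound, and since Corollary~\ref{coro:tree-fashion_} also composes oracles with cost $\OO(\sum_i T_i+\frac{\ell}{\epsilon}\log\frac{1}{\epsilon})$, the backtracking-oracle time telescopes the same way into $T^{(1)}$.

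The main obstacle is bookkeeping rather than any isolated difficulty. Three points need care: (i) keeping the recursion well-typed, i.e.\ verifying that dividing out $h_1$ really produces $(\epsilon,d,\bar d-1)$-smooth numbers so that the budget decreases cleanly; (ii) the running-time accounting, where the entire gain rests on the lower bound $\rho=\Omega(\epsilon^{-k/d})$ (this is what converts $\Sigma(A)$ into $\Sigma(A)\epsilon^{k/d}$) and one must check that the per-level FFT and combine costs form geometric series dominated by their last terms; and (iii) controlling error propagation so that $r_k$ stays $\tilde\OO(\epsilon)$ at every level — guaranteed by $k\le\bar d<d$ — which is what licenses reapplying Corollary~\ref{coro:tree-fashion_} up the tree.
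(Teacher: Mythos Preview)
Your proposal is correct and follows essentially the same approach as the paper: the paper builds the $(\bar d+1)$-layer factor tree explicitly, starts exact FFT at layer $k{+}1$ (your base case), and then combines upward via Corollary~\ref{coro:tree-fashion_} exactly as your recursion does when unrolled. The key estimates you identify---$\rho=\Theta(\epsilon^{-k/d})$ giving $\sum_\rho\Sigma(B_\rho)=\OO(\epsilon^{k/d}\Sigma(A))$, and the per-layer combine costs forming a geometric series summing to $\OO(\epsilon^{-1-k/d}\log\frac{1}{\epsilon})$---match the paper's analysis line for line.
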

	\noindent\textbf{Remark.} 
	We will take a specific value for $k$ when we apply Lemma~\ref{lemma:e_apx_sm} in the future, namely $k = {d}/{4}$, where $d\ge 12$ is divisible by $4$.
	
	Note that Lemma~\ref{lemma:e_apx_sm} consists of two parts: one is computing (approximately) subset-sums	of $A$, and the other is building an oracle for backtracking.  Thus Lemma~\ref{lemma:e_apx_sm} implies a deterministic weak $(1-\tilde\OO(\epsilon))$-approximation algorithm for  SUBSET SUM instance $(A,t)$, where $A$ is a multiset of $(\epsilon,d,\bar{d})$-smooth numbers and $t = \Theta(\Sigma(A))$.

	\begin{proof}[Proof of Lemma~\ref{lemma:e_apx_sm}]
		Denote the set of all different integers in $\mathbb{N}_{+}\cap [\frac{1}{2}\epsilon^{-\frac{1}{d}},2\epsilon^{-\frac{1}{d}}]$ by $\mathcal{P} = \{p_1,p_2,\cdots,p_{m}\}$. Then every element of $A$ is a multiplication of $\bar{d}+1$ integers (which are called factors of the element), with $\bar{d}$ integers belonging to $\mathcal{P}$ and one extra integer $h_{\bar{d}+1}$ that may or may not belong to $\mathcal{P}$.  For simplicity, we fix the order of factors of each element arbitrarily (except that the $(\bar{d}+1)$-th factor must be $h_{\bar{d}+1}$ if $h_{\bar{d}+1}\not\in\mathcal{P}$) and refer to $h_i$ as its $i$-th factor. 
		
		Intuitively, we want to divide integers of $A$ into subsets such that integers that share the same factors are in the same subset. However, since every integer is a multiplication of $\bar{d}+1$ factors, the division will have a layered structure. We give a simple example for when $\bar{d}=2$. Consider 3 integers $p_1 p_2 p_3$, $p_2 p_3 p_4$ and $p_1 p_2 p_4$. We first divide them based on the first factor, that is, integers whose first factor is $p_i$ are put into group $i$. Hence we derive two groups: group 1 containing $p_1p_2p_3$, $p_1p_2p_4$, and group 2 containing $p_2p_3p_4$. We further subdivide group 1 based on the second factor, that is, integers whose second factor is $p_i$ are put into subgroup $i$. We can see that $p_1p_2p_3$ and $p_1p_2p_4$ are still in the same subgroup as their second factors are also the same. We keep the subdivision procedure for $\bar{d}$ times. It is easy to see that if two integers are in the same subgroup after $\zeta$ rounds of subdivisions, then they share the first $\zeta$ factors. 
		
		Now we formally present the division procedure. We build a tree structure of $ \bar{d}+1$ layers for all elements in $A$ as follows: 
		\begin{itemize}
			\item Let the root node contain all elements in $A$. Root node has layer-$1$.				
			\item If $\bar{d} =0$, we stop building and obtain a single node tree structure.  Else if $\bar{d} \ge 1$, we proceed to create $m$ child nodes of root node by subdividing $A$ into $m$ groups such that elements in the $i$-th group share $p_i$ as their first factor (If $A$ does not contain an element with $p_i$ as its first factor, then let the $i$-th group be an empty set).    				
			\item Suppose we have obtained nodes of layer-$\zeta$, where $2
			\le \zeta\le \bar{d}+1$, such that elements in each node share the same $j$-th factor for $1 \le j\le \zeta-1$.  If $\zeta  = \bar{d}+1$, we stop building and obtain a tree structure of $ \bar{d}+1$ layers. Else if $\zeta \le \bar{d}$,  we create nodes of layer-$(\zeta+1)$ as follows. Consider an arbitrary node of layer-$\zeta$, say, $u$.  Let $A_u\subset A$ be the (multi-)set of all the elements contained in node $u$.  We create $m$ child nodes of $u$ by subdividing $A_u$ into $m$ groups such that elements in the $i$-th group share $p_i$ as their $\zeta$-th factor (If $A_u$ does not contain an element with $p_i$ as its $\zeta$-th factor, then let the $i$-th group be an empty set).    
			
		\end{itemize}

		For each layer-$\zeta$ node $u$, let $A_u\subset A$ be the (multi-)set of all the elements contained in node $u$. We define $\pi_u$ as follows: if  $\zeta=1$, then $\pi_u = 1$; else if 
		$2\le \zeta \le \bar{d}+1$, then elements in $A_u$ share the same $j$-th factor for $1\le j \le \zeta-1$, we let $\pi_u$ be the product of these $\zeta-1$ factors.  
		
		Notice that each node in this tree except the leaf node has a degree of $m=\OO({\epsilon^{-\frac{1}{d}}})$, so there are in total  $m^{\zeta-1}=\OO({\epsilon^{-\frac{(\zeta-1)}{d}}})$ nodes at the $\zeta$-th layer (called layer-$\zeta$ nodes).   The time of building this tree is  $\OO({\epsilon^{-\frac{(\bar{d}+1)}{d}}}+\bar{d} \cdot |A|)$.
		
		Let $k \in \mathbb{N} \cap [0, \bar{d}]$. Next, with the help of this tree structure, we will compute an $\tilde\OO(\epsilon)$-approximate set of $S(A)$ and build an oracle for backtracking from this approximate set to $A$. Note that we do not start from the leaves of the tree, but start from layer-$(k+1)$ nodes for some parameter $k$ that can be optimized when we apply Lemma~\ref{lemma:e_apx_sm}. 
		
		\begin{enumerate}
			\item   

			Start from layer-$(k+1)$.  Consider any node in layer-$(k+1)$, namely node $u$, let $A_u$ be the (multi-)set of elements it contains. 
			Let $\tilde{A}_u=A_u/\pi_u$.  Note that  $\tilde{A}_u$ is also an integral (multi-)set.  
			Recall Lemma~\ref{lemma:sumset}, in $\OO(\Sigma(\tilde{A}_u)\log \Sigma(\tilde{A}_u) \log|\tilde{A}_u|)$ processing time, we can compute $S(\tilde{A}_u)$ and build an $\OO(\Sigma(\tilde{A}_u) \log \Sigma(\tilde{A}_u) \log|\tilde{A}_u|)$-time oracle for backtracking from $S(\tilde{A}_u)$ to $\tilde{A}_u$. Define $B_{\tau} =  S(\tilde{A}_u) \cap [(\tau-1)\epsilon \Sigma(\tilde{A}_u),\tau\epsilon \Sigma(\tilde{A}_u))$, where $\tau = 1,2,\cdots,2+\lfloor{1}/{\epsilon}\rfloor$. Let $C_{\tilde{A}_u} = set \{(B_{\tau})^{max}, (B_{\tau})^{min} \ | \ \tau = 1,2,\cdots,2+\lfloor{1}/{\epsilon}\rfloor\} $. 
			$C_{\tilde{A}_u}$ essentially contains the two ``extreme" points (i.e., largest and smallest) of $S({\tilde{A}_u})$ within each subinterval  $[(\tau-1)\epsilon \Sigma({\tilde{A}_u}),\tau\epsilon \Sigma({\tilde{A}_u}))$, and it is thus easy to see that $C_{\tilde{A}_u}$ is an $\epsilon$-approximate set with cardinality of $\OO(1/\epsilon)$ for $S({\tilde{A}_u})$ and the above oracle is an $\OO(\Sigma({\tilde{A}_u}) \log \Sigma({\tilde{A}_u}) \log|{\tilde{A}_u}|)$-time oracle for backtracking from $C_{\tilde{A}_u}$ to ${\tilde{A}_u}$. Note that the time of computing  $C_{\tilde{A}_u}$ is bounded by the cardinality of $S({\tilde{A}_u})$, which is $\OO(\Sigma({\tilde{A}_u}))$. 
			Observe that  $\pi_{u} C_{{\tilde{A}_u}}$ is an $\epsilon$-approximate set of $ S(\pi_{u} {\tilde{A}_u})= S(A_u)$, and the oracle for backtracking from $C_{{\tilde{A}_u}}$ to ${\tilde{A}_u}$ directly yields an $\OO(\Sigma({\tilde{A}_u}) \log \Sigma({\tilde{A}_u}) \log|{\tilde{A}_u}|)$-time oracle for backtracking from $\pi_{u} C_{{\tilde{A}_u}}$ to $A_u =\pi_{u} {\tilde{A}_u}$.  We associate $\pi_{u} C_{\tilde{A}_u}$ and the oracle for backtracking from $\pi_{u} C_{{\tilde{A}_u}}$ to $A_u$ with node $u$.

			Note that the processing time of handling one layer-$(k+1)$ node, say $u$, is $\tilde\OO(\Sigma({\tilde{A}_u}) \log \Sigma({\tilde{A}_u}) \log|{\tilde{A}_u}|)$. Then the total processing time of handling all layer-$(k+1)$ nodes is $\sum_{u}\OO(\Sigma({\tilde{A}_u}) \log \Sigma({\tilde{A}_u}) \log|{\tilde{A}_u}|)$.  Observe that $\pi_u=\Theta(\epsilon^{-k/d})$ and $\sum_{u}\Sigma(A_u)=\Sigma(A)$, 
			we have $\sum_{u}\Sigma(\tilde{A}_u)=\OO(\Sigma(A) {\epsilon^{\frac{ k}{d}}})$ and  $\sum_{u}\OO(\Sigma({\tilde{A}_u}) \log \Sigma({\tilde{A}_u}) \log|{\tilde{A}_u}|)=\OO(\Sigma(A) {\epsilon^{\frac{k }{d}}} \log (|A|)\log (\Sigma(A) {\epsilon^{\frac{k }{d}}}))$.

			\item If $k= 0$, note that we have computed an $\epsilon$-approximate set for  $S(A)$ and built an oracle for backtracking from this approximate set to $A$. In the following, we consider the case that $1\le k \le \bar{d}$. From layer-$k$ to layer-1, we iteratively compute an $\tilde\OO(\epsilon)$-approximate set for each node by using the $\tilde\OO(\epsilon)$-approximate sets of its children: 
			Consider any layer-$\zeta$ node $u$, where $1\le\zeta \le k$. Let $A_u$ be the (multi-)set of elements $u$ contains, and ${\tilde{A}_u}=A_u/\pi_u$. 
			Let $u_i$ denote the $i$-th child node of $u$, $A_{u_i}$ be the set of elements contained in $u_i$ for $i=1,2,\cdots, m$, and  $ {\tilde{A}_{u_i}}=A_{u_i}/ \pi_{u_i}$.
			Suppose we have obtained an $\tilde\OO(\epsilon)$-approximate set with cardinality of $\OO({1}/{\epsilon})$ for each $S({\tilde{A}_{u_i}})$, say $C_{{\tilde{A}_{u_i}}}$, and meanwhile we have built a $T_i$-time oracle for backtracking from $C_{{\tilde{A}_{u_i}}}$ to ${\tilde{A}_{u_i}}$. Then $\pi_{u_i} C_{{\tilde{A}_{u_i}}}$ is an $\tilde\OO(\epsilon)$-approximate set of $ S(A_{u_i})$, and the oracle for backtracking from $C_{{\tilde{A}_{u_i}}}$ to ${\tilde{A}_{u_i}}$ directly yields an $\OO(T_i)$-time oracle for backtracking from $\pi_{u_i} C_{{\tilde{A}_{u_i}}}$ to $A_{u_i}$. Note that $A_{u} = A_{u_1} \dot{\cup} A_{u_2} \dot{\cup} \cdots \dot{\cup} A_{u_m}$. Recall Corollary~\ref{coro:tree-fashion_} and $m = \OO(\epsilon^{-\frac{1}{d}})$, in $\OO(\frac{1}{\epsilon^{1+\frac{1}{d}}}\log \frac{1}{\epsilon})$ processing time, we can compute an $\tilde\OO(\epsilon \log \frac{1}{\epsilon})$-approximate set with cardinality of $\OO( \frac{1}{\epsilon})$ for $S(A_u)$, denote by $C_{A_u}$ this approximate set. At the same time, we have built an $\OO(\Sigma^m_{i=1}T_i + \frac{1}{\epsilon^{1+\frac{1}{d}}}\log\frac{1}{\epsilon})$-time oracle for backtracking from $C_{A_u}$ to $A_u$. Then we associate $C_{A_u}$ and the oracle for backtracking from $C_{A_u}$ to $A_u$ with node $u$. 
			
			Note that there are a total of $\OO(\epsilon^{-\frac{(\zeta-1) }{d}})$ nodes in layer-$\zeta$, thus the overall processing time in layer-$\zeta$ is $\OO(\frac{1}{\epsilon^{1+\frac{\zeta}{d}}}\log \frac{1}{\epsilon})$. Furthermore, the total processing time from layer-$k$ to root is $\OO(\frac{1}{\epsilon^{1+\frac{k}{d}}}\log \frac{1}{\epsilon})$.
		\end{enumerate}   
		We estimate the overall processing time now. The total processing time is 
		\begin{align*}
			T &=\OO({\epsilon^{-\frac{(\bar{d}+1)}{d}}}+\bar{d} \cdot |A|)+ \OO(\Sigma(A) {\epsilon^{\frac{k}{d}}} \cdot \log (|A|)\cdot \log (\Sigma(A) {\epsilon^{\frac{k}{d}}})+ {\epsilon^{-(1+\frac{k}{d})}}\log \frac{1}{\epsilon})\\
			& = \OO(d\cdot |A|+\Sigma(A) {\epsilon^{\frac{k}{d}}} \cdot \log (|A|)\cdot \log (\Sigma(A) {\epsilon^{\frac{k}{d}}})+ {\epsilon^{-(1+\frac{k}{d})}}\log \frac{1}{\epsilon}).
		\end{align*}
		
		At the root node, we will obtain an $\OO(\epsilon (\log \frac{1}{\epsilon})^k)$-approximate set with cardinality of $\OO({1}/{\epsilon})$ for $S(A)$. Denote by $C_{A}$ this approximate set. 
		
		With the help of this tree structure, we derive an oracle for backtracking from $C_{A}$ to $A$ as follows: given any $c\in C_{A}$, use the oracle associated with the root node to return two numbers from its two child nodes. Backtrace recursively from the root node to leaf nodes, for any number in layer-$h$ node, use the oracle associated with this node to return two numbers from its two child nodes.  Let $A'$ be the collection of numbers returned from all leaf nodes.  One can easily prove that  $A '\subset A$ satisfies $|\Sigma(A')-c| \le \OO(\epsilon(\log \frac{1}{\epsilon})^k) \Sigma(A)$. The total processing time for backtracking is $T^{(1)}= \OO(\Sigma(A) {\epsilon^{\frac{k}{d}}} \cdot \log (|A|)\cdot \log (\Sigma(A) {\epsilon^{\frac{k}{d}}})+ {\epsilon^{-(1+\frac{k}{d})}}\log \frac{1}{\epsilon})$, which follows from the following recurrent calculation:  $T^{(k+1)} = \OO(\Sigma(A) {\epsilon^{\frac{k }{d}}} \log (|A|)\log (\Sigma(A) {\epsilon^{\frac{k }{d}}}))$ and $T^{(\zeta)} = \OO(T^{(\zeta+1)}+  \frac{m^{\zeta-1}}{\epsilon^{1+\frac{1}{d}}}\log\frac{1}{\epsilon}) =  \OO(T^{(\zeta+1)}+  \frac{1}{\epsilon^{1+\frac{\zeta}{d}}}\log\frac{1}{\epsilon})$ for $\zeta = k,k-1,k-2,\cdots,1$. \qed

		
	\end{proof}
	
	Sometimes we only care about computing $\omega$-capped subset-sums of smooth numbers and hope to design a customized algorithm whose running time will decrease with the decrease of $\omega$. Towards this, we develop the following Lemma~\ref{lemma:tree-alg} via a similar proof as Lemma~\ref{lemma:e_apx_sm}.
	\begin{lemma}\label{lemma:tree-alg}
		Given $d\in \mathbb{N}_{+}$ and $\bar{d}\in \mathbb{N}\cap [0,d-1]$. Let $A$ be a multiset of $(\epsilon,d,\bar{d})$-smooth numbers, that is,  every element in $A$ have been factorized as $h_1 h_2 \cdots h_{\bar{d}} h_{\bar{d}+1}$, where $h_{\bar{d}+1} \in \mathbb{N}_{+}\cap [1,2\epsilon^{-\frac{1}{d}}]$ and if $\bar{d}\ge 1$ then  $h_{i} \in \mathbb{N}_{+} \cap [\frac{1}{2}\epsilon^{-\frac{1}{d}},2\epsilon^{-\frac{1}{d}}]$ for $i = 1,2,\cdots,\bar{d}$. 
		
		Fix $\omega \in [0, \Sigma(A)]$. Then for any $k\in \mathbb{N}\cap [0,\bar{d}]$, in  $\OO\left(\epsilon^{-\frac{(\bar{d}+1)}{d}}+\bar{d} \cdot |A|+ \Sigma(A) {\epsilon^{\frac{k }{d}}} \log (|A|)\log (\Sigma(A) {\epsilon^{\frac{k }{d}}})+ 2^{2k+1} \epsilon^{-\frac{1}{d}}  \omega\log \omega\right)$ processing time, we can  
		\begin{itemize}
			\item[(i).] Compute $S(A;[0,\omega])$;
			\item[(ii).] Meanwhile build a $T^1$-time oracle for backtracking from $S(A;[0,\omega])$ to $A$, where $T^1 = \OO(\Sigma(A) {\epsilon^{\frac{k }{d}}} \log (|A|)\log (\Sigma(A) {\epsilon^{\frac{k }{d}}})+ 2^{2k+1} \epsilon^{-\frac{1}{d}}\omega\log \omega)$.
		\end{itemize}
	\end{lemma}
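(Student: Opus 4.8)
The plan is to follow the proof of Lemma~\ref{lemma:e_apx_sm} almost verbatim, but to replace the \emph{approximate} merging along the tree (Corollary~\ref{coro:tree-fashion_}) by \emph{exact} capped‑sumset computations (Lemma~\ref{lemma:cap_sumset}). The key point is that, thanks to $(\epsilon,d,\bar d)$‑smoothness, the cap that must be respected shrinks down the tree by exactly the factor by which the tree has branched out, so each layer costs only $2^{\OO(k)}\epsilon^{-1/d}\omega\log\omega$. Since $A$ is integral and hence $S(A)\subset\mathbb{N}$, we may replace $\omega$ by $\lfloor\omega\rfloor$ and assume $\omega\in\mathbb{N}$.

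First I would build the same $(\bar d+1)$‑layer tree as in Lemma~\ref{lemma:e_apx_sm}: the root is layer‑$1$ and holds all of $A$, and every layer‑$\zeta$ node with $\zeta\le\bar d$ is split into $m=\OO(\epsilon^{-1/d})$ children according to its elements' $\zeta$‑th factor, so that all elements in a layer‑$\zeta$ node share their first $\zeta-1$ factors. For a layer‑$\zeta$ node $u$ let $\pi_u$ be the product of those shared factors (an empty product, i.e.\ $1$, at the root), so that $\pi_u\in[2^{-(\zeta-1)}\epsilon^{-(\zeta-1)/d},\,2^{\zeta-1}\epsilon^{-(\zeta-1)/d}]$, and set $\tilde A_u:=A_u/\pi_u\subset\mathbb{N}$. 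Building the tree takes $\OO(\epsilon^{-(\bar d+1)/d}+\bar d\,|A|)$ time.

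Next, for each layer‑$(k+1)$ node $u$ I would invoke Lemma~\ref{lemma:sumset} to compute $S(\tilde A_u)$ together with an $\OO(\Sigma(\tilde A_u)\log\Sigma(\tilde A_u)\log|\tilde A_u|)$‑time backtracking oracle, and then set $E_u:=S(\tilde A_u)\cap[0,\lfloor\omega/\pi_u\rfloor]$; rescaling by $\pi_u$ turns the oracle into one from $\pi_uE_u=S(A_u)\cap[0,\pi_u\lfloor\omega/\pi_u\rfloor]$ to $A_u$. Since $\pi_u=\Theta(\epsilon^{-k/d})$ and $\sum_u\Sigma(A_u)=\Sigma(A)$, we have $\sum_u\Sigma(\tilde A_u)=\Theta(\Sigma(A)\epsilon^{k/d})$, so this step (and the oracle cost it incurs) totals $\OO(\Sigma(A)\epsilon^{k/d}\log|A|\log(\Sigma(A)\epsilon^{k/d}))$. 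Then, processing layers $\zeta=k,k-1,\dots,1$, at each layer‑$\zeta$ node $u$ with children $u_1,\dots,u_m$ (the $i$‑th child having $\pi_{u_i}=\pi_u p_i$, where $p_i$ ranges over the possible factor values) I would combine the identity $S(\tilde A_u)=\oplus_i p_iS(\tilde A_{u_i})$ with $\lfloor\lfloor\omega/\pi_u\rfloor/p_i\rfloor=\lfloor\omega/\pi_{u_i}\rfloor$ to obtain the \emph{exact} equality $E_u=\bigl(\oplus_i p_iE_{u_i}\bigr)\cap[0,\lfloor\omega/\pi_u\rfloor]$, and apply Lemma~\ref{lemma:cap_sumset} with cap $\lfloor\omega/\pi_u\rfloor$ to compute $E_u$ and a backtracking oracle in $\OO\bigl(m\lfloor\omega/\pi_u\rfloor\log\omega\bigr)$ time; composing with the children's oracles (rescaling a query to child $i$ by $\pi_{u_i}$) yields an oracle from $\pi_uE_u$ to $A_u$. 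At the root $\pi_u=1$, so $E_{\mathrm{root}}=S(A)\cap[0,\omega]$, giving (i), and the composed oracle gives (ii).

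For the running time I would estimate the work at a layer $\zeta\le k$ as (number of layer‑$\zeta$ nodes) $\times$ (per‑node cost) $\le m^{\zeta-1}\cdot\OO\bigl(m(\omega/\pi_u)\log\omega\bigr)$; using $m\le 2\epsilon^{-1/d}$ and $\pi_u\ge 2^{-(\zeta-1)}\epsilon^{-(\zeta-1)/d}$ this is $\OO\bigl(2^{2\zeta-1}\epsilon^{-1/d}\omega\log\omega\bigr)$, whose geometric sum over $\zeta=1,\dots,k$ is $\OO\bigl(2^{2k+1}\epsilon^{-1/d}\omega\log\omega\bigr)$; the oracle cost adds up in the same way. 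Together with the tree‑building cost and the layer‑$(k+1)$ cost this yields the claimed bound, both for the computation and for $T^1$. The step that really needs care --- and the only place where smoothness is used --- is exactly this geometric cancellation: dividing out $\pi_u$ (a product of $\zeta-1$ factors, each of size $\Theta(\epsilon^{-1/d})$) contracts the cap at $u$ by the same $\Theta(\epsilon^{-(\zeta-1)/d})$ factor that the tree has fanned out, so every layer costs only $2^{\OO(k)}\epsilon^{-1/d}\omega\log\omega$ independently of the depth; the starting layer $k+1$ is then the knob trading the $\Sigma(A)\epsilon^{k/d}$ FFT cost at that layer against this $2^{\OO(k)}\epsilon^{-1/d}\omega$ merging cost (to be set to $k=d/4$ later).
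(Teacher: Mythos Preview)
Your proposal is correct and follows essentially the same approach as the paper: build the same $(\bar d+1)$-layer tree, use Lemma~\ref{lemma:sumset} at layer $k+1$ on the scaled sets $\tilde A_u$, and then merge upward via Lemma~\ref{lemma:cap_sumset} using the identity $S(\tilde A_u;[0,\omega/\pi_u])=(\oplus_i p_iS(\tilde A_{u_i};[0,\omega/\pi_{u_i}]))\cap[0,\omega/\pi_u]$, with the same per-layer cost $\OO(m^\zeta 2^{\zeta-1}\epsilon^{(\zeta-1)/d}\omega\log\omega)=\OO(2^{2\zeta-1}\epsilon^{-1/d}\omega\log\omega)$. Your floor-based bookkeeping is slightly more careful than the paper's, but the argument is the same; one small slip is the parenthetical ``to be set to $k=d/4$ later'' --- the paper's remark after this lemma sets $k=d/2$ (the choice $k=d/4$ belongs to Lemma~\ref{lemma:e_apx_sm}).
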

	\noindent\textbf{Remark.} Notice that to achieve an optimal time bound in Lemma~\ref{lemma:tree-alg}, one need to choose $k$ such that $\max\{\Sigma(A) {\epsilon^{\frac{k }{d}}}, 2^{2k+1}{\epsilon^{\frac{-1}{d}}}{\omega} \}$ attains its minimum. We will take a specific value for $k$ when we apply Lemma~\ref{lemma:tree-alg} in the future, namely $k = {d}/{2}$, where $d$ is a positive even number.
	
	Note that Lemma~\ref{lemma:tree-alg} consists of two parts: one is computing capped subset-sums of $A$, and the other is building an oracle for backtracking. Thus Lemma~\ref{lemma:tree-alg} implies an exact algorithm for SUBSET SUM instance $(A,\omega)$, where $A$ is a multiset of $(\epsilon,d,\bar{d})$-smooth numbers and $\omega \in [0, \Sigma(A)]$.  
	
	\begin{proof}[Proof of Lemma~\ref{lemma:tree-alg}]
		Denote the set of all different integers in $\mathbb{N} \cap [\frac{1}{2}\epsilon^{-\frac{1}{d}},2\epsilon^{-\frac{1}{d}}]$ by $\mathcal{P} = \{p_1,p_2,\cdots,p_{m}\}$. Then every element of $A$ is a multiplication of $\bar{d}+1$ integers (which are called factors of the element), with $\bar{d}$ integers belonging to $\mathcal{P}$ and one extra integer $h_{\bar{d}+1}$ that may or may not belong to $\mathcal{P}$.  For simplicity, we fix the order of factors of each element arbitrarily (except that the $(\bar{d}+1)$-th factor must be $h_{\bar{d}+1}$ if $h_{\bar{d}+1}\not\in\mathcal{P}$) and refers to $h_i$ as the $i$-th factor. 

		Now we apply the same method used in the proof of Lemma~\ref{lemma:tree-fashion_} to build a tree structure of $ \bar{d}+1$ layers for all elements in $A$ as follows:
		\begin{enumerate}
			\item Let the root node contains all elements in $A$. Root node has layer-$1$.
			
			\item If $\bar{d} =0$, we stop building and obtain a single node tree structure.  Else if $\bar{d} \ge 1$, we proceed to create $m$ child nodes of root node by subdividing $A$ into $m$ groups such that elements in the $i$-th group share $p_i$ as their first factor (If $A$ does not contain an element with $p_i$ as its first factor, then let the $i$-th group be an empty set).    				
			
			\item Suppose we have obtained nodes of layer-$\zeta$, where $2
			\le \zeta\le \bar{d}+1$, such that elements in each node share the same $j$-th factor for $1 \le j\le \zeta-1$.  If $\zeta  = \bar{d}+1$, we stop building and obtain a tree structure of $ \bar{d}+1$ layers. Else if $\zeta \le \bar{d}$,  we create nodes of layer-$(\zeta+1)$ as follows. Consider an arbitrary node of layer-$\zeta$, say, $u$.  Let $A_u\subset A$ be the (multi-)set of all the elements contained in node $u$.  We create $m$ child nodes of $u$ by subdividing $A_u$ into $m$ groups such that elements in the $i$-th group share $p_i$ as their $\zeta$-th factor (If $A_u$ does not contain an element with $p_i$ as its $\zeta$-th factor, then let the $i$-th group be an empty set).   
			
		\end{enumerate}
		
		For each layer-$\zeta$ node $u$, let $A_u\subset A$ be the (multi-)set of all the elements contained in node $u$. We define $\pi_u$ as follows: if  $\zeta=1$, then $\pi_u = 1$; else if 
		$2\le \zeta \le \bar{d}+1$, then elements in $A_u$ share the same $j$-th factor for $1\le j \le \zeta-1$, we let $\pi_u$ be the product of these $\zeta-1$ factors.  
		
		Notice that each node in this tree except the leaf node has a degree of $m=\OO({\epsilon^{-\frac{1}{d}}})$, so there are in total  $m^{\zeta-1}=\OO({\epsilon^{-\frac{(\zeta-1)}{d}}})$ nodes at the $\zeta$-th layer (called layer-$\zeta$ nodes).   The time of building this tree is  $\OO({\epsilon^{-\frac{(\bar{d}+1)}{d}}}+\bar{d} \cdot |A|)$.
		
		Let $k \in \mathbb{N} \cap [0, \bar{d}]$. Next, with the help of this tree structure, we will compute   $S(A;[0,\omega])$ and build an oracle for backtracking from $S(A;[0,\omega])$ to $A$. Note that we
		do not start from the leaves of the tree but rather start from layer-$(k+1)$ nodes for some parameter $k$ that can be optimized when we apply Lemma~\ref{lemma:tree-alg}.
		
		\begin{enumerate}
			\item  Start from layer-$(k+1)$. Consider any layer-$(k+1)$ node $u$. Let $A_u$ be the (multi-)set of elements it contains and $\tilde{A}_u = A_u /\pi_u$. Recall Lemma~\ref{lemma:sumset}, in $\OO(\Sigma(\tilde{A}_u)\log \Sigma(\tilde{A}_u)\log|\tilde{A}_u|)$ processing time, we can compute $S(\tilde{A}_u)$ and build an $\OO(\Sigma(\tilde{A}_u) \log \Sigma(\tilde{A}_u) \log|\tilde{A}_u|)$-time oracle for backtracking from $S(\tilde{A}_u)$ to $\tilde{A}_u$. Observe that $A_u = \pi_u \tilde{A}_u$, $S({A}_u; [0,\omega])=\pi_u S(\tilde{A}_u) \cap [0,\omega]$, and the oracle for backtracking from $S(\tilde{A}_u)$ to $\tilde{A}_u$ directly yields an $\OO(\Sigma(\tilde{A}_u) \log \Sigma(\tilde{A}_u) \log|\tilde{A}_u|)$-time oracle for backtracking from $S(A_u;[0,\omega])$ to $A_u$. We associate $S(A_u;[0,\omega])$ and the oracle for backtracking from $S(A_u;[0,\omega])$ to $A_u$ with node $u$.

			Note that the processing time of handling one layer-$(k+1)$ node, say $u$, is $\OO(\Sigma(\tilde{A}_u)\log \Sigma(\tilde{A}_u)\log|\tilde{A}_u|)$. Then the total processing time of handling all layer-$(k+1)$ nodes is $\OO(\sum_u \Sigma(\tilde{A}_u)\log \Sigma(\tilde{A}_u)\log|\tilde{A}_u|)$.
			Observe that $\pi_u=\Theta(\epsilon^{-k/d})$ and $\sum_{u}\Sigma(A_u)=\Sigma(A)$, we have $\sum_{u}\Sigma(\tilde{A}_u)=\OO(\Sigma(A) {\epsilon^{\frac{ k}{d}}})$ and $\OO(\sum_u \Sigma(\tilde{A}_u)\log \Sigma(\tilde{A}_u)\log|\tilde{A}_u|)= \OO(\Sigma(A) {\epsilon^{\frac{k }{d}}} \log (|A|)\log (\Sigma(A) {\epsilon^{\frac{k }{d}}}))$.
			
			\item If $k=0$, note that we have computed $S(A;[0,\omega])$ and build an oracle for backtracking from $S(A;[0,\omega])$ to $A$. In the following, we consider the case that $1\le k \le \bar{d}$. From layer-$k$ to layer-1, we iteratively compute the capped subset-sums for each node by using the capped subset-sums of its children: Consider any layer-$\zeta \le k$ node $u$, where $1\le \zeta \le k$. Let $A_u$ be the (multi-)set of elements $u$ contains, and ${\tilde{A}_u}=A_u/\pi_u$. 
			For $i=1,2,\cdots, m$, let $u_i$ denote the $i$-th child node of $u$, $A_{u_i}$ be the set of elements contained in $u_i$  and  $ \tilde{A}_{u_i}=A_{u_i}/ \pi_{u_i}$. We have $ \pi_{u_i} = \pi_u  \cdot p_i$ for every $i=1,2,\cdots, m$ and $\dot{\cup}^{m}_{i=1} p_{i}  \tilde{A}_{u_i}  = \tilde{A}_u$. 	
			Note that to compute $S(A_u;[0,\omega])$, we only need to compute $S(\tilde{A}_u;[0,\frac{\omega}{\pi_u}])$.   Observe that  $S(\tilde{A}_u;[0,\frac{\omega}{\pi_u}]) = \left(\oplus^{m}_{i=1} S(p_i\tilde{A}_{u_i};[0,\frac{\omega}{\pi_u}])\right)\cap [0,\frac{\omega}{\pi_u}]$. Recall Lemma~\ref{lemma:cap_sumset}, in $\OO( m \frac{\omega}{ \pi_u } \log \frac{\omega}{\pi_u})$ processing time, we can compute $S(\tilde{A}_u ;[0,\frac{\omega}{\pi_u}])$ and meanwhile build an $\OO(m \frac{\omega}{ \pi_u } \log \frac{\omega}{\pi_u})$-time oracle for backtracking from  $S(\tilde{A}_u;[0,\frac{\omega}{\pi_u}])$ to $\dot{\cup}^{m}_{i=1} S(p_i\tilde{A}_{u_i};[0,\frac{\omega}{\pi_u}])$.  Suppose for each $\tilde{A}_{u_i}$, we have obtained $S(\tilde{A}_{u_i};[0,\frac{\omega}{\pi_{u_i}}])$ and built a $T_i$-time oracle for backtracking from $S(\tilde{A}_{u_i};[0,\frac{\omega}{\pi_{u_i}}])$ to $\tilde{A}_{u_i}$.  Notice that $\left(p_i S(\tilde{A}_{u_i};[0,\frac{\omega}{\pi_{u_i}}])\right) \cap[0,\frac{\omega}{\pi_u}] = S(p_i \tilde{A}_{u_i};[0,\frac{\omega}{\pi_u}])$, moreover, the oracle for backtracking from $S(\tilde{A}_{u_i};[0,\frac{\omega}{\pi_{u_i}}])$ to $\tilde{A}_{u_i}$ directly yields an $\OO(T_i)$-time oracle for backtracking from $S(p_i \tilde{A}_{u_i};[0,\frac{\omega}{\pi_u}])$ to $p_i \tilde{A}_{u_i}$.  Combine with the oracle for backtracking from  $S(\tilde{A}_u;[0,\frac{\omega}{\pi_u}])$ to $\dot{\cup}^{m}_{i=1} S(p_i\tilde{A}_{u_i};[0,\frac{\omega}{\pi_u}])$, we will obtain an $\OO(\sum^{m}_{i=1}T_i+m \frac{\omega}{ \pi_u } \log \frac{\omega}{\pi_u})$-time oracle for backtracking from  $S(\tilde{A}_u;[0,\frac{\omega}{\pi_u}])$ to $\tilde{A}_{u}$. Then we associate $S(\tilde{A}_u;[0,\frac{\omega}{\pi_u}])$  and the oracle for backtracking from  $S(\tilde{A}_u;[0,\frac{\omega}{\pi_u}])$ to $\tilde{A}_u $ with node $u$.

			Recall that $\pi_u \in \mathbb{N}\cap [\frac{\epsilon^{-\frac{\zeta-1}{d}}}{2^{\zeta-1}}, 2^{\zeta-1}\epsilon^{-\frac{\zeta-1}{d}}]$ and $m \in \mathbb{N}\cap[\frac{1}{2}\epsilon^{-\frac{1}{d}},2\epsilon^{-\frac{1}{d}}]$. Note that the processing time of handling one layer-$\zeta$ node, say $u$, is $\OO( m \frac{\omega}{ \pi_u } \log \frac{\omega}{\pi_u})$.  Since there are a total of $m^{\zeta-1}$ nodes in layer-$\zeta$,  the total processing time of handling all layer-$\zeta$ nodes is $\OO(\sum_u m \frac{\omega}{ \pi_u } \log \frac{\omega}{\pi_u}) = \OO( m^{\zeta} \sum_u  \frac{\omega}{ \pi_u } \log \frac{\omega}{\pi_u}) = \OO(m^{\zeta} 2^{\zeta-1} \epsilon^{\frac{\zeta-1}{d}} \omega \log \omega)$, furthermore, the overall processing time of handling nodes from layer-$k$ to root is $\OO(2^{2k+1} \epsilon^{-\frac{1}{d}} \omega\log \omega)$.				
			
		\end{enumerate}
		We estimate the overall processing time now. The total processing time is 
		\begin{align*}
			T &=  \OO(\epsilon^{-\frac{(\bar{d}+1)}{d}}+\bar{d} \cdot |A|)+\OO(\Sigma(A) {\epsilon^{\frac{k }{d}}} \log (|A|)\log (\Sigma(A) {\epsilon^{\frac{k }{d}}}))+\OO(2^{2k+1} \epsilon^{-\frac{1}{d}} \omega\log \omega)\\
			&=\OO(\epsilon^{-\frac{(\bar{d}+1)}{d}}+\bar{d} \cdot |A|+ \Sigma(A) {\epsilon^{\frac{k }{d}}} \log (|A|)\log (\Sigma(A) {\epsilon^{\frac{k }{d}}})+ 2^{2k+1} \epsilon^{-\frac{1}{d}} \omega \log \omega).
		\end{align*}
		
		At root node, we will obtain $S(A;[0,\omega])$.

		With the help of this tree structure, we derive an oracle for backtracking from $S(A;[0,\omega])$ to $A$ as follows: given any $c\in S(A;[0,\omega])$, use the oracle associated with the root node to return two numbers from its two child nodes. Backtrace recursively from the root node to leaf nodes, for any number in layer-$h$ node, use the oracle associated with this node to return two numbers from its two child nodes.  Let $A'$ be the collection of numbers returned from all leaf nodes.  One can easily prove that  $A '\subset A$ satisfies $\Sigma(A')=c$. The total processing time for backtracking is $T^1 = \OO(\Sigma(A) {\epsilon^{\frac{k }{d}}} \log (|A|)\log (\Sigma(A) {\epsilon^{\frac{k }{d}}})+ 2^{2k+1} \epsilon^{-\frac{1}{d}}  \omega\log \omega)$, which follows from the following recurrent calculation:  $T^{k+1} = \OO(\Sigma(A) {\epsilon^{\frac{k }{d}}} \log (|A|)\log (\Sigma(A) {\epsilon^{\frac{k }{d}}}))$ and $T^{\zeta} = \OO(T^{\zeta+1} + m^{\zeta} 2^{\zeta-1} \epsilon^{\frac{\zeta-1}{d}}  \omega\log \omega )  \text{ \ for \ } \zeta = k,k-1,k-2,\cdots,1$. \qed

	\end{proof}

	\section{Preprocessing.}\label{sec:proce}

	Let $\epsilon>0$ be a sufficiently small number. Given any SUBSET SUM instance $(X,t)$ and let $OPT$ be the optimal objective value of $(X,t)$. In this section, we focus on simplifying $(X,t)$ where $OPT \ge t/2$ and show that we can construct a reduced instance $(F,\hat{t})$. Formally, we have the following lemma.
	
	\begin{lemma}\label{obs:obs_pre}
		Given any SUBSET SUM instance $(X,t)$, let $OPT$ be the optimal objective value of $(X,t)$. Assume that $OPT \ge t/2$, then in $\OO((|X|+\frac{1}{\epsilon})(\log |X|)^2 (\log \frac{1}{\epsilon})^{\OO(d)})$ processing time, we can
		\begin{itemize}
			\item[(i).] Obtain a modified SUBSET SUM instance $(F,\hat{t})$ satisfying the following conditions:
			\begin{itemize}
				\item [(\uppercase\expandafter{\romannumeral1})] $\Sigma(F) \le \frac{4(1+\epsilon)\Sigma(X)}{\epsilon^3 t}$ and $\hat{t}= \frac{4(1+\epsilon)}{\epsilon^3}$.
				\item [(\uppercase\expandafter{\romannumeral2})] The optimal objective value of $(F,\hat{t})$ is at least $\frac{4(1-\epsilon)}{\epsilon^3\cdot t}\cdot (OPT-\frac{\epsilon t}{2})$.
				\item [(\uppercase\expandafter{\romannumeral3})]  $|F| \le |X|$ and $F$ has been divided into $\OO((\log |X|)^2 (\log \frac{1}{\epsilon})^{\OO(d)})$ subgroups: $F^{(j;i;k)}_p$'s. 
				\item [(\uppercase\expandafter{\romannumeral4})]Each subgroup $F^{(j;i;k)}_p$ satisfies the followings:
				\begin{enumerate}
					\item[(a)] Elements in $F^{(j;i;k)}_p$ are different from each other;
					\item[(b)]$F^{(j;i;k)}_p \subset  [\frac{(1-\epsilon)2^{p+j-1}}{\epsilon^{2}}, \frac{(1+\epsilon)2^{p+j}}{\epsilon^{2}})$;
					\item[(c)] $F^{(j;i;k)}_p = 2^p \rho^j_k \overline{F}^{(j;i;k)}_p$, where $\overline{F}^{(j;i;k)}_p \subset \mathbb{N}_{+}\cap[\frac{1}{4\epsilon},\frac{1}{\epsilon}]$
					and $\overline{F}^{(j;i;k)}_p$ is a set of $(\epsilon,d,d-1)$-{smooth numbers}, that is, every element in $\overline{F}^{(j;i;k)}_p$ has been factorized as $h_1 h_2 \cdots h_{d}$, where $h_{d}\in \mathbb{N}_{+}\cap [1,2\epsilon^{-\frac{1}{d}}]$ and if $d \ge 2$ then $h_{i} \in \mathbb{N}_{+} \cap [\frac{1}{2}\epsilon^{-\frac{1}{d}},2{\epsilon^{-\frac{1}{d}}}]$ for $i=1,2,\cdots,d-1$. 
					
				\end{enumerate}
			\end{itemize}

			\item[(ii).] Meanwhile build an oracle for backtracking from $F$ to $X$. Precisely, given subset $U^{(j;i;k)}_p \subset F^{(j;i;k)}_p$ for every $F^{(j;i;k)}_p$ and let $F'$ denote the multiset-union of all $U^{(j;i;k)}_p$'s. Then $F' \subset F$ and in linear time, the oracle will return a subset $X'\subset X$ satisfying $|\frac{\epsilon^3 t}{4} \Sigma(F')-\Sigma(X')|\le \epsilon \Sigma(X')$.
		\end{itemize}
	\end{lemma}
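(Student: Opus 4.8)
The plan is to reduce $(X,t)$ through a short chain of elementary, error-tracked transformations — truncating item sizes from above and below, rescaling, dyadic bucketing, and a multiplicity reduction — and then to call the Number-Theoretic Rounding Lemma (Lemma~\ref{lemma:smooth_appro}) once per bucket to install the claimed semi-smooth form.

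\textbf{The reductions giving (I) and (II).} Every feasible solution of $(X,t)$ has value at most $t$, so the optimal solution $X^{*}$ uses no element exceeding $t$ and all such elements may be discarded. Next discard every $x<\tfrac{\epsilon t}{2|X|}$: there are at most $|X|$ of them, so their total is below $\tfrac{\epsilon t}{2}$, whence the restriction of $X^{*}$ to the surviving multiset still attains value more than $OPT-\tfrac{\epsilon t}{2}$. Rescale by $\beta_{0}:=\tfrac{4}{\epsilon^{3}t}$, so the target becomes $\beta_{0}t=\tfrac{4}{\epsilon^{3}}$ and every surviving element lies in $[\tfrac{2}{\epsilon^{2}|X|},\tfrac{4}{\epsilon^{3}}]$. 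The rounding done below moves each element by a multiplicative factor in $[1-\epsilon,1+\epsilon]$; rounding always in the direction that can only increase an element, a feasible sum $\le\tfrac{4}{\epsilon^{3}}$ becomes $\le\tfrac{4(1+\epsilon)}{\epsilon^{3}}=\hat t$, which gives feasibility together with $\Sigma(F)\le(1+\epsilon)\beta_{0}\Sigma(X)=\tfrac{4(1+\epsilon)\Sigma(X)}{\epsilon^{3}t}$ (condition (I)); applying the complementary lower bound $\ge(1-\epsilon)\beta_{0}(\cdot)$ to the rounded image of the surviving part of $X^{*}$ gives condition (II).

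\textbf{The semi-smooth structure, (III)--(IV) and the oracle.} First round each surviving scaled element to the nearest power of $1+\epsilon$ and then collapse multiplicities by a standard binary-expansion reduction; since $\lfloor\log_2 c\rfloor+1\le c$ for every $c\ge1$ this keeps $|F|\le|X|$, and it caps the multiplicity of any surviving value at $\OO(\log|X|)$. Now split the items into dyadic buckets by which interval $[\tfrac{2^{s}}{\epsilon^{2}},\tfrac{2^{s+1}}{\epsilon^{2}})$ contains them; the magnitudes span a factor $\OO(|X|/\epsilon)$, so there are $\OO(\log|X|+\log\tfrac1\epsilon)$ buckets, each holding $\OO(\tfrac1\epsilon)$ distinct values. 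In the bucket at scale $s$, divide every element by $2^{s}$ to land inside $[\tfrac{1}{\epsilon^{2}},\tfrac{2}{\epsilon^{2}}]$ — the shape required by Lemma~\ref{lemma:smooth_appro} with $\lambda=0$ — and apply that lemma with $\alpha=1$; then $2+\lambda-\alpha=1$, $\bar d=d-1$, and each normalized element is rounded to $\rho\,h_{1}\cdots h_{d}$ with $\rho=\Theta(\epsilon^{-1})$, $h_{1}\cdots h_{d}\in\mathbb{N}_{+}\cap[\tfrac{1}{4\epsilon},\tfrac{1}{\epsilon}]$ an $(\epsilon,d,d-1)$-smooth number, relative error at most $\epsilon$, and only $\tilde\OO((\log\tfrac1\epsilon)^{\OO(d)})$ distinct common divisors $\rho$. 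Multiplying back by $2^{s}$ and writing $\rho=2^{q}\rho'$ with $\rho'$ in a fixed factor-$2$ window turns each rounded element into $2^{s+q}\,\rho'\,(h_{1}\cdots h_{d})$; set $2^{p}:=2^{s+q}$, $\rho^{j}_{k}:=\rho'$ (with $j$ naming the bucket and $k$ the divisor), and gather the smooth parts into $\overline{F}^{(j;i;k)}_{p}$. Two distinct surviving items can yield the same $2^{p}\rho^{j}_{k}(h_{1}\cdots h_{d})$ only when they have the same rounded value, so to make each $\overline{F}^{(j;i;k)}_{p}$ an actual set (condition (a)) we disperse equal-valued copies along the index $i$, which by the multiplicity cap ranges over $\OO(\log|X|)$ values; conditions (b) and (c) are then read off from the interval memberships above, and counting admissible tuples $(j;i;k;p)$ gives $\tilde\OO((\log|X|)^{2}(\log\tfrac1\epsilon)^{\OO(d)})$ subgroups, establishing (III). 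For the oracle we attach to each element of $F$ a pointer to the $x\in X$ it was produced from; given $F'$ it returns the corresponding multiset $X'$, and since every rounding step multiplies by $\beta_{0}$ up to a $[1-\epsilon,1+\epsilon]$ factor, $|\tfrac{\epsilon^{3}t}{4}\Sigma(F')-\Sigma(X')|\le\epsilon\Sigma(X')$, computed in linear time.

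\textbf{Running time and the crux.} Sorting and bucketing cost $\OO(|X|\log|X|)$; the dominant term is one call to Lemma~\ref{lemma:smooth_appro} per bucket, costing $\tilde\OO\!\big((|X_{(j)}|+\tfrac{1}{\epsilon})(\log\tfrac1\epsilon)^{\OO(d)}\big)$ there (note $\alpha(\alpha+1)=2$), which sums over the buckets to $\tilde\OO\!\big((|X|+\tfrac{1}{\epsilon})(\log|X|)^{2}(\log\tfrac1\epsilon)^{\OO(d)}\big)$; the multiplicity reduction and the set/pointer bookkeeping add only lower-order terms. I expect the genuine difficulty to be exactly this bookkeeping: simultaneously forcing every subgroup to be a set, keeping the subgroup count polylogarithmic, keeping $|F|\le|X|$, and checking that the multiplicative errors of the rescaling, the $(1+\epsilon)$-grid step and the smooth rounding compose into the stated clean $(1\pm\epsilon)$ factors and the single additive $-\tfrac{\epsilon t}{2}$ loss — nothing deep in isolation, but it has to be carried out consistently across the four nested index families, and the interaction between the dyadic bucketing and Lemma~\ref{lemma:smooth_appro} (its interval hypothesis, its $\Delta$-size bound, and its running time) is where one must be most careful.
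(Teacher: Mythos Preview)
Your approach is broadly sound but swaps the order of two key steps relative to the paper, and the swap costs you a logarithmic factor in condition~(III) and some cleanliness in~(IV).

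The paper treats small elements differently: rather than discarding items below $\epsilon t/(2|X|)$, it greedily packs everything smaller than $\epsilon t/4$ into bundles with sums in $[\epsilon t/4,\epsilon t/2)$ and then treats each bundle as a single new item (Observation~\ref{obs:han_small} records the $\epsilon t/2$ additive loss). This keeps every post-bundling item in $[\epsilon t/4,t]$, so after scaling the range is $[\epsilon^{-2},4\epsilon^{-3}]$ and there are only $O(\log\tfrac1\epsilon)$ dyadic buckets --- the bucket index $j$ picks up no $\log|X|$ dependence.

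More importantly, the paper applies Lemma~\ref{lemma:smooth_appro} \emph{before} the multiplicity reduction (Lemma~\ref{lemma:dis_elem}), not after. Inside each bucket $Y_j$ it first rounds to the semi-smooth form $\rho^j_k\,(h_1\cdots h_d)$, and only then, within each common-divisor class $\bar Y^j_k$, runs the multiplicity reduction. The reduction outputs items $2^p\cdot y$ with $y\in\bar Y^j_k$ and $p\in[0,\log_2|\bar Y^j_k|]$; this is where the index $p$ and the factored form $F^{(j;i;k)}_p=2^p\rho^j_k\overline F^{(j;i;k)}_p$ come from directly --- no post-hoc extraction of a $2^q$ from $\rho$ is needed, and condition~(IV)(b) holds on the nose. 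Because Lemma~\ref{lemma:dis_elem} caps multiplicities at~$2$, the index $i$ ranges only over $\{1,2\}$, and the subgroup count is $O(\log\tfrac1\epsilon)\cdot O(\log|X|(\log\tfrac1\epsilon)^{O(d)})\cdot 2\cdot O(\log|X|)=O((\log|X|)^2(\log\tfrac1\epsilon)^{O(d)})$.

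Your order --- reduce multiplicities first, then bucket, then smooth-round --- forces you to (i) insert an extra $(1{+}\epsilon)$-grid rounding just so that the reduction has repeated values to act on, which compounds the multiplicative error beyond the single clean $\epsilon$ in the oracle guarantee; (ii) carry $O(\log|X|+\log\tfrac1\epsilon)$ buckets rather than $O(\log\tfrac1\epsilon)$; and (iii) let $i$ range over $O(\log|X|)$ rather than~$2$, since Lemma~\ref{lemma:smooth_appro} can still collapse up to $O(\log|X|)$ post-reduction items onto the same value. Multiplying these out yields $O((\log|X|)^3(\log\tfrac1\epsilon)^{O(d)})$ subgroups, one logarithm too many for condition~(III) as stated. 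Your claim that ``each bucket holds $O(1/\epsilon)$ distinct values'' is also not correct once powers of $2$ from the reduction are mixed in, and your oracle description (``a pointer to the $x\in X$ it was produced from'') glosses over the fact that each $F$-element traces back to a \emph{set} of $X$-elements via the bundling/reduction. None of this is fatal for the downstream $\tilde O$ running times, but it does not prove the lemma exactly as written; the paper's order of operations is precisely what makes the stated constants and the $(j;i;k;p)$ indexing come out cleanly.
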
 
	\noindent\textbf{Remark.}  Note that optimal objective value of $(F,\hat{t})$ is at least $\frac{4(1-\epsilon)}{\epsilon^3\cdot t}\cdot (OPT-\frac{\epsilon t}{2})$ and $OPT \ge \frac{t}{2}$. The second part of Lemma~\ref{obs:obs_pre} guarantees that towards finding a weak $(1-\tilde\OO(\epsilon))$-approximation of SUBSET SUM instance $(X,t)$, it is sufficient to find a weak $(1-\tilde\OO(\epsilon))$-approximation of modified instance $(F,\hat{t})$.

	The rest of this section is dedicated to proving Lemma~\ref{obs:obs_pre}. We will step-by-step modify the given SUBSET SUM instance $(X,t)$, and Lemma~\ref{obs:obs_pre} follows directly after all the modification operations. 
	
	\paragraph{Step 1: Handling small elements.} We call an element $x\in X$ a \textit{small} element if and only $x\in [0,\frac{\epsilon  t}{4})$, otherwise we call $x$ a \textit{large} element. Let $X^{s}$ denote all small elements in $X$, then $X \backslash X^s$ is the set of all large elements in $X$. 
	
	We first greedily divide $X^{s}$ into the subgroups $X^{s}_1,X^{s}_2,X^{s}_3,\cdots,X^{s}_{\xi}$, such that $\Sigma(X^{s}_i) \in [\frac{\epsilon  t}{4},\frac{\epsilon  t}{2}) (\forall i)$. Let $\mathcal{L}: = \{X^{s}_1,X^{s}_2,X^{s}_3,\cdots,X^{s}_{\xi}\} \dot{\cup} \left\{  \{x\} : x\in X \backslash X^s  \right\}$, which gives a division of $X$. Then we define multiset $Z := \{z_1,z_2,\cdots,z_{\xi},z_{\xi+1},\cdots, z_{n'}\},$ where  $z_{i} = \Sigma(X^{s}_{i})$ for $1\le i\le \xi$ and $\{z_{\xi+1},\cdots,z_{n'}\} = X \backslash X^s $. Note that $|Z| \le |X|$, $\Sigma(Z) = \Sigma(X)$ and $Z^{\min}\ge \frac{\epsilon t}{4}$.

	The total processing time of Step 1 is $\OO(|X|)$.  We have the following observation.
	
	\begin{observation}\label{obs:han_small}
		Let $Z$ be defined above.	Given any $A\subset X$, there exists $B \subset Z$ such that 
		$\Sigma(A)-\frac{\epsilon t}{2} \le \Sigma(B) \le \Sigma(A)$. Moreover, given any $B'\subset Z$, there exists $A' \subset X$ such that $\Sigma(B') = \Sigma(A')$.

	\end{observation}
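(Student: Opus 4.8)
The plan is to prove the two directions separately, and in each to split the subset according to the division $\mathcal{L} = \{X^{s}_1,\dots,X^{s}_{\xi}\}\dot{\cup}\{\{x\}:x\in X\setminus X^s\}$ constructed in Step 1, using only the two facts recorded there: every bundle satisfies $\Sigma(X^{s}_i)=z_i\in[\frac{\epsilon t}{4},\frac{\epsilon t}{2})$, and $\Sigma(Z)=\Sigma(X)$.

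For the second (easier) direction, given $B'\subset Z$ I would map each $z_i\in B'$ back to the block of $\mathcal{L}$ it represents: if $z_i$ is a large element of $X$, keep it; if $z_i=\Sigma(X^{s}_i)$ for a bundle, take the entire $X^{s}_i$. Letting $A'$ be the multiset-union of these blocks gives $A'\subset X$, and since the sum over a bundle equals the corresponding $z_i$ we get $\Sigma(A')=\Sigma(B')$ by construction.

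For the first direction, given $A\subset X$ write $A=A_{\ell}\dot{\cup}A_{s}$ with $A_{\ell}=A\cap(X\setminus X^{s})$ and $A_{s}=A\cap X^{s}$. Put into $B$ all singletons $z_i$ corresponding to the elements of $A_{\ell}$, contributing exactly $\Sigma(A_{\ell})$; it remains to approximate $S:=\Sigma(A_{s})$ from below by whole bundles. Since $S\le\Sigma(X^{s})=\sum_{i=1}^{\xi}z_i$, order the bundles arbitrarily and let $j^{\star}$ be the largest index with $z_1+\cdots+z_{j^{\star}}\le S$ (the empty prefix has sum $0\le S$, so $j^{\star}\ge 0$ is well defined); add $z_1,\dots,z_{j^{\star}}$ to $B$. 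If $j^{\star}=\xi$ the prefix sum equals $\Sigma(X^{s})\ge S$, hence equals $S$; otherwise $z_1+\cdots+z_{j^{\star}+1}>S$, so the prefix sum exceeds $S-z_{j^{\star}+1}>S-\frac{\epsilon t}{2}$ because each $z_i<\frac{\epsilon t}{2}$ (this also covers $S<z_1$, where the empty bundle part is used and $S<\frac{\epsilon t}{2}$). In every case the bundle part lies in $(S-\frac{\epsilon t}{2},S]$, so $\Sigma(B)=\Sigma(A_{\ell})+(\text{bundle part})$ satisfies $\Sigma(A)-\frac{\epsilon t}{2}<\Sigma(B)\le\Sigma(A)$, which gives the claim.

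The argument is elementary; the only points needing care are the greedy/prefix step and checking that $B$ is a legitimate multiset-subset of $Z$, which it is since distinct indices are used for the large singletons and for the chosen bundles. I do not anticipate any real obstacle here — the inequality $\Sigma(X^{s}_i)<\frac{\epsilon t}{2}$ is exactly what makes a single omitted bundle cost less than $\frac{\epsilon t}{2}$, and it is guaranteed by the construction of $Z$.
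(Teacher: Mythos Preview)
Your proposal is correct and follows essentially the same approach as the paper: split $A$ into its large and small parts, keep the large elements exactly, and greedily approximate $\Sigma(A\cap X^s)$ by whole bundles using that each bundle sum lies in $[\frac{\epsilon t}{4},\frac{\epsilon t}{2})$; the second direction is likewise handled by mapping each $z_i$ back to its block in $\mathcal{L}$. Your prefix-sum formulation of the greedy step is just a concrete instantiation of the paper's ``greedily pick'' and even handles the boundary cases ($j^\star=0$ or $j^\star=\xi$) a bit more explicitly.
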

	\begin{proof}
		For any $A \subset X$, it holds that $\Sigma(A\cap X^s) \le \Sigma(X^s) = \sum^{\xi}_{i=1}(z_i)$. Since $z_i \in [\frac{\epsilon t}{4},\frac{\epsilon t}{2})$ for $1\le i \le \xi$, we can greedily pick elements in $\{z_1,z_2,\cdots,z_{\xi}\}$ such that the summation of all picked elements is within $[\Sigma(A\cap X^s)-\frac{\epsilon t}{2}, \Sigma(A\cap X^s))$. Denote by $\hat{Z}$ the set of all picked items. It follows that $\hat{Z}\dot{\cup} (A \backslash X^s) \subset Z$ and $\Sigma(\hat{Z}) \in [\Sigma(A\cap X^s)-\frac{\epsilon t}{2}, \Sigma(A\cap X^s))$, moreover, we have
		\begin{align*}
			\Sigma(A) -\frac{\epsilon t}{2}
			&=\Sigma(A\cap X^s)-\frac{\epsilon t}{2} +\Sigma(A \backslash X^s)\\
			&\le  \Sigma(\hat{Z})+\Sigma(A \backslash X^s)= \Sigma(\hat{Z}\dot{\cup} (A \backslash X^s)) \\
			&\le \Sigma(A\cap X^s)+\Sigma(A \backslash X^s) = \Sigma(A)
		\end{align*}
		
		Now consider the second part of the observation. Given any $B'\subset Z$, let $B'' = B'\cap \{z_1,z_2,\cdots,z_\xi \}$. For each $z \in B''$, there exists a corresponding $X_z \subset X$ such that $\Sigma(X_z) = z$. Then we have $(\dot{\cup}_{z\in B''} X_z )\dot{\cup} (B' \backslash B'') \subset X$  and  $
		\Sigma((\dot{\cup}_{z\in B''} X_z)\dot{\cup} (B' \backslash B'')) = \Sigma(\dot{\cup}_{z\in B''} X_z )+\Sigma(B' \backslash B'') = \sum_{z\in B''} (z) + \Sigma(B' \backslash B'') = \Sigma(B'). $\qed \end{proof}
	
	Recall that the optimal objective value of $(X,t)$ is $OPT$. Observation~\ref{obs:han_small} guarantees that there exists subset $Z' \subset Z$ such that $OPT-\frac{\epsilon t}{2} \le \Sigma(Z') \le t$, then  the optimal objective value of SUBSET SUM instance $(Z,t)$ is at least $OPT-\frac{\epsilon t}{2}$. 

	\paragraph{Step 2: Scaling and Grouping.}  We scale $t$ and
	each element in $Z$ by $\frac{\epsilon^3  t}{4}$. To be specific, we scale
	$t$ to $\bar{t} = t \cdot \frac{4}{\epsilon^3\cdot t}$ and scale each $z_i \in Z$ to $\bar{z}_i = z_i \cdot \frac{4}{\epsilon^3\cdot t}$. Let $\bar{Z} = \{\bar{z}_1,\bar{z}_2,\cdots, \bar{z}_{n'}\} \cap [0, \bar{t}]$. Observe that there exists $\bar{Z}'\subset \bar{Z}$ such that $ \frac{4}{\epsilon^3\cdot t} (OPT-\frac{\epsilon t}{2})\le \Sigma(\bar{Z}') \le \bar{t}$, hence the optimal objective value of SUBSET SUM instance $(\bar{Z},\bar{t})$ is at least $\frac{4}{\epsilon^3\cdot t}(OPT-\frac{\epsilon t}{2})$. 

	Recall that $\mathcal{L} = \{X^{s}_1,X^{s}_2,X^{s}_3,\cdots,X^{s}_{\xi}\} \dot{\cup} \left\{  \{x\} : x\in X \backslash X^s  \right\}$ and  each element in $\mathcal{L}$ generates one element of $\bar{Z}$, thus there is a one-to-one correspondence between $\mathcal{L}$ and $\bar{Z}$: $ele \leftrightarrow  \frac{4}{\epsilon^3\cdot t} \Sigma(ele)$ for any $ele \in \mathcal{L}$.  For each $y \in set(\bar{Z})$, let $\mathcal{L}_{y}  := \{ele  \in \mathcal{L} \ : \ \frac{4}{\epsilon^3\cdot t} \Sigma(ele) = y\}$.  We define the mapping $\Phi_1$ from $\bar{Z}$ to $\mathcal{L}$ as follows: given any $\bar{Z}' \subset \bar{Z}$,  for every $y\in set(\bar{Z}')$, recall that $card_{\bar{Z}'}[y]$ refers the multiplicity of $y$ in $\bar{Z}'$, mapping $\Phi_1$ returns any $card_{\bar{Z}'} [y]$ elements in $\mathcal{L}_{y}$. Let $\Phi_1(\bar{Z}')$ be the collection of all elements returned by $\Phi_1$ given $\bar{Z}'$.


	Notice that $\Sigma(\bar{Z}) \le \frac{4\Sigma(X)}{\epsilon^3 t}$ and		$|\bar{Z}| \le |Z| \le |X|$. Moreover, note that multiset $\bar{Z} \subset [\frac{1}{\epsilon^2}, \frac{4}{\epsilon^3}]$, we can divide $\bar{Z}$ into $\eta = \OO(\log \frac{1}{\epsilon})$ groups, denoted by $Y_1, Y_2,\cdots, Y_{\eta} $, such that $y \in Y_j $ if and only if $y \in [\frac{2^{j-1}}{\epsilon^2}, \frac{2^{j}}{\epsilon^2})\cap \bar{Z}$.
	
	The total processing time of Step 2 is $\OO(|X| )$.
	
	\paragraph{Step 3: Rounding and Further Grouping.} For each multiset $Y_j$, it holds $Y_j \subset [\frac{2^{j-1}}{\epsilon^2}, \frac{2^{j}}{\epsilon^2})$. We can rewrite $[\frac{2^{j-1}}{\epsilon^2}, \frac{2^{j}}{\epsilon^2})$ as $[\frac{1}{\epsilon^{2+\lambda_j}}, \frac{2}{\epsilon^{2+\lambda_j}})$, where $0\le \lambda_j < 2$. 
	Recall Lemma~\ref{lemma:smooth_appro}, given $d \in \mathbb{N}_{+}$ and $\alpha_j = 1+\lambda_j$, 
	in $\OO((|Y_j|+\frac{1}{\epsilon})\cdot\log(|Y_j|) \cdot (\log \frac{1}{\epsilon})^{\OO(d)})$ time, we can obtain a set $\Delta_j  \subset \mathbb{R}$ with $\Delta_j \subset \Theta(\frac{1}{\epsilon^{1+\lambda_j}})$ and $|\Delta_j| = \OO(\log(|Y_j|)\cdot(\log \frac{1}{\epsilon})^{\OO(d)})$, moreover, we can round every $y \in Y_j$ to the form $\rho h_1 h_2 \cdots h_d$, where $\rho\in \Delta_j$ and $h_i$'s satisfy the following conditions: 
	\begin{subequations}
		\begin{align}
			&h_1 h_2 \cdots h_d \in \mathbb{N}\cap [\frac{1}{4\epsilon},\frac{1}{\epsilon}] \text{ \ and \ } h_i \in \mathbb{N}\cap [\frac{1}{2\epsilon^{\frac{1}{d}}},\frac{2}{\epsilon^{\frac{1}{d}}}] \ \text{ for}\ 1\le i \le d; \label{eq:con_1_j}\\
			&|y-\rho h_1 h_2 \cdots h_d| \le \epsilon y. \label{eq:con_2_j}
		\end{align}
	\end{subequations}
	Let $\bar{Y}_j$ denote the set of all such rounded elements obtained from $Y_j$. Consider $\dot{\cup} ^{\eta}_{j=1}\bar{Y}_j$, we have $| \dot{\cup} ^{\eta}_{j=1}\bar{Y}_j| \le \sum^{\eta}_{j=1} | {Y}_j|  = | \bar{Z}| \le |X|$. Recall that the optimal objective value of $(\bar{Z},\bar{t})$ is at least $\frac{4}{\epsilon^3\cdot t} (OPT-\frac{\epsilon t}{2})$. 
	Let $\hat{t} = (1+\epsilon)\bar{t} = \frac{4(1+\epsilon)}{\epsilon^3}$. Condition ~\eqref{eq:con_2_j} guarantees that $\Sigma(\dot{\cup} ^{\eta}_{j=1}\bar{Y}_j) \le \frac{4(1+\epsilon)\Sigma(X)}{\epsilon^3 t}$ and
	the optimal objective value of SUBSET SUM instance $( \dot{\cup} ^{\eta}_{j=1}\bar{Y}_j, \hat{t})$ is at least $\frac{4(1-\epsilon)}{\epsilon^3\cdot t} (OPT-\frac{\epsilon t}{2})$. 
	
	
	Notice that for each $\bar{Y}^j$,  there is a one-to-one correspondence between ${Y}^j$ and $\bar{Y}^j$: $y \leftrightarrow \bar{y}$ for any $y\in Y_j$, where $\bar{y}$ is the factorized form of $y$ obtained after the above rounding procedure.  For each $\bar{y} \in \bar{Y}^j$, let $\mathcal{C}^j_{\bar{y}}$ denote the set of all numbers in $Y_j$ factorized to the form $\bar{y}$.  We define the mapping $\Phi^j_2$ from $ \bar{Y}_j$ to ${Y}_j$ as follows: given any $\bar{Y}'_j \subset \bar{Y}_j$, for every $\bar{y}\in set(\bar{Y}'_j )$, mapping $\Phi^j_2$ returns any $card_{\bar{Y}'_j}[\bar{y}]$ elements in $\mathcal{C}^j_{\bar{y}}$. Let $\Phi^j_2 (\bar{Y}'_j)$ be the collection of all elements returned by $\Phi^j_2$ given $\bar{Y}'_j$.

	For each $\bar{Y}_j$, let $\Delta_j = \{\rho^j_1,\rho^j_2,\cdots, \rho^j_{|\Delta_j|}\}$, where $|\Delta_j| = \OO(\log(|Y_j|)\cdot(\log \frac{1}{\epsilon})^{\OO(d)})$. We further divide $\bar{Y}_j$ into $|\Delta_j|$ groups, denoted by $\bar{Y}^{j}_1,\bar{Y}^{j}_2,\cdots, \bar{Y}^{j}_{|\Delta_j|}$, such that 
	$y\in \bar{Y}^{j}_k$ if and only if $y$ is of the form $\rho^j_k h_1 h_2 \cdots h_d$. 


	The total processing time of Step 3 is $\OO((|X|+\frac{1}{\epsilon})(\log(|X|))^2 (\log \frac{1}{\epsilon})^{\OO(d+1)})$.
	
	\paragraph{Step 4: From Multiset to (almost) Set.} 
	
	Here, we show that a multiset can be reduced to an alternative multiset with multiplicity at most 2. 
	Towards this, we need the following lemma, which was introduced in \cite{koiliaris2019faster} and also used in \cite{DBLP:conf/soda/MuchaW019}. We copy it here with a slight extension.
	
	
	\begin{lemma}\label{lemma:dis_elem}
		Given a multiset $A$ of positive integers, where 
		$|A| = n$ and $|set(A)|=n'$. In $\OO(n (\log n)^2)$ processing time, one can
		divide $A$ into $m$ subgroups $A_1, A_2,\cdots,A_{m}$ such that $\Sigma(A_{\iota}) = 2^{p_{\iota}} A^{\max}_{\iota}$, where $p_{\iota} \in \mathbb{N} \cap [0, \log n]$ and $\iota = 1,2\cdots,m$.  Moreover,  $B= \{\Sigma(A_{\iota}) : \iota=1,2,\cdots,m\}$ is a multiset satisfying: (i) $S(A)=S(B)$; (ii) $|B| \leq |A|$ and $|B|=\OO(n'\log n)$; 
		(iii) no element in $B$ has multiplicity exceeding two.
	\end{lemma}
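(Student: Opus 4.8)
The plan is to build $B$ together with the promised partition $A=A_1\dot{\cup}\cdots\dot{\cup}A_m$ in two phases: a per-value \emph{binary grouping} phase, which already yields a multiset $B_0=\{\Sigma(A_\iota)\}$ with $S(B_0)=S(A)$, $|B_0|\le|A|$ and $|B_0|=\OO(n'\log n)$, and a \emph{carrying} phase, which repeatedly merges equal elements until every multiplicity is at most two, without spoiling the other guarantees.

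\emph{Phase 1.} For each distinct value $v\in set(A)$ with multiplicity $c:=card_A[v]$, let $k=\lfloor\log_2 c\rfloor$ and split the $c$ copies of $v$ into subgroups of sizes $2^0,2^1,\dots,2^{k-1}$ (a ``staircase'') together with further subgroups whose sizes form the binary expansion of the remaining $c-(2^k-1)\in[1,2^k]$ copies. Every subgroup then consists of $2^p$ equal copies of $v$, so $\Sigma(A_\iota)=2^p v=2^p\cdot A_\iota^{\max}$ with $p\le\log_2 c\le\log_2 n$, and $v$ produces $\OO(\log c)$ subgroups. A short induction using the standard contiguity test (sizes listed non-decreasingly, each at most one more than the running sum) shows that the subset sums of the sums of $v$'s subgroups are exactly $\{0,v,2v,\dots,cv\}$. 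Since $S(A)=\bigoplus_{v\in set(A)}\{0,v,\dots,c_v v\}$ and the subset sums of $B_0$ decompose over $v$ in the same way, we get $S(B_0)=S(A)$; moreover $|B_0|\le\sum_v c_v=|A|$ and $|B_0|=\OO(\sum_v\log c_v)=\OO(n'\log n)$. Only condition (iii) may fail, because subgroup sums coming from different values can coincide. Computing all multiplicities (by sorting) and listing the subgroups costs $\OO(n\log n)$.

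\emph{Phase 2.} While some value $w$ occurs in the current $B$ with multiplicity $m\ge3$, pick two subgroups $A_{\iota_1},A_{\iota_2}$ with $\Sigma(A_{\iota_1})=\Sigma(A_{\iota_2})=w$ and replace them by the single subgroup $A_{\iota_1}\dot{\cup}A_{\iota_2}$, of sum $2w$. I would check three things. (a) $S$ is unchanged: writing $a,b$ for the multiplicities of $w$ and $2w$, the jointly reachable multiples of $w$ form the contiguous range $\{0,w,\dots,(a+2b)w\}$ both before and after the merge, the key point being that $m\ge3$ keeps at least one copy of $w$ so the range stays contiguous; all other elements are untouched, so $S(B)$ is preserved. (b) The structural invariant $\Sigma(A_\iota)=2^{p_\iota}A_\iota^{\max}$ with $|A_\iota|\ge2^{p_\iota}$ survives: this follows from a two-case check on which of the merged maxima is larger, and then $p_\iota\le\log_2|A_\iota|\le\log_2 n$ comes for free. (c) Each merge strictly decreases $|B|$, so the process terminates with all multiplicities $\le2$ and $|B|\le|B_0|$, which keeps both $|B|\le|A|$ and $|B|=\OO(n'\log n)$. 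For the running time I would sort $B_0$ by value and sweep values in increasing order, pushing the bulk carry $\lfloor(m-1)/2\rfloor$ from $w$ onto the still-unprocessed value $2w$; this, together with Phase 1, stays within $\OO(n(\log n)^2)$.

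The delicate point is not any individual step but getting the three guarantees to coexist: the power-of-two subgroup-sum structure is easy to install in Phase 1, yet it is exactly what the merges in Phase 2 could break, so the crux is the invariant $\Sigma(A_\iota)=2^{p_\iota}A_\iota^{\max}$ (with $|A_\iota|\ge2^{p_\iota}$) and verifying it is stable under merges \emph{simultaneously} with the argument that $S$ does not change --- in particular treating the boundary cases $m=3$ and equal merged maxima carefully. The running-time claim is the only other place that needs attention, since a careless carry propagation could revisit a value many times; confining merges to multiplicity $\ge3$ and sweeping in increasing value order is what keeps the cost near-linear.
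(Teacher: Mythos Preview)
Your proof is correct and, at its core, uses the same carry-based mechanism as the paper: repeatedly take a value $w$ with multiplicity at least three, replace two copies of $w$ by one copy of $2w$, and track the induced partition of $A$. Your contiguous-range argument in (a) and your two-case invariant check in (b) are exactly the verifications one needs; the paper states the corresponding facts ($S(B\dot\cup T)=S(A)$ and $\Sigma(A_\iota)=2^{p_\iota}A_\iota^{\max}$) as observations without spelling them out, so your treatment is actually more explicit there.

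The one organizational difference is your Phase~1. The paper skips it entirely: it starts the carry sweep directly from $A$ (each element a singleton subgroup) and lets the per-value binary decomposition emerge \emph{implicitly} as the sweep climbs the chain $v\to 2v\to 4v\to\cdots$. Your Phase~1 performs that decomposition explicitly up front and then uses Phase~2 only to resolve collisions across different original values. Both routes land at the same guarantees; yours buys a cleaner separation of concerns (and an immediate $|B_0|=\OO(n'\log n)$), while the paper's single sweep is shorter to describe. Either way the running-time analysis is the same min-heap/increasing-sweep argument, and your bulk-carry formulation $\lfloor(m-1)/2\rfloor$ matches the paper's $k$ in $m=2k+\text{num}$.
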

	
	{\begin{proof}[Proof of Lemma~\ref{lemma:dis_elem}.]
			The proof idea essentially comes from [\cite{DBLP:conf/soda/MuchaW019}, Lemma 4.1], in which the method for computing $B$ was provided. For the completeness of this paper, we present it here. 
			Given a multiset $A$ of positive integers, where $|A| = n$ and $|set(A)| = n'$. Copy the elements of $A$ into a working multiset $T$ and let $B$ initially be the empty set. For any $x \in T$ with at most $2$ copies, i.e., $card_{T}[x] \le 2$, delete all of $x$ from $T$, then add all of them into $B$. For any $x \in T$ with at least 3 copies, i.e., $card_{T}[x] = 2k+num^x_T$ where $k\in \mathbb{N}_{+}$ and $num^x_T = 1\ \text {or} \ 2$. Delete all of $x$ from $T$, then add $num^x_T$ copies of $x$ into $B$ and add $k$ copies of $2x$ into $T$. Iterate over numbers in $set(T)$ from the smallest one and perform the procedure as described above. 
			We remark that the method derived in~\cite{DBLP:conf/soda/MuchaW019} only returns the set $B$, it does not specify the $A_{\iota}$'s. In particular, it requires some extra effort to get the corresponding division of $A$ without exploding the running time. Below we present the details.

			We introduce a special data structure, which is ``dictionary"(see, e.g.~\cite{DBLP:books/daglib/0010343}), to store the division of $A$. Dictionary data structure is used to store data in the key-value pair format. When presented with a key, the dictionary will simply return the associated value. The biggest advantage of this data structure is that the time complexity of inserting, deleting or searching element is $\OO(1)$. We first build two working dictionaries, $\mathscr{D}^*$ and $\mathscr{D}$, where $\mathscr{D}^*$ is the empty dictionary and $\mathscr{D}$ is initialized as follows: each element in $\mathscr{D}$ is a key-value pair $(x,\mathscr{D}[x])$, where $x\in set(A)$ and $\mathscr{D}[x]$ is the multiset satisfying (1). $set(\mathscr{D}[x]) = \left\{ \{x\} \right\}$; (2). $card_{\mathscr{D}[x]}[\{x\}] = card_A [x]$; (3). the collection of all keys in $\mathscr{D}$ is $set(A)$. Note that the multiset-union of all values in $\mathscr{D}$, i.e, $\mathop{\dot{\cup}}\limits_{x\in set(A)} (\mathscr{D}[x])$, gives a division of $A$.

			Let $B$ initially be the empty set. Copy the elements of $A$ into a working multiset $T$. 
			We introduce the min heap data structure (see, e.g.~\cite{DBLP:books/daglib/0010343}) to maintain $T$. ``Min heap" is a specialized tree-based data structure satisfying the following properties: (1). each node is a key-value pair; (2). for any given node $C$, if $P$ is a parent node of $P$, the key of $P$ is less than or equal to the key of $C$. Thus the key of the root node is the smallest among all nodes. It takes linear time to build a min heap from a given array. For a min heap with $m$ nodes, the time complexity of deleting the root node (while keeping the min heap properties) and updating any node is $\OO(\log m)$. 
			We maintain the elements of $T$ in a min heap $D$, where $D$ is initialized as follows: each node in $D$ is a key-value pair $(x,card_{T}[x])$ where $x \in set(T)$, moreover, the collection of all keys in $D$ is $set(T)$. 
			
			In each iteration, extract the root node $(x,card_{T}[x])$ from the heap $D$. 
			
			\begin{itemize}
				\item If $card_T [x] \le 2$, we first delete all $x$ from  $T$ and insert the key-value pair $(x, \mathscr{D}^*[x])$ into dictionary $\mathscr{D}^*$, where $\mathscr{D}^*[x] = \mathscr{D}[x]$. Meanwhile, we add $card_T [x]$ copies of $x$ into $B$. Then we delect key-value pair $(x, \mathscr{D}[x])$ from $\mathscr{D}$.
				
				The algorithm continues to the next iteration. 
				\item Else if $card_T [x] > 2$, then $card_T [x] = 2k+num_T^x$, where $k \in \mathbb{N}_+$ and $num^T_x=1\  \text{or} \ 2$.  We first delete all $x$ from $T$, then add $k$ copies of $2x$ into $T$. Meanwhile, we add $num_T^x$ copies of $x$ into $B$.  We then update Dictionary $\mathscr{D}$, Dictionary $\mathscr{D}^*$ and heap $D$ as follows.
				\begin{itemize}
					\item  Consider the key-value pair $(x,\mathscr{D}[x])$ in Dictionary $\mathscr{D}$. Select any $num^T_x$ elements from  $\mathscr{D}[x]$, denote by $ELE$ the set of these selected elements. Define $\mathcal{L}_{2x}$ as follows: divide $\mathscr{D}[x] \backslash ELE$ into $k$ subgroups $Gr_1,Gr_2,\cdots,Gr_k$ such that each subgroup contains exact 2 elements, then $\mathcal{L}_{2x}: = \{\mathop{\dot{\cup}}\limits_{ele \in Gr_i} ele : i=1,2,\cdots,k\}$. 
					\item Insert the key-value pair $(x, \mathscr{D}^*[x])$ into dictionary $\mathscr{D}^*$, where $\mathscr{D}^*[x] = ELE$. If key $2x$ is already contained in heap $D$, increase the value corresponding to key $2x$ in heap $D$ by $k$, meanwhile update key-value pair $(2x,\mathscr{D}[2x])$ in $\mathscr{D}$ by inserting all elements in $\mathcal{L}_{2x}$ into $\mathscr{D}[2x]$. Else if $2x$ is not already in heap $D$, add $(2x,k)$ into heap $D$ and add key-value pair $(2x,\mathscr{D}[2x])$ into $\mathscr{D}$, where $\mathscr{D}[2x]=\mathcal{L}_{2x}$.  Delect key-value pair $(x, \mathscr{D}[x])$ from $\mathscr{D}$. 
				\end{itemize}

				The algorithm now continues to the next iteration.
			\end{itemize}
			Let $K_{\mathscr{D}}$ and $K_{\mathscr{D}^*}$ denote the collections of keys in $\mathscr{D}$ and $\mathscr{D}^*$, respectively. At the end of each iteration, we have the following observations:
			\begin{enumerate}
				
				\item $S(B\dot{\cup}T) =S(A)$.
				\item Consider any key-value pair in $\mathscr{D}$, say $(x, \mathscr{D}[x])$. Each $ele\in \mathscr{D}[x]$ is a subset of $A$ satisfying $\Sigma(ele) = x$. 
				\item Consider any key-value pair in $\mathscr{D}^*$, say $(y, \mathscr{D}^*[y])$.  Each  $ele\in\mathscr{D}^*[y]$ is a subset of $A$ satisfying $\Sigma(ele) = 2^p \cdot (ele)^{\max}=y$, where $p \in \mathbb{N} \cap [0,\log_2 |A|]$.  Moreover, $\mathscr{D}^*[y]$ contains at most $2$ elements. 
				\item Elements in $(\mathop{\dot{\cup}}\limits_{x \in K_{\mathscr{D}}} \mathscr{D}[x])\dot{\cup} (\mathop{\dot{\cup}}\limits_{y \in K_{\mathscr{D}^*}} \mathscr{D}^*[y])$ form a division of $A$. That is, elements in $(\mathop{\dot{\cup}}\limits_{x \in K_{\mathscr{D}}} \mathscr{D}[x])\dot{\cup} (\mathop{\dot{\cup}}\limits_{y \in K_{\mathscr{D}^*}} \mathscr{D}^*[y])$ are subsets of $A$ and $\Sigma(A) = \sum\limits_{ele \in (\mathop{\dot{\cup}}\limits_{x \in K_{\mathscr{D}}} \mathscr{D}[x])\dot{\cup} (\mathop{\dot{\cup}}\limits_{y \in K_{\mathscr{D}^*}} \mathscr{D}^*[y])} \Sigma(ele)$. 
			\end{enumerate}
			
			By the time the iteration procedure stops, heap $D$, multiset $T$ and dictionary $\mathscr{D}$ are empty, we will obtain the final dictionary $\mathscr{D}^*$.  Then elements in $ \mathop{\dot{\cup}}\limits_{key \in K_{\mathscr{D}^*}}  \mathscr{D}^*[key] $ form a division of $A$. Notice that $B = \{\Sigma(ele): ele\in \mathop{\dot{\cup}}\limits_{key \in K_{\mathscr{D}}} \mathscr{D}[key] \}$. One can easily prove that $B$ satisfies (i).$S(A)=S(B)$; (ii).$|B| \leq |A|$ and $|B|=\OO(|set(A)|\log_2 |A|)$.

			
			We estimate the overall processing time now.  Initialization takes $\OO(|A|)$ time. Notice that in each iteration, we will add one key-value pair into $\mathscr{D}^*$,  thus the number of iterations is $\OO(|K_{\mathscr{D}^*}|) = \OO(|set(A)| \log |A|)$.  In each iteration, it takes $O(\log |A|)$ time for updating $D$, $B$, $T$ and $\mathscr{D}^*$, then the total time for handling $D$, $B$, $T$ and $\mathscr{D}^*$ is $\OO(|set(A)| (\log |A|)^2)$. Moreover,  a careful analysis shows that the total time for updating $\mathscr{D}$ through all iterations is $\OO(\sum_{x\in A} card_A[x]\cdot  \log (card_A[x] )) = \OO(|A|\log|A|)$.   To summarize, the overall processing time is $\OO(|A| (\log|A|)^2)$.\qed 	\end{proof}}

	Back to our preprocessing procedure. 
	For each  $\bar{Y}^{j}_k$, according to  Lemma~\ref{lemma:dis_elem}, in $\OO(|\bar{Y}^{j}_k| (\log |\bar{Y}^{j}_k|)^2)$ time, we can divide $\bar{Y}^{j}_k$ into $m^k_j=\OO(|set(\bar{Y}^{j}_k)| \log |\bar{Y}^{j}_k|)$ groups  $\bar{Y}^{(j;1)}_k, \bar{Y}^{(j;2)}_k,\cdots, \bar{Y}^{(j;m^k_j)}_k$ such that
	$\bar{F}^{j}_k = \{ \Sigma(\bar{Y}^{(j;1)}_k),\Sigma(\bar{Y}^{(j;2)}_k), \cdots, \Sigma(\bar{Y}^{(j;m^k_j)}_k)\}$ is a multiset satisfying the followings:
	\begin{subequations}
		\begin{align}
			&\Sigma(\bar{Y}^{(j;\iota)}_k) = 2^{p_{\iota}} \cdot (\bar{Y}^{(j;\iota)}_k)^{\max}, \ \text{where}\  {p_{\iota}} \in \mathbb{N} \cap [0, \log_2 |\bar{Y}^{j}_k |] \ \text{and} \ \iota = 1,2,\cdots,m^k_j. \label{eq:iv_set}\\
			&S(\bar{F}^{j}_k)=S(\bar{Y}^{j}_k) \text{ \ and \ } |\bar{F}^{j}_k|\le |\bar{Y}^{j}_k|; \label{eq:i_set} \\ 
			&\text{no element in ${\bar{F}^{j}_k}$ has multiplicity exceeding two}; \label{eq:iii_set}
		\end{align}
	\end{subequations}
	According to \eqref{eq:iv_set}, for each number in $\bar{F}^{j}_k$, we can factorize it to the form $2^p y$, where $p \in \mathbb{N} \cap [0,\log_2 |\bar{Y}^{j}_k |]$ and $y \in \bar{Y}^{j}_k$. Let $F^j_k$ denote the set of all such factorized elements obtained from $\bar{F}^j_k$.

	Let $\mathcal{H}^{(j;k)}:=\{ \bar{Y}^{(j;1)}_k, \bar{Y}^{(j;2)}_k,\cdots, \bar{Y}^{(j;m^k_j)}_k\}$.  Notice that for each $F^{j}_k $, there is a one-to-one correspondence between $\mathcal{H}^{(j;k)}$ and $F^{j}_k$: $ele \leftrightarrow  \Sigma(ele) = 2^{pow(\frac{\Sigma(ele)}{ele^{\max}})}(ele)^{\max}$ for any $ele\in \mathcal{H}^{(j;k)}$.  For each $f\in F^j_k$, let  $\mathcal{H}^{(j;k)}_{f}:=\{ele \in \mathcal{H}^{(j;k)} \ : \ \Sigma(ele) = f\}$. We define the mapping $\Phi^{(j;k)}_{3}$ from $F^j_k$ to $\mathcal{H}^{(j;k)}$ as follows: given any $U^j_k \subset F^j_k$, for every $f\in set(U^j_k)$, mapping $\Phi^{(j;k)}_3$ returns any $card_{U^j_k} [f]$ elements in $\mathcal{H}^{(j;k)}_{f}$. Let $\Phi^{(j;k)}_3$ be the collection of all elements returned by $\Phi^{(j;k)}_3$ given $U^j_k$.


	For each $F^{j}_k$, we first divide it into two groups $F^{(j;1)}_k$ and $F^{(j;2)}_k$, such that the values of elements in $F^{(j;i)}_k \ (i=1,2)$ are different from each other.  Then for each $F^{(j;i)}_k$, we divide it into $\log_2 |\bar{Y}^{j}_k|$ groups, denoted by $F^{(j;i;k)}_1 ,F^{(j;i;k)}_2,\cdots, F^{(j;i;k)}_{\log_2 |\bar{Y}^{j}_k|}$, such that $x \in  F^{(j;i;k)}_p$ if and only if $x$ has been factorized to the form $2^p y $ where $y \in\bar{Y}^{j}_k$.  It is easy to observe that $F^{(j;i;k)}_p \subset   [\frac{(1-\epsilon)2^{p+j-1}}{\epsilon^{2}}, \frac{(1+\epsilon)2^{p+j}}{\epsilon^{2}})$. Let $F: = \mathop{\dot{\cup}}\limits_{1\le j \le \eta; 1\le k \le |\Delta_j|} F^{j}_k =  \mathop{\dot{\cup}}\limits_{ 0\le p\le \log_2 |\bar{Y}^{j}_k| \atop  i=1,2;1\le j \le \eta; 1\le k \le |\Delta_j| } F^{(j;i;k)}_p $.  Notice that 
	$|F| \le |\dot{\cup}^\eta_{j=1} \bar{Y}_j| \le |X|$. Moreover, note that $F$ is an equivalent multiset of $\dot{\cup}^\eta_{j=1} \bar{Y}_j$, i.e., $S(F)=S(\dot{\cup}^\eta_{j=1} \bar{Y}_j)$. Thus $\Sigma(F) = \Sigma(\dot{\cup} ^{\eta}_{j=1}\bar{Y}_j) \le \frac{4(1+\epsilon)\Sigma(X)}{\epsilon^3 t}$
	and the optimal objective value of SUBSET SUM instance $(F, \hat{t})$ is at least $\frac{4(1-\epsilon)}{\epsilon^3\cdot t} (OPT-\frac{\epsilon t}{2})$. 
	
	Observe that the total time to obtain $F^{(j;i;k)}_p$'s from $\bar{Y}^{j}_k$ is $\OO(|\bar{Y}^{j}_k| (\log |\bar{Y}^{j}_k|)^2)$. Thus the total processing time of Step 4 is $\OO(\mathop{\Sigma}\limits_{1\le j\le \eta; 1\le k \le |\Delta_j|} (|\bar{Y}^{j}_k| (\log |\bar{Y}^{j}_k|)^2)) = \OO(|X| (\log |X|)^2)$.
	
	\paragraph{\textbf{Modified instance after preprocessing.}} To summarize, we have reduced the instance $(X,t)$ to a modified instance $(Y,\hat{t})$, where $Y$ and $\hat{t}$ satisfying the following conditions:
	\begin{itemize}
		\item [(\uppercase\expandafter{\romannumeral1})] $\Sigma(F) \le \frac{4(1+\epsilon)\Sigma(X)}{\epsilon^3 t}$ and $\hat{t}= \frac{4(1+\epsilon)}{\epsilon^3}$.
		\item [(\uppercase\expandafter{\romannumeral2})] The optimal objective value of $(F,\hat{t})$ is at least $\frac{4(1-\epsilon)}{\epsilon^3\cdot t}\cdot (OPT-\frac{\epsilon t}{2})$.
		\item [(\uppercase\expandafter{\romannumeral3})]  $|F| \le |X|$ and $F$ has been divided into $\OO((\log |X|)^2 (\log \frac{1}{\epsilon})^{\OO(d)})$ subgroups: $F^{(j;i;k)}_p$'s. 
		\item [(\uppercase\expandafter{\romannumeral4})]
		Each subgroup $F^{(j;i;k)}_p$ satisfies the followings:
		\begin{enumerate}
			\item[(a)] Elements in $F^{(j;i;k)}_p$ are different from each other;
			\item[(b)]$F^{(j;i;k)}_p \subset  [\frac{(1-\epsilon)2^{p+j-1}}{\epsilon^{2}}, \frac{(1+\epsilon)2^{p+j}}{\epsilon^{2}})$;
			\item[(c)] $F^{(j;i;k)}_p = 2^p \rho^j_k \overline{F}^{(j;i;k)}_p$, where $\overline{F}^{(j;i;k)}_p \subset \mathbb{N}_{+}\cap[\frac{1}{4\epsilon},\frac{1}{\epsilon}]$
			and $\overline{F}^{(j;i;k)}_p$ is a set of $(\epsilon,d,d-1)$-{smooth numbers}, that is, every element in $\overline{F}^{(j;i;k)}_p$ has been factorized as $h_1 h_2 \cdots h_{d}$, where $h_{d}\in \mathbb{N}_{+}\cap [1,2\epsilon^{-\frac{1}{d}}]$ and if $d \ge 2$ then $h_{i} \in \mathbb{N}_{+} \cap [\frac{1}{2}\epsilon^{-\frac{1}{d}},2{\epsilon^{-\frac{1}{d}}}]$ for $i=1,2,\cdots,d-1$. 
			
		\end{enumerate}
	\end{itemize}   
	
	Till now, we have completed the preprocessing procedure for modifying SUBSET SUM instance $(X,t)$. The total processing time is $\OO((|X|+\frac{1}{\epsilon})(\log |X|)^2 (\log \frac{1}{\epsilon})^{\OO(d)}).$ This accomplishes the first half (i.e., item (i)) of Lemma~\ref{obs:obs_pre}. In the following, we will present the oracle for backtracking from $F$ to $X$, which is the second half (item (ii)) of Lemma~\ref{obs:obs_pre}.
	
	\paragraph{\textbf{Oracle for backtracking from $F$ to $X$.}} We now present the oracle $\textbf{Ora}^{X}_{F}$ for backtracking from $F$ to $X$.  Given $U^{(j;i;k)}_p \subset F^{(j;i;k)}_p$ for every $F^{(j;i;k)}_p$, oracle $\textbf{Ora}^{X}_{F}$ works as follows: 
	\begin{itemize}
		\item Let $U^{j}_k:= \mathop{\dot{\cup}}\limits_{p} \left( \mathop{\dot{\cup}}\limits_i  U^{(j;i;k)}_p \right)$ for every $j$ and $k$.  
		Note that $U^{j}_k \subset F^j_k$ and $\Phi^{(j;k)}_{3}$ is a mapping from $F^j_k$ to $\mathcal{H}^{(j;k)}$. $\textbf{Ora}^{X}_{F}$ first uses $\Phi^{(j;k)}_{3}$ to obtain $\Phi^{(j;k)}_{3} (U^{j}_k)$.  Let $V^j_k = \mathop{\dot{\cup}}\limits_{ele \in \Phi^{(j;k)}_{3} (U^{j}_k)} ele$. Observe that $\Sigma(V^j_k) = \Sigma(U^j_k) =\sum_{p} \sum_i (\Sigma(U^{(j;i;k)}_p))$ and $V^j_k  \subset  \bar{Y}^j_k$.
		\item Let $V_j = \mathop{\dot{\cup}}\limits_{k} (V^j_k )$ for every $j$. Notice that 
		$V_j\subset \bar{Y}_j$ and $\Phi^j_2$ is a mapping from $ \bar{Y}_j$ to ${Y}_j$. Then $\textbf{Ora}^{X}_{F}$ uses $\Phi^j_2$ to obtain $\Phi^j_2 (V_j)$. 
		Let $W=\mathop{\dot{\cup}}\limits_{j} \Phi^j_2 (V_j)$. Observe that $|\Sigma(W)- \sum_{j}\Sigma(V_j)| = |\sum_{j} \Sigma(\Phi^j_2 (V_j))- \sum_{j}\Sigma(V_j)| \le \sum_j \epsilon \Sigma(\Phi^j_2 (V_j)) = \epsilon \Sigma(W)$ and $W \subset \bar{Z}$.
		\item  
		Recall that $\Phi_1$ is a mapping from $\bar{Z}$ to $\mathcal{L}$. Finally, $\textbf{Ora}^{X}_{F}$ uses $\Phi_1$ to obtain $\Phi_1(W)$ and returns  $\mathop{\dot{\cup}}\limits_{ele \in \Phi_1(W)}ele$. Note that $\mathop{\dot{\cup}}\limits_{ele \in \Phi_1(W)}ele \subset X$ and we have $\Sigma(\mathop{\dot{\cup}}\limits_{ele \in \Phi_1(W)}ele) = \frac{4}{\epsilon^3 t}\Sigma(W)$.
		
	\end{itemize}
	The above backtracking procedure only takes linear time.  Let $F'$ be the multiset-union of all $U^{(j;i;k)}_p$'s.  To summarize, we have $|\frac{4}{\epsilon^3 t}\Sigma(F')-\Sigma(\mathop{\dot{\cup}}\limits_{ele \in \Phi_1(W)}ele)| \le \epsilon \Sigma(\mathop{\dot{\cup}}\limits_{ele \in \Phi_1(W)}ele)$.  
	
	Till now, we complete the proof of Lemma~\ref{obs:obs_pre}.
	
	So far we have obtained all the prerequisites. In the subsequent 3 sections, we will present our main results in this paper.

	\section{An $\tilde\OO(n+\epsilon^{-\frac{5}{4}})$-time FPTAS for PARTITION.}\label{sec:5/4_main}
	The goal of this section is to prove the following theorem.
	
	\begin{theorem}\label{the:5/4-fptas}
		There is an $\tilde\OO(n+\epsilon^{-\frac{5}{4}})$ deterministic FPTAS for PARTITION.
	\end{theorem}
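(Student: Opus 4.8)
The plan is to route PARTITION through the preprocessing of Lemma~\ref{obs:obs_pre}, process the resulting smooth subgroups by a size-dependent dichotomy, merge them with Corollary~\ref{coro:sub_sum_top}, and use the self-complementary structure of PARTITION to promote a weak approximation into a genuine one. First fix $d$ to be a sufficiently large constant divisible by $4$ and replace $\epsilon$ by $\epsilon/(\log\tfrac1\epsilon)^{c}$ for a suitable constant $c$, so polylogarithmic overheads get absorbed into the final guarantee. By Lemma~\ref{lemma:prea} we may assume $OPT\ge t/2$. Applying Lemma~\ref{obs:obs_pre} with this $d$, and using $\Sigma(X)=2t$, we get $\hat t=\Theta(\epsilon^{-3})$, $\Sigma(F)\le 2\hat t=\Theta(\epsilon^{-3})$, a partition of $F$ into $L=(\log n)^2(\log\tfrac1\epsilon)^{\OO(d)}$ subgroups $F^{(j;i;k)}_p=2^p\rho^j_k\,\overline F^{(j;i;k)}_p$ with each $\overline F^{(j;i;k)}_p$ a set of at most $\OO(1/\epsilon)$ distinct $(\epsilon,d,d-1)$-smooth integers in $[\tfrac1{4\epsilon},\tfrac1\epsilon]$ (with factorizations), and a linear-time oracle backtracking from $F$ to $X$.

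For each subgroup I produce an $\tilde\OO(\epsilon)$-approximate set $C^{(j;i;k)}_p$ of $S(F^{(j;i;k)}_p)$ of cardinality $\OO(1/\epsilon)$, with backtracking oracle to $F^{(j;i;k)}_p$. Since $S(F^{(j;i;k)}_p)=2^p\rho^j_k\,S(\overline F)$ for $\overline F:=\overline F^{(j;i;k)}_p$, and since $2^p\rho^j_k=\Theta(2^{p+j}\epsilon^{-1})$ with $2^p\rho^j_k\,\Sigma(\overline F)\le\Sigma(F)$, we have $|\overline F|\le\epsilon\,\Sigma(\overline F)=\OO(\epsilon^{-1}/2^{p+j})$; moreover $C^{(j;i;k)}_p$ only needs to track $S(F^{(j;i;k)}_p)$ to additive error $\tilde\OO(\epsilon\hat t)$, i.e.\ $S(\overline F)$ to additive error $r:=\Theta(\epsilon^{-1}/2^{p+j})$, which dominates $|\overline F|$ and $\epsilon^{-1/2}$ up to polylog factors. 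Put $\theta:=\epsilon^{-1/2}(\log\tfrac1\epsilon)^{\OO(1)}$. If $|\overline F|<\theta$, then $\Sigma(\overline F)<\theta\epsilon^{-1}=\tilde\OO(\epsilon^{-3/2})$ and Lemma~\ref{lemma:e_apx_sm} with $k=d/4$ builds an $\tilde\OO(\epsilon)$-approximate set of $S(\overline F)$ of size $\OO(1/\epsilon)$ plus its oracle in $\tilde\OO(|\overline F|+\Sigma(\overline F)\epsilon^{1/4}+\epsilon^{-5/4})=\tilde\OO(\epsilon^{-5/4})$ time. If $|\overline F|\ge\theta$, then $\overline F$ is a set of $\ge\theta\gg\sqrt{\max\overline F}$ distinct integers, so a classical additive-combinatorics density bound gives $S(\overline F)\supseteq g\mathbb Z\cap[\beta,\Sigma(\overline F)-\beta]$ with $g:=\gcd(\overline F)\le\tilde\OO(\epsilon^{-1/2})$ and $\beta=\OO(\epsilon^{-1})$; since $r$ dominates both $g$ and $\epsilon^{-1/2}$, an $r$-approximate set of $S(\overline F)$ of size $\OO(\Sigma(\overline F)/r)=\OO(1/\epsilon)$ is obtained by taking $\{0,r,2r,\dots\}$ inside $[\beta,\Sigma(\overline F)-\beta]$ for the dense middle and adjoining $S(\overline F)\cap[0,\beta]$ and its mirror for the two boundary strips, the latter two computed in $\tilde\OO(\epsilon^{-1})$ by a constant number of capped FFT convolutions (Lemma~\ref{lemma:cap_sumset}) because any subset of $\overline F$ summing into a strip has $\OO(1)$ elements; the oracle is greedy in the middle and comes from Lemma~\ref{lemma:cap_sumset} in the strips, for a total of $\tilde\OO(\epsilon^{-1})$. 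Scaling by $2^p\rho^j_k$ gives the $C^{(j;i;k)}_p$, and over the $L=\mathrm{polylog}$ subgroups this stage costs $\tilde\OO(\epsilon^{-5/4})$.

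Feed the $C^{(j;i;k)}_p$ into Corollary~\ref{coro:sub_sum_top} (each an $(\tilde\OO(\epsilon),\hat t)$-approximate set of cardinality $\OO(1/\epsilon)$, with oracle): since $L=\mathrm{polylog}$, in $\tilde\OO(\sum|C^{(j;i;k)}_p|+L/\epsilon)=\tilde\OO(\epsilon^{-1})$ time we obtain an $(\tilde\OO(\epsilon),\hat t)$-approximate set $B$ of $S(F)$ of cardinality $\OO(1/\epsilon)$ and an oracle backtracking from $B$ to $F$. Take $b^\star=\max\{b\in B:b\le(1+\tilde\OO(\epsilon))\hat t\}$; then $b^\star\ge OPT(F,\hat t)-\tilde\OO(\epsilon\hat t)$ and $b^\star\le(1+\tilde\OO(\epsilon))\hat t$. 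Backtrack $b^\star$ through $B\to F$ and then through the Lemma~\ref{obs:obs_pre} oracle to get $X'\subseteq X$ with $|\tfrac{\epsilon^3 t}{4}b^\star-\Sigma(X')|\le\epsilon\,\Sigma(X')$. Items (I)--(II) of Lemma~\ref{obs:obs_pre} and $OPT\ge t/2$ give $\Sigma(X')\ge(1-\tilde\OO(\epsilon))OPT$ while $\Sigma(X')\le(1+\tilde\OO(\epsilon))t$; if $\Sigma(X')\le t$ output $X'$, otherwise output $X\setminus X'$, which is feasible since $\Sigma(X\setminus X')=\Sigma(X)-\Sigma(X')<t$ and satisfies $\Sigma(X\setminus X')\ge(1-\tilde\OO(\epsilon))t\ge(1-\tilde\OO(\epsilon))OPT$ because $OPT\le t$. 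Undoing the rescaling of $\epsilon$ gives a genuine $(1-\epsilon)$-approximation; the overall running time is $\tilde\OO(n+\epsilon^{-5/4})$ (preprocessing $\tilde\OO(n+\epsilon^{-1})$, subgroups $\tilde\OO(\epsilon^{-5/4})$, merge and extraction $\tilde\OO(n+\epsilon^{-1})$).

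The main obstacle is the large subgroups: when $\Sigma(\overline F)$ is as large as $\tilde\Theta(\epsilon^{-2})$, neither Lemma~\ref{lemma:e_apx_sm} nor Lemma~\ref{lemma:tree-alg} is fast enough, so one must exploit that such an $\overline F$ is a constant fraction of all $(\epsilon,d,d-1)$-smooth integers in $[\tfrac1{4\epsilon},\tfrac1\epsilon]$ and hence has nearly contiguous subset sums. Nailing down the density statement --- controlling $\gcd(\overline F)$, the boundary width $\beta$, the inclusion--exclusion that computes the boundary strips, and the greedy backtracking oracle --- and checking that the single threshold $\theta=\tilde\Theta(\epsilon^{-1/2})$ simultaneously forces density to apply and keeps $\Sigma(\overline F)\epsilon^{1/4}=\tilde\OO(\epsilon^{-5/4})$ is exactly where the $5/4$ exponent is won, and where most of the work lies.
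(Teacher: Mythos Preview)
Your overall architecture---preprocess via Lemma~\ref{obs:obs_pre}, split each subgroup by a size threshold $\tilde\Theta(\epsilon^{-1/2})$, handle the sparse case with Lemma~\ref{lemma:e_apx_sm}, handle the dense middle via Galil--Margalit density, merge with Corollary~\ref{coro:sub_sum_top}, and flip to the complement for PARTITION---matches the paper's route. The sparse branch and the merge are fine. The gap is entirely in your treatment of the boundary strips in the dense case, and it is exactly the point you flag at the end as ``where most of the work lies.''

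Your claimed density bound $S(\overline F)\supseteq g\mathbb Z\cap[\beta,\Sigma(\overline F)-\beta]$ with $\beta=\OO(\epsilon^{-1})$ is false at the stated threshold. Take $\overline F=\{N,N+1,\dots,N+m-1\}$ with $N=\lfloor 1/(4\epsilon)\rfloor$ and $m=\tilde\Theta(\epsilon^{-1/2})$; then $\gcd(\overline F)=1$, but size-$k$ subset sums lie in $[kN,kN+km]$ and the consecutive ranges are disjoint until $k\gtrsim N/m=\tilde\Theta(\epsilon^{-1/2})$, so the contiguous region only starts at $\beta=\tilde\Theta(\epsilon^{-3/2})$, not $\OO(\epsilon^{-1})$. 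This matches the Galil--Margalit bound $L=\tfrac{100\,\Sigma(\overline F)\sqrt{1/\epsilon}\log(1/\epsilon)}{|\overline F|}=\tilde\Theta(\epsilon^{-3/2})$ used in the paper (Theorem~\ref{the:dense}). With the correct $\beta$, a subset of $\overline F$ summing into $[0,\beta]$ can have up to $\tilde\Theta(\epsilon^{-1/2})$ elements, so ``a constant number of capped FFT convolutions (Lemma~\ref{lemma:cap_sumset})'' does not compute the boundary strips; doing $\tilde\Theta(\epsilon^{-1/2})$ such convolutions with cap $\tilde\Theta(\epsilon^{-3/2})$ already costs $\tilde\Theta(\epsilon^{-2})$.

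What the paper does instead for the boundary is a second, \emph{coarser} application of the number-theoretic rounding (Lemma~\ref{lemma:smooth_appro}) to the elements of $M$ with $\alpha=3/2$: each element is re-rounded with relative error $\epsilon^{\lambda+1/2}$ so that the new common divisor is $\Theta(\epsilon^{-3/2})$. After dividing out this larger divisor, the total sum drops to $\OO(\epsilon^{-3/2})$, and Lemma~\ref{lemma:e_apx_sm} with $k=d/4$ runs in $\tilde\OO(\epsilon^{-5/4})$. The coarser per-element error is acceptable precisely because any subset summing into the boundary strip uses at most $\tilde\OO(\epsilon^{-1/2})$ elements, so the accumulated additive error is $\epsilon^{\lambda+1/2}\cdot L=\tilde\OO(\epsilon^{-2})=\tilde\OO(\epsilon\hat t)$ (Observation~\ref{obs:sub_o_2} in the paper). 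This ``coarse re-rounding'' idea is the missing ingredient in your proposal; without it, the dense-boundary case does not run in $\tilde\OO(\epsilon^{-5/4})$.
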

	
	Given a multiset $X = \{x_1,x_2,\cdots,x_n\}\subset \mathbb{N}$, PARTITION is the same as SUBSET SUM on target $t = \frac{\Sigma(X)}{2}$. Let $OPT$ be the optimal objective value of SUBSET SUM instance $(X,\frac{\Sigma(X)}{2})$. By Lemma~\ref{lemma:prea}, 
	we may assume $OPT \ge \frac{\Sigma(X)}{4}=\Theta(\Sigma(X))$. Notice that once we have found a subset $Y \subset X$ satisfying 
	$|\Sigma(Y)-OPT|\le \tilde\OO(\epsilon)\Sigma(X)$, then one of $Y$ and $X \backslash Y$ is a $(1-\tilde\OO(\epsilon))$-approximation solution of $(X,\frac{\Sigma(X)}{2})$, thus the following lemma implies Theorem~\ref{the:5/4-fptas} directly.

	\begin{lemma}\label{lemma:o(e)-apx-set}
		Given a multiset $X= \{x_1,x_2,\cdots,x_n\}\subset \mathbb{N}$, in $\tilde\OO(n+\epsilon^{-\frac{5}{4}})$ processing time, we can 
		\begin{itemize}
			\item[(i).] Compute an $\tilde\OO(\epsilon)$-approximate set with cardinality of $\OO({\epsilon}^{-1})$ for $S(X)$;
			\item[(ii).] Meanwhile build an 
			$\tilde\OO(n+\epsilon^{-\frac{5}{4}})$-time oracle for backtracking from this approximate set to $X$.
	\end{itemize}  \end{lemma}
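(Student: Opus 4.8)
The plan is to chain the preprocessing reduction of Lemma~\ref{obs:obs_pre} with the smooth-number subset-sum routines of Section~\ref{sec:alg-smooth}. As already noted before the lemma, Lemma~\ref{lemma:prea} lets us assume $OPT=\Theta(\Sigma(X))$, so that $t=\Sigma(X)/2=\Theta(\Sigma(X))$ and Lemma~\ref{obs:obs_pre} applies. Invoking it on $(X,t)$ yields, in $\tilde\OO(n+\epsilon^{-1})$ time, a reduced instance $(F,\hat{t})$ with $\hat{t}=\Theta(\epsilon^{-3})=\Theta(\Sigma(F))$, a decomposition $F=\dot{\cup}_{j,i,k,p}F^{(j;i;k)}_p$ into $N=\OO((\log|X|)^2(\log\tfrac1\epsilon)^{\OO(d)})=\tilde\OO(1)$ subgroups, and an oracle from $F$ to $X$. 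By the remark after Lemma~\ref{obs:obs_pre}, it then suffices to build an $\tilde\OO(\epsilon)$-approximate set of $S(F)$ of cardinality $\OO(\epsilon^{-1})$ together with a backtracking oracle to $F$; composing with the $F$-to-$X$ oracle, unscaling by $\epsilon^3 t/4$, and returning whichever of the recovered $Y$ and $X\setminus Y$ is the better partition converts this into the claimed $(1-\tilde\OO(\epsilon))$-approximation.

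For each subgroup I would exploit the structure $F^{(j;i;k)}_p=2^p\rho^j_k\,\overline{F}^{(j;i;k)}_p$, where $\overline{F}^{(j;i;k)}_p$ is a set of $\OO(\epsilon^{-1})$ distinct $(\epsilon,d,d-1)$-smooth integers in $[\tfrac1{4\epsilon},\tfrac1\epsilon]$, hence $\Sigma(\overline{F}^{(j;i;k)}_p)=\OO(\epsilon^{-2})$. Since $S(F^{(j;i;k)}_p)=2^p\rho^j_k\,S(\overline{F}^{(j;i;k)}_p)$ and, because $\hat{t}=\Theta(\Sigma(F))$, only subset-sums up to $\Theta(\hat{t})$ are ever relevant, I would compute only the $\omega$-capped subset-sums $S(\overline{F}^{(j;i;k)}_p;[0,\omega])$ with $\omega=\Theta(\hat{t})/(2^p\rho^j_k)$, using Lemma~\ref{lemma:tree-alg} (with the parameter $k=d/2$ of its remark) on the high-scale bands where $\omega\ll\Sigma(\overline{F}^{(j;i;k)}_p)$ and Lemma~\ref{lemma:e_apx_sm} (with $k=d/4$) on the few low-scale bands where the cap is ineffective. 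Rescaling each result by $2^p\rho^j_k$, re-capping at $\Theta(\hat{t})$, and then applying Corollary~\ref{coro:sub_sum_top} to merge the $N$ per-subgroup (capped) subset-sums — legitimate because $F$ is their multiset-union — produces, in a further $\tilde\OO(N/\epsilon)=\tilde\OO(\epsilon^{-1})$ time, the desired $\tilde\OO(\epsilon)$-approximate set of $S(F;[0,\Theta(\hat{t})])$ of cardinality $\OO(\epsilon^{-1})$ and, as a by-product, the composed backtracking oracle. The error accounting is routine: the $\OO(d)$ levels of each smooth tree and the $N$-way merge each inflate the relative error by at most a polylogarithmic factor, and $\hat{t}=\Theta(\Sigma(F))$ converts this into an additive $\tilde\OO(\epsilon)\Sigma(X)$ after unscaling, which together with the $\epsilon\Sigma(X')$ slack of the $F$-to-$X$ oracle stays within budget.

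The hard part will be the running-time analysis — beating the $\tilde\OO(\epsilon^{-3/2})$ that a black-box use of the smooth lemmas gives. That exponent is the balance point of the layer-$(k{+}1)$ FFT cost $\Sigma(A)\epsilon^{k/d}=\epsilon^{-2+k/d}$ against the layer-$k$-to-root merge cost $\epsilon^{-1-k/d}$, attained at $k/d=\tfrac12$, and it only reproduces the Bringmann--Nakos bound. To reach $\epsilon^{-5/4}$ I would instead choose the smaller exponent $k=d/4$, making the merge cost $\epsilon^{-1-k/d}=\epsilon^{-5/4}$, and then control the now-larger layer-$(k{+}1)$ term by two complementary observations: on the high-scale bands the top-level cap $\hat{t}$ shrinks the effective mass of the scale-$s$ band by a factor $2^{-s}$, so the costs there form a geometric series in $s$ summing to $\tilde\OO(\epsilon^{-5/4})$; on the finitely many low-scale bands the cap is useless, but the subgroups involved together hold only $\OO(\epsilon^{-1})$ numbers, so Lemma~\ref{lemma:e_apx_sm} with $k=d/4$ disposes of all of them within $\tilde\OO(\epsilon^{-5/4})$. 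Making these cases balance at exactly $\epsilon^{-5/4}$ simultaneously, while keeping the accumulated approximation error $\tilde\OO(\epsilon)$ and ensuring all the backtracking oracles compose within the same time bound, is the delicate core of the argument; the remainder is assembly of results already established.
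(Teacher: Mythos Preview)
Your reduction via Lemma~\ref{obs:obs_pre} and the plan to merge per-subgroup approximate sets with Corollary~\ref{coro:tree-fashion_}/\ref{coro:sub_sum_top} match the paper, and your treatment of the high-scale bands (the paper's ``large-value groups'', where the common divisor is $\Omega(\epsilon^{-3/2})$ so that $\Sigma(\tilde M)=O(\epsilon^{-3/2})$ and Lemma~\ref{lemma:e_apx_sm} with $k=d/4$ runs in $\tilde\OO(\epsilon^{-5/4})$) is also correct.

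The gap is in the low-scale bands. Your claim that ``the subgroups involved together hold only $\OO(\epsilon^{-1})$ numbers, so Lemma~\ref{lemma:e_apx_sm} with $k=d/4$ disposes of all of them within $\tilde\OO(\epsilon^{-5/4})$'' is false: the dominant cost in Lemma~\ref{lemma:e_apx_sm} is $\Sigma(A)\,\epsilon^{k/d}$, which depends on the \emph{sum} of the scaled elements, not their count. On a low-scale band (say $p+j=O(1)$) one can have $|\overline{F}|=\Theta(\epsilon^{-1})$ smooth integers each of size $\Theta(\epsilon^{-1})$, hence $\Sigma(\overline{F})=\Theta(\epsilon^{-2})$, and with $k/d=1/4$ the cost is $\Theta(\epsilon^{-7/4})$. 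Capping via Lemma~\ref{lemma:tree-alg} does not help either: on those bands the cap $\omega=\Theta(\hat t/(2^p\rho^j_k))$ is of the same order as $\Sigma(\overline{F})$, so the $2^{2k+1}\epsilon^{-1/d}\omega$ term is $\Omega(\epsilon^{-2})$. Thus no choice of $k$ in either smooth lemma brings a single dense low-scale subgroup below $\tilde\OO(\epsilon^{-3/2})$.

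The paper supplies two ingredients you are missing precisely for this case (its ``small-value dense'' groups, Lemma~\ref{lemma:small_gro_p}). First, for targets in the middle range $(L,\Sigma(M)-L)$ it invokes the additive-combinatorics structure theorem of Galil--Margalit (Theorem~\ref{the:dense}): when there are $\Omega(\epsilon^{-1/2}\log\tfrac1\epsilon)$ distinct integers in $[\tfrac1{4\epsilon},\tfrac1\epsilon]$, every such target is hit after only $\tilde\OO(\epsilon^{-1})$ preprocessing. Second, for the short tails $[0,L]$ and $[\Sigma(M)-L,\Sigma(M)]$ it performs a \emph{second}, coarser invocation of Lemma~\ref{lemma:smooth_appro} with $\alpha=3/2$, producing a new common divisor $\rho'=\Theta(\epsilon^{-3/2})$; after dividing by $\rho'$ the sum drops to $O(\epsilon^{-3/2})$ and Lemma~\ref{lemma:e_apx_sm} with $k=d/4$ now genuinely runs in $\tilde\OO(\epsilon^{-5/4})$. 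The price is a per-element error of $\epsilon^{\lambda+1/2}$ rather than $\epsilon$, but since any subset summing to at most $L$ contains only $L/M^{\min}$ elements, the total additive error is $\epsilon^{\lambda+1/2}L=\tilde\OO(\epsilon^{-2})=\tilde\OO(\epsilon\hat t)$, which is within budget. Without these two ideas your scheme stalls at $\tilde\OO(\epsilon^{-3/2})$.
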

	
	Note that an $\tilde\OO(\epsilon)$-approximate set for $S(X)$ admits an additive error of $\OO(\epsilon\Sigma(X))$, which solves PARTITION but not SUBSET SUM with target $t\ll \Sigma(X)$. Nevertheless, Lemma~\ref{lemma:o(e)-apx-set} also implies the following corollary.
	
	\begin{corollary}
		There is an $\tilde\OO(n+\epsilon^{-\frac{5}{4}})$ deterministic weak $(1-\epsilon)$-approximation algorithm for SUBSET SUM if $t=\Theta(\Sigma(X))$, i.e., 
		the target is some constant fraction of the total summation of elements.  
	\end{corollary}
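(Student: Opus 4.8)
The plan is to obtain this corollary as a short reduction to Lemma~\ref{lemma:o(e)-apx-set}, exploiting that an $\tilde\OO(\epsilon)$-approximate set of $S(X)$ has additive error $\tilde\OO(\epsilon)\Sigma(X)$, and that $\Sigma(X)=\Theta(t)$ under the hypothesis. First I would dispose of the easy case with Lemma~\ref{lemma:prea}: in $\OO(n)$ time compute $X_t=\{x\in X:x\le t\}$ and $\Sigma(X_t)$; if $\Sigma(X_t)<t/2$, then $X_t$ is an exact optimum of $(X,t)$ and we return it. Otherwise $OPT\ge t/2$, hence (since $t=\Theta(\Sigma(X))$) $OPT=\Theta(t)=\Theta(\Sigma(X))$, which is the regime in which the remaining argument operates.

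Second, I would rescale the accuracy to swallow the polylogarithmic slack hidden in the $\tilde\OO$. Fix a constant $K$ with $\Sigma(X)\le Kt$, and run Lemma~\ref{lemma:o(e)-apx-set} with parameter $\epsilon_0:=\epsilon/\big(4K\,(\log n\log\tfrac1\epsilon)^{\OO(1)}\big)$, the polylogarithmic factor chosen large enough that the resulting additive error of the approximate set is $E\le\epsilon t/4$. This produces, deterministically and in $\tilde\OO(n+\epsilon_0^{-5/4})=\tilde\OO(n+\epsilon^{-5/4})$ time, an $\tilde\OO(\epsilon_0)$-approximate set $B$ of $S(X)$ with $|B|=\OO(\epsilon_0^{-1})=\tilde\OO(\epsilon^{-1})$ together with a backtracking oracle from $B$ to $X$ of query time $\tilde\OO(n+\epsilon^{-5/4})$, where both the approximation property of $B$ and the error of the oracle are bounded by $E\le\epsilon t/4$.

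Third, I would pick the best admissible element of $B$ and backtrack once. Let $c^\star:=\max\{c\in B:\ c\le(1+\tfrac{3\epsilon}{4})t\}$; this maximum exists because, applying condition (iii) of Definition~\ref{def:e_appr} to $OPT\in S(X)$, there is $c'\in B$ with $|c'-OPT|\le E\le\epsilon t/4$, and $OPT\le t$ forces $c'\le(1+\tfrac{3\epsilon}{4})t$. Running the oracle on $c^\star$ returns $Y\subset X$ with $|\Sigma(Y)-c^\star|\le E$. The overshoot is $\Sigma(Y)\le c^\star+\epsilon t/4\le(1+\tfrac{3\epsilon}{4})t+\epsilon t/4=(1+\epsilon)t$. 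For the value, $c^\star\ge c'\ge OPT-\epsilon t/4$ gives $\Sigma(Y)\ge OPT-\epsilon t/2$, and since $OPT\ge t/2$, i.e.\ $t\le2\,OPT$, this is at least $(1-\epsilon)\,OPT=(1-\epsilon)\Sigma(X^\star)$ for $X^\star$ an optimal solution of $(X,t)$. Thus $(1-\epsilon)\Sigma(X^\star)\le\Sigma(Y)\le(1+\epsilon)t$, exactly the guarantee of Definition~\ref{def:weak_apx}. The total time is $\OO(n)$ for preprocessing, $\tilde\OO(n+\epsilon^{-5/4})$ for Lemma~\ref{lemma:o(e)-apx-set}, $\OO(|B|)=\tilde\OO(\epsilon^{-1})$ to scan for $c^\star$, and one oracle call costing $\tilde\OO(n+\epsilon^{-5/4})$, so $\tilde\OO(n+\epsilon^{-5/4})$ overall, and the whole procedure is deterministic.

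I do not expect a genuine obstacle; the only point requiring care is the conversion in the last step. The approximate set and its oracle control errors only in terms of $\Sigma(X)$, so we must turn a single $\tilde\OO(\epsilon)\Sigma(X)$ slack into both a multiplicative overshoot bound against $t$ and a multiplicative loss bound against $OPT$; both rely on $\Sigma(X)=\Theta(t)$ and on $OPT\ge t/2$. This is precisely where $t=\Theta(\Sigma(X))$ is used, and it explains why the same reduction fails for general SUBSET SUM with $t=o(\Sigma(X))$, consistent with the $(\min,+)$-convolution lower bound mentioned in the introduction.
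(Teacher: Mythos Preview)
Your proposal is correct and matches the paper's approach: the paper does not give a separate proof of this corollary but simply notes that Lemma~\ref{lemma:o(e)-apx-set} implies it (after the preliminary reduction via Lemma~\ref{lemma:prea}), and your argument fills in precisely those details---rescale $\epsilon$ to absorb the polylogarithmic slack and the constant $K$ from $\Sigma(X)\le Kt$, pick the largest admissible element of the approximate set, and backtrack once. The only substantive ingredient is exactly the one you identify: $t=\Theta(\Sigma(X))$ together with $OPT\ge t/2$ converts the additive $\tilde\OO(\epsilon)\Sigma(X)$ error into both the overshoot bound $(1+\epsilon)t$ and the value bound $(1-\epsilon)OPT$.
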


	Given a SUBSET SUM instance $(X,t)$ where $t=\Theta(\Sigma(X))$ and let $OPT$ be the optimal objective value of $(X,t)$. Let $d \in \mathbb{N}_{+}$ be a constant to be fixed later (in particular, we will choose $d=12$).  Recall Lemma~\ref{obs:obs_pre}, in $\OO((|X|+\frac{1}{\epsilon})(\log |X|)^2 (\log \frac{1}{\epsilon})^{\OO(d)}) = \tilde\OO(|X|+\frac{1}{\epsilon})$ time, we can reduce $(X,t)$ to a SUBSET SUM instance $(F,\hat{t})$ satisfying conditions $(\uppercase\expandafter{\romannumeral1})(\uppercase\expandafter{\romannumeral2})(\uppercase\expandafter{\romannumeral2})(\uppercase\expandafter{\romannumeral4})$ and meanwhile build an oracle for backtracking from $F$ to $X$ (see Lemma~\ref{obs:obs_pre} in Section~\ref{sec:proce}). Condition $(\uppercase\expandafter{\romannumeral2})$ claims that the optimal objective value of $(F,\hat{t})$ is at least $\frac{4(1-\epsilon)}{\epsilon^3\cdot t}\cdot (OPT-\frac{\epsilon t}{2})$, recall that $OPT \ge t/2$, then given any weak $(1-\tilde\OO(\epsilon))$-approximation of $(F,\hat{t})$, in linear time, the oracle will return a weak $(1-\tilde\OO(\epsilon))$-approximation of $(X,t)$. It thus suffices to consider $(F,\hat{t})$. 
	That is, Lemma~\ref{lemma:o(e)-apx-set}, and hence Theorem~\ref{the:5/4-fptas}, follow from the following Lemma~\ref{lemma:5/4-sub-pro}. 

	\begin{lemma}\label{lemma:5/4-sub-pro}
		Given any SUBSET SUM instance $(X,t)$, where $t=\Theta(\Sigma(X))$ and the optimal objective value of $(X,t)$ is at least ${t}/{2}$. Let $d \ge 12$ be an integer divisible by 4 and let $(F,\hat{t})$ be a modified instance returned by Lemma~\ref{obs:obs_pre}, where $F$ satisfies the followings:
		\begin{enumerate}
			\item 
			$\Sigma(F)=\Theta(\epsilon^{-3})$.
			\item 
			$|F| \le |X|$ and $F$ has been divided into $\OO((\log |X|)^2 (\log \frac{1}{\epsilon})^{\OO(d)})$ subgroups: $F^{(j;i;k)}_p$'s. 
			\item
			Each $F^{(j;i;k)}_p$ satisfies the followings:
			\begin{itemize}
				\item Elements in $F^{(j;i;k)}_p$ are distinct;
				\item $F^{(j;i;k)}_p \subset  [\frac{(1-\epsilon)2^{p+j-1}}{\epsilon^{2}}, \frac{(1+\epsilon)2^{p+j}}{\epsilon^{2}})$;
				\item $F^{(j;i;k)}_p = 2^p \rho^j_k \overline{F}^{(j;i;k)}_p$, where $\overline{F}^{(j;i;k)}_p \subset \mathbb{N}_{+}\cap[\frac{1}{4\epsilon},\frac{1}{\epsilon}]$
				and $\overline{F}^{(j;i;k)}_p$ is a set of $(\epsilon,d,d-1)$-{smooth numbers}, that is, every element in $\overline{F}^{(j;i;k)}_p$ has been factorized as $h_1 h_2 \cdots h_{d}$, where $h_{d}\in \mathbb{N}_{+}\cap [1,2\epsilon^{-\frac{1}{d}}]$ and $h_{i} \in \mathbb{N}_{+} \cap [\frac{1}{2}\epsilon^{-\frac{1}{d}},2{\epsilon^{-\frac{1}{d}}}]$ for $i=1,2,\cdots,d-1$. 
			\end{itemize}
		\end{enumerate}
		Then in $\tilde\OO(\epsilon^{-\frac{5}{4}})$ processing time, we can 
		\begin{itemize}
			\item[(i).] Compute an $\tilde\OO(\epsilon)$-approximate set $C$ with cardinality of $\OO(\frac{1}{\epsilon})$ for $S(F)$.
			\item[(ii).] Meanwhile build an			$\tilde\OO(\epsilon^{-\frac{5}{4}})$-time oracle for backtracking from $C$ to $F$. Here $F$ is the union of $F^{(j;i;k)}_p$'s and the oracle actually works as follows: given any $c\in C$, in  $\tilde\OO(\epsilon^{-\frac{5}{4}})$ processing time, the oracle will return $U^{(j;i;k)}_p \subset F^{(j;i;k)}_p$ for every $F^{(j;i;k)}_p$. Let $F'$ be the multiset-union of all $U^{(j;i;k)}_p$'s, we have $F'\subset F$ and $|\Sigma(F')-c|\le \tilde\OO(\epsilon)\Sigma(F)$.
		\end{itemize}
	\end{lemma}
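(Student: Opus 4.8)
\textbf{Proof plan for Lemma~\ref{lemma:5/4-sub-pro}.}
The plan is to exploit the structure of $F$ established in Lemma~\ref{obs:obs_pre}: $F$ is a disjoint union of $\tilde\OO(1)$ subgroups $F^{(j;i;k)}_p$, each of which factors as $2^p\rho^j_k\overline{F}^{(j;i;k)}_p$ with $\overline{F}^{(j;i;k)}_p$ a set of $(\epsilon,d,d-1)$-smooth numbers of value $\Theta(1/\epsilon)$. Since $2^p\rho^j_k$ is a common scalar, $S(F^{(j;i;k)}_p) = 2^p\rho^j_k\cdot S(\overline{F}^{(j;i;k)}_p)$, so it suffices to (approximately) compute subset-sums of each smooth set $\overline{F}^{(j;i;k)}_p$, rescale, and then combine the $\tilde\OO(1)$ resulting approximate sets via Corollary~\ref{coro:tree-fashion_} (or Corollary~\ref{coro:sub_sum_top}). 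First I would apply Lemma~\ref{lemma:e_apx_sm} to each $\overline{F}^{(j;i;k)}_p$ with the parameter choice $k=d/4$ suggested in its Remark. Lemma~\ref{lemma:e_apx_sm} then gives, in time $\tilde\OO(d|\overline{F}^{(j;i;k)}_p| + \Sigma(\overline{F}^{(j;i;k)}_p)\epsilon^{1/4} + \epsilon^{-5/4})$, an $\tilde\OO(\epsilon)$-approximate set of cardinality $\OO(1/\epsilon)$ for $S(\overline{F}^{(j;i;k)}_p)$ together with an oracle of matching running time for backtracking to $\overline{F}^{(j;i;k)}_p$. (Here I use that $(\log\frac1\epsilon)^{d/4}$ is polylogarithmic since $d$ is a constant, and that $k=d/4\le d-1=\bar d$ when $d\ge12$.)

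Next I would rescale: $2^p\rho^j_k$ times the approximate set for $S(\overline{F}^{(j;i;k)}_p)$ is an $\tilde\OO(\epsilon)$-approximate set for $S(F^{(j;i;k)}_p)$, of cardinality $\OO(1/\epsilon)$, and the backtracking oracle transfers verbatim (as in the proofs of Lemma~\ref{lemma:e_apx_sm} and Lemma~\ref{lemma:tree-alg}). Then, treating the $\ell=\tilde\OO(1)$ subgroups as the multisets $X_i$ in Corollary~\ref{coro:tree-fashion_}, I would combine their approximate sets into an $\tilde\OO(\epsilon\log\ell+\epsilon^2\ell)=\tilde\OO(\epsilon)$-approximate set $C$ for $S(\dot\cup_{j,i,k}F^{(j;i;k)}_p)=S(F)$, of cardinality $\OO(1/\epsilon)$, with an $\OO(\sum T_i + \frac{\ell}{\epsilon}\log\frac1\epsilon)$-time backtracking oracle. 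Unwinding the oracle from the root of the combination tree down through each subgroup's internal Lemma~\ref{lemma:e_apx_sm}-tree, given $c\in C$ it returns $U^{(j;i;k)}_p\subset F^{(j;i;k)}_p$ with $|\Sigma(\dot\cup U^{(j;i;k)}_p)-c|\le\tilde\OO(\epsilon)\Sigma(F)$, which is exactly item (ii).

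The running-time accounting is the crux. Summing the Lemma~\ref{lemma:e_apx_sm} bounds over all subgroups, the $d|\overline{F}^{(j;i;k)}_p|$ terms total $\tilde\OO(|F|)\le\tilde\OO(|X|)$; the term $\Sigma(\overline{F}^{(j;i;k)}_p)\epsilon^{1/4}$ — after noting $\sum_{j,i,k,p}\Sigma(\overline{F}^{(j;i;k)}_p)$ is, up to the scalars $2^p\rho^j_k=\Theta(\epsilon^{-1-\lambda_j})$ being divided out, at most $\Sigma(F)\cdot\Theta(\epsilon)=\Theta(\epsilon^{-2})$ since $\Sigma(F)=\Theta(\epsilon^{-3})$ and each element of $F$ has a factor of order $\epsilon^{-1}\cdot(2^p\rho^j_k)$ extracted — totals $\tilde\OO(\epsilon^{-2}\cdot\epsilon^{1/4})=\tilde\OO(\epsilon^{-7/4})$; and here is where I must be more careful. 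Indeed $\tilde\OO(\epsilon^{-7/4})$ exceeds the target $\tilde\OO(\epsilon^{-5/4})$, so the naive choice $k=d/4$ applied uniformly is not enough: one must instead balance $\Sigma(A)\epsilon^{k/d}$ against $\epsilon^{-(1+k/d)}$ per subgroup, choosing $k$ (equivalently the layer at which Lemma~\ref{lemma:e_apx_sm} starts its FFT) adaptively according to $\Sigma(\overline{F}^{(j;i;k)}_p)$. Writing $\Sigma_u=\Sigma(\overline{F}^{(j;i;k)}_p)$, the per-subgroup cost is $\tilde\OO(\Sigma_u\epsilon^{k/d}+\epsilon^{-1-k/d})$, minimized near $\epsilon^{k/d}=\Theta(\epsilon^{1/2}\Sigma_u^{-1/2})$ giving $\tilde\OO(\epsilon^{-1/2}\Sigma_u^{1/2})$; since $\sum_u\Sigma_u=\tilde\OO(\epsilon^{-2})$ and there are $\tilde\OO(1)$ subgroups, Cauchy–Schwarz yields $\sum_u\Sigma_u^{1/2}=\tilde\OO(\epsilon^{-1})$ and hence total cost $\tilde\OO(\epsilon^{-1/2}\cdot\epsilon^{-1})=\tilde\OO(\epsilon^{-3/2})$ — still short. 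Getting down to $\epsilon^{-5/4}$ requires the sharper observation that the subgroups with large $\Sigma_u$ are few (because $\Sigma_u\le|\overline{F}^{(j;i;k)}_p|\cdot\Theta(\epsilon^{-1})$ and $\sum|\overline{F}^{(j;i;k)}_p|\le|F|$, while simultaneously $\sum\Sigma_u=\tilde\OO(\epsilon^{-2})$), so a more refined partition of subgroups by size, applying Lemma~\ref{lemma:tree-alg}/Lemma~\ref{lemma:e_apx_sm} with a size-dependent $k$, plus possibly capping via Corollary~\ref{coro:sub_sum_top} at $\omega=\hat t=\Theta(\epsilon^{-3})$, is where the real work and the main obstacle lie; the combination step and the correctness of the error bounds are then routine given the cited corollaries.
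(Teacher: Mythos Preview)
Your plan correctly identifies the reduction to per-subgroup subset-sums and the combination via Corollary~\ref{coro:tree-fashion_}, but it has a genuine gap at exactly the point you flag: no choice of $k$ in Lemma~\ref{lemma:e_apx_sm}, uniform or adaptive, can beat $\tilde\OO(\epsilon^{-3/2})$ for a single dense small-value subgroup. Concretely, when $\lambda<1/2$ and $|\overline{F}^{(j;i;k)}_p|=\Theta(1/\epsilon)$ one has $\Sigma_u=\Theta(\epsilon^{-2})$, and the optimum of $\Sigma_u\,\epsilon^{k/d}+\epsilon^{-1-k/d}$ over $k$ is $\Theta(\epsilon^{-3/2})$ already for that one subgroup. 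Neither a size-based partition nor capping at $\omega=\hat t=\Theta(\epsilon^{-3})$ helps, since the cap equals the full sum $\Sigma(F)$ and gives no saving.

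The paper's argument for such subgroups rests on two ingredients you do not invoke. First, for a dense subgroup $M$ (meaning $|M|=\Omega(\epsilon^{-1/2}\log\frac1\epsilon)$), the additive-combinatorics result of Galil--Margalit (Theorem~\ref{the:dense}) handles the whole middle range $S(M)\cap(L,\Sigma(M)-L)$, with $L=\tilde\Theta\bigl(\Sigma(M)\epsilon^{1/2}/|M|\bigr)$, in $\tilde\OO(\epsilon^{-1})$ time. Second, for the extremes $[0,L]$ and (by complement) $[\Sigma(M)-L,\Sigma(M)]$, any sum bounded by $L$ uses only $\tilde\OO(\epsilon^{-1/2})$ elements of $M$, so one can afford a coarser per-element relative error of $\epsilon^{\lambda+1/2}$; this is obtained by a \emph{second} application of Lemma~\ref{lemma:smooth_appro} with $\alpha=3/2$, which drops the rescaled totals to $\OO(\epsilon^{-3/2})$ so that Lemma~\ref{lemma:e_apx_sm} with $k=d/4$ runs in $\tilde\OO(\epsilon^{-5/4})$. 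The price is an additive error of $\tilde\OO(\epsilon^{-2})$ per subgroup, which is acceptable precisely because $\hat t=\Theta(\epsilon^{-3})$ and there are only $\tilde\OO(1)$ subgroups. Without these two ideas (Galil--Margalit for the middle, coarse re-rounding for the tails), the $\epsilon^{-5/4}$ bound is not attainable by the tools you cite.
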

	The rest of this section is dedicated to proving Lemma~\ref{lemma:5/4-sub-pro}. 
		Recall Corollary~\ref{coro:tree-fashion_}, which allows us to build the approximation set of the union of $F^{(j;i;k)}_p$'s from the approximate set of each $F^{(j;i;k)}_p$. Towards proving Lemma~\ref{lemma:5/4-sub-pro}, we only need to derive an algorithm that can solve the following  \textbf{problem}-$\mathscr{P}$ in $\tilde\OO(\epsilon^{-\frac{5}{4}})$ time: 
		
		\textbf{problem}-$\mathscr{P}$: for every $F_{p}^{(j;i;k)}$, within an additive error of $\tilde\OO(\epsilon\hat{t})$, 
		compute an $\tilde\OO(\epsilon)$-approximate set with cardinality of $\tilde\OO(\epsilon^{-\frac{5}{4}})$ for $S(F_{p}^{(j;i;k)})$ and build an $\tilde\OO(\epsilon^{-\frac{5}{4}})$-time oracle for backtracking from this approximate set to $F_{p}^{(j;i;k)}$. 
		
		To solve \textbf{problem}-$\mathscr{P}$, we only need to prove the following lemma~\ref{lemma:small_gro_p}, where $M$ represents an arbitrary $F_{p}^{(j;i;k)}$. 
		

		\begin{lemma}\label{lemma:small_gro_p}
			Given $M\subset \mathbb{R}_{\ge 0}$ satisfying the following conditions:
			\begin{enumerate}
				\item $\Sigma(M) = \OO(\epsilon^{-3})$;
				\item Each element in $M$ is distinct;
				\item $M  = \beta \tilde{M}$, where $\beta$ and $\tilde{M}$ satisfy the followings: 
				\begin{itemize}
					\item $\beta \in \Theta(\frac{1}{\epsilon^{1+\lambda}})$ and $\beta \tilde{M} \subset [\frac{1-\epsilon}{\epsilon^{2+\lambda}}, \frac{2(1+\epsilon)}{\epsilon^{2+\lambda}}]$, where $\lambda \ge 0$;
					\item  $\tilde{M}\subset \mathbb{N}_{+}\cap [\frac{1}{4\epsilon},\frac{1}{\epsilon}]$ and $\tilde{M}$ is a set of $(\epsilon,d,d-1)$-{smooth numbers}, that is, every element in $\tilde{M}$ has been factorized as $h_1 h_2 \cdots h_{d}$, where $h_{d}\in \mathbb{N}_{+}\cap [1,2\epsilon^{-\frac{1}{d}}]$ and $h_{i} \in \mathbb{N}_{+} \cap [\frac{1}{2}\epsilon^{-\frac{1}{d}},2{\epsilon^{-\frac{1}{d}}}]$ for $i=1,2,\cdots,d-1$. 
				\end{itemize}
			\end{enumerate}
			Here $d \ge 12$ is an integer divisible by 4 (e.g., $d = 12$). Then in $\tilde\OO(\epsilon^{-\frac{5}{4}})$ processing time, within an additive error of $\tilde\OO(\epsilon^{-2})$, we can compute an $\tilde\OO(\epsilon)$-approximate set with cardinality of $\OO({\epsilon}^{-1})$ for $S(M)$ and build an $\tilde\OO(\epsilon^{-\frac{5}{4}})$-time oracle for backtracking from this approximate set to $M$. 
		\end{lemma}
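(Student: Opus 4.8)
The plan is to reduce the whole statement to approximating $S(\tilde M)$ and then rescale by $\beta$. Since $M=\beta\tilde M$ we have $S(M)=\beta\cdot S(\tilde M)$, so an $(r,u)$-approximate set $C$ of $S(\tilde M)$ with additive error $ERR$ becomes, after multiplying every element by $\beta$, an $(r,\beta u)$-approximate set of $S(M)$ with additive error $\beta\cdot ERR$, and a backtracking oracle for $S(\tilde M)$ instantly yields one for $S(M)$ (divide the query by $\beta$, run the oracle, keep the returned subset). Writing $n=|\tilde M|$ and using that $\tilde M\subset\mathbb{N}_{+}\cap[\frac1{4\epsilon},\frac1\epsilon]$ is a set of $(\epsilon,d,d-1)$-smooth numbers, we have $n\le\frac3{4\epsilon}+1$ and $\Sigma(\tilde M)=\Theta(n/\epsilon)$; combining $\Sigma(M)=O(\epsilon^{-3})$, $\beta=\Theta(\epsilon^{-(1+\lambda)})$ and $\beta\tilde M\subset\Theta(\epsilon^{-(2+\lambda)})$ gives $n=O(\epsilon^{-(1-\lambda)})$. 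The precision we must reach for $S(M)$ is an additive error of $\tilde\OO(\epsilon)\,\Sigma(M)+\tilde\OO(\epsilon^{-2})=\tilde\OO(\epsilon^{-2})$, equivalently an additive error of $\tilde\OO(\epsilon^{-(1-\lambda)})$ for $S(\tilde M)$, i.e. relative error $\tilde\OO(\epsilon)$.

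The argument splits on $n$. Case 1 (few elements): $n\le\epsilon^{-1/2}$, hence $\Sigma(\tilde M)=O(\epsilon^{-3/2})$. I would apply Lemma~\ref{lemma:e_apx_sm} to $A=\tilde M$ with $\bar d=d-1$ and $k=d/4$ (valid as $4\mid d$ and $d/4\le d-1$): in time $\tilde\OO(d\,n+\Sigma(\tilde M)\epsilon^{1/4}+\epsilon^{-5/4})=\tilde\OO(\epsilon^{-5/4})$ it produces an $\OO(\epsilon(\log\frac1\epsilon)^{d/4})=\tilde\OO(\epsilon)$-approximate set of $S(\tilde M)$ of cardinality $\OO(\epsilon^{-1})$ and a $\tilde\OO(\epsilon^{-5/4})$-time backtracking oracle. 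Rescaling by $\beta$ preserves the relative error, and $\tilde\OO(\epsilon)\Sigma(M)=\tilde\OO(\epsilon^{-2})$ is within the allowed additive error, so this case is complete.

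Case 2 (many elements): $n>\epsilon^{-1/2}$. The plan is to exploit that a set of $n$ integers confined to the band $[\frac1{4\epsilon},\frac1\epsilon]$ with $n\gtrsim\sqrt{1/\epsilon}$ has a highly structured subset-sum set: with $g=\gcd(\tilde M)$, the bulk $S(\tilde M)\cap[\omega,\Sigma(\tilde M)-\omega]$ is the full arithmetic progression $g\mathbb{Z}\cap[\omega,\Sigma(\tilde M)-\omega]$ for $\omega=\OO\!\big(\epsilon^{-2}/(gn)\big)$, and (in a constructive form of this density lemma, which the oracle needs) any target $s$ in that range can be witnessed by an explicit subset in near-linear time. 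For the bulk I would output an $\tilde\OO(\epsilon^{-(1-\lambda)})$-net of $[\omega,\Sigma(\tilde M)-\omega]$ with each net point rounded down to a multiple of $g$ — this is $\OO(\epsilon^{-1})$ points, and $g=\OO(\epsilon^{-1}/n)=\OO(\epsilon^{-1/2})$ lies within the error budget because $n>\epsilon^{-1/2}$ forces $\lambda<\tfrac12+o(1)$; backtracking these points is immediate from the constructive density lemma. The two ends $S(\tilde M)\cap[0,\omega]$ and $S(\tilde M)\cap[\Sigma(\tilde M)-\omega,\Sigma(\tilde M)]$ are mutually related by the reflection $Y\mapsto\tilde M\setminus Y$, so it suffices to compute $S(\tilde M;[0,\omega])$; for this I would use the smoothness of $\tilde M$ via Lemma~\ref{lemma:tree-alg} on $A=\tilde M$ with cap $\omega$ and the parameter $k$ chosen to balance $\Sigma(\tilde M)\epsilon^{k/d}=n\epsilon^{-1+k/d}$ against $2^{2k+1}\epsilon^{-1/d}\omega$, which is cheapest precisely when $n$ is large (so $\omega$ is small).

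The hard part is the ``ends'' of Case 2 for intermediate $n$: when $n$ is only moderately larger than $\epsilon^{-1/2}$ the threshold $\omega$ is still as large as $\sim\epsilon^{-3/2}$, and neither one call of Lemma~\ref{lemma:tree-alg} nor a capped FFT through Lemma~\ref{lemma:cap_sumset} beats $\tilde\OO(\epsilon^{-5/4})$ there. Closing this will need either a finer sub-case split on $n$ (and on $g$) or a recursive use of the density structure married to the factor-by-factor decomposition behind Lemmas~\ref{lemma:e_apx_sm} and~\ref{lemma:tree-alg}: after fixing the first few smooth factors $h_1\cdots h_j$ of the elements one re-runs the sparse/dense dichotomy on the much smaller residual sets (cheap to FFT since the residuals have shrunk to $\Theta(\epsilon^{-1+j/d})$), and then merges the $\OO(\epsilon^{-j/d})$ pieces — interval pieces merge for free, while at most $\OO(\epsilon^{-1/4})$ genuinely sparse pieces are merged by the tree-FFT of Corollary~\ref{coro:tree-fashion_} at cost $\tilde\OO(\epsilon^{-5/4})$. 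Making every exponent in this bookkeeping land at or below $5/4$ (taking $d$ a large enough multiple of $4$, e.g. $d=12$, so the $\epsilon^{-1/d}$ overheads are absorbed), and making sure every intermediate set is accompanied by a backtracking oracle assembled through Corollaries~\ref{coro:tree-fashion_} and~\ref{coro:sub_sum_top}, is exactly what the proof of Lemma~\ref{lemma:small_gro_p} must carry out.
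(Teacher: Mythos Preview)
Your Case~1 and the bulk of Case~2 match the paper: the paper too splits first on $\lambda$ (equivalently on $n$, since $n=O(\epsilon^{-(1-\lambda)})$), handles the sparse/large-$\lambda$ regime by Lemma~\ref{lemma:e_apx_sm} with $k=d/4$, and handles the middle range $(L,\Sigma(M)-L)$ of the dense small-$\lambda$ regime via the Galil--Margalit density theorem (Theorem~\ref{the:dense}), which is exactly the constructive density statement you invoke.

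The genuine gap is the ends $[0,L]\cup[\Sigma(M)-L,\Sigma(M)]$ in the dense case. You correctly identify that neither Lemma~\ref{lemma:tree-alg} nor a plain capped FFT reaches $\tilde\OO(\epsilon^{-5/4})$ here, but the fix you sketch---a recursive factor-by-factor decomposition re-running the sparse/dense dichotomy on residuals---is not what the paper does and, as you yourself note, the exponents do not obviously land at $5/4$. The paper's key idea is a \emph{second, coarser application of Lemma~\ref{lemma:smooth_appro}}: round each $y\in M$ again, but now with $\alpha=3/2$, i.e.\ to the form $\rho'h'_1\cdots h'_{\bar d+1}$ with $\rho'=\Theta(\epsilon^{-3/2})$ and per-element relative error $\epsilon^{2+\lambda-\alpha}=\epsilon^{\lambda+1/2}$ (not $\epsilon$). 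After grouping by $\rho'$, each scaled-down group has total sum $\le\Sigma(M)/\Theta(\epsilon^{-3/2})=O(\epsilon^{-3/2})$, so Lemma~\ref{lemma:e_apx_sm} with $k=d/4$ runs in $\tilde\OO(\epsilon^{-5/4})$ per group, and there are only $\tilde\OO(1)$ groups, which are then merged by Corollary~\ref{coro:tree-fashion_}. The larger per-element error $\epsilon^{\lambda+1/2}$ is harmless \emph{only on the ends}: any subset summing to at most $L=\tilde\OO(\epsilon^{-5/2-\lambda})$ incurs total additive error at most $\epsilon^{\lambda+1/2}\cdot L=\tilde\OO(\epsilon^{-2})$, which is precisely the additive slack the lemma statement allows. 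This ``coarse re-rounding'' trick---trading precision for a larger common divisor, and paying for it out of the additive-error budget because the end sums are short---is the missing ingredient in your proposal.
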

		\noindent\textbf{Remark.} Here an additive error of $\tilde\OO(\epsilon^{-2})$ is acceptable since $\hat{t} = \Theta(\epsilon^{-3})$ and $F$ is only divided into polylogarithmic groups, whereas an additive $\tilde\OO(\epsilon^{-2})$ error per group gives overall $\tilde\OO(\epsilon\hat{t})$ error.


		We give a very high-level description of the proof. To obtain an efficient algorithm, our main tool is Lemma~\ref{lemma:e_apx_sm}, whose running time depends on the summation of the input numbers, and thus it is important to reduce this sum. Consider $M$, since all of its elements share $\beta$ as the common divisor, we can restrict our attention to $\tilde{M}$. Note that $\beta=\Theta(\frac{1}{\epsilon^{1+\lambda}})$. If $\lambda \ge 1/2$, and hence $\beta$ is large, then $\Sigma(\tilde{M})$ is small, and we can apply Lemma~\ref{lemma:e_apx_sm} directly. Otherwise, $\beta$ is small, then $\Sigma(\tilde{M})=\OO(\frac{1}{\epsilon^{2-\lambda}})$ is large, implying that $\tilde{M}$ consists of many smooth numbers. In this case, we can exploit the additive combinatoric result from \cite{DBLP:journals/siamcomp/GalilM91},  which roughly says that if there are sufficiently many distinct integers, then SUBSET SUM can be solved efficiently if the target $t$ is in the ``medium" range, that is, $t$ is close to half of the total sum of integers (see Theorem~\ref{the:dense} for a formal description of the additive combinatoric result. This result has also been leveraged before by Mucha et al.~\cite{DBLP:conf/soda/MuchaW019}). Then what if $t$ is out of the medium range? In this case, $t$ is either very small or very large, and by symmetry it suffices to consider the case when $t$ is small. The crucial observation is that a small target value can only be the sum of a few numbers in $\tilde{M}$. Hence, we may adopt a ``coarse" rounding to re-round the numbers, that is, instead of only introducing $\OO(\epsilon)$-multiplicative error to each input number, we may introduce a larger error. Although per number the multiplicative error is $\Omega(\epsilon)$, but since only $o(\frac{1}{\epsilon})$ numbers will be selected, we can guarantee that the summation of selected numbers gives additive $\tilde\OO(\epsilon^{-2})$ error in total by a careful parameterized analysis.

		In the following, we will give detailed proof of Lemma~\ref{lemma:small_gro_p}. Note that 
		$M = \beta \tilde{M} \subset [\frac{1-\epsilon}{\epsilon^{2+\lambda}}, \frac{2(1+\epsilon)}{\epsilon^{2+\lambda}}]$ and $\Sigma(M) = \OO(\epsilon^{-3})$, we have $|M| = |\tilde{M}| = \OO(\epsilon^{-1})$. We call ${M}$ a large-value group if $\lambda \ge 1/2$
		, otherwise we call ${M}$ a small-value group. Large-value groups and small-value groups will be handled separately.
		

		\subsection{Handling Large-Value Group.} 
		
		We first consider the case that $M \subset  [\frac{(1-\epsilon)}{\epsilon^{2+\lambda}}, \frac{2(1+\epsilon)}{\epsilon^{2+\lambda}}]$ where $\lambda \ge 1/2$. 
		
		Notice that elements in $\tilde{M}$ are $(\epsilon,d,d-1)$-smooth numbers. According to Lemma~\ref{lemma:e_apx_sm}, for any $k \in \mathbb{N} \cap [0,d-1]$, in $T =\OO(d \cdot |\tilde{M}|+\Sigma(\tilde{M}) {\epsilon^{\frac{k}{d}}} \cdot \log (|\tilde{M}|)\cdot \log (\Sigma(\tilde{M}) {\epsilon^{\frac{k}{d}}})+ {\epsilon^{-(1+\frac{k}{d})}}\log \frac{1}{\epsilon})$ processing time, we can compute an $\OO(\epsilon(\log \frac{1}{\epsilon})^{k})$-approximate set with cardinality of $\OO(\frac{1}{\epsilon})$ for $S(\tilde{M})$. Denote by $C_{\tilde{M}}$ this approximate set, in the meantime, we have built a $T^{(1)}$-time oracle for backtracking from $C_{\tilde{M}}$ to $\tilde{M}$, where $T^{(1)} = \OO(\Sigma(\tilde{M}) {\epsilon^{\frac{k}{d}}} \cdot \log (|\tilde{M}|)\cdot \log (\Sigma(\tilde{M}) {\epsilon^{\frac{k}{d}}})+ {\epsilon^{-(1+\frac{k}{d})}}\log \frac{1}{\epsilon})$. It is easy to observe that $\beta C_{\tilde{M}}$ is an $\OO(\epsilon(\log \frac{1}{\epsilon})^{k})$-approximate set with cardinality of $\OO( \frac{1}{\epsilon})$ for $S({M})$ and the oracle for backtracking from from $C_{\tilde{M}}$ to $\tilde{M}$ directly yields an $\OO(T^{(1)})$-time oracle for backtracking from $\beta C_{\tilde{M}}$ to $M$. Note that $|M| = \OO(\frac{1}{\epsilon})$ and $\Sigma(\tilde{M}) = \OO(\epsilon^{-\frac{3}{2}})$ if $\lambda \ge \frac{1}{2}$. Recall that $d\ge 12$ is an integer divisible by 4, by fixing $k$ to $\frac{d}{4}$, we have $\OO(\epsilon(\log \frac{1}{\epsilon})^{k})= \tilde\OO(\frac{1}{\epsilon})$, $T = \OO(\frac{d}{\epsilon}+ \epsilon^{-\frac{5}{4}} (\log \frac{1}{\epsilon})^2) = \tilde\OO(\epsilon^{-\frac{5}{4}})$ and $T^{(1)} = \OO(\epsilon^{-\frac{5}{4}} (\log \frac{1}{\epsilon})^2)=\tilde\OO(\epsilon^{-\frac{5}{4}})$. Thus Lemma~\ref{lemma:small_gro_p} has been proved for case that $M$ is a large-value group. 
		
		It remains to prove Lemma~\ref{lemma:small_gro_p} for the case that $M$ is a small-value group, which is the task of the following Subsection~\ref{sec:handle_small-gro-}.

		

		\subsection{Handling Small-Value Group}\label{sec:handle_small-gro-}
		
		The goal of this section is to prove Lemma~\ref{lemma:small_gro_p} for the case that $M \subset [\frac{1-\epsilon}{\epsilon^{2+\lambda}}, \frac{2(1+\epsilon)}{\epsilon^{2+\lambda}}]$ where $\lambda < 1/2$.  
		
		Note that $M$ contains at most $\OO(\frac{1}{\epsilon})$ elements and these elements are different from each other, we call $M$ a \textit{dense} group if $|M| =  \Omega(\epsilon^{-\frac{1}{2}} \log \frac{1}{\epsilon})$, otherwise, we call $M$ a \textit{sparse} group. When $M$ is sparse, i.e., $|M| = \OO(\epsilon^{-\frac{1}{2}} \log \frac{1}{\epsilon})$, note that $\tilde{M} \subset [\frac{1}{4\epsilon}, \frac{1}{\epsilon}]$, we have $\Sigma(\tilde{M}) \le |M|\cdot \frac{1}{\epsilon}= \tilde\OO(\epsilon^{-\frac{3}{2}})$, same as the discussion for large-value group, Lemma~\ref{lemma:e_apx_sm} guarantees the correctness of Lemma~\ref{lemma:small_gro_p}. Thus we only need to consider the dense group. 
		
		In the following, we assume $M \subset  [\frac{1-\epsilon}{\epsilon^{2+\lambda}}, \frac{2(1+\epsilon)}{\epsilon^{2+\lambda}}]$ with $\lambda < \frac{1}{2}$ and $|M| = \Omega(\epsilon^{-\frac{1}{2}}\log \frac{1}{\epsilon})$. Define $ L:=\frac{100\cdot \Sigma (M) \cdot \sqrt{\frac{1}{\epsilon}} \log \frac{1}{\epsilon}}{|M|}$. We approximate $S(M)$ and build an oracle for backtracking in the following three steps: (1). handle $S(M)\cap (L, \Sigma(M)-L)$; (2). handle $S(M) \cap [0,L]$ and $S(M) \cap [\Sigma(M)-L,\Sigma(M)]$; (3). handle $S(M).$
		
		\paragraph{Step 1: Handling $S(M) \cap (L,\Sigma(M)-L)$.} { }~\\
		
		In this step, we aim to prove the following claim.
		\begin{claim} \label{obs:mid_obs}
			In $T_{mid}=\OO(\frac{1}{\epsilon} \log \frac{1}{\epsilon})$ processing time, we can
			\begin{itemize} 
				\item[(i).] Compute a subset $\Sigma_{mid} \subset  S(M) \cap (L,\Sigma(M)-L) $ satisfying the following conditions: (i). $|\Sigma_{mid}| = \OO(\frac{1}{\epsilon})$; (ii). given any $s \in S(M) \cap (L,\Sigma(M)-L)$, there exists $s'\in \Sigma_{mid}$ such that $|s'-s|\le \epsilon \Sigma(M)$. 
				\item[(ii).] Meanwhile build an $\OO(1)$-time oracle for backtracking from $\Sigma_{mid}$ to $M$. That is, given any $c\in \Sigma_{mid}$, in $\OO(1)$ time, the oracle will return $M_c \subset M$ such that $\Sigma(M_c) = c$.
			\end{itemize}
			
		\end{claim}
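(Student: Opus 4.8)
The plan is to prove Claim~\ref{obs:mid_obs} by invoking the additive–combinatoric dense subset-sum result (Theorem~\ref{the:dense}, realized algorithmically by \Mediumfit) and calibrating its parameters against our choice of $L$. First I would pass from $M$ to $\tilde M = M/\beta$: since $M = \beta\tilde M$ with distinct elements, $S(M)=\{\beta s : s\in S(\tilde M)\}$, so it suffices to understand $S(\tilde M)$, and $\tilde M$ is a set of pairwise distinct integers in $[\tfrac{1}{4\epsilon},\tfrac1\epsilon]$ with $|\tilde M|=|M|=\Omega(\epsilon^{-1/2}\log\frac1\epsilon)$ (the dense hypothesis). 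Applying Theorem~\ref{the:dense} to $\tilde M$, with $d:=\gcd(\tilde M)$ and $u:=\tilde M^{\max}=\OO(1/\epsilon)$, every multiple of $d$ in a window $[g,\Sigma(\tilde M)-g]$ is a subset sum of $\tilde M$ for some $g=\tilde\OO(u^2/|\tilde M|)$ (equivalently, these subset sums form an arithmetic progression of difference $d$), and \Mediumfit produces a witnessing subset for any such target in near-linear time; scaling back by $\beta$, every multiple of $\beta d$ in $[\beta g,\ \Sigma(M)-\beta g]$ lies in $S(M)$.

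Next I would carry out the parameter chase, which I expect to be the technical heart of the claim. Two inequalities are needed: (a) $\beta g\le L$, so the guaranteed window $[\beta g,\Sigma(M)-\beta g]$ contains the target range $(L,\Sigma(M)-L)$; and (b) $\beta d\le\epsilon\Sigma(M)$, so every length-$\epsilon\Sigma(M)$ sub-interval of the window meets $S(M)$. For (b): $|M|$ distinct multiples of $d$ inside an interval of length $\tfrac{3}{4\epsilon}$ forces $d=\OO(1/(\epsilon|M|))$, while $\Sigma(M)\ge|M|\cdot\tfrac{1}{4\epsilon}\cdot\beta$, so $\beta d=\OO(\beta/(\epsilon|M|))$ sits far below $\epsilon\Sigma(M)=\Omega(\beta|M|)$ once $|M|=\Omega(\epsilon^{-1/2}\log\frac1\epsilon)$. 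For (a): using $\Sigma(\tilde M)=\Theta(|M|/\epsilon)$ one gets $L=\Theta\!\big(\beta\epsilon^{-3/2}\log\tfrac1\epsilon\big)$, whereas $\beta g=\tilde\OO\big(\beta\epsilon^{-2}/|M|\big)=\tilde\OO\big(\beta\epsilon^{-3/2}\big)$; the dense threshold $|M|=\Omega(\epsilon^{-1/2}\log\frac1\epsilon)$ together with the constant $100$ and the $\log\frac1\epsilon$ factor baked into the definition of $L$ are precisely what make $\beta g\le L$. This bookkeeping — matching the polylogarithmic factors of Theorem~\ref{the:dense} against $L$ — is the one genuinely delicate step, since it is where the $\epsilon^{-1/2}$ savings of the dense case is cashed in, and it explains why the sparse/dense split is placed exactly at $\epsilon^{-1/2}\log\frac1\epsilon$.

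With (a) and (b) established, the construction of $\Sigma_{mid}$ is routine: for $\tau=1,\dots,\lceil\epsilon^{-1}\rceil$ set $I_\tau:=[(\tau-1)\epsilon\Sigma(M),\ \tau\epsilon\Sigma(M))\cap(L,\Sigma(M)-L)$; whenever $I_\tau$ has length at least $\beta d$ (which fails for only $\OO(1)$ values of $\tau$ near the two ends) it contains a multiple of $\beta d$, hence a subset sum $s_\tau$ of $M$, and I put $s_\tau$ into $\Sigma_{mid}$. Then $\Sigma_{mid}\subset S(M)\cap(L,\Sigma(M)-L)$, $|\Sigma_{mid}|=\OO(1/\epsilon)$, and for any $s\in S(M)\cap(L,\Sigma(M)-L)$ the element $s_\tau$ with $s\in I_\tau$ (or an adjacent one for the boundary cases) satisfies $|s_\tau-s|\le\epsilon\Sigma(M)$, which gives (i); computing all $s_\tau$ this way costs only $T_{mid}=\OO(\tfrac1\epsilon\log\tfrac1\epsilon)$.

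Finally, for the $\OO(1)$-time backtracking oracle (item (ii)): storing the $\OO(1/\epsilon)$ witnesses explicitly is impossible, since each may have $\Theta(1/\epsilon)$ elements. The plan is to choose the $s_\tau/\beta$ as consecutive multiples of $d$ and exploit the incremental structure of \Mediumfit: a witness for $(k+1)d$ in $\tilde M$ is obtained from a witness for $kd$ by a bounded number of element swaps, so the whole family of witnesses forms a ``swap chain'' that can be stored in $\tilde\OO(1/\epsilon)$ total space and built within $T_{mid}$ during preprocessing; a query $c\in\Sigma_{mid}$ then returns in $\OO(1)$ time a handle to a set $M_c\subset M$ with $\Sigma(M_c)=c$ (unpacked to an explicit subset only when the overall backtracking terminates). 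The hard part, to reiterate, is the parameter calibration of the second paragraph; everything else is bookkeeping or a direct appeal to Theorem~\ref{the:dense}.
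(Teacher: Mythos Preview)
Your route is workable in principle but substantially more involved than the paper's, and it rests on a misreading of Theorem~\ref{the:dense}. As stated in the paper, that theorem is not a structural assertion that the subset sums form an arithmetic progression of step $d=\gcd(\tilde M)$; it is an algorithmic black box that, after near-linear preprocessing, answers the \emph{optimization} version of subset-sum (return a subset of maximum sum at most $t$) for any target $t\in(\tilde L,\Sigma(\tilde M)-\tilde L)$, in $O(\log(1/\epsilon))$ time per query. The paper's proof simply queries this structure at the $O(1/\epsilon)$ grid points $c_i=\lfloor i\epsilon\Sigma(\tilde M)\rfloor$, records the returned witnesses $E_i$ in a dictionary keyed by $\beta\Sigma(E_i)$, and sets $\Sigma_{mid}=\{\beta\Sigma(E_i)\}$. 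The approximation property is then a one-line consequence of optimality: if $s\in S(M)\cap(L,\Sigma(M)-L)$ and $s/\beta$ falls in the $k$-th bucket, then $s/\beta$ is itself a subset sum not exceeding the target $c_j$ just above it, so the optimal $E_j$ satisfies $s/\beta\le\Sigma(E_j)\le c_j$ and hence $|s-\beta\Sigma(E_j)|\le\epsilon\Sigma(M)$.

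This renders your second paragraph---which you flag as ``the technical heart''---entirely unnecessary. The threshold $L$ in the claim is \emph{defined} so that $L/\beta$ coincides exactly with the $\tilde L$ of Theorem~\ref{the:dense} applied to $\tilde M$ (with $l=1/\epsilon$, $m=|\tilde M|$); there is nothing to calibrate. Your inequalities (a) and (b) are correct for the AP-based formulation you have in mind, but the paper never needs them. Likewise, the $O(1)$ oracle is just the dictionary lookup returning the stored $\beta E_i$ (Theorem~\ref{the:dense} builds each $E_i$ in $O(\log(1/\epsilon))$ time, so the total is within $T_{mid}$); the ``swap chain'' you propose is both unneeded and unjustified---you assert that witnesses for consecutive multiples of $d$ differ by a bounded number of swaps, but this is a nontrivial structural claim that you do not argue.
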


		Before proving Claim~\ref{obs:mid_obs}, we first import the following theorem, which is derived by Galil and Margalit~\cite{DBLP:journals/siamcomp/GalilM91}.
		\begin{theorem}[CF. Theorem 6.1 from \cite{DBLP:journals/siamcomp/GalilM91}]\label{the:dense} Let $Z$
			be a set of $m$ distinct numbers in the interval $(0,l]$ such
			that
			$$
			m>1000 \cdot \sqrt{l} \log l.
			$$
			and let $L:=\frac{100 \cdot \sum(Z) \sqrt{l} \log l}{m}$.
			Then in $\mathcal{O}\left(m+(\frac{l}{m} \log l)^{2}+ \frac{\sum(Z)\sqrt{l}(\log l)^{2} }{m^{2}}\right)$ preprocessing time we can build a structure which allows us to solve the subset-sum problem for any given integer $t$ in the interval $\left(L, \sum(Z)-L\right)$. Solving means finding a subset $B \subset Z$ such that $\sum (B) \leq t$ and there is no subset $C \subset Z$ such that $\sum{(B)}<\sum{(C)} \leq t$. An optimal subset $B$ is built in $O(\log l)$ time per target number. 
		\end{theorem}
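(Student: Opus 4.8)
The plan is to follow the additive-combinatorics route behind Galil and Margalit's theorem: first prove a purely existential \emph{completeness} statement — the subset sums of a sufficiently dense set of integers fill a full interval of consecutive integers around their midpoint, leaving only margins of size $\OO(L)$ at each end — and then refine the proof into a data structure from which a witness subset for any target in the middle range can be read off in $\OO(\log l)$ time.

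First I would normalize. Setting $d=\gcd(Z)$ and replacing $Z,l,t$ by $Z/d,\ l/d,\ t/d$ is harmless: a target not divisible by $d$ has no subset-sum witness, and the largest attainable value below it is $d\lfloor t/d\rfloor$, which the normalized problem returns. Hence assume $\gcd(Z)=1$ while keeping $m>1000\sqrt l\log l$. The core is an \emph{interval-growth} argument: sort $Z=\{a_1<a_2<\dots<a_m\}$ and scan in increasing order, maintaining the longest block $[u,v]$ of consecutive integers contained in the subset sums of the processed prefix; whenever $a_i\le v-u+2$ the element is \emph{absorbed}, extending the block to $[u,v+a_i]$, and otherwise it is \emph{skipped}. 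The subtlety is the bootstrap — starting a nontrivial block at all. Because $\gcd(Z)=1$ and there are $m\gg\sqrt l$ distinct values below $l$, a pigeonhole argument on the partial sums of the $\OO(\sqrt l\log l)$ smallest elements (two equal-sum sub-collections modulo a small modulus, combined with a coprimality witness) produces an initial complete block; once its length exceeds $l$, every remaining element is automatically absorbed. Summing up, $S(Z)$ contains the interval $[\Sigma(Z_0),\ \Sigma(Z)-\Sigma(Z_0)]$, where the exceptional residual set $Z\setminus Z_0$ (the skipped elements plus the bootstrap leftovers) has total weight $\OO\!\big(\tfrac{\Sigma(Z)\sqrt l\log l}{m}\big)=\OO(L)$; by the symmetry $s\in S(Z)\iff \Sigma(Z)-s\in S(Z)$ this is precisely a complete interval in $(L,\Sigma(Z)-L)$.

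Next I would make the argument algorithmic. Preprocessing records the sorted order, the absorbed set $Z_0$ with block endpoints for each prefix organized into $\OO(\log l)$ levels, and a succinct description of the $\OO(L)$-weight residual set $Z\setminus Z_0$; computing exactly which subset sums the residual attains is a single small dynamic-programming / sumset computation on numbers of magnitude $\tilde\OO(l/m)$, which I expect to yield the $(\tfrac{l}{m}\log l)^2$ term, while tabulating how a shift propagates down the $\OO(\log l)$ absorbed levels yields the $\tfrac{\Sigma(Z)\sqrt l(\log l)^2}{m^2}$ term, and the $m$ term is reading $Z$. A query $t\in(L,\Sigma(Z)-L)$ is answered by selecting the largest residual value $r\le t$, then walking down the absorbed levels to express $t-r$ inside the complete block, each level costing $\OO(1)$; this produces $B$ with $\Sigma(B)\le t$ and no $C$ with $\Sigma(B)<\Sigma(C)\le t$ in $\OO(\log l)$ total time.

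The main obstacle I anticipate is the existential core — pinning down that the exceptional residual $Z\setminus Z_0$ has weight $\OO(L)$ rather than merely $o(\Sigma(Z))$, and handling the gcd reduction when a small subset of $Z$ has a large gcd (the ``quasi-gcd'' phenomenon), since both the bootstrap pigeonhole and the skip-count bound use the density hypothesis tightly and must be tuned so the constants and $\log l$ factors match the stated margin $L$ and preprocessing bound. The algorithmic wrapper, by contrast, should be routine once the structural lemma is established.
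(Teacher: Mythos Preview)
The paper does not prove this statement at all: Theorem~\ref{the:dense} is imported verbatim from Galil and Margalit~\cite{DBLP:journals/siamcomp/GalilM91} and used as a black box, so there is no proof in the paper to compare your proposal against.

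That said, your sketch is a reasonable outline of the Galil--Margalit argument itself (gcd normalization, an interval-growth/absorption scan over the sorted elements, a bootstrap to seed the initial complete block, and a layered data structure for $\OO(\log l)$-time witness recovery). The places you flag as potential obstacles --- bounding the residual weight by $\OO(L)$ with the right constants, and the quasi-gcd issue --- are indeed where the actual work lies in the original paper; your accounting of the preprocessing terms is plausible but would need the details of~\cite{DBLP:journals/siamcomp/GalilM91} to verify. If your goal is to understand how this theorem is \emph{used} in the present paper rather than to reprove it, you need only the statement: it is invoked in the proofs of Claim~\ref{obs:mid_obs} and Claim~\ref{obs:3/2_mid_} to handle targets in the ``middle range'' $(L,\Sigma(M)-L)$ for dense groups.
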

		
		Now we are ready to prove Claim~\ref{obs:mid_obs}.
		\begin{proof}[Proof of Claim~\ref{obs:mid_obs}]
			Consider $\tilde{M} = (M)/\beta$. Recall that elements in $\tilde{M}$ are different from each other and $|\tilde{M}|=|M|=\Omega(\epsilon^{-\frac{1}{2}}\log \frac{1}{\epsilon})$, moreover, $\tilde{M} \subset \mathbb{N}\cap [\frac{1}{4\epsilon}, \frac{1}{\epsilon}]$. Let 
			$\tilde{L} = \frac{L}{\beta} =  \frac{100 \cdot \Sigma(\tilde{M}) \sqrt{\frac{1}{\epsilon}} \log \frac{1}{\epsilon}}{|\tilde{M}|}$. According to Theorem~\ref{the:dense}, in $\mathcal{O}(\frac{1}{\epsilon} \log\frac{1}{\epsilon})$ time, we can construct a data structure which allows us to solve the  SUBSET SUM problem $(\tilde{M},c)$ in $\OO(\log \frac{1}{\epsilon})$ time, where $c$ is any integer in $(\tilde{L},\Sigma (\tilde{M})- \tilde{L})$. Let $\mathcal{C}=\{c_1,c_2,\cdots,c_{\iota}\}$ denote the set of all different integers in $\left\{\lfloor\epsilon \Sigma (\tilde{M}) \rfloor, \lfloor 2\epsilon \Sigma (\tilde{M}) \rfloor, \cdots, \lfloor \Sigma (\tilde{M}) \rfloor, \lfloor  \Sigma (\tilde{M})-\tilde{L} \rfloor \right \} \cap (\tilde{L}, \Sigma (\tilde{M})-\tilde{L})$. We solve the SUBSET SUM problem $(\tilde{M},c)$ for every integer $c \in \mathcal{C}$. Denote by $E_i$ a solution of the SUBSET SUM problem $(\tilde{M},c_i)$, let $\mathcal{E}: = \{E_1,E_2,\cdots,E_{\iota}\}$ and let $\Sigma_{mid}: =\{\beta \Sigma(E_1), \beta \Sigma(E_2),\cdots,\beta \Sigma(E_{\iota})\}$. We build a dictionary $\mathscr{D}$ as follows: each element in $\mathscr{D}$ is a key-value pair $(\beta \Sigma(E_i), \mathscr{D}[\beta \Sigma(E_i)])$, where $\mathscr{D}[\beta \Sigma(E_i)] = \beta E_i$ and $i=1,2,\cdots,\iota$. The collection of keys of $\mathscr{D}$ is $\Sigma_{mid}$.
			
			
			In the next, we show that $\Sigma_{mid}$ satisfies the conditions required in Claim~\ref{obs:mid_obs}. Apparently, $|\Sigma_{mid}| = \OO(\frac{1}{\epsilon})$. 
			Consider any $s \in S(M)\cap (L,\Sigma(M)-L)$, it follows that $\tilde{L} <\frac{s}{\beta} \le \lfloor  \Sigma (\tilde{M})-\tilde{L} \rfloor$ and  there exists some integer $k \in [0,\frac{1}{\epsilon}]$ such that $k\epsilon \Sigma(\tilde{M}) < \frac{s}{\beta} \le (1+k)\epsilon \Sigma(\tilde{M})$, thus we have
			$$\lfloor k\epsilon \Sigma(\tilde{M}) \rfloor < \frac{s}{\beta} \le \min \left\{\lfloor (1+k)\epsilon \Sigma(\tilde{M}) \rfloor,  \lfloor  \Sigma (\tilde{M})-\tilde{L} \rfloor\right\}.$$ 
			Observe that $ \min \left\{\lfloor (1+k)\epsilon \Sigma(\tilde{M}) \rfloor,  \lfloor  \Sigma (\tilde{M})-\tilde{L} \rfloor\right\} \in \mathcal{C} $, which guarantees that there exists some $E_k\in \mathcal{E}$ satisfying $\Sigma(E_k) \in [k\epsilon \Sigma(\tilde{M}), (1+k)\epsilon \Sigma(\tilde{M})]$, then $\beta \Sigma(E_k) \in \Sigma_{mid}$ and $|s-\beta \Sigma(E_k)|\le \beta \cdot\epsilon \Sigma(\tilde{M}) = \epsilon \Sigma({M})$. 
			
			An $\OO(1)$-time oracle for backtracking from $\Sigma_{mid}$ to $M$  works as follows: 
			given any $c\in \Sigma_{mid}$, the oracle uses dictionary $\mathscr{D}$ to return $\mathscr{D}[c]$. It is easy to see that $\mathscr{D}[c] \subset M$ and $ \Sigma(\mathscr{D}[c] )=c$.\qed
		\end{proof}

		\paragraph{Step 2:  Handling $S(M) \cap [0,L]$ and $S(M) \cap [\Sigma(M)-L,\Sigma(M)]$.} { }~\\
		
		In this step, we aim to prove the following claim.
		
		\begin{claim}\label{obs:obs_lr}
			In $T_{LR}=\tilde\OO(\epsilon^{-\frac{5}{4}})$ processing time, we can 
			\begin{itemize}
				\item[(i).] Compute a set $\Sigma_{LR} \subset \mathbb{R}_{\ge 0}$ satisfying the following conditions: (i). $|\Sigma_{LR}| =\OO(\frac{1}{\epsilon})$; (ii). given any $s\in \left(S(M)\cap[0,L]\right) \cup  \left(S(M)\cap[\Sigma(M)-L,\Sigma(M)] \right)$, there exists $s'\in \Sigma_{LR}$ such that $|s-s'|\le \tilde\OO(\epsilon)\Sigma(M)+\tilde\OO(\epsilon^{-2})$.
				\item[(ii).] Meanwhile build an $\tilde\OO(\epsilon^{-\frac{5}{4}})$-time oracle for backtracking from $\Sigma_{LR}$ to $M$. That is,  given any $c\in \Sigma_{LR}$,  in $\tilde\OO(\epsilon^{-\frac{5}{4}})$ time, the oracle will return $M_c\subset M$ such that $|c- \Sigma(M')|\le \tilde\OO(\epsilon)\Sigma(M)+\tilde\OO(\epsilon^{-2})$.
				
			\end{itemize}
		\end{claim}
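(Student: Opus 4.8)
The plan is first to reduce to the left tail. Since $M=\beta\tilde M$ with $\beta$ known, it suffices to produce a set $A$ that covers $S(M)\cap[0,L]=\beta\bigl(S(\tilde M)\cap[0,\tilde L]\bigr)$ (with $\tilde L:=L/\beta$) to additive error $\tilde\OO(\epsilon^{-2})$, together with a backtracking oracle; the right tail is then free, because $s\in S(M)\cap[\Sigma(M)-L,\Sigma(M)]$ iff $\Sigma(M)-s\in S(M)\cap[0,L]$ (pass to the complementary subset), so one takes $\Sigma_{LR}=A\cup\{\Sigma(M)-a:a\in A\}$ and, for $\Sigma(M)-c$, returns $M\setminus M_c$ where $M_c$ is the oracle's answer for $c$. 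The structural fact driving everything is that every $\tilde M'\subset\tilde M$ with $\Sigma(\tilde M')\le\tilde L$ has $|\tilde M'|\le k_0:=4\epsilon\tilde L$, since each element of $\tilde M$ exceeds $\frac1{4\epsilon}$; and density of $M$ forces $\tilde L=\tilde\Theta(\epsilon^{-3/2})$, hence $k_0=\tilde\Theta(\epsilon^{-1/2})$.

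\textbf{Coarse rounding.}
Because only $k_0=o(\epsilon^{-1})$ elements are ever used, I would re‑round $\tilde M$ much more coarsely than usual: round every element down to the nearest multiple of a scale $g=\epsilon^{-\Theta(1)}$, chosen with $g\le\tilde\OO(\epsilon^{-(1/2-\lambda)})$ so that the total error over any $\le k_0$‑element subset, which after scaling back to $M$ is $\beta k_0 g$, stays $\tilde\OO(\epsilon^{-2})$ — within the additive budget (recall $\tilde\OO(\epsilon)\Sigma(M)=\tilde\OO(\epsilon^{-2})$ as well). Writing $\hat M$ for the rounded multiset, $S(\hat M;[0,\tilde L])$ already contains, to additive error $\beta k_0 g$ in the scale of $M$, an approximation of every element of $S(M)\cap[0,L]$; moreover $S(\hat M;[0,\tilde L])=g\cdot S(\hat M/g;[0,\tilde L/g])$, where $\hat M/g$ is a multiset of $\OO(\epsilon^{-1})$ integers in $[\frac14\epsilon^{-\theta},\epsilon^{-\theta}]$ for some $\theta<1$ and, crucially, $\Sigma(\hat M/g)\le\Sigma(\tilde M)/g$ — much smaller than $\Sigma(\tilde M)$, which is exactly the point of coarsening.

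\textbf{Re‑smoothing and the capped smooth subset‑sum algorithm.}
The coarsening destroyed the factorizations, so I would re‑establish smoothness by applying Lemma~\ref{lemma:smooth_appro} to $\hat M/g$ with the coarsened parameter $\epsilon':=\epsilon^{\theta}$ (the most precise choice for which $\hat M/g$ lies in an interval of the shape required by that lemma, so the extra relative error per element is $\OO(\epsilon')$, again summing to $\tilde\OO(\epsilon^{-2})$ over any $\le k_0$‑subset); taking $\alpha=0$ there yields a partition of $\hat M/g$ into $(\log\frac1\epsilon)^{\OO(d)}$ groups $\rho\cdot G$ with $\rho=\Theta(1)$ and $G$ a set of $(\epsilon',d,d-1)$‑smooth numbers. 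For each $G$ I would invoke Lemma~\ref{lemma:tree-alg} at parameter $\epsilon'$ with cap $\tilde L/(g\rho)$ and truncation level $k$ (a constant fraction of $d$ depending on the density exponent of $\tilde M$ and on $\lambda$) chosen to balance the leaf cost $\Sigma(G)(\epsilon')^{k/d}$ against the cap‑dependent merge cost $(\epsilon')^{-1/d}(\tilde L/g)$, obtaining the \emph{exact} capped subset‑sums of $G$ and a backtracking oracle; then scale by $\rho g\beta$, merge the $(\log\frac1\epsilon)^{\OO(d)}$ groups with Corollary~\ref{coro:sub_sum_top} (cheap, since the sets have bounded size), and take $\Sigma_{LR}$ to be the resulting $\OO(\epsilon^{-1})$‑size set together with its reflection through $\Sigma(M)$. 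Its oracle composes the oracles of Corollary~\ref{coro:sub_sum_top} and Lemma~\ref{lemma:tree-alg} with the recorded coarse‑rounding and re‑smoothing maps (and complementation on the right tail), in $\tilde\OO(\epsilon^{-5/4})$ time. When $\Sigma(\tilde M)$ is already small — equivalently $M$ is only barely dense, or $\lambda$ is close to $\frac12$ — no coarsening is needed and Lemma~\ref{lemma:e_apx_sm} applies directly, exactly as for the large‑value and sparse groups.

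\textbf{Where the difficulty lies.}
The hard part is entirely in the parameterized bookkeeping: choosing $g$ and $k$ so that each of the three rounding steps (coarse rounding, the re‑smoothing relative error, and the error incurred when merging via Corollary~\ref{coro:sub_sum_top}) contributes only $\tilde\OO(\epsilon^{-2})$, while simultaneously making \emph{every} term of Lemma~\ref{lemma:tree-alg}'s running time $\tilde\OO(\epsilon^{-5/4})$. The decisive point is that coarsening shrinks $\Sigma(\hat M/g)$ to roughly $\epsilon^{-3/2}$ and the cap $\tilde L/g$ to roughly $\epsilon^{-1}$, and the balanced value of $\max\{\Sigma(\hat M/g)(\epsilon')^{k/d},\ (\epsilon')^{-1/d}(\tilde L/g)\}$ over admissible $k$ is then $\tilde\OO(\epsilon^{-5/4})$ — a geometric‑mean‑type bound — while the remaining terms ($d|G|$, $(\epsilon')^{-1}$, and the merge) are all $\tilde\OO(\epsilon^{-5/4})$ too. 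The subtle part is verifying that a scale $g$ achieving this is always compatible with the error constraint $g\le\tilde\OO(\epsilon^{-(1/2-\lambda)})$ across the full admissible range of the density exponent of $\tilde M$ and of $\lambda<\frac12$, the near‑boundary cases $\lambda\to\frac12$ (where the margin shrinks) being deflected to the direct argument; this is precisely the "careful parameterized analysis" alluded to in the overview.
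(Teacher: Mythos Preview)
Your high-level plan --- handle the left tail, reflect for the right tail, and exploit that only $k_0=\tilde\Theta(\epsilon^{-1/2})$ elements can participate in any sum $\le\tilde L$ --- is exactly the paper's. The gap is in your mechanism. Your claim that coarsening brings the cap $\tilde L/g$ down to ``roughly $\epsilon^{-1}$'' tacitly assumes $g\approx\epsilon^{-1/2}$, but your own error budget forces $g\le\tilde\OO(\epsilon^{-(1/2-\lambda)})$; hence the cap is only $\tilde\Theta(\epsilon^{-(1+\lambda)})$. For $\lambda\in(1/4,1/2)$ this already exceeds $\epsilon^{-5/4}$, and since the cap-dependent term $2^{2k+1}(\epsilon')^{-1/d}\omega$ in Lemma~\ref{lemma:tree-alg} is at least $\omega$ for every admissible $k$, the running time cannot be brought down to $\tilde\OO(\epsilon^{-5/4})$. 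Your deflection to the ``direct argument'' does not close the gap either: that needs $\Sigma(\tilde M)=\tilde\OO(\epsilon^{-3/2})$, i.e.\ $|M|=\tilde\OO(\epsilon^{-1/2})$, whereas $|M|$ may be as large as $\Theta(\epsilon^{-(1-\lambda)})$; for instance at $\lambda=0.3$, $|M|=\Theta(\epsilon^{-0.7})$ defeats both branches. Switching from Lemma~\ref{lemma:tree-alg} to Lemma~\ref{lemma:e_apx_sm} after your two-step rounding does not help: the re-smoothed numbers are only $(\epsilon',d,\cdot)$-smooth with $\epsilon'=\epsilon^{1/2+\lambda}>\epsilon$, so Lemma~\ref{lemma:e_apx_sm} incurs relative error $\tilde\OO(\epsilon')$, which after scaling back to $M$ exceeds the $\tilde\OO(\epsilon^{-2})$ additive budget.

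The paper sidesteps all of this by \emph{not} separating coarsening from smoothing. It applies Lemma~\ref{lemma:smooth_appro} directly to $M$ (after trivially rescaling into an interval of the required shape) with the choice $\alpha=3/2$ rather than the default $\alpha=1+\lambda$. This single step yields common divisors $\rho'=\Theta(\epsilon^{-3/2})$, per-element relative error $\epsilon^{2+\lambda-\alpha}=\epsilon^{\lambda+1/2}$ (precisely the coarse error one wants, giving $\epsilon^{\lambda+1/2}L=\tilde\OO(\epsilon^{-2})$ over any left-tail subset), and quotients that are $(\epsilon,d,\bar d)$-smooth --- with the \emph{original} $\epsilon$, not some $\epsilon'$ --- whose total sum is $\Sigma(M)/\rho'=\OO(\epsilon^{-3/2})$ uniformly in $\lambda$. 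Lemma~\ref{lemma:e_apx_sm} with parameter $\epsilon$ and $k=d/4$ then produces a full $\tilde\OO(\epsilon)$-approximate set of $S(\overline V)$ in $\tilde\OO(\epsilon^{-5/4})$ time, and intersecting with $[0,(1+o(1))L]$ gives $\Sigma_L$. The missing idea is to exploit the free parameter $\alpha$ in Lemma~\ref{lemma:smooth_appro}: taking $\alpha=3/2$ does the coarse rounding and the smoothing simultaneously, eliminating both the re-smoothing step and the need for the capped algorithm.
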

		\begin{proof}[Proof of Claim~\ref{obs:obs_lr}]
			Divide $M$ into the following three disjoint groups: $M_1 = M \cap [\frac{1-\epsilon}{\epsilon^{2+\lambda}}, \frac{1}{\epsilon^{2+\lambda}})$, $M_2 = M\cap [\frac{1}{\epsilon^{2+\lambda}}, \frac{2}{\epsilon^{2+\lambda}})$ and $M_3 =  M\cap[\frac{2}{\epsilon^{2+\lambda}}, \frac{2(1+\epsilon)}{\epsilon^{2+\lambda}}]$. Since $\epsilon>0$ is sufficiently small, we coulld assume that $\epsilon \le \frac{1}{2}$, then $2M_1, \frac{1}{2} M_3\subset  [\frac{1}{\epsilon^{2+\lambda}}, \frac{2}{\epsilon^{2+\lambda}})$. Let $V_1$, $V_2$ and $V_3$ denote $2M_1$, $M_2$ and $\frac{1}{2} M_3$, respectively.

			Recall that $|M| = \OO(\frac{1}{\epsilon})$, then $|V_i| =\OO(\frac{1}{\epsilon})$ for $i=1,2,3$. Given $d \in \mathbb{N}_{+}$ and let $\bar{d}< d$ be the integer such that $\lambda+\frac{1}{2} \in (\frac{\bar{d}}{d},\frac{\bar{d} +1}{d}]$. Recall Lemma~\ref{lemma:smooth_appro}, for each $V_i$, in $\OO(\frac{1}{\epsilon}(\log \frac{1}{\epsilon})^{\OO(\bar{d}+1)})$ processing time, we can obtain a set $\Delta'_{i} \subset \Theta(\epsilon^{-\frac{3}{2}})$ with $|\Delta'_{i}| = \OO((\log \frac{1}{\epsilon})^{\OO(\bar{d}+1)})$ and round every $y \in V_i $ to the form $\rho' h'_1 h'_2 \cdots h'_{\bar{d}+1}$, where $\rho'\in \Delta'_{i}$ and $h'_{j}$'s satisfy the followings: (i) $h'_1 h'_2 \cdots h'_{\bar{d}+1} \in \mathbb{N}_{+}\cap [\frac{1}{4\epsilon^{\lambda+\frac{1}{2}}},\frac{1}{\epsilon^{\lambda+\frac{1}{2}}}]$; (ii) $h'_{\bar{d}+1}\in \mathbb{N}_{+}\cap [\frac{1}{2\epsilon^{\lambda+\frac{1}{2}-\frac{\bar{d}}{d}}},\frac{2}{\epsilon^{\lambda+\frac{1}{2}-\frac{\bar{d}}{d}}}]$ and if $\bar{d}>0$ then $h'_{j} \in \mathbb{N}_{+}\cap [\frac{1}{2\epsilon^{\frac{1}{d}}},\frac{2}{\epsilon^{\frac{1}{d}}}]$ for every $1\le i\le \bar{d}$; (iii) $|y-\rho h'_1 h'_2 \cdots h'_{\bar{d}+1}| \le \epsilon^{\lambda+\frac{1}{2}} y$. Denote by $\overline{V}_i$ the set of all such rounded elements obtained from $V_i$.
			
			
			Let $\overline{V} = (\frac{1}{2} \overline{V}_1 )\dot{\cup} \overline{V}_2 \dot{\cup} (2\overline{V}_2)$. We have the following observation. 
			\begin{observation}\label{obs:sub_o_1}
				In $\tilde\OO(\epsilon^{-\frac{5}{4}})$ time, we can compute an $\tilde\OO(\epsilon)$-approximate set with cardinality of $\OO(\frac{1}{\epsilon})$ for $S(\overline{V})$ and meanwhile build an $\tilde\OO(\epsilon^{-\frac{5}{4}})$-time oracle for backtracking from this approximate set to $\overline{V}$.
			\end{observation}
			\begin{proof}
				Given any multiset $X$ and any $\gamma \in \mathbb{R}_{\ge 0}$, note that if $C$ is an $\tilde\OO(\epsilon)$-approximate set of 
				$S(X)$, then $\gamma C$ is an $\tilde\OO(\epsilon)$-approximate set of 
				$S(\gamma X)$, moreover, a $T$-time oracle for backtracking from $C$ to $X$ directly yields an $\OO(T)$-time oracle for backtracking from $\gamma C$ to $\gamma X$. 
				
				Recall Corollary~\ref{coro:tree-fashion_}, towards Observation~\ref{obs:sub_o_1}, it is sufficient to prove that for each $\overline{V}_i$, in $\tilde\OO(\epsilon^{-\frac{5}{4}})$ processing time, one can compute an $\tilde\OO(\epsilon)$-approximate set with cardinality of $\OO(\frac{1}{\epsilon})$ for $S(\overline{V}_i)$ and build an $\tilde\OO(\epsilon^{-\frac{5}{4}})$-time oracle for backtracking from this approximate set to $\overline{V}_i$.  Consider each $\overline{V}_i$. Let $\Delta'_i := \{\rho'_1,\rho'_2,\cdots, \rho'_{|\Delta'_i|}\}$, we can divide $\overline{V}_i$ into $|\Delta'_i|= \OO((\log \frac{1}{\epsilon})^{\OO(\bar{d}+1)})$ groups $\overline{V}_i^1,\overline{V}_i^2,\cdots, \overline{V}_i^{|\Delta'|}$ such that $y\in \overline{V}_i^k$ if and only if $y$ is of the form $\rho'_k h'_1 h'_2 \cdots h'_{\bar{d}+1}$.  Again by Corollary~\ref{coro:tree-fashion_}, it is sufficent to prove that for each $\overline{V}^k_i$, in $\tilde\OO(\epsilon^{-\frac{5}{4}})$ processing time, one can compute an $\tilde\OO(\epsilon)$-approximate set with cardinality of $\OO(\frac{1}{\epsilon})$ for $S(\overline{V}^k_i)$ and build an  $\tilde\OO(\epsilon^{-\frac{5}{4}})$-time oracle for backtracking from this approximate set to $\overline{V}^k_i$.  
				
				Note that elements in $\frac{\overline{V}_i^k}{\rho'_k}$ are $(\epsilon,d,\bar{d})$-smooth numbers. Recall Lemma~\ref{lemma:e_apx_sm}, for any $\tau  \in \mathbb{N} \cap [0,\bar{d}]$, in $T^k_i =\OO\left(d \cdot \left|\frac{\overline{V}_i^k}{\rho'_k}\right|+\Sigma(\frac{\overline{V}_i^k}{\rho'_k}) {\epsilon^{\frac{\tau }{d}}} \cdot \log (\left|\frac{\overline{V}_i^k}{\rho'_k}\right|)\cdot \log (\Sigma(\frac{\overline{V}_i^k}{\rho'_k}) {\epsilon^{\frac{\tau }{d}}})+ {\epsilon^{-(1+\frac{\tau }{d})}}\log \frac{1}{\epsilon}\right)$ processing time, we can compute an $\OO(\epsilon (\log\frac{1}{\epsilon})^k)$-approximate set with cardinality of $\OO(\frac{1}{\epsilon})$ for $S(\frac{\overline{V}_i^k}{\rho'_k})$. Denote by $C_{{\overline{V}_i^k}}$ this approximate set, in the meantime, we have built a $\tilde{T}^k_i$-time oracle for backtracking from $C_{{\overline{V}_i^k}}$ to $\frac{\overline{V}_i^k}{\rho'_k}$, where $\tilde{T}^k_i = \OO\left(\Sigma(\frac{\overline{V}_i^k}{\rho'_k}) {\epsilon^{\frac{\tau }{d}}} \cdot \log (\left|\frac{\overline{V}_i^k}{\rho'_k}\right|)\cdot \log (\Sigma(\frac{\overline{V}_i^k}{\rho'_k}) {\epsilon^{\frac{\tau }{d}}})+ {\epsilon^{-(1+\frac{\tau }{d})}}\log \frac{1}{\epsilon}\right)$. It is easy to observe that ${\rho'_k} C_{{\overline{V}_i^k}}$ is an $\OO(\epsilon (\log\frac{1}{\epsilon})^k)$-approximate set of $S(\overline{V}_i^k)$ and the oracle for backtracking from $C_{{\overline{V}_i^k}}$ to $\frac{\overline{V}_i^k}{\rho'_k}$ directly yields an $\OO(\tilde{T}^k_i)$-time oracle for backtracking from ${\rho'_k}C_{{\overline{V}_i^k}}$ to ${\overline{V}_i^k}$. 
				It remains to determine $k$ such that $T^k_i$ and $\tilde{T}^k_i$ attain their minimum values.  Recall that $|M|= \OO(\frac{1}{\epsilon})$, then $\left|\frac{\overline{V}_i^k}{\rho'_k}\right| \le |M| = \OO(\frac{1}{\epsilon})$. Notice that $\Sigma(\overline{V}^k_i) \le (1+ {\epsilon^{\lambda+\frac{1}{2}}}) \Sigma(M) = \OO(\epsilon^{-3})$ and $\rho'_k \in \Theta(\epsilon^{-\frac{3}{2}})$, then $\Sigma(\frac{\overline{V}_i^k}{\rho'_k}) = \OO(\epsilon^{-\frac{3}{2}})$. Recall that $\lambda \ge 0$, $\frac{\bar{d}+1}{d} \ge \lambda +\frac{1}{2}$ and $d \ge 12$ is an integer divisible by 4, we have $\frac{d}{4} \le \bar{d}$. By setting $\tau  = \frac{d}{4}$, we have $T^k_i = \OO(\epsilon^{-\frac{5}{4}} (\log \frac{1}{\epsilon})^2)$  and $\tilde{T}^k_i= \OO(\epsilon^{-\frac{5}{4}} (\log \frac{1}{\epsilon})^2)$. \qed 
			\end{proof}
			
			According to Observation~\ref{obs:sub_o_1}, in $\tilde\OO(\epsilon^{-\frac{5}{4}})$ processing time, we can compute an $\tilde\OO(\epsilon)$-approximate set with cardinality of $\OO(\frac{1}{\epsilon})$ for $S(\overline{V})$. Denote by $C_{\overline{V}}$ this approximate set. In the meantime, we have built an $\tilde\OO(\epsilon^{-\frac{5}{4}})$-time oracle for backtracking from $C_{\overline{V}}$ to $\overline{V}$. Denote by $\textbf{Ora}_{\overline{V}}$ this oracle. 
			
			Define $\Sigma_{L}: = C_{\overline{V}} \cap [0,(1+\tilde\OO(\epsilon^{\lambda+\frac{1}{2}}))L+\tilde\OO(\epsilon)\Sigma(\overline{V})]$ and $\Sigma_{R}: = \{\Sigma(M)-s \ | \  s\in \Sigma_{L}\}$. Let $\Sigma_{LR} = \Sigma_{L}\cup \Sigma_{R}$. To summarize, by setting $d = 12$, in total $\tilde\OO(\epsilon^{-\frac{5}{4}})$ processing time, we can obtain $\Sigma_{LR}$ with $|\Sigma_{LR}| = \OO(\frac{1}{\epsilon})$. Moreover, we have the following observation.
			
			\begin{observation}\label{obs:sub_o_2}
				Given any $s\in \left(S(M)\cap[0,L]\right) \cup  \left(S(M)\cap[\Sigma(M)-L,\Sigma(M)]\right)$, there exists $s'\in \Sigma_{LR}$ such that $|s-s'|\le \tilde\OO(\epsilon)\Sigma(M)+\tilde\OO(\epsilon^{-2})$.   
			\end{observation}
			\begin{proof}
				Note that $\overline{V}$ is obtained by rounding elements in $M$. For any $\overline{V}' \subset \overline{V}$, let ${M}'\subset M$ be the original subset corresponding to $\overline{V}'$, it holds that $|\Sigma(\overline{V}')-\Sigma({M}')| \le   \epsilon^{\lambda + \frac{1}{2}}\Sigma({M}')$, then $|\Sigma(\overline{V}')-\Sigma({M}')| \le   \OO(\epsilon^{\lambda + \frac{1}{2}})\Sigma(\overline{V}')$. 
				
				
				We first show that given any $\bar{s}\in \Sigma_L$  and any $\overline{V}' \subset \overline{V}$ satisfying $|\Sigma(\overline{V}')-\bar{s}|\le \tilde\OO(\epsilon)\Sigma(\overline{V})$, let ${M}'\subset M$ be original subset corresponding to $\overline{V}'$, it always holds that $|\bar{s}-\Sigma({M}')| \le \tilde\OO(\epsilon)\Sigma(M)+ \tilde\OO(\epsilon^{-2})$. It suffices to observe the following:
				\begin{align*}
					|\bar{s}-\Sigma({M}')| &\le |\bar{s}-\Sigma(\overline{V}')|+|\Sigma(\overline{V}')-\Sigma({M}')|\\
					&\le \tilde\OO(\epsilon)\Sigma(\overline{V})+ \OO(\epsilon^{\lambda+1/2})\Sigma(\overline{V}')\\
					&\le \tilde\OO(\epsilon)\Sigma(\overline{V})+ \OO(\epsilon^{\lambda+1/2})(\bar{s}+\tilde\OO(\epsilon)\Sigma(\overline{V}))\\
					&\le  \tilde\OO(\epsilon)\Sigma(\overline{V})+ \OO(\epsilon^{\lambda+1/2})\left((1+\tilde\OO(\epsilon^{\lambda+1/2}))L+\tilde\OO(\epsilon)\Sigma(\overline{V})\right)\\
					&= \tilde\OO(\epsilon)\Sigma(\overline{V})+ \OO(\epsilon^{\lambda+1/2})L\le \tilde\OO(\epsilon)\Sigma(M)+ \OO(\epsilon^{\lambda+1/2})L= \tilde\OO(\epsilon)\Sigma(M)+ \tilde\OO(\epsilon^{-2}).
				\end{align*}
				Where the last inequality holds by the facts that $L=\frac{100\cdot \Sigma (M) \cdot \sqrt{\frac{1}{\epsilon}} \log \frac{1}{\epsilon}}{|M|} = \tilde\OO(\frac{1}{\epsilon^{5/2+\lambda}})$.

				Now, we are ready to prove Observation~\ref{obs:sub_o_2}. Given any $s\in \left(S(M)\cap[0,L]\right) \cup  \left(S(M)\cap[\Sigma(M)-L,\Sigma(M)]\right)$. If $s\in S(M)\cap[0,L]$, then there exists $s'\in S(\overline{V})$ such that $|s-s'|\le \epsilon^{\lambda+1/2} s \le \epsilon^{\lambda+1/2} L.$ 
				For $s'\in S(\overline{V})$, recall that $C_{\overline{V}}$ is an $\tilde\OO(\epsilon)$-approximate set of $S(\overline{V})$, thus there exists $s''\in C_{\overline{V}}$ such that $|s''-s'|\le \tilde\OO(\epsilon)\Sigma(\overline{V})$. 
				To summarize, we have $$|s''-s|\le \OO(\epsilon^{\lambda+1/2})L+ \tilde\OO(\epsilon)\Sigma(\overline{V}) \le \tilde\OO(\epsilon^{-2})+\tilde\OO(\epsilon)\Sigma(M).
				$$
				It follows that $s''\le s+ \OO(\epsilon^{\lambda+1/2})L+ \tilde\OO(\epsilon)\Sigma(\overline{V}) \le (1+\tilde\OO(\epsilon^{\lambda+1/2}))L+ \tilde\OO(\epsilon)\Sigma(\overline{V})$, thus $s''\in \Sigma_{L} \in \Sigma_{LR}$. 
				
				Else if $s \in S(M)\cap[\Sigma(M)-L,\Sigma(M)]$, we have $\Sigma(M)-s \in S(M)\cap[0,L]$, then according to the above discussion, there exists $s''\in \Sigma_L$ such that 
				$$|(\Sigma(M)-s'')-s|= |(\Sigma(M)-s)-s''|\le \OO(\epsilon^{\lambda+1/2})L+ \tilde\OO(\epsilon)\Sigma(V)\le \tilde\OO(\epsilon^{-2})+\tilde\OO(\epsilon)\Sigma(M).
				$$ 
				It is easy to see that  $\Sigma(M)-s'' \in \Sigma_{R}\in \Sigma_{LR}$. \qed \end{proof}

			It remains to give the oracle for backtracking from $\Sigma_{LR}$ to $M$. Recall that $\Sigma_{L} \subset C_{\overline{V}}$ and we have built an $\tilde\OO(\epsilon^{-\frac{5}{4}})$-time oracle $\textbf{Ora}_{\overline{V}}$ for backtracking from $C_{\overline{V}}$ to $\overline{V}$. An $\tilde\OO(\epsilon^{-\frac{5}{4}})$-time oracle $\textbf{Ora}_{LR}$ for backtracking from $\Sigma_{LR}$ to $M$ works as follows. Given any $c\in \Sigma_{LR}$. If $c\in \Sigma_{L}$,  oracle $\textbf{Ora}_{LR}$ first uses $\textbf{Ora}_{\overline{V}}$ to return a subset $V_c\subset \overline{V}$ such that $|\Sigma(V_c)-c|\le \tilde\OO(\epsilon)\Sigma(\overline{V})$. 
			Then $\textbf{Ora}_{LR}$ returns $M_c$, which is the original subset in $M$ that is corresponding to $V_c$. According to the above discussion, we have $$|c-\Sigma(M_c)| \le\tilde\OO(\epsilon)\Sigma(M)+ \tilde\OO(\epsilon^{-2}).
			$$
			Else if $c \in \Sigma_{R}$, we have $\Sigma(M)-c \in \Sigma_{L}$. Oracle $\textbf{Ora}_{LR}$ first uses $\textbf{Ora}_{\overline{V}}$ to return a subset $V'\subset \overline{V}$ such that $|\Sigma(V' )-(\Sigma(M)-c)|\le \tilde\OO(\epsilon)\Sigma(\overline{V})$. Let $M'$ be the original subset in $M$ that is corresponding to $V'$. Then $\textbf{Ora}_{LR}$ returns $M \backslash M'$. According to the above discussion, we have 
			$$|\Sigma(M \backslash M')-c| =
			|(\Sigma(M)-c)-\Sigma(M')|\le \tilde\OO(\epsilon)\Sigma(M)+ \tilde\OO(\epsilon^{-2}).$$
			Till now, we complete the proof of Claim~\ref{obs:sub_o_1}.\qed \end{proof}

		\paragraph{Step 3: Handling  $S(M)$.}~\newline
		Given $\Sigma_{mid}$ and $\Sigma_{LR}$ obtained in Step 1 and Step 2, respectively. Let $\Sigma_{union} = \Sigma_{mid} \cup \Sigma_{LR}$. Then we have the following Claim~\ref{obs:dense_gro_}, which follows directly from Claim~\ref{obs:mid_obs} and Claim~\ref{obs:obs_lr}.
		\begin{claim}\label{obs:dense_gro_}
			With an additive error of $\tilde\OO({\epsilon^{-2}})$, $\Sigma_{union}$ is an $\tilde\OO(\epsilon)$-approximate set of $S({M})$. Moreover, we have built an $\tilde\OO(\epsilon^{-\frac{5}{4}})$-time oracle for backtracking from $\Sigma_{union}$ to $M$.
		\end{claim}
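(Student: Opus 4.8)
The plan is to unwind the one-line assertion that Claim~\ref{obs:dense_gro_} ``follows directly'' from Claims~\ref{obs:mid_obs} and~\ref{obs:obs_lr}: I would verify the three conditions of Definition~\ref{def:e_appr} for $\Sigma_{union}=\Sigma_{mid}\cup\Sigma_{LR}$ being an $\tilde\OO(\epsilon)$-approximate set of $S(M)$ (so with $u=(S(M))^{\max}=\Sigma(M)$) carrying an additive error of $\tilde\OO(\epsilon^{-2})$, and then stitch together the two backtracking oracles.

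For the covering condition~(iii) I would use that the three ranges $[0,L]$, $(L,\Sigma(M)-L)$, $[\Sigma(M)-L,\Sigma(M)]$ together cover $[0,\Sigma(M)]\supseteq S(M)$. Given any $a'\in S(M)$: if $a'\in(L,\Sigma(M)-L)$, Claim~\ref{obs:mid_obs}(i) supplies $s'\in\Sigma_{mid}$ with $|s'-a'|\le\epsilon\Sigma(M)$; otherwise $a'$ lies in one of the two outer intervals and Claim~\ref{obs:obs_lr}(i) supplies $s'\in\Sigma_{LR}$ with $|s'-a'|\le\tilde\OO(\epsilon)\Sigma(M)+\tilde\OO(\epsilon^{-2})$. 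In both cases $s'\in\Sigma_{union}$. Condition~(ii) is handled by the oracles themselves: for $c\in\Sigma_{mid}$ the oracle of Claim~\ref{obs:mid_obs}(ii) returns $M_c\subset M$ with $\Sigma(M_c)=c$, so $c\in S(M)$ exactly; for $c\in\Sigma_{LR}$ the oracle of Claim~\ref{obs:obs_lr}(ii) returns $M_c\subset M$ with $\Sigma(M_c)\in S(M)$ and $|c-\Sigma(M_c)|\le\tilde\OO(\epsilon)\Sigma(M)+\tilde\OO(\epsilon^{-2})$. For the containment condition~(i), $\Sigma_{mid}\subset S(M)\subset[0,\Sigma(M)]$ is immediate, while $\Sigma_{LR}=\Sigma_L\cup\Sigma_R$ with $\Sigma_L\subset C_{\overline{V}}\subset[0,(1+\tilde\OO(\epsilon))\Sigma(\overline{V})]$ and $\Sigma_R=\{\Sigma(M)-s:s\in\Sigma_L\}$; using $\Sigma(\overline{V})=(1+\tilde\OO(\epsilon^{\lambda+1/2}))\Sigma(M)$ one gets $\Sigma_{union}\subset[0,(1+\tilde\OO(\epsilon))\Sigma(M)]$ (the part of $\Sigma_R$ falling below $0$, if any, is simply discarded).

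For the combined oracle I would record, via an auxiliary dictionary built at the same time as $\Sigma_{mid}$ and $\Sigma_{LR}$, whether each $c\in\Sigma_{union}$ originates from $\Sigma_{mid}$ or from $\Sigma_{LR}$, and on a query dispatch to the oracle of Claim~\ref{obs:mid_obs}(ii) (running in $\OO(1)$ time) or of Claim~\ref{obs:obs_lr}(ii) (running in $\tilde\OO(\epsilon^{-5/4})$ time) accordingly; hence backtracking costs $\tilde\OO(\epsilon^{-5/4})$, and constructing $\Sigma_{union}$ costs $T_{mid}+T_{LR}=\OO(\epsilon^{-1}\log\tfrac{1}{\epsilon})+\tilde\OO(\epsilon^{-5/4})=\tilde\OO(\epsilon^{-5/4})$.

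There is no genuine obstacle here; the only care needed is error bookkeeping. One must confirm that the defining range of $\Sigma_L$, namely $[0,(1+\tilde\OO(\epsilon^{\lambda+1/2}))L+\tilde\OO(\epsilon)\Sigma(\overline{V})]$, together with the density bound $|M|=\Omega(\epsilon^{-1/2}\log\tfrac{1}{\epsilon})$ — which forces $L=\OO(\Sigma(M))$ and $\epsilon^{\lambda+1/2}L=\tilde\OO(\epsilon^{-2})$ — is compatible with condition~(i), and that every error term appearing in Claims~\ref{obs:mid_obs} and~\ref{obs:obs_lr} (the $\epsilon\Sigma(M)$ from the ``mid'' part, the $\tilde\OO(\epsilon)\Sigma(M)$ and $\tilde\OO(\epsilon^{-2})$ from the ``$LR$'' part) folds into the single bound $\tilde\OO(\epsilon)\Sigma(M)+\tilde\OO(\epsilon^{-2})$ demanded of an $\tilde\OO(\epsilon)$-approximate set of $S(M)$ with additive error $\tilde\OO(\epsilon^{-2})$.
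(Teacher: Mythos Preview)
Your proposal is correct and matches the paper's approach: the paper itself gives no proof beyond the sentence ``follows directly from Claim~\ref{obs:mid_obs} and Claim~\ref{obs:obs_lr}'', and you have unwound precisely that. The three-range decomposition for condition~(iii), the use of the two backtracking oracles both to witness condition~(ii) and to build the combined oracle, and the running-time accounting are all as intended.

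One small point to tighten: your first pass at condition~(i) via $\Sigma_L\subset C_{\overline V}\subset[0,(1+\tilde\OO(\epsilon))\Sigma(\overline V)]$ together with $\Sigma(\overline V)=(1+\tilde\OO(\epsilon^{\lambda+1/2}))\Sigma(M)$ does \emph{not} yield $[0,(1+\tilde\OO(\epsilon))\Sigma(M)]$, since $\epsilon^{\lambda+1/2}$ is not $\tilde\OO(\epsilon)$ when $\lambda<1/2$. You correctly sense this in your final paragraph: the right route is to use the defining cutoff of $\Sigma_L$ directly, giving $\Sigma_L\subset[0,L+\tilde\OO(\epsilon^{-2})+\tilde\OO(\epsilon)\Sigma(M)]$ with $L\le\Sigma(M)$, which lands you in $[0,(1+\tilde\OO(\epsilon))\Sigma(M)+\tilde\OO(\epsilon^{-2})]$. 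That the additive $\tilde\OO(\epsilon^{-2})$ appears here too is consistent with how the approximate-set-with-additive-error is actually consumed downstream (cf.\ the proof of Corollary~\ref{coro:tree-fashion_}, which uses $C_i^{\max}\le(1+\tilde\OO(\epsilon))\Sigma(X_i)+ERR_i$).
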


		Recall that the time to obtain $\Sigma_{mid}$ and  $\Sigma_{LR}$ is $\tilde\OO(\epsilon^{-\frac{5}{4}})$. Moreover, we have $|\Sigma_{mid}|+|\Sigma_{LR}| = \OO(1/\epsilon)$. Thus the overall time to compute $\Sigma_{union}$ is $\tilde\OO(\epsilon^{-\frac{5}{4}})$, and we have $|\Sigma_{union}|\le | \Sigma_{mid}|+ | \Sigma_{LR}|  = \OO(1/\epsilon)$. Then combine with Claim~\ref{obs:dense_gro_}, we have proved Lemma~\ref{lemma:small_gro_p} for the case that $M$ is a dense small-value group. This completes the proof of Lemma~\ref{lemma:small_gro_p} for the case $M$ is a small-value group, furthermore, completes the proof of Lemma~\ref{lemma:small_gro_p}. Till now, we have completed the proof of Theorem~\ref{the:5/4-fptas}.

		
		\section{An $\tilde\OO(n+\epsilon^{-\frac{3}{2}})$-time weak $(1-\tilde\OO(\epsilon))$-approximation algorithm for SUBSET SUM.}\label{sec:3/2} 
		
		The goal of this section is to prove the following theorem.
		\begin{theorem}\label{the:sub_sum}
			Let $d \in \mathbb{N}_{+}$ be an arbitrary 
			even integer. There is a deterministic weak $(1-\epsilon)$-approximation algorithm for SUBSET SUM running in $\OO((n+\epsilon^{-\frac{3}{2}-\frac{1}{d}})(\log n)^2 (\log \frac{1}{\epsilon})^{\OO(d)})$ time.
		\end{theorem}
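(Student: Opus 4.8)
The plan is to follow the template already used for PARTITION in Section~\ref{sec:5/4_main}, but to replace the single global call to the additive‑combinatorial structure by a per‑group call, which is what will accommodate an arbitrary target $t$. First I would apply Lemma~\ref{lemma:prea} to dispose of $OPT<t/2$ (return $X_t$) and reduce to $OPT\ge t/2$. Fixing an even $d$ and applying Lemma~\ref{obs:obs_pre}, in $\OO((n+\epsilon^{-1})(\log n)^2(\log\tfrac1\epsilon)^{\OO(d)})$ time I obtain a modified instance $(F,\hat t)$ with $\hat t=\Theta(\epsilon^{-3})$ and a linear‑time oracle backtracking from $F$ to $X$; since $OPT\ge t/2$, the optimum of $(F,\hat t)$ is $\Theta(\hat t)$, so by the discussion after Definition~\ref{def:def_for_oracle} it suffices to produce an $(\tilde\OO(\epsilon),\hat t)$‑approximate set of $S(F)$ of cardinality $\OO(\epsilon^{-1})$ together with an $\tilde\OO(\epsilon^{-3/2-1/d})$‑time backtracking oracle. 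Groups $F^{(j;i;k)}_p$ all of whose elements exceed $\hat t$ are useless and discarded, so $2^{p+j}=\OO(\epsilon^{-1})$; as there are only $\tilde\OO(1)$ groups, Corollary~\ref{coro:sub_sum_top} reduces everything to the following per‑group task: for each $F^{(j;i;k)}_p$, build an $(\tilde\OO(\epsilon),\hat t)$‑approximate set of $S(F^{(j;i;k)}_p)$ of cardinality $\OO(\epsilon^{-1})$ plus a backtracking oracle, in $\tilde\OO(\epsilon^{-3/2-1/d})$ time and with additive error $\tilde\OO(\epsilon\hat t)$.

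Fix one group and write $\beta:=2^p\rho^j_k=\Theta(2^{p+j}/\epsilon)$, so $F^{(j;i;k)}_p=\beta\overline F$ where $\overline F=\overline F^{(j;i;k)}_p$ is a set of $(\epsilon,d,d-1)$‑smooth numbers in $\mathbb N_+\cap[\tfrac1{4\epsilon},\tfrac1\epsilon]$; hence $|\overline F|=\OO(\epsilon^{-1})$ and $\Sigma(\overline F)=\OO(\epsilon^{-2})$. Since $S(F^{(j;i;k)}_p)=\beta\,S(\overline F)$, it is enough to build a $(\tilde\OO(\epsilon),\omega)$‑approximate set of $S(\overline F)$ with $\omega:=\lfloor\hat t/\beta\rfloor=\Theta(\epsilon^{-2}/2^{p+j})\in[\Omega(\epsilon^{-1}),\OO(\epsilon^{-2})]$ and then scale by $\beta$. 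If $\omega\ge\Sigma(\overline F)$ the cap is inactive and Lemma~\ref{lemma:e_apx_sm} with $k=d/2$ (admissible since $d$ is even, so $d/2\le d-1$) gives an $\tilde\OO(\epsilon)$‑approximate set of $S(\overline F)$ in $\tilde\OO(\epsilon^{-3/2})$ time with additive error $\tilde\OO(\epsilon\Sigma(\overline F))=\tilde\OO(\epsilon\omega)$. If $\overline F$ is \emph{sparse}, $|\overline F|\le c\,\epsilon^{-1/2}\log\tfrac1\epsilon$, then $\Sigma(\overline F)=\tilde\OO(\epsilon^{-3/2})$, and Lemma~\ref{lemma:tree-alg} with $k=d/2$ applied to the target $\min\{\omega,\Sigma(\overline F)\}$ computes the capped subset‑sums \emph{exactly} in $\tilde\OO(\epsilon^{-1}+\Sigma(\overline F)\epsilon^{1/2}+\epsilon^{-1/d}\Sigma(\overline F))=\tilde\OO(\epsilon^{-3/2-1/d})$ time.

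The only genuinely new case is a \emph{dense} group ($|\overline F|>c\,\epsilon^{-1/2}\log\tfrac1\epsilon$ for a large constant $c$) with $\omega<\Sigma(\overline F)$: here Lemma~\ref{lemma:e_apx_sm} is ruled out because its error $\tilde\OO(\epsilon\Sigma(\overline F))$ can exceed $\tilde\OO(\epsilon\omega)$, and Lemma~\ref{lemma:tree-alg} is ruled out for large $\omega$ because its cost $\tilde\OO(\epsilon^{-1/d}\omega)$ can exceed $\tilde\OO(\epsilon^{-3/2-1/d})$. I would resolve this exactly as the dense small‑value case of Section~\ref{sec:handle_small-gro-}: set $\tilde L:=\Theta\!\big(\Sigma(\overline F)\,\epsilon^{-1/2}\log\tfrac1\epsilon/|\overline F|\big)=\tilde\OO(\epsilon^{-3/2})$; because $|\overline F|$ exceeds a sufficiently large multiple of $\epsilon^{-1/2}\log\tfrac1\epsilon$, the hypothesis of Theorem~\ref{the:dense} holds (with $l=\lceil1/\epsilon\rceil$, $m=|\overline F|$) and the medium interval $(\tilde L,\Sigma(\overline F)-\tilde L)$ is non‑degenerate. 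For $\omega\le\tilde L$, or for $\omega\ge\Sigma(\overline F)-\tilde L$ after replacing $S(\overline F;[0,\omega])$ by the complement $\{\Sigma(\overline F)-s\}$ of $S(\overline F;[\Sigma(\overline F)-\omega,\Sigma(\overline F)])$, the relevant target is $\le\tilde L=\tilde\OO(\epsilon^{-3/2})$, so Lemma~\ref{lemma:tree-alg} with $k=d/2$ runs in $\tilde\OO(\epsilon^{-3/2-1/d})$; for the medium band $\tilde L<\omega<\Sigma(\overline F)-\tilde L$, Theorem~\ref{the:dense} builds in $\tilde\OO(\epsilon^{-3/2})$ a structure that answers each SUBSET SUM query in $\OO(\log\tfrac1\epsilon)$ time, and querying it at the $\OO(\epsilon^{-1})$ targets $0,\lfloor\epsilon\omega\rfloor,\lfloor2\epsilon\omega\rfloor,\dots,\omega$ yields an $(\OO(\epsilon),\omega)$‑approximate set of $S(\overline F;(\tilde L,\omega])$ of cardinality $\OO(\epsilon^{-1})$ with an $\tilde\OO(1)$‑time oracle (record in a dictionary which query produced each value); union it with $S(\overline F;[0,\tilde L])$ from Lemma~\ref{lemma:tree-alg}. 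In every subcase, scaling by $\beta$ gives an $(\tilde\OO(\epsilon),\hat t)$‑approximate set of $S(F^{(j;i;k)}_p)$ of cardinality $\OO(\epsilon^{-1})$, additive error $\tilde\OO(\epsilon\hat t)$, and an $\tilde\OO(\epsilon^{-3/2-1/d})$‑time backtracking oracle.

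Finally I would assemble: Corollary~\ref{coro:sub_sum_top} combines the $\tilde\OO(1)$ per‑group approximate sets into an $(\tilde\OO(\epsilon),\hat t)$‑approximate set of $S(F)$ of cardinality $\OO(\epsilon^{-1})$ in $\tilde\OO(\epsilon^{-1})$ extra time, with total additive error $\tilde\OO(\epsilon\hat t)$ and an $\tilde\OO(\epsilon^{-3/2-1/d})$‑time backtracking oracle to $F$; composing with the linear‑time oracle of Lemma~\ref{obs:obs_pre} yields an $\tilde\OO(\epsilon^{-3/2-1/d})$‑time oracle to $X$, and reading off the largest element of the approximate set that is $\le(1+\tilde\OO(\epsilon))\hat t$ produces, exactly as in the argument after Definition~\ref{def:def_for_oracle}, a subset $X'\subset X$ with $(1-\tilde\OO(\epsilon))OPT\le\Sigma(X')\le(1+\tilde\OO(\epsilon))t$; running the whole procedure with $\epsilon'=\epsilon/\mathrm{polylog}$ upgrades this to a genuine weak $(1-\epsilon)$‑approximation, changing the running time only by a polylogarithmic factor. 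Adding preprocessing, the total is $\OO((n+\epsilon^{-3/2-1/d})(\log n)^2(\log\tfrac1\epsilon)^{\OO(d)})$. The hard part is the medium‑band dense case, and its resolution is the Galil--Margalit structure of Theorem~\ref{the:dense} — the same additive‑combinatorial engine that powers PARTITION, but invoked once per group with a per‑group target rather than once globally; this localization is exactly what tolerates an arbitrary global target $t$, at the cost of the extra $\epsilon^{-1/d}$ factor, which enters through the $\epsilon^{-1/d}\omega$ term of Lemma~\ref{lemma:tree-alg} with $\omega$ as large as $\tilde\OO(\epsilon^{-3/2})$.
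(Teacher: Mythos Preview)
Your overall architecture matches the paper's: preprocess via Lemma~\ref{obs:obs_pre}, reduce to per-group $(\tilde\OO(\epsilon),\hat t)$-approximate sets via Corollary~\ref{coro:sub_sum_top}, handle sparse groups by a direct FFT-type computation, and handle dense groups by combining Theorem~\ref{the:dense} for the medium band with Lemma~\ref{lemma:tree-alg} (at $k=d/2$) for the short tails. One of your dense subcases, however, does not work as written.

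In the case $\omega\ge\Sigma(\overline F)-\tilde L$ you claim a ``complement trick'' reduces the relevant Lemma~\ref{lemma:tree-alg} target to at most $\tilde L$. It does not: the bijection $s\mapsto\Sigma(\overline F)-s$ identifies $S(\overline F;[0,\omega])$ with $S(\overline F;[\Sigma(\overline F)-\omega,\Sigma(\overline F)])$, but both of these ranges have length $\omega$, not $\Sigma(\overline F)-\omega$, so neither is captured by a single capped call with target $\le\tilde L$. In this regime $\omega$ is close to $\Sigma(\overline F)$ and you still need the entire middle band $(\tilde L,\Sigma(\overline F)-\tilde L)$. The paper sidesteps this by not casing on $\omega$ at all: it \emph{always} computes $S(\overline F;[0,\tilde L])$ via Lemma~\ref{lemma:tree-alg}, obtains $S(\overline F;[\Sigma(\overline F)-\tilde L,\Sigma(\overline F)])$ by complement, queries Theorem~\ref{the:dense} on an $\epsilon\omega$-grid inside $(\tilde L,\Sigma(\overline F)-\tilde L)\cap[0,\omega]$, and finally intersects the union with $[0,\omega]$; this covers every position of $\omega$ uniformly. (An alternative patch: since density forces $\tilde L\le c'\,\Sigma(\overline F)$ for a small constant $c'$, the condition $\omega\ge\Sigma(\overline F)-\tilde L$ already gives $\omega=\Theta(\Sigma(\overline F))$, so your first case via Lemma~\ref{lemma:e_apx_sm} applies with error $\tilde\OO(\epsilon\Sigma(\overline F))=\tilde\OO(\epsilon\omega)$.) Two smaller slips, both harmless: the per-group approximate set has cardinality $\tilde\OO(\epsilon^{-3/2})$, not $\OO(\epsilon^{-1})$, because $S(\overline F;[0,\tilde L])$ alone can be that large (Corollary~\ref{coro:sub_sum_top} only pays $\sum_i|C_i|$, which stays $\tilde\OO(\epsilon^{-3/2})$, and its output still has cardinality $\OO(\epsilon^{-1})$); and for sparse groups the paper uses Lemma~\ref{lemma:sumset} directly (time $\tilde\OO(\epsilon^{-3/2})$, no $\epsilon^{-1/d}$ loss), which is simpler than your Lemma~\ref{lemma:tree-alg} route.
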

		
		\noindent\textbf{Remark.}  Taking $d=\OO(\sqrt{\frac{\log \frac{1}{\epsilon}}{\log\log \frac{1}{\epsilon}}})$, the overall running time is $\tilde{\OO}(n+\epsilon^{-\frac{3}{2}})$ where $\tilde{\OO}$ hides a factor of $2^{\OO(\sqrt{\log \frac{1}{\epsilon}\log\log\frac{1}{\epsilon}})}=(\frac{1}{\epsilon})^{o(1)}$. 
		
		Given a multiset $X = \{x_1,x_2,\cdots,x_n \}\subset \mathbb{N}$ and a target $t \in \mathbb{R}_{\ge 0}$, we let $X^*$ and $OPT$ be an optimal solution and the optimal objective value of SUBSET SUM instance $(X,t)$, respectively. By Lemma~\ref{lemma:prea}, we may assume that $OPT \ge {t}/{2}$. Then a subset $Y\subset X$ satisfying $|\Sigma(Y)-OPT| \le \tilde\OO(\epsilon)t$ is a weak $(1-\tilde\OO(\epsilon))$-approximation solution of $(X,t)$. Thus the following lemma implies Theorem~\ref{the:sub_sum} directly. 
		
		\begin{lemma}\label{lemma:sumset-app-stru_}
			Let $d \in \mathbb{N}_{+}$ be an arbitrary 
			even number. Given a multiset $X = \{x_1,x_2,\cdots,x_n\} \subset \mathbb{N}$, in $\tilde\OO(n+\epsilon^{-\frac{3}{2}-\frac{1}{d}})$ processing time, we can 
			\begin{itemize}
				\item[(i).] Compute an $(\tilde\OO(\epsilon),t)$-approximate set with cardinality of $\tilde\OO(\epsilon^{-\frac{3}{2}})$
				for $S(X)$;
				\item[(ii).] Meanwhile build an $\tilde\OO(n+\epsilon^{-\frac{3}{2}-\frac{1}{d}})$-time oracle for backtracking from this approximate set to $X$.
			\end{itemize}
		\end{lemma}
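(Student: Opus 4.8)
The plan is to follow the skeleton of the PARTITION algorithm of Section~\ref{sec:5/4_main} while carrying a $t$-capped analysis, so that errors are measured against $t$ rather than against $\Sigma(X)$ (which for SUBSET SUM can be far larger). First I would use Lemma~\ref{lemma:prea} to reduce to $OPT\ge t/2$, then apply the preprocessing of Lemma~\ref{obs:obs_pre} (with the even parameter $d$) to replace $(X,t)$ by the structured instance $(F,\hat t)$ with $\hat t=\Theta(\epsilon^{-3})$, where $F=\dot{\cup}\,F^{(j;i;k)}_p$, each $F^{(j;i;k)}_p=2^p\rho^j_k\,\overline{F}^{(j;i;k)}_p$ and $\overline{F}^{(j;i;k)}_p\subset\mathbb{N}_{+}\cap[\tfrac1{4\epsilon},\tfrac1\epsilon]$ a set of $(\epsilon,d,d-1)$-smooth integers, together with a linear-time backtracking oracle $F\to X$. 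Rescaling subset-sums of $F$ by $\tfrac{\epsilon^3 t}{4}$ and composing with this oracle, and using that there are only polylogarithmically many subgroups (so Corollary~\ref{coro:sub_sum_top} stitches them with $\tilde\OO(\epsilon\ell+\epsilon^3\ell^3)=\tilde\OO(\epsilon)$ extra error), it suffices to produce, for each single subgroup $G=F^{(j;i;k)}_p$, a $(\tilde\OO(\epsilon),\hat t)$-approximate set of $S(G)$ of cardinality $\tilde\OO(\epsilon^{-3/2})$ together with a $\tilde\OO(\epsilon^{-3/2-1/d})$-time backtracking oracle from it to $G$.

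For a single subgroup, write $\overline{G}:=\overline{F}^{(j;i;k)}_p$ and $\omega_G:=\hat t/(2^p\rho^j_k)=\Theta(\epsilon^{-2}2^{-(p+j)})$, so that a subset of $G$ summing to at most $\hat t$ corresponds to a subset of $\overline{G}$ summing to at most $\omega_G$, and the $G$-scale budget $\tilde\OO(\epsilon^{-2})=\tilde\OO(\epsilon)\hat t$ reads as $\tilde\OO(\epsilon)\omega_G$ in the $\overline{G}$-scale. I would split into four regimes. (i) If $\Sigma(\overline{G})=\tilde\OO(\epsilon^{-3/2})$, compute $S(\overline{G})$ exactly by Lemma~\ref{lemma:sumset} (in its subset-sums form) in $\tilde\OO(\epsilon^{-3/2})$ time, with no error. (ii) If $\Sigma(\overline{G})$ is larger but $\Sigma(\overline{G})\le\omega_G$ (no overflow), compute an $\tilde\OO(\epsilon)$-approximate set of $S(\overline{G})$ by Lemma~\ref{lemma:e_apx_sm} with balancing exponent $k=d/2$ in $\tilde\OO(\epsilon^{-3/2})$ time; its error $\tilde\OO(\epsilon)\Sigma(\overline{G})\le\tilde\OO(\epsilon)\omega_G$ is within budget. (iii) If $\Sigma(\overline{G})>\omega_G$ (overflow) but $\omega_G\le\epsilon^{-3/2}$, compute $S(\overline{G};[0,\omega_G])$ exactly by Lemma~\ref{lemma:tree-alg} with cap $\omega_G$ and $k=d/2$, for which $\Sigma(\overline{G})\epsilon^{k/d}=\OO(\epsilon^{-3/2})$ and $2^{2k+1}\epsilon^{-1/d}\omega_G=\tilde\OO(\epsilon^{-3/2-1/d})$. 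In each of (i)--(iii), rescaling by $2^p\rho^j_k$ and intersecting with $[0,\hat t]$ gives a set of cardinality $\tilde\OO(\epsilon^{-3/2})$ (this is where that bound in the statement comes from), and the oracles from Lemma~\ref{lemma:sumset}/Lemma~\ref{lemma:e_apx_sm}/Lemma~\ref{lemma:tree-alg} compose through the rescaling.

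The hard regime is (iv): overflow, $\Sigma(\overline{G})>\omega_G$, and $\omega_G>\epsilon^{-3/2}$; now a plain approximate subset-sum of $\overline{G}$ is too coarse (its error $\tilde\OO(\epsilon)\Sigma(\overline{G})$ can exceed $\tilde\OO(\epsilon)\omega_G$) and an exact $\omega_G$-capped computation is too slow, but $|\overline{G}|\ge\epsilon\Sigma(\overline{G})$ is large, so $\overline{G}$ is ``dense''. Here I would reuse the three-way decomposition of Section~\ref{sec:handle_small-gro-}: set $L=\Theta(\Sigma(\overline{G})\sqrt{1/\epsilon}\log(1/\epsilon)/|\overline{G}|)=\tilde\Theta(\epsilon^{-3/2})$; read the ``medium'' part $(L,\Sigma(\overline{G})-L]\cap[0,\omega_G]$ off the Galil--Margalit structure of Theorem~\ref{the:dense} (with $l=1/\epsilon$, $m=|\overline{G}|$, whose preprocessing costs $\tilde\OO(\epsilon^{-1})$ here) at $\OO(\epsilon^{-1})$ evenly spaced targets, incurring only $\tilde\OO(\epsilon)\omega_G$ discretization error; handle the ``low'' part $[0,L]$ by re-rounding $\overline{G}$ (after a constant rescaling to fit the interval hypothesis of Lemma~\ref{lemma:smooth_appro}) to semi-smooth numbers with the scale-dependent relative error of Section~\ref{sec:handle_small-gro-}, then computing the $L$-capped subset-sums of each of the polylogarithmically many resulting smooth families by Lemma~\ref{lemma:tree-alg} with a near-$d/2$ exponent — fast precisely because regime (iv) keeps the relevant smooth-part sums and caps $\tilde\OO(\epsilon^{-3/2})$ — and merging these at cap $L$ by the capped-sumset routine Lemma~\ref{lemma:cap_sumset}; handle the ``high'' part $[\Sigma(\overline{G})-L,\Sigma(\overline{G})]$ symmetrically by complementation $s\mapsto\Sigma(\overline{G})-s$. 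Since a sum $\le L$ uses at most $\tilde\OO(\epsilon^{-1/2})$ elements of $\overline{G}$, and re-rounding moves each such element by $\tilde\OO(\epsilon^{-3/2})$ in the $G$-scale, the total re-rounding error is $\tilde\OO(\epsilon^{-2})$, within budget.

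Finally, Corollary~\ref{coro:sub_sum_top} merges the $(\tilde\OO(\epsilon),\hat t)$-approximate sets of the subgroups into a $(\tilde\OO(\epsilon),\hat t)$-approximate set of $S(F)$ with a $\tilde\OO(\epsilon^{-3/2-1/d})$-time backtracking oracle to $F$; composing with the Lemma~\ref{obs:obs_pre} oracle $F\to X$ and rescaling by $\tfrac{\epsilon^3 t}{4}$ yields the claimed $(\tilde\OO(\epsilon),t)$-approximate set of $S(X)$ and its oracle, all in $\tilde\OO(n+\epsilon^{-3/2-1/d})$ time. The main obstacle I expect is regime (iv): verifying that $\omega_G$ really sits strictly above $L=\tilde\Theta(\epsilon^{-3/2})$ inside the Galil--Margalit medium window (subgroups with $\omega_G\le L$ being routed into regime (iii)), and tuning the re-rounding exponent as a function of $2^{p+j}$ so that the re-rounding error, the running time of Lemma~\ref{lemma:tree-alg} on the smooth parts, the Lemma~\ref{lemma:cap_sumset} merge cost, and the intermediate cardinalities are all simultaneously $\tilde\OO(\epsilon^{-3/2-1/d})$ and within the $\tilde\OO(\epsilon)\hat t$ budget — the parameterized bookkeeping that replaces the much simpler $\Sigma(\overline{G})\approx\hat t$ situation of the PARTITION proof.
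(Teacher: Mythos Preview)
Your high-level structure (preprocess via Lemma~\ref{obs:obs_pre}, handle each subgroup, merge via Corollary~\ref{coro:sub_sum_top}) matches the paper, and your regimes (i)--(iii) are fine. The divergence is your regime (iv), which you call ``the main obstacle''; in the paper this case is in fact the \emph{simplest}, and your proposed second pass through Lemma~\ref{lemma:smooth_appro} is unnecessary and, as stated, does not help.

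The point you are missing is that $\overline G$ is \emph{already} a set of $(\epsilon,d,d-1)$-smooth integers in $[\tfrac1{4\epsilon},\tfrac1\epsilon]$ --- this is exactly condition~(\uppercase\expandafter{\romannumeral4})(c) of Lemma~\ref{obs:obs_pre} --- so Lemma~\ref{lemma:tree-alg} applies to $\overline G$ directly. Since $|\overline G|\le 1/\epsilon$ one always has $\Sigma(\overline G)=O(\epsilon^{-2})$, and since $\overline G\subset[\tfrac1{4\epsilon},\tfrac1\epsilon]$ the Galil--Margalit threshold satisfies $L=\tilde\Theta(\epsilon^{-3/2})$ \emph{regardless} of $\omega_G$ or whether there is overflow. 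Taking $k=d/2$ in Lemma~\ref{lemma:tree-alg} with cap $L$ gives $\Sigma(\overline G)\,\epsilon^{1/2}=O(\epsilon^{-3/2})$ and $2^{d+1}\epsilon^{-1/d}L\log L=\tilde O(\epsilon^{-3/2-1/d})$, and the output is $S(\overline G;[0,L])$ \emph{exactly}, with no additive error. The high tail is then $\{\Sigma(\overline G)-s:s\in S(\overline G;[0,L])\}$, and the middle is read off Theorem~\ref{the:dense} at $O(\epsilon^{-1})$ evenly spaced targets in $[0,\omega_G]$. This is the content of Claims~\ref{obs:3/2_mid_1}--\ref{obs:dense_gro}, and it handles your regimes (ii)--(iv) uniformly with a single dense-case argument.

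Your re-rounding of $\overline G$ is off: $\overline G$ already sits at scale $\Theta(\epsilon^{-1})$, so in the notation of Lemma~\ref{lemma:smooth_appro} you are forced to $\lambda=-1$, hence $\alpha\in[0,1)$ and a new common divisor $\Theta(\epsilon^{-\alpha})$ no larger than what you already have. Perhaps you intended to re-round $G$ at the full scale, as Section~\ref{sec:handle_small-gro-} does for PARTITION; but that detour is needed there only because Step~2 of Section~\ref{sec:handle_small-gro-} uses the \emph{uncapped} Lemma~\ref{lemma:e_apx_sm}, whose running time scales with $\Sigma(\overline G)$. For SUBSET SUM the tail computation is \emph{capped} at $L=\tilde O(\epsilon^{-3/2})$, which is small enough for Lemma~\ref{lemma:tree-alg} to run in the target time on $\overline G$ directly. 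Dropping the re-rounding collapses your four regimes into the paper's two (sparse $|\overline G|=\tilde O(\epsilon^{-1/2})$ via Lemma~\ref{lemma:sumset}; dense via Theorem~\ref{the:dense} plus Lemma~\ref{lemma:tree-alg}) and eliminates what you identified as the obstacle.
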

		
		Recall Lemma~\ref{obs:obs_pre},  in $\OO((|X+\frac{1}{\epsilon})(\log |X)^2 (\log \frac{1}{\epsilon})^{\OO(d)})$ time, we can reduce $(X,t)$ to a SUBSET SUM instance $(F,\hat{t})$ satisfying conditions $(\uppercase\expandafter{\romannumeral1})(\uppercase\expandafter{\romannumeral2})(\uppercase\expandafter{\romannumeral2})(\uppercase\expandafter{\romannumeral4})$ 
		and meanwhile build an oracle for backtracking from $F$ to $X$ (see Lemma~\ref{obs:obs_pre} in Section~\ref{sec:proce}). Condition $(\uppercase\expandafter{\romannumeral2})$ claims that the optimal objective value of $(F,\hat{t})$ is at least $\frac{4(1-\epsilon)}{\epsilon^3\cdot t}\cdot (OPT-\frac{\epsilon t}{2})$, recall that $OPT \ge t/2$, then given any weak $(1-\tilde\OO(\epsilon))$-approximation of $(F,\hat{t})$, in linear time, the oracle will return a weak $(1-\tilde\OO(\epsilon))$-approximation of $(X,t)$. It thus suffices to consider $(F,\hat{t})$, that is, Lemma~\ref{lemma:sumset-app-stru_} and hence Theorem~\ref{the:sub_sum}, follows from the following Lemma~\ref{lemma:sumset-app-stru}. 	

		\begin{lemma}\label{lemma:sumset-app-stru}
			Given any SUBSET SUM instance $(X,t)$, where the optimal objective value of $(X,t)$ is at least $t/2$. Let $d\in\mathbb{N}_{+}$ be an arbitrary fixed even number and let $(F,\hat{t})$ a modified instance returned by Lemma~\ref{obs:obs_pre}, where $F$ and $\hat{t}$ satisfy the followings:
			\begin{enumerate}
				\item
				$\Sigma(F) \le \frac{4(1+\epsilon)\Sigma(X)}{\epsilon^3 t}$ and $\hat{t}= \frac{4(1+\epsilon)}{\epsilon^3}$.
				\item 
				$|F| \le |X|$ and $F$ has been divided into $\OO((\log |X|)^2 (\log \frac{1}{\epsilon})^{\OO(d)})$ subgroups: $F^{(j;i;k)}_p$'s. 
				\item 
				Each subgroup $F^{(j;i;k)}_p$ satisfies the followings:
				\begin{itemize}
					\item each element in $F^{(j;i;k)}_p$ is distinct;
					\item $F^{(j;i;k)}_p \subset  [\frac{(1-\epsilon)2^{p+j-1}}{\epsilon^{2}}, \frac{(1+\epsilon)2^{p+j}}{\epsilon^{2}})$;
					\item $F^{(j;i;k)}_p = 2^p \rho^j_k \overline{F}^{(j;i;k)}_p$, where $\overline{F}^{(j;i;k)}_p \subset \mathbb{N}_{+}\cap[\frac{1}{4\epsilon},\frac{1}{\epsilon}]$ and $\overline{F}^{(j;i;k)}_p$ is a set of $(\epsilon,d,d-1)$-{smooth numbers}, that is, every element in $\overline{F}^{(j;i;k)}_p$ has been factorized as $h_1 h_2 \cdots h_{d}$, where $h_{d}\in \mathbb{N}_{+}\cap [1,2\epsilon^{-\frac{1}{d}}]$ and $h_{i} \in \mathbb{N}_{+} \cap [\frac{1}{2}\epsilon^{-\frac{1}{d}},2{\epsilon^{-\frac{1}{d}}}]$ for $i=1,2,\cdots,d-1$. 
				\end{itemize}
			\end{enumerate} 	
			Then in $\tilde\OO(\epsilon^{-\frac{3}{2}-\frac{1}{d}})$ processing time, we can
			\begin{itemize}
				\item[(i).]  Compute an $(\tilde\OO(\epsilon),\hat{t})$-approximate set $C$ with cardinality of $\tilde\OO(\epsilon^{-\frac{3}{2}})$ for $S(F)$.
				\item[(ii).] Meanwhile build an $\tilde\OO(\epsilon^{-\frac{3}{2}-\frac{1}{d}})$-time oracle for backtracking from $C$ to $F$. 
				The oracle actually works as follows: given any $c\in C$, in $\tilde\OO(\epsilon^{-\frac{3}{2}-\frac{1}{d}})$ processing time, the oracle will return $U^{(j;i;k)}_p \subset F^{(j;i;k)}_p$ for every $F^{(j;i;k)}_p$. Let $F'$ be the multiset-union of all $U^{(j;i;k)}_p$'s, we have $F' \subset F$ and $|\Sigma(F')-c| \le \tilde\OO(\epsilon)\hat{t}$.
			\end{itemize}
		\end{lemma}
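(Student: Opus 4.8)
The plan is to reuse the divide-and-conquer skeleton from the proof of Lemma~\ref{lemma:5/4-sub-pro}, but with the \emph{capped} combination result Corollary~\ref{coro:sub_sum_top} in place of the uncapped Corollary~\ref{coro:tree-fashion_}, since now the target $\hat t=\Theta(\epsilon^{-3})$ may be much smaller than $\Sigma(F)$. First I would apply Corollary~\ref{coro:sub_sum_top} with $\omega=\hat t$: because $F$ is partitioned into only $\ell=\OO((\log|X|)^2(\log\frac{1}{\epsilon})^{\OO(d)})$ subgroups, it suffices to produce for each $F^{(j;i;k)}_p$ an $(\tilde\OO(\epsilon),\hat t)$-approximate set of $S(F^{(j;i;k)}_p)$ of cardinality $\tilde\OO(\epsilon^{-3/2})$ together with a $\tilde\OO(\epsilon^{-3/2-1/d})$-time backtracking oracle; Corollary~\ref{coro:sub_sum_top} then assembles these into an $(\tilde\OO(\epsilon\ell+\epsilon^3\ell^3),\hat t)=(\tilde\OO(\epsilon),\hat t)$-approximate set $C$ of $S(F)$ of cardinality $\OO(\epsilon^{-1})\subset\tilde\OO(\epsilon^{-3/2})$ and an oracle from $C$ to $F$, all within $\tilde\OO(\epsilon^{-3/2-1/d})$ time. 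So fix one subgroup $M=F^{(j;i;k)}_p=\beta\overline M$ with $\beta=2^p\rho^j_k=\Theta(2^{p+j}/\epsilon)$ and $\overline M\subset\mathbb N_{+}\cap[\frac{1}{4\epsilon},\frac{1}{\epsilon}]$ a set of $\OO(\epsilon^{-1})$ distinct $(\epsilon,d,d-1)$-smooth integers, so that $\Sigma(\overline M)=\OO(\epsilon^{-2})$. Writing $\omega:=\hat t/\beta$, after rescaling by $\beta$ the task becomes: compute an $(\tilde\OO(\epsilon),\omega)$-approximate set of $S(\overline M)$ of cardinality $\tilde\OO(\epsilon^{-3/2})$ with a $\tilde\OO(\epsilon^{-3/2-1/d})$-time oracle; we may replace $\omega$ by $\min\{\omega,\Sigma(\overline M)\}$, and output $\{0\}$ if $\omega<\overline{M}^{\min}$.

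The heart of the argument is a case split on the Galil--Margalit density threshold. Put $\tilde L:=100\,\Sigma(\overline M)\sqrt{1/\epsilon}\log(1/\epsilon)/|\overline M|$; since $\Sigma(\overline M)\le|\overline M|/\epsilon$ one has $\tilde L=\tilde\OO(\epsilon^{-3/2})$ \emph{unconditionally}, which is the key observation. \textbf{Case A} ($\min\{\omega,\Sigma(\overline M)\}\le c\,\tilde L$ for a suitably large absolute constant $c$; this subsumes all ``sparse'' subgroups with $\Sigma(\overline M)=\tilde\OO(\epsilon^{-3/2})$ as well as all subgroups with small cap): I would apply Lemma~\ref{lemma:tree-alg} with $\bar d=d-1$ and $k=d/2$ to compute $S(\overline M;[0,\min\{\omega,\Sigma(\overline M)\}])$ \emph{exactly}. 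Using $\Sigma(\overline M)\epsilon^{1/2}=\OO(\epsilon^{-3/2})$ and the cap bound $\tilde\OO(\epsilon^{-3/2})$, this runs in $\tilde\OO(\epsilon^{-1}+|\overline M|+\Sigma(\overline M)\epsilon^{1/2}+\epsilon^{-1/d}\cdot\tilde\OO(\epsilon^{-3/2}))=\tilde\OO(\epsilon^{-3/2-1/d})$ time, outputs $\le\tilde\OO(\epsilon^{-3/2})$ elements, and Lemma~\ref{lemma:tree-alg} supplies the oracle. An exact capped subset-sum set is trivially an $(\tilde\OO(\epsilon),\omega)$-approximate set, which after rescaling by $\beta$ is the desired $(\tilde\OO(\epsilon),\hat t)$-approximate set of $S(M)$.

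\textbf{Case B} (dense: $\Sigma(\overline M)>c\,\tilde L$, hence $|\overline M|>1000\sqrt{1/\epsilon}\log(1/\epsilon)$ and $\tilde L<\Sigma(\overline M)/2$, and also $\omega>c\,\tilde L$): I would combine three ingredients, mirroring Claims~\ref{obs:mid_obs}--\ref{obs:dense_gro_}. (i) \emph{Extreme ranges}: compute $\Sigma_{\mathrm{low}}:=S(\overline M;[0,\tilde L])$ exactly by Lemma~\ref{lemma:tree-alg} with cap $\tilde L=\tilde\OO(\epsilon^{-3/2})$ and $k=d/2$ in $\tilde\OO(\epsilon^{-3/2-1/d})$ time with its oracle, and cover $S(\overline M)\cap[\Sigma(\overline M)-\tilde L,\omega]$ by $\{\Sigma(\overline M)-s:s\in\Sigma_{\mathrm{low}}\}\cap[0,\omega]$, backtracking $\Sigma(\overline M)-s$ by backtracking $s$ and taking the complement in $\overline M$; both sets have $\le\tilde\OO(\epsilon^{-3/2})$ points. (ii) \emph{Medium range}: build the Galil--Margalit structure of Theorem~\ref{the:dense} for $\overline M$ with $l=1/\epsilon$ in $\tilde\OO(\epsilon^{-1})$ preprocessing time, then solve subset-sum for each target in an arithmetic progression of step $\Theta(\epsilon\omega)$ inside $(\tilde L,\min\{\omega,\Sigma(\overline M)-\tilde L\}]$ together with the endpoint $\lfloor\Sigma(\overline M)-\tilde L\rfloor$; this is $\OO(\epsilon^{-1})$ targets, each solved in $\tilde\OO(1)$ time, and storing the returned subsets gives the oracle. (iii) Output the union of the three sets, rescaled by $\beta$. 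A subset sum $\le\tilde L$ (resp.\ in $[\Sigma(\overline M)-\tilde L,\omega]$) is reproduced exactly by (i); a subset sum strictly between is within one step $\Theta(\epsilon\omega)$ of a stored point by the optimality guarantee of Theorem~\ref{the:dense}; since rescaling by $\beta$ turns an $(\tilde\OO(\epsilon),\omega)$-approximate set of $S(\overline M)$ into an $(\tilde\OO(\epsilon),\beta\omega)=(\tilde\OO(\epsilon),\hat t)$-approximate set of $S(M)=\beta S(\overline M)$, we are done, and the three oracles compose.

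\textbf{The main obstacle} is the additive-error accounting, specifically the choice of scale for the medium range. The naive move is to sample the Galil--Margalit targets at granularity $\Theta(\epsilon\,\Sigma(\overline M))$ exactly as Claim~\ref{obs:mid_obs} does for PARTITION; but for a ``tall and thin'' subgroup one can have $\Sigma(\overline M)=\Theta(2^{p+j})\,\omega$ with $2^{p+j}$ as large as $\Theta(1/\epsilon)$, and then that granularity, after rescaling by $\beta$, gives error $\Theta(\epsilon\,\Sigma(M))\gg\tilde\OO(\epsilon\hat t)$ and overshoots the budget once the $\ell$ subgroups are merged by Corollary~\ref{coro:sub_sum_top}. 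The remedy, which has to be threaded through the whole proof, is to measure everything against the cap $\omega$ (equivalently $\hat t$) rather than the total $\Sigma(\overline M)$: sample at step $\Theta(\epsilon\omega)$, and exploit that the extreme-range cap $\tilde L$ is \emph{automatically} $\tilde\OO(\epsilon^{-3/2})$ because $\Sigma(\overline M)\le|\overline M|/\epsilon$, so that the exact capped algorithm of Lemma~\ref{lemma:tree-alg} alone disposes of the extreme ranges within both budgets and no re-rounding via Lemma~\ref{lemma:smooth_appro} (as in Claim~\ref{obs:obs_lr}) is needed here. A residual, purely mechanical, obstacle is checking that the $2^{\OO(d)}$, $\epsilon^{-1/d}$ and $(\log\frac1\epsilon)^{\OO(d)}$ factors are absorbed by $\tilde\OO(\cdot)$ and that composing the three oracles with the rescaling and with Corollary~\ref{coro:sub_sum_top} stays within $\tilde\OO(\epsilon^{-3/2-1/d})$.
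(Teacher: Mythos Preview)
Your proposal is correct and follows essentially the same approach as the paper: reduce via Corollary~\ref{coro:sub_sum_top} to one subgroup, then for each subgroup use Lemma~\ref{lemma:tree-alg} with $k=d/2$ on the capped extreme ranges $[0,\tilde L]$ and $[\Sigma(\overline M)-\tilde L,\Sigma(\overline M)]$ (exploiting $\tilde L=\tilde\OO(\epsilon^{-3/2})$ unconditionally) and Galil--Margalit sampled at step $\Theta(\epsilon\omega)$ on the middle range; this is exactly the paper's Lemma~\ref{lemma:apx_small-group} and Claims~\ref{obs:3/2_mid_1}--\ref{obs:dense_gro}. The only cosmetic difference is that the paper handles the sparse case $|\overline M|=\OO(\epsilon^{-1/2}\log\frac1\epsilon)$ separately with a plain FFT (Lemma~\ref{lemma:sumset}) rather than folding it into your Case~A via Lemma~\ref{lemma:tree-alg}, and lets the medium range become empty when $\omega\le L$ rather than isolating that as part of Case~A; both organizations yield the same bounds.
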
 
		The rest of this section is dedicated to proving Lemma~\ref{lemma:sumset-app-stru}.  
		Recall Corollary~\ref{coro:sub_sum_top}, which allows us to build the approximation set of the union of $F^{(j;i;k)}_p$'s from the approximate set of each $F^{(j;i;k)}_p$. 
		Towards proving Lemma~\ref{lemma:sumset-app-stru}, we only need to derive an algorithm that can solve the following \textbf{problem}-$\mathscr{Q}$ in $\tilde\OO(\epsilon^{-\frac{3}{2}-\frac{1}{d}})$-time.
		
		\textbf{problem}-$\mathscr{Q}$: for every $F_p^{(j;i;k)}$,  compute an $(\tilde\OO(\epsilon),\hat{t})$-approximate set with cardinality of $\tilde\OO(\epsilon^{-\frac{3}{2}})$ for $S(F_p^{(j;i;k)})$ and build an $\tilde\OO(\epsilon^{-\frac{3}{2}-\frac{1}{d}})$-time oracle for backtracking from this approximate set to $F_p^{(j;i;k)}$. 
		
		To solve \textbf{problem}-$\mathscr{Q}$, we only need to prove the following lemma~\ref{lemma:apx_small-group}, where $M$ represents an arbitrary $F_p^{(j;i;k)}$.
		
		\begin{lemma}\label{lemma:apx_small-group}
			Suppose we are given $M\subset \mathbb{R}_{\ge 0}$ satisfying the following conditions:
			\begin{enumerate}
				\item Each element in $M$ is distinct; 
				\item $M = \beta \tilde{M}$, where $\beta$ and $\tilde{M}$ satisfy the followings:
				\begin{itemize}
					\item   $\beta \in \Theta(\frac{1}{\epsilon^{1+\lambda}})$ and $\beta\tilde{M} \subset [\frac{(1-\epsilon)}{\epsilon^{2+\lambda}}, \frac{2(1+\epsilon)}{\epsilon^{2+\lambda}} ]$ where $\lambda \ge 0$; 
					\item  $\tilde{M} \subset \mathbb{N}_{+}\cap [\frac{1}{4\epsilon},\frac{1}{\epsilon}]$ and $\tilde{M}$ is a set of $(\epsilon,d,d-1)$-smooth numbers, that is , every element in $\tilde{M}$ has been factorized as $h_1 h_2 \cdots h_d$, where $ h_d \in  \mathbb{N}\cap[1,{2\epsilon^{-\frac{1}{d}}}]$ and 
					$h_i \in \mathbb{N}\cap [\frac{\epsilon^{-\frac{1}{d}}}{2},{2\epsilon^{-\frac{1}{d}}}] $ for $i = 1,2,\cdots,d-1$.
				\end{itemize}
			\end{enumerate}
			Here $d\in \mathbb{N}_{+}$ is an arbitrary fixed 
			even integer.   Then given any $\omega = \Theta(\epsilon^{-3})$, in  $\OO(\frac{d}{\epsilon}+ \frac{2^{{d}}}{\epsilon^{\frac{3}{2}+\frac{1}{d}}} (\log \frac{1}{\epsilon})^3)$ processing time, we can compute an $(\epsilon,\omega)$-approximate set with cardinality of $\OO(\epsilon^{-\frac{3}{2}}\log \frac{1}{\epsilon})$ for $S(M)$ and build an $\OO(\frac{2^{{d}}}{\epsilon^{\frac{3}{2}+\frac{1}{d}}} (\log \frac{1}{\epsilon})^3)$-time oracle for backtracking from this approximate set to $M$.
		\end{lemma}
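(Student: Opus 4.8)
The plan is to mirror the proof of Lemma~\ref{lemma:small_gro_p}, but now producing a \emph{capped} approximate set and exploiting the larger time budget $\tilde\OO(\epsilon^{-\frac32-\frac1d})$. Factor out the common divisor: since $M=\beta\tilde M$, we have $S(M;[0,\omega])=\beta\cdot S(\tilde M;[0,\omega/\beta])$, so it suffices to treat $\tilde M$ and then scale the set and the backtracking oracle by $\beta$. Throughout, $\tilde M\subset\mathbb N_+\cap[\frac1{4\epsilon},\frac1\epsilon]$ consists of distinct integers, so $|\tilde M|=|M|=\OO(\epsilon^{-1})$ and $\Sigma(\tilde M)=\OO(\epsilon^{-2})$, while $\omega/\beta=\Theta(\epsilon^{\lambda-2})$. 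I would split on the size of $\beta$ (equivalently on $\lambda$) and, in the small-$\beta$ regime, further on whether $\tilde M$ is dense; the pieces obtained in the dense case are finally merged with Corollary~\ref{coro:tree-fashion_}.

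First, the large-value case $\lambda\ge\frac12$. Here $\omega/\beta=\Theta(\epsilon^{\lambda-2})=\OO(\epsilon^{-3/2})$, so I apply Lemma~\ref{lemma:tree-alg} to the $(\epsilon,d,d-1)$-smooth set $\tilde M$ with cap $\omega'=\min\{\omega/\beta,\Sigma(\tilde M)\}$ and parameter $k=\frac d2$ (legal since $d$ is even and $\frac d2\le d-1$). Using $\Sigma(\tilde M)\epsilon^{1/2}=\OO(\epsilon^{-3/2})$ and $2^{2k+1}\epsilon^{-\frac1d}\omega'=\OO(2^{d}\epsilon^{-\frac32-\frac1d})$, this computes $S(\tilde M;[0,\omega'])$ exactly, with an $\OO(2^{d}\epsilon^{-\frac32-\frac1d}\log^2\frac1\epsilon)$-time oracle, in total time $\OO(\frac d\epsilon+2^{d}\epsilon^{-\frac32-\frac1d}\log^2\frac1\epsilon)$; multiplying by $\beta$ and intersecting with $[0,(1+\epsilon)\omega]$ yields an exact (hence $(\epsilon,\omega)$-approximate) set of cardinality at most $\omega'+1=\OO(\epsilon^{-3/2})$. (When $\omega/\beta<\tilde M^{\min}$, as happens for $\lambda$ near $2$, Lemma~\ref{lemma:tree-alg} simply returns $\{0\}$, which is correct.)

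Next, the small-value case $\lambda<\frac12$, where $\beta=\Theta(\epsilon^{-(1+\lambda)})<\epsilon^{-3/2}$. If $\tilde M$ is sparse, i.e.\ $|\tilde M|=\OO(\epsilon^{-1/2}\log\frac1\epsilon)$, then $\Sigma(\tilde M)=\OO(|\tilde M|\epsilon^{-1})=\tilde\OO(\epsilon^{-3/2})$ and Lemma~\ref{lemma:tree-alg} with $k=0$ (or just the exact FFT of Lemma~\ref{lemma:sumset}) computes $S(\tilde M;[0,\omega'])$ in $\tilde\OO(\epsilon^{-\frac32-\frac1d})$ time with cardinality $\tilde\OO(\epsilon^{-3/2})$. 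The remaining, hardest, subcase is the dense one $|\tilde M|=\Omega(\epsilon^{-1/2}\log\frac1\epsilon)$, which I split at the Galil--Margalit threshold $\tilde L=\frac{100\,\Sigma(\tilde M)\sqrt{1/\epsilon}\log\frac1\epsilon}{|\tilde M|}=\tilde\OO(\epsilon^{-3/2})$ exactly as in Steps~1--3 of Lemma~\ref{lemma:small_gro_p}. For the ``medium'' range $S(\tilde M)\cap(\tilde L,\min\{\omega/\beta,\Sigma(\tilde M)-\tilde L\})$ I build the structure of Theorem~\ref{the:dense}, whose preprocessing here is $\tilde\OO(\epsilon^{-1})$ because $\Sigma(\tilde M)\le|\tilde M|\epsilon^{-1}$, and query $\OO(\epsilon^{-1})$ target values spaced $\epsilon\cdot\omega/\beta$ apart, recording one achievable sum per bucket in a dictionary for $\OO(1)$-time backtracking. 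For the ``boundary'' part $S(\tilde M)\cap[0,\tilde L]$ — and, when $\omega/\beta>\Sigma(\tilde M)-\tilde L$, its mirror image obtained by complementation in $\tilde M$ — I re-round as in Step~2 of Lemma~\ref{lemma:small_gro_p}: split $\beta\tilde M$ into three sub-intervals, rescale each into $[\epsilon^{-(2+\lambda)},2\epsilon^{-(2+\lambda)}]$, invoke Lemma~\ref{lemma:smooth_appro} with $\alpha=\frac32$ to obtain $\epsilon^{\lambda+\frac12}$-semismooth numbers with polylogarithmically many common divisors $\rho'\in\Theta(\epsilon^{-3/2})$, compute a \emph{full} $\tilde\OO(\epsilon)$-approximate set of the subset-sums of each $\rho'$-group by Lemma~\ref{lemma:e_apx_sm} (each group has sum $\OO(\epsilon^{-3/2})$, so even $k=0$ costs $\tilde\OO(\epsilon^{-3/2})$, within budget), merge via Corollary~\ref{coro:tree-fashion_}, and only then intersect with $[0,(1+\tilde\OO(\epsilon^{\lambda+\frac12}))\tilde L+\tilde\OO(\epsilon)\Sigma(\tilde M)]\cap[0,\omega/\beta]$. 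Finally I union the medium set with the low-boundary set and, if needed, the high-boundary set, and scale everything by $\beta$.

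The main obstacle is the error accounting in the dense subcase. One must check that a subset of $\tilde M$ summing to at most $\tilde L$ contains only $\OO(\epsilon^{-1/2}\log\frac1\epsilon)$ elements (immediate from $\tilde M^{\min}\ge\frac1{4\epsilon}$ and $\tilde L=\tilde\OO(\epsilon^{-3/2})$), so that the $\epsilon^{\lambda+\frac12}$ relative re-rounding error accumulates to only $\tilde\OO(\epsilon^{\lambda-1})$ in the $\tilde M$-scale; that the additive error $\tilde\OO(\epsilon)\Sigma(\tilde M)$ of Lemma~\ref{lemma:e_apx_sm} and the Galil--Margalit bucketing error are both $\tilde\OO(\epsilon\cdot\omega/\beta)$ in the $\tilde M$-scale; and that after multiplying by $\beta=\Theta(\epsilon^{-(1+\lambda)})$ all of these collapse to $\tilde\OO(\epsilon^{-2})=\tilde\OO(\epsilon\omega)$, so that $\beta$ times the union, intersected with $[0,(1+\tilde\OO(\epsilon))\omega]$, genuinely satisfies all three conditions of Definition~\ref{def:e_appr}. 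A secondary bookkeeping point is that the cardinality bound $\OO(\epsilon^{-3/2}\log\frac1\epsilon)$ must absorb both the $\OO(\epsilon^{-1})$ medium/boundary approximate sets and, in the large-value and sparse cases, exact capped sets of size $\OO(\epsilon^{-3/2})$; and that composing the oracles of Lemma~\ref{lemma:tree-alg}, Lemma~\ref{lemma:e_apx_sm}, Corollary~\ref{coro:tree-fashion_} and the Galil--Margalit dictionary keeps the backtracking time at $\OO(\frac{2^{d}}{\epsilon^{3/2+1/d}}(\log\frac1\epsilon)^3)$.
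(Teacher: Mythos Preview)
Your overall plan (sparse versus dense, Galil--Margalit for the middle range, complementation for the top tail) matches the paper, but the boundary handling in the dense case has a genuine gap, and the initial $\lambda$-split is unnecessary.

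The gap is in your dense, small-$\lambda$ subcase. You claim that after re-rounding via Lemma~\ref{lemma:smooth_appro} at $\alpha=\tfrac32$, ``each $\rho'$-group has sum $\OO(\epsilon^{-3/2})$'', and then invoke Lemma~\ref{lemma:e_apx_sm} with $k=0$. That sum bound is exactly the step in the proof of Lemma~\ref{lemma:small_gro_p} that uses the hypothesis $\Sigma(M)=\OO(\epsilon^{-3})$: there, $\Sigma(G_\rho)/\rho'\le \Sigma(M)/\Theta(\epsilon^{-3/2})=\OO(\epsilon^{-3/2})$. But Lemma~\ref{lemma:apx_small-group} has \emph{no} assumption on $\Sigma(M)$; from the stated conditions one only gets $\Sigma(M)=\OO(\epsilon^{-(3+\lambda)})$, hence $\Sigma(G_\rho)/\rho'=\OO(\epsilon^{-(3/2+\lambda)})$, which for $\lambda$ near $\tfrac12$ is $\OO(\epsilon^{-2})$. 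Lemma~\ref{lemma:e_apx_sm} with $k=0$ then costs $\tilde\OO(\epsilon^{-2})$, outside your budget. (This is precisely why the paper stresses, in the discussion preceding the lemma, that for general \textsc{Subset Sum} ``we cannot tolerate an additive error of $\OO(\epsilon^{-2})$'' and the coarse re-rounding idea from Lemma~\ref{lemma:small_gro_p} is inapplicable.)

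The paper's fix is simpler than anything involving re-rounding: since $\tilde M$ is \emph{already} $(\epsilon,d,d-1)$-smooth, apply Lemma~\ref{lemma:tree-alg} directly to $\tilde M$ with cap $L/\beta$ and $k=\tfrac d2$. The point is that $L/\beta=\frac{100\,\Sigma(\tilde M)\sqrt{1/\epsilon}\log\frac1\epsilon}{|\tilde M|}\le 100\,\epsilon^{-3/2}\log\frac1\epsilon$ regardless of $\lambda$ or $\Sigma(M)$ (because $\Sigma(\tilde M)/|\tilde M|\le \epsilon^{-1}$), and $\Sigma(\tilde M)\epsilon^{1/2}=\OO(\epsilon^{-3/2})$ always. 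This yields $S(M;[0,L])$ \emph{exactly} in the stated time, with an $\OO(2^d\epsilon^{-3/2-1/d}(\log\tfrac1\epsilon)^2)$-time oracle; complementation gives the $[\Sigma(M)-L,\Sigma(M)]$ part. Note that you already use exactly this tool in your large-$\lambda$ case---the observation you are missing is that it works for the boundary in \emph{every} $\lambda$, which in turn makes your $\lambda$-split and the entire re-rounding detour unnecessary. A side benefit is that the boundary is computed exactly, so the final set is genuinely $(\epsilon,\omega)$-approximate as stated, not merely $(\tilde\OO(\epsilon),\omega)$.
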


		The high-level proof idea of Lemma~\ref{lemma:apx_small-group} resembles that of Lemma~\ref{lemma:small_gro_p}. In particular, if $M$ only contains a few distinct numbers, then we leverage Lemma~\ref{lemma:sumset} to handle it. Otherwise, $M$ contains many numbers, then we again leverage the additive combinatoric result from \cite{DBLP:journals/siamcomp/GalilM91} to deal with the case when the target $t$ is in the medium range. If $t$ is out of the medium range, we leverage Lemma~\ref{lemma:tree-alg}. Note that for general SUBSET SUM the target may be sufficiently smaller than the total sum of input numbers, therefore, unlike Lemma~\ref{lemma:small_gro_p}, we cannot tolerate an additive error of $\OO(\epsilon^{-2})$ anymore, which means the idea of ``coarse" rounding in proving Lemma~\ref{lemma:small_gro_p} is inapplicable here. Therefore, we can only guarantee the running time of $\tilde{\OO}(\epsilon^{-3/2})$.

		
		
		It is easy to observe that $|M| = \OO(\frac{1}{\epsilon})$.  We call $M$ a \textit{dense} group if $|M| = \Omega(\epsilon^{-\frac{1}{2}} \log \frac{1}{\epsilon})$, otherwise, we call $M$ a \textit{sparse} group. Sparse groups and dense groups will be handled separately.

		\subsection{Handling Sparse Group.} 
		
		We first consider the case that $|M| = \OO(\epsilon^{-\frac{1}{2}}\log \frac{1}{\epsilon})$. 
		
		Notice that $\tilde{M} \subset [\frac{1}{4\epsilon}, \frac{1}{\epsilon}]$ and $|\tilde{M}| = \OO(\epsilon^{-\frac{1}{2}}\log \frac{1}{\epsilon})$, then we have $\Sigma(\tilde{M}) = \OO(\epsilon^{-\frac{3}{2}}\log \frac{1}{\epsilon})$. If we regard each element in $\tilde{M}$ as a set containing this single element, then according to Lemma~\ref{lemma:sumset}, in $\OO(\epsilon^{-\frac{3}{2}}(\log \frac{1}{\epsilon})^3)$ time, we can compute $S(\tilde{M})$ and build an $\OO(\epsilon^{-\frac{3}{2}}(\log \frac{1}{\epsilon})^3)$-time oracle for backtracking from $S(\tilde{M})$ to $\tilde{M}$.  Observe that $\left(\beta S(\tilde{M}) \right)\cap [0,\omega] = S(M; [0,\omega])$ and $S(M; [0,\omega])$ is an $(\epsilon,\omega)$-approximate set for $S(M)$. Moreover, the oracle for backtracking from $S(\tilde{M})$ to $\tilde{M}$ directly yields an $\OO(\epsilon^{-\frac{3}{2}}(\log \frac{1}{\epsilon})^3)$-time oracle for backtracking from $S(M; [0,\omega])$ to $M$. Note that $S(\tilde{M})$ is a set, we have $|S(\tilde{M})|\le (S(\tilde{M}))^{\max}  = \Sigma(\tilde{M}) = \OO(\epsilon^{-\frac{3}{2}}\log\frac{1}{\epsilon})$. The time to compute $S(M; [0,\omega])$ from $S(\tilde{M})$ is $\OO(\epsilon^{-\frac{3}{2}}\log\frac{1}{\epsilon})$,  then the overall processing time is $\OO(\epsilon^{-\frac{3}{2}}(\log \frac{1}{\epsilon})^3)$. Thus Lemma~\ref{lemma:apx_small-group} has been proved for the case that $M$ is a sparse group. 
		
		It remains to prove Lemma~\ref{lemma:apx_small-group} for the case that $M$ is a dense group, which is the task of the following Subsection~\ref{sec:dense_gro_3/2}.

		\subsection{Handling Dense Group}\label{sec:dense_gro_3/2}
		The goal of this section is to prove Lemma~\ref{lemma:apx_small-group} for the case that $|M|=\Omega({\epsilon}^{-\frac{1}{2}}\log \frac{1}{\epsilon})$.

		In the following, we assume that $|M|=\Omega({\epsilon}^{-\frac{1}{2}}\log \frac{1}{\epsilon})$. Define $ L:=\frac{100\cdot \Sigma (M) \cdot \sqrt{\frac{1}{\epsilon}} \log \frac{1}{\epsilon}}{|M|}$. We approximate $S(M)$ and build an oracle for backtracking in the following three steps:
		(1). handle $S(M) \cap [0,L]$ and $S(M) \cap [\Sigma(M)-L,\Sigma(M)]$; (2). handle $S(M)\cap (L, \Sigma(M)-L)$; (3). handle $S(M)$.
		
		\paragraph{Step 1: Handling $S(M) \cap [0,L]$ and $S(M) \cap [\Sigma(M)-L,\Sigma(M)]$.} { }~\\
		
		In this step, we aim to prove the following claim.
		\begin{claim}\label{obs:3/2_mid_1}
			In $ \OO(\frac{d}{\epsilon}+ \frac{2^{{d}}}{\epsilon^{\frac{3}{2}+\frac{1}{d}}} (\log \frac{1}{\epsilon})^2)$ processing time, we can:
			\begin{itemize}
				\item[(i).] Compute $S(M) \cap \left([0,L]\cup [\Sigma(M)-L,\Sigma(M)]\right)$;
				\item[(ii).] Meanwhile build an $\OO( \frac{2^{{d}}}{\epsilon^{\frac{3}{2}+\frac{1}{d}}} (\log \frac{1}{\epsilon})^2)$-time oracle for backtracking from $S(M) \cap \left([0,L]\cup [\Sigma(M)-L,\Sigma(M)]\right)$ to $M$. That is, given any $c\in S(M) \cap \left([0,L]\cup [\Sigma(M)-L,\Sigma(M)]\right)$, in $\OO( \frac{2^{{d}}}{\epsilon^{\frac{3}{2}+\frac{1}{d}}} (\log \frac{1}{\epsilon})^2)$ time, the oracle will return $M_c \subset M$ such that $\Sigma(M_c) = c$.
			\end{itemize}
			Moreover, it holds that $\left|S(M) \cap \left([0,L]\cup [\Sigma(M)-L,\Sigma(M)]\right)\right| = \OO(\epsilon^{-\frac{3}{2}}\log \frac{1}{\epsilon})$. 
		\end{claim}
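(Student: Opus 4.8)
\textbf{Proof plan for Claim~\ref{obs:3/2_mid_1}.}
The plan is to handle the ``tail'' part $S(M)\cap([0,L]\cup[\Sigma(M)-L,\Sigma(M)])$ by reducing it to a \emph{capped} subset-sum computation on the rescaled smooth multiset $\tilde M = M/\beta$, and then applying Lemma~\ref{lemma:tree-alg} with a carefully chosen split parameter $k$. First I would observe that by symmetry it suffices to deal with $S(M)\cap[0,L]$: once we have computed $S(M;[0,L])$ together with a backtracking oracle, the right tail $S(M)\cap[\Sigma(M)-L,\Sigma(M)]$ is obtained as $\{\Sigma(M)-s : s\in S(M;[0,L])\}$, and the corresponding oracle returns $M\setminus M_c$ when the oracle for the left tail returns $M_c$ for $\Sigma(M)-c$. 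So the core task is to compute $S(\tilde M;[0,L/\beta])$ and a backtracking oracle from it to $\tilde M$, and then scale by $\beta$: note $\beta\cdot S(\tilde M;[0,L/\beta]) = S(M)\cap[0,L]$ exactly (no approximation needed here), and a $T$-time oracle from $S(\tilde M;[0,L/\beta])$ to $\tilde M$ immediately yields a $T$-time oracle from $S(M;[0,L])$ to $M$ since $M = \beta\tilde M$ and each element of $M$ corresponds bijectively to an element of $\tilde M$.

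Next I would invoke Lemma~\ref{lemma:tree-alg} on the $(\epsilon,d,d-1)$-smooth multiset $\tilde M$ with $\omega := L/\beta$. Recall $L = \frac{100\,\Sigma(M)\sqrt{1/\epsilon}\log(1/\epsilon)}{|M|}$, $\Sigma(M)=\OO(\epsilon^{-3})$ wait---more carefully, $\Sigma(M)=\Sigma(\beta\tilde M)$ and each element of $M$ lies in $[\frac{1-\epsilon}{\epsilon^{2+\lambda}},\frac{2(1+\epsilon)}{\epsilon^{2+\lambda}}]$, while $|M|=|\tilde M|=\Omega(\epsilon^{-1/2}\log\frac1\epsilon)$ by the dense-group assumption; since $\beta=\Theta(\epsilon^{-1-\lambda})$ we get $\omega = L/\beta = \Theta\!\big(\frac{\Sigma(\tilde M)\sqrt{1/\epsilon}\log(1/\epsilon)}{|\tilde M|}\big)$, and because $\tilde M\subset[\frac{1}{4\epsilon},\frac1\epsilon]$ one has $\Sigma(\tilde M)/|\tilde M| = \Theta(\epsilon^{-1})$, hence $\omega = \tilde\OO(\epsilon^{-3/2})$. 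Also $\Sigma(\tilde M) = \OO(\epsilon^{-1}\cdot|\tilde M|) = \OO(\epsilon^{-2})$ in the worst case. With $\bar d = d-1$, Lemma~\ref{lemma:tree-alg} gives, for any $k\in\mathbb N\cap[0,d-1]$, running time
\[
\OO\!\Big(\epsilon^{-1} + d|\tilde M| + \Sigma(\tilde M)\epsilon^{k/d}\log(|\tilde M|)\log(\Sigma(\tilde M)\epsilon^{k/d}) + 2^{2k+1}\epsilon^{-1/d}\,\omega\log\omega\Big).
\]
The two terms to balance are $\Sigma(\tilde M)\epsilon^{k/d} = \tilde\OO(\epsilon^{-2+k/d})$ and $2^{2k+1}\epsilon^{-1/d}\omega = \tilde\OO(2^{2k}\epsilon^{-3/2-1/d})$. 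Choosing $k = d/2$ (using that $d$ is even, as assumed in Lemma~\ref{lemma:apx_small-group}) makes the first term $\tilde\OO(\epsilon^{-3/2})$, while the second becomes $\tilde\OO(2^{d}\epsilon^{-3/2-1/d})$; together with the $d|\tilde M| = \OO(d/\epsilon)$ build cost this yields total time $\OO\!\big(\frac{d}{\epsilon} + \frac{2^d}{\epsilon^{3/2+1/d}}(\log\frac1\epsilon)^2\big)$, matching the claimed bound, and the backtracking oracle inherits the time $\OO\!\big(\frac{2^d}{\epsilon^{3/2+1/d}}(\log\frac1\epsilon)^2\big)$ from the $T^1$ bound of Lemma~\ref{lemma:tree-alg}.

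Finally I would bound the cardinality: since $S(\tilde M;[0,\omega])$ is a set of integers contained in $[0,\omega]$, its size is at most $\omega+1 = \tilde\OO(\epsilon^{-3/2})$, and thus $|S(M)\cap([0,L]\cup[\Sigma(M)-L,\Sigma(M)])| \le 2\,|S(\tilde M;[0,\omega])| = \OO(\epsilon^{-3/2}\log\frac1\epsilon)$. The main obstacle I anticipate is not the algorithmic step itself but the bookkeeping: verifying that $\omega = L/\beta$ really is $\Theta$ of what Lemma~\ref{lemma:tree-alg}'s hypothesis $\omega\in[0,\Sigma(\tilde M)]$ permits (i.e. that $L/\beta \le \Sigma(\tilde M)$, which holds because $L < \Sigma(M)$ in the regime where the left tail is nonempty and nontrivial), and making sure the $\Theta$-bounds on $\beta$ and the interval containing $\tilde M$ propagate correctly through the scaling so that the $\epsilon^{-1/d}$ and $2^d$ factors land exactly as stated. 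One subtlety to be careful about: when $L\ge\Sigma(M)$ the tail is all of $S(M)$ and the statement is vacuous/trivial, so we may assume $L<\Sigma(M)$, which is consistent with the dense-group hypothesis $|M|=\Omega(\epsilon^{-1/2}\log\frac1\epsilon)$ being large enough that $L = \tilde\OO(\Sigma(M)\epsilon^{-1/2}/|M|)$ is genuinely a small fraction of $\Sigma(M)$.
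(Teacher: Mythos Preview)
Your proposal is correct and follows essentially the same approach as the paper: reduce to the rescaled capped problem $S(\tilde M;[0,L/\beta])$, apply Lemma~\ref{lemma:tree-alg} with $k=d/2$ using $\Sigma(\tilde M)=\OO(\epsilon^{-2})$ and $L/\beta=\tilde\OO(\epsilon^{-3/2})$, and recover the right tail by the complement map $c\mapsto\Sigma(M)-c$. The paper's proof is slightly terser on the bookkeeping you flagged (it simply asserts $\frac{L}{\beta}=\OO(\epsilon^{-3/2}\log\frac1\epsilon)$ from $\tilde M\subset[\frac{1}{4\epsilon},\frac1\epsilon]$), but the argument and all parameter choices are identical.
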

		
		\begin{proof}
			
			We first consider $S(M;[0,L])$. Notice that elements in $\tilde{M}$ are $(\epsilon,d,d-1)$-smooth numbers. Recall Lemma~\ref{lemma:tree-alg}, for any $k\in \mathbb{N}\cap [0,d-1]$, in $T_L = \OO(\frac{1}{\epsilon}+(d-1) \cdot |\tilde{M}|+ \Sigma(\tilde{M}) {\epsilon^{\frac{k }{d}}} \log (|\tilde{M}|)\log (\Sigma(\tilde{M}) {\epsilon^{\frac{k }{d}}})+  2^{2k+1}{\epsilon^{\frac{-1}{d}}}{\frac{L}{\beta}}  \log ({\frac{L}{\beta}}))$ processing time, we can compute $S(\tilde{M};[0,\frac{L}{\beta}])$, and meanwhile build a $T'_L$-time oracle for backtracking from $S(\tilde{M};[0,\frac{L}{\beta}])$ to $\tilde{M}$, where $T'_L = \OO( \Sigma(\tilde{M}) {\epsilon^{\frac{k }{d}}} \log (|\tilde{M}|)\log (\Sigma(\tilde{M}) {\epsilon^{\frac{k }{d}}})+  2^{2k+1}{\epsilon^{\frac{-1}{d}}}{\frac{L}{\beta}}  \log ({\frac{L}{\beta}})))$. It is easy to observe that $S({M};[0,{L}])=\beta S(\tilde{M};[0,\frac{L}{\beta}])$ and the oracle for backtracking from $ S(\tilde{M};[0,\frac{L}{\beta}])$ to $\tilde{M}$ directly yields an $\OO(T'_L)$-time oracle for backtracking from $S({M};[0,{L}])$ to $M$. 	Recall that $L = \frac{100\cdot \Sigma (M) \cdot \sqrt{\frac{1}{\epsilon}} \log \frac{1}{\epsilon}}{|M|}$, then $\frac{L}{\beta} = \frac{100\cdot \Sigma (\tilde{M}) \cdot \sqrt{\frac{1}{\epsilon}} \log \frac{1}{\epsilon}}{|\tilde{M}|}$. Since $\tilde{M} \subset [\frac{1}{4\epsilon}, \frac{1}{\epsilon}]$ and $|\tilde{M}| = |M| = \OO(\frac{1}{\epsilon})$, we have $\frac{L}{\beta} = \OO(\epsilon^{-\frac{3}{2}}\log \frac{1}{\epsilon})$ and $\Sigma(\tilde{M}) = \OO(\frac{1}{\epsilon^2})$.
			Note that $d\in \mathbb{N}_{+}$ is divisible by $2$, by fixing $k$ to $\frac{d}{2}$, we have $T_L = \OO(\frac{d}{\epsilon}+ \frac{2^{{d}}}{\epsilon^{\frac{3}{2}+\frac{1}{d}}} (\log \frac{1}{\epsilon})^2)$ and $T'_L = \OO(\frac{2^{{d}}}{\epsilon^{\frac{3}{2}+\frac{1}{d}}} (\log \frac{1}{\epsilon})^2)$. Note that $S(\tilde{M};[0,\frac{L}{\beta}])$ is a set, thus $|S(M;[0,L])| = |S(\tilde{M};[0,\frac{L}{\beta}])|= \OO(\frac{L}{\beta})=\OO(\epsilon^{-\frac{3}{2}}\log \frac{1}{\epsilon})$. 
			
			Note that given any $E \subset M$ with $\Sigma(E) \in [0,L]$, then $M \backslash E$ satisfies $\Sigma(M \backslash E) \in [\Sigma(M)-L,\Sigma(M)]$, and conversely, given any $E' \subset M$ with $\Sigma(E') \in [\Sigma(M)-L,\Sigma(M)]$, then $M \backslash E'$ satisfies $\Sigma(M \backslash E') \in [0,L]$.  Thus $S(M)\cap [\Sigma(M)-L,\Sigma(M)]=\left\{\Sigma(M)-c : c \in S(M; [0,L]) \right\}$, which implies that given  $S(M; [0,L])$,  in $|S(M; [0,L])| = \OO(\epsilon^{-\frac{3}{2}}\log \frac{1}{\epsilon})$ processing time, we can obtain $S(M)\cap [\Sigma(M)-L,\Sigma(M)]$ from $S(M; [0,L])$. Observe that 
			$\left |S(M)\cap [\Sigma(M)-L,\Sigma(M)]\right| = \left|S(M; [0,L])\right| = \OO(\epsilon^{-\frac{3}{2}}\log \frac{1}{\epsilon})$, we have $\left|S(M) \cap \left([0,L]\cup [\Sigma(M)-L,\Sigma(M)]\right)\right| = \OO(\epsilon^{-\frac{3}{2}}\log \frac{1}{\epsilon})$. 
			Moreover, given any $T'_L$-time oracle for backtracking from $S(M; [0,L])$ to $M$, then this oracle directly yields an $\OO(T'_L)$-time oracle for backtracking from $S(M)\cap [\Sigma(M)-L,\Sigma(M)]$ to $M$. 
			
			To summarize, in overall $\OO(\frac{d}{\epsilon}+ \frac{2^{{d}}}{\epsilon^{\frac{3}{2}+\frac{1}{d}}} (\log \frac{1}{\epsilon})^2)$ time, we can compute $S(M) \cap \left([0,L]\cup [\Sigma(M)-L,\Sigma(M)]\right)$ and build an $\OO(\frac{2^{{d}}}{\epsilon^{\frac{3}{2}+\frac{1}{d}}} (\log \frac{1}{\epsilon})^2)$-time oracle for backtracking from $S(M) \cap \left([0,L]\cup [\Sigma(M)-L,\Sigma(M)]\right)$ to $M$. \qed\end{proof}

		\paragraph{Step 2: Handling $S(M)\cap (L,\Sigma (M) - L)$.}{ }~\\
		In this step, we aim to prove the following claim.
		\begin{claim}\label{obs:3/2_mid_}
			In $\OO(\frac{1}{\epsilon}\log \frac{1}{\epsilon})$ processing time, we can:
			\begin{itemize}
				\item[(i).] Compute a subset $\Gamma_{mid} \subset  S(M) \cap (L,\Sigma(M)-L)$ satisfying the following conditions: (1). $|\Gamma_{mid} | = \OO(\frac{1}{\epsilon})$; (2). $\Gamma_{mid}  \subset [0,\omega]$; (3). Given any $s \in S(M)\cap (L,\Sigma(M)-L)\cap [0,\omega]$, there exists $s'\in \Gamma_{mid}$ such that $|s'-s|\le \epsilon \omega $.
				\item[(ii).] Meanwhile build an $\OO(1)$-time oracle for backtracking from $\Gamma_{mid}$ to $M$. That is, given any $c\in \Gamma_{mid}$, in $\OO(1)$ time, the oracle will return $M_c \subset M$ such that $\Sigma(M_c) = c$.
			\end{itemize}
		\end{claim}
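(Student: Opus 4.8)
The plan is to run essentially the same argument as in the proof of Claim~\ref{obs:mid_obs}, but applied to the scaled-down set $\tilde{M}=M/\beta$ and with a grid whose spacing is tuned to the target error $\epsilon\omega$ rather than $\epsilon\Sigma(M)$.

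First I would pass from $M$ to $\tilde{M}=M/\beta\subset\mathbb{N}_{+}\cap[\frac{1}{4\epsilon},\frac{1}{\epsilon}]$. By hypothesis its elements are distinct and $|\tilde{M}|=|M|=\Omega(\epsilon^{-1/2}\log\frac{1}{\epsilon})$, so Theorem~\ref{the:dense} applies with $l=\frac{1}{\epsilon}$ and $m=|\tilde{M}|$; setting $\tilde{L}=L/\beta=\frac{100\cdot\Sigma(\tilde{M})\sqrt{1/\epsilon}\log(1/\epsilon)}{|\tilde{M}|}$, it yields, after preprocessing, a structure that solves the SUBSET SUM problem $(\tilde{M},c)$ in $\OO(\log\frac{1}{\epsilon})$ time for every integer $c\in(\tilde{L},\Sigma(\tilde{M})-\tilde{L})$, returning an explicit optimal subset. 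Using $\Sigma(\tilde{M})\le|\tilde{M}|/\epsilon\le\epsilon^{-2}$ together with $m=\Omega(\epsilon^{-1/2}\log\frac{1}{\epsilon})$ and $m=\OO(\epsilon^{-1})$, the preprocessing time $\OO\big(m+(\frac{l}{m}\log l)^2+\frac{\Sigma(\tilde{M})\sqrt{l}(\log l)^2}{m^2}\big)$ collapses to $\OO(\epsilon^{-1}\log\frac{1}{\epsilon})$, exactly as in Claim~\ref{obs:mid_obs}.

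The one genuinely new ingredient is the choice of grid. Because we only need to cover $S(M)\cap(L,\Sigma(M)-L)\cap[0,\omega]$ within additive error $\epsilon\omega$, I would take $\mathcal{C}$ to be the set of distinct integers among $\{\lfloor k\epsilon\omega/\beta\rfloor : 1\le k\le\lceil 1/\epsilon\rceil\}\cup\{\lfloor\omega/\beta\rfloor,\lfloor\Sigma(\tilde{M})-\tilde{L}\rfloor\}$ that lie in $(\tilde{L},\Sigma(\tilde{M})-\tilde{L})\cap[0,\omega/\beta]$; note $|\mathcal{C}|=\OO(1/\epsilon)$ and, since $\omega=\Theta(\epsilon^{-3})$ and $\beta=\Theta(\epsilon^{-1-\lambda})$ with $\lambda\ge0$, the spacing $\epsilon\omega/\beta$ is at least $1$. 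For every $c\in\mathcal{C}$ I query the structure to obtain an optimal $E_c\subset\tilde{M}$ with $\Sigma(E_c)\le c$, set $\Gamma_{mid}:=\{\beta\Sigma(E_c):c\in\mathcal{C}\}$, and store a dictionary $\mathscr{D}$ of key--value pairs $(\beta\Sigma(E_c),\beta E_c)$. Then $\Gamma_{mid}\subset S(M)$, $\Gamma_{mid}\subset[0,\omega]$ since $\Sigma(E_c)\le c\le\omega/\beta$, and $|\Gamma_{mid}|=\OO(1/\epsilon)$; the total time is $\OO(\epsilon^{-1}\log\frac{1}{\epsilon})$, and the $\OO(1)$-time backtracking oracle just returns $\mathscr{D}[c]$.

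Finally I would verify the covering property. Given $s\in S(M)\cap(L,\Sigma(M)-L)\cap[0,\omega]$, the value $s/\beta$ is an integer lying in $S(\tilde{M})\cap(\tilde{L},\Sigma(\tilde{M})-\tilde{L})\cap[0,\omega/\beta]$; choose the integer $k\ge0$ with $k\epsilon\omega/\beta<s/\beta\le(k+1)\epsilon\omega/\beta$ and put $c^{*}=\min\{\lfloor(k+1)\epsilon\omega/\beta\rfloor,\lfloor\omega/\beta\rfloor,\lfloor\Sigma(\tilde{M})-\tilde{L}\rfloor\}$. Integrality of $s/\beta$ gives $s/\beta\le c^{*}$, hence $c^{*}>\tilde{L}$ and $c^{*}\in\mathcal{C}$; the optimality of $E_{c^{*}}$ (applied to the subset-sum $s/\beta\le c^{*}$) forces $s/\beta\le\Sigma(E_{c^{*}})\le c^{*}$, so $|\Sigma(E_{c^{*}})-s/\beta|\le\epsilon\omega/\beta$ and therefore $\beta\Sigma(E_{c^{*}})\in\Gamma_{mid}$ with $|\beta\Sigma(E_{c^{*}})-s|\le\epsilon\omega$, as required. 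The only delicate point --- the ``hard part'' --- is arranging the grid of spacing $\epsilon\omega/\beta$ so that it is simultaneously fine enough to yield the $\epsilon\omega$ error, coarse enough to keep $|\mathcal{C}|=\OO(1/\epsilon)$, and confined to the window $(\tilde{L},\Sigma(\tilde{M})-\tilde{L})$ in which Theorem~\ref{the:dense} is usable; every other step is a direct re-run of Claim~\ref{obs:mid_obs}.
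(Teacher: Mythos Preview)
Your proposal is correct and follows essentially the same approach as the paper: both scale down to $\tilde{M}$, invoke Theorem~\ref{the:dense} for the preprocessing, query the resulting structure on an $\epsilon\omega/\beta$-spaced grid of targets restricted to $(\tilde L,\Sigma(\tilde M)-\tilde L)\cap[0,\omega/\beta]$, store the answers in a dictionary, and verify the covering property via the optimality of the returned subset together with the integrality of $s/\beta$. The only differences from the paper's proof are cosmetic (the exact description of the endpoint targets in $\mathcal{C}$).
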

		\begin{proof}
			Recall that $\tilde{M} \subset [\frac{1}{4\epsilon}, \frac{1}{\epsilon}]$ and each element in $\tilde{M}$ is distinct, moreover, $\frac{100\cdot \Sigma (\tilde{M}) \cdot \sqrt{\frac{1}{\epsilon}} \log \frac{1}{\epsilon}}{|\tilde{M}|}=\frac{L}{\beta}$. Notice that we now consider the case that $|\tilde{M}|= |M| = \Omega(\epsilon^{-\frac{1}{2}}\log \frac{1}{\epsilon})$, recall Theorem~\ref{the:dense}, in $\OO(\frac{1}{\epsilon} \log\frac{1}{\epsilon})$ time, we can construct a data structure which allows us to solve the SUBSET SUM problem $(\tilde{M},c)$ in $\OO(\log \frac{1}{\epsilon})$ time, where $c$ is any integer in $(\frac{L}{\beta},\Sigma(\tilde{M})- \frac{L}{\beta})$.   
			
			We define a set $\mathcal{C}$ as follows: if $\frac{\omega}{\beta} \le \frac{L}{\beta}$, let $\mathcal{C} = \emptyset$; else let $\mathcal{C}=\{c_1,c_2,\cdots,c_{\iota}\}$ denote the set of all different integers in $\left\{\lfloor\epsilon \frac{\omega}{\beta} \rfloor, \lfloor 2\epsilon \frac{\omega}{\beta} \rfloor, \cdots, \lfloor\frac{\omega}{\beta} \rfloor, \lfloor \min\{ \frac{\omega}{\beta}, \Sigma (\tilde{M})-\frac{L}{\beta}\} \rfloor \right\} \cap \left(\frac{L}{\beta},\min\{ \frac{\omega}{\beta} , \Sigma (\tilde{M})-\frac{L}{\beta}\}\right)$. Then we solve SUBSET SUM problem $(\tilde{M},c)$ for every integer $c \in \mathcal{C}$. Denote by $E_i$ the solution of SUBSET SUM problem $(\tilde{M},c_i)$, let $\mathcal{E}: = \{E_1,E_2,\cdots,E_{\iota}\}$ and let $\Gamma_{mid}: =\{\beta\Sigma(E_1), \beta\Sigma(E_2),\cdots,\beta\Sigma(E_{\iota})\}$. We build a dictionary $\mathscr{D}$ as follows: each element in $\mathscr{D}$ is a key-value pair $(\beta\Sigma(E_i), \mathscr{D}[\beta\Sigma(E_i)])$, where $\mathscr{D}[\beta\Sigma(E_i)]=\beta E_i$ and $i=1,2,\cdots,\iota$. The collection of keys of $\mathscr{D}$ is $\Gamma_{mid}$.
			
			In the next, we show that $\Gamma_{mid}$ satisfies the conditions required in Claim~\ref{obs:3/2_mid_}. Apparently, $|\Gamma_{mid}| = \OO(\frac{1}{\epsilon})$ and $\Gamma_{mid} \subset [0,\omega]$. Consider any $s\in S(M)\cap (L,\Sigma(M)-L) \cap [0,\omega]$, it follows that  $\frac{s}{\beta} \in S(\tilde{M}) \cap (\frac{L}{\beta},\Sigma(\tilde{M})-\frac{L}{\beta})\cap [0,\frac{\omega}{\beta}]$
			and there exists some integer $k \in [0,\frac{1}{\epsilon}]$ such that $k\epsilon \frac{\omega}{\beta} < \frac{s}{\beta} \le (1+k)\epsilon \frac{\omega}{\beta}$, thus we have
			$$\left\lfloor k\epsilon  \frac{\omega}{\beta}\right\rfloor < \frac{s}{\beta} \le \min \left\{\left\lfloor (1+k)\epsilon  \frac{\omega}{\beta} \right\rfloor,  \left\lfloor \min  \left\{\frac{\omega}{\beta} , \sum \left(\tilde{M}\right)-\frac{L}{\beta}\right\}\right \rfloor \right\}.$$
			Observe that $ \min \left\{\left\lfloor (1+k)\epsilon   \frac{\omega}{\beta} \right\rfloor,  \left\lfloor \min  \left\{\frac{\omega}{\beta} , \sum \left(\tilde{M}\right)-\frac{L}{\beta}\right\}\right \rfloor \right\}\in \mathcal{C}$,  which guarantees that there exists some $E_k \in \mathcal{E}$ satisfying $\Sigma(E_k) \in [k\epsilon \frac{\omega}{\beta}, (k+1)\epsilon \frac{\omega}{\beta}]$, then $\beta \Sigma(E_k) \in \Gamma_{mid}$ and $|s-\beta \Sigma(E_k)| \le \beta \cdot \epsilon\frac{\omega}{\beta} = \epsilon \omega$.

			An $\OO(1)$-time oracle for backtracking from $\Gamma_{mid}$ to $M$ works as follows: given any $c\in \Gamma_{mid}$, the oracle uses dictionary $\mathscr{D}$ to return $\mathscr{D}[c]$. It is easy to see that $\mathscr{D}[c] \subset M$ and $\Sigma(\mathscr{D}[c]) = c$.\qed
		\end{proof}
		
		\paragraph{Step 3: Handling $S(M)$.} 
		
		
		Given $\Gamma_{mid}$, which is obtained in Step 2. Let $\Gamma_{LR} = \left (S(M) \cap ([0,L]\cup [\Sigma(M)-L,\Sigma(M)])\right) \cap [0,\omega]$ and let $\Gamma = \Gamma_{mid} \cup \left (S(M) \cap ([0,L]\cup [\Sigma(M)-L,\Sigma(M)])\right)$. Then we have the
		following Claim~\ref{obs:dense_gro}, which follows directly from Claim~\ref{obs:3/2_mid_1} and  Claim~\ref{obs:3/2_mid_}.
		\begin{claim}\label{obs:dense_gro}
			$\Gamma$ is an $(\epsilon, \omega)$-approximate set of $S(M)$. Moreover, we can build an 
			$\OO(\frac{2^{{d}}}{\epsilon^{\frac{3}{2}+\frac{1}{d}}} (\log \frac{1}{\epsilon})^2)$-time oracle for backtracking from $\Gamma$ to $M$.
		\end{claim}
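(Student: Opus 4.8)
The plan is to verify directly that $\Gamma$ satisfies the three requirements of Definition~\ref{def:e_appr} for an $(\epsilon,\omega)$-approximate set of $S(M)$, and then to build the backtracking oracle by dispatching each query to the oracle of the appropriate sub-claim. The organizing device is a two-way case split on where a subset sum of $M$ lies: either in the \emph{medium} range $(L,\Sigma(M)-L)$, which is covered by Claim~\ref{obs:3/2_mid_} through the Galil--Margalit structure, or in the \emph{extreme} ranges $[0,L]\cup[\Sigma(M)-L,\Sigma(M)]$, which is covered \emph{exactly} by Claim~\ref{obs:3/2_mid_1}. Since these two regions partition $[0,\Sigma(M)]\supseteq S(M)$, every subset sum falls in exactly one case and the approximation errors do not accumulate.

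For condition~(ii) of Definition~\ref{def:e_appr}, the witness can be taken equal to $c$ itself: elements of $\Gamma_{mid}$ lie in $S(M)$ by Claim~\ref{obs:3/2_mid_}(i), and the elements of $S(M)\cap([0,L]\cup[\Sigma(M)-L,\Sigma(M)])$ are subset sums of $M$ by the content of Claim~\ref{obs:3/2_mid_1}(i); hence the error is $0\le\epsilon\omega$. For condition~(iii), given $a'\in S(M)\cap[0,\omega]$: if $a'\in(L,\Sigma(M)-L)$ then $a'\in S(M)\cap(L,\Sigma(M)-L)\cap[0,\omega]$, and the third property in Claim~\ref{obs:3/2_mid_}(i) supplies $s'\in\Gamma_{mid}\subseteq\Gamma$ with $|s'-a'|\le\epsilon\omega$; otherwise $a'$ lies in the extreme ranges, so $a'\in S(M)\cap([0,L]\cup[\Sigma(M)-L,\Sigma(M)])\subseteq\Gamma$ and the witness is $a'$ itself. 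For condition~(i) I would take the extreme part of $\Gamma$ capped at $\omega$, i.e.\ use $\Gamma_{LR}=\bigl(S(M)\cap([0,L]\cup[\Sigma(M)-L,\Sigma(M)])\bigr)\cap[0,\omega]$; combined with $\Gamma_{mid}\subseteq[0,\omega]$ from Claim~\ref{obs:3/2_mid_}(i), this gives $\Gamma\subseteq[0,\omega]\subseteq[0,(1+\epsilon)\omega]$, and discarding the values above $\omega$ does not affect conditions~(ii) and~(iii) since those are only ever invoked for targets in $[0,\omega]$. The cardinality estimate $|\Gamma|=\OO(\epsilon^{-3/2}\log\tfrac1\epsilon)$ then follows from $|\Gamma_{mid}|=\OO(1/\epsilon)$ and $|S(M)\cap([0,L]\cup[\Sigma(M)-L,\Sigma(M)])|=\OO(\epsilon^{-3/2}\log\tfrac1\epsilon)$, both already established.

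For the oracle, given any $c\in\Gamma$ I would first test in constant time (via the dictionary from Claim~\ref{obs:3/2_mid_}) whether $c\in\Gamma_{mid}$; if so, invoke that $\OO(1)$-time oracle to obtain $M_c\subseteq M$ with $\Sigma(M_c)=c$. Otherwise $c\in S(M)\cap([0,L]\cup[\Sigma(M)-L,\Sigma(M)])$, and I invoke the $\OO(\tfrac{2^{d}}{\epsilon^{3/2+1/d}}(\log\tfrac1\epsilon)^2)$-time oracle of Claim~\ref{obs:3/2_mid_1} to obtain $M_c$ with $\Sigma(M_c)=c$. In both branches $c$ is reproduced exactly, so the backtracking error is $0$, and the running time is dominated by the second branch, yielding the stated bound; the preprocessing cost is the sum of the times in the two sub-claims, also dominated by Claim~\ref{obs:3/2_mid_1}.

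The argument is essentially a gluing of two previously proved sub-claims, so I do not anticipate a real difficulty; the only points that need care are (a) assigning the boundary values $L$ and $\Sigma(M)-L$ unambiguously to exactly one of the two regimes, so that the case split is a genuine partition of $S(M)$ leaving no target uncovered, and (b) the containment condition~(i), which is precisely why the extreme part must be taken as the capped set $\Gamma_{LR}$ rather than the raw $S(M)\cap([0,L]\cup[\Sigma(M)-L,\Sigma(M)])$, whose elements may exceed $(1+\epsilon)\omega$ when $\Sigma(M)\gg\omega$.
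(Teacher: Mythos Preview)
Your proposal is correct and follows the same approach as the paper, which simply states that the claim ``follows directly from Claim~\ref{obs:3/2_mid_1} and Claim~\ref{obs:3/2_mid_}'' without spelling out the verification. You have correctly filled in those details, including the observation that the extreme part must be capped at $\omega$ (the paper's $\Gamma_{LR}$) to secure condition~(i) of Definition~\ref{def:e_appr}; indeed the paper defines $\Gamma_{LR}$ with exactly this cap, and the uncapped expression appearing in the definition of $\Gamma$ is evidently a typo for $\Gamma_{LR}$.
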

		
		According to Claim~\ref{obs:3/2_mid_1} and Claim~\ref{obs:3/2_mid_}, we have $|\Gamma_{LR}| = \OO(\epsilon^{-\frac{3}{2}}\log \frac{1}{\epsilon})$ and $|\Gamma_{mid}| = \OO(\frac{1}{\epsilon})$, it follows that $|\Gamma| = \OO(\epsilon^{-\frac{3}{2}}\log \frac{1}{\epsilon})$. Recall that the total time for computing $\Gamma_{LR}$ and $\Gamma_{mid}$ is $\OO(\frac{d}{\epsilon}+ \frac{2^{{d}}}{\epsilon^{\frac{3}{2}+\frac{1}{d}}} (\log \frac{1}{\epsilon})^2)$, thus the overall time for computing $\Gamma$ is  $\OO(\frac{d}{\epsilon}+ \frac{2^{{d}}}{\epsilon^{\frac{3}{2}+\frac{1}{d}}} (\log \frac{1}{\epsilon})^2)$. Combine with Claim~\ref{obs:dense_gro}, we have proved Lemma~\ref{lemma:apx_small-group} for the case that $M$ is dense,  which completes the proof of Lemma~\ref{lemma:apx_small-group}, furthermore, completes the proof of Lemma~\ref{lemma:sumset-app-stru}. Till now, we have completed the proof of  Theorem~\ref{the:sub_sum}.

		\section{An $\tilde\OO(n+\epsilon^{-1})$-time weak $(1-\tilde\OO(\epsilon))$-approximation algorithm for UNBOUNDED SUBSET SUM.}\label{sec:1/3_weak_apx}
		
		The goal of this section is to prove the following Theorem.
		\begin{theorem}\label{the:un_boubded_sum}
			There is a deterministic weak $(1-\epsilon)$-approximation algorithm for UNBOUNDED SUBSET SUM running in $\tilde\OO(n+{\epsilon}^{-1})$ time.
		\end{theorem}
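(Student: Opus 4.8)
The plan is to mirror the structure used for bounded SUBSET SUM but to exploit the special feature of the unbounded case: once all ``useful'' items are large relative to $t$, every feasible solution uses only few item-copies, which lets us push the effective value range down to $\tilde\OO(1/\epsilon)$ rather than $\tilde\OO(1/\epsilon^2)$.

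First I would dispose of the trivial cases. Let $x_{\min}=X^{\min}$. If $x_{\min}>t$, then no item can be used, $OPT=0$, and the empty multiset is an exact optimum. If $x_{\min}\le \epsilon t$, then $\lfloor t/x_{\min}\rfloor$ copies of $x_{\min}$ already give a sum in $(t-x_{\min},\,t]\subseteq[(1-\epsilon)t,\,t]$, and since $OPT\le t$ this is a valid weak $(1-\epsilon)$-approximation computable in $\OO(n)$ time. Hence it remains to treat the case where, after discarding all items exceeding $t$, every item lies in $(\epsilon t,t]$; in particular $OPT=\Theta(t)$, every feasible solution uses fewer than $1/\epsilon$ item-copies in total, and fewer than $2^{j}$ copies of items of magnitude $\Theta(2^{-j}t)$.

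Next I would run a preprocessing analogous to Lemma~\ref{obs:obs_pre}: scale $t$ to $\hat t=\Theta(\epsilon^{-3})$ and the items accordingly, split the items into $\OO(\log\frac1\epsilon)$ magnitude bands, and apply the number-theoretic rounding lemma (Lemma~\ref{lemma:smooth_appro}) to each band with a suitable parameter $\alpha$ (e.g.\ $\alpha=1+\lambda$), so that every item is rounded, with relative error $\tilde\OO(\epsilon)$, to the form $\rho\cdot h_1\cdots h_{\bar d+1}$ with smooth part of value $\tilde\OO(1/\epsilon)$ and only $\tilde\OO(1)$ distinct divisors $\rho$ overall. Because the overall error budget $\epsilon\hat t$ is shared among only polylogarithmically many groups and each feasible solution uses $<1/\epsilon$ copies, a relative error $\tilde\OO(\epsilon)$ per item is affordable. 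This reduces the instance, for the purpose of weak approximation, to a polylogarithmic collection of groups, each an \emph{unbounded} subset-sum instance over a set of $(\epsilon,d,d-1)$-smooth integers of value $\tilde\OO(1/\epsilon)$, together with a backtracking oracle to the original items composed from the scaling, grouping and rounding maps exactly as in Section~\ref{sec:proce}. For each group I would then convert the unbounded instance into a bounded multiset by including, for each smooth value $a$, only $\lfloor \omega/a\rfloor<1/\epsilon$ copies, where $\omega$ is that group's scaled cap; since in the relevant bands the smooth values exceed a constant times $\epsilon\omega$, the resulting multiset has size $\tilde\OO(1/\epsilon)$, and I would invoke Lemma~\ref{lemma:tree-alg} with $k=d/2$ to compute its $\omega$-capped subset-sums exactly, together with the corresponding backtracking oracle, in $\tilde\OO(1/\epsilon)$ time, and then scale back by $\rho$. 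Finally I would combine the groups' capped reachable sets and oracles via Corollary~\ref{coro:sub_sum_top}, read off the largest element of the combined capped set, and run the composed oracle to output the multiset of original items, with the error bookkeeping done as in the bounded case.

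The main obstacle — and the place where the argument genuinely differs from the bounded case — is making \emph{every} per-group computation and the final combination fit in $\tilde\OO(1/\epsilon)$: the dangerous groups are the low-magnitude bands, where the per-item multiplicity can be as large as $1/\epsilon$ and the scaled cap $\omega$ can be as large as $\tilde\OO(1/\epsilon^2)$, which would break Lemma~\ref{lemma:tree-alg}'s running time. I expect to handle these by applying the trivial-case observation band-by-band: for a band whose items are small enough that a single smallest item of the band, used with varying multiplicity, already generates an arithmetic progression whose spacing is within that band's additive error budget, one simply replaces the whole band by that progression, computed directly in $\OO(1/\epsilon)$ time; this leaves only $\OO(\log\frac1\epsilon)$ ``large'' bands whose scaled caps and multiset sizes are genuinely $\tilde\OO(1/\epsilon)$ and to which Lemma~\ref{lemma:tree-alg} applies directly. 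A careful parameterized choice of the band-wise scalings, of $\alpha$ in Lemma~\ref{lemma:smooth_appro}, and of $k$ in Lemma~\ref{lemma:tree-alg} should make every term collapse to $\tilde\OO(n+\epsilon^{-1})$, matching (up to the $\tilde{}$) the SETH lower bound noted in the introduction.
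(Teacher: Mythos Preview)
Your plan has a genuine gap in the per-group computation. After preprocessing, a group has smooth values $a\in[\tfrac{1}{4\epsilon},\tfrac1\epsilon]$ and, after dividing out the common factor $\rho=\Theta(\epsilon^{-1-\lambda})$, a scaled cap $\omega=\hat t/\rho=\Theta(\epsilon^{-2+\lambda})$. The number of \emph{distinct} smooth values in a group is not polylogarithmic; it can be as large as $\min(n,1/\epsilon)$. Hence your bounded multiset, obtained by inserting $\lfloor\omega/a\rfloor=\Theta(\epsilon^{-1+\lambda})$ copies of each distinct value, has size up to $\Theta(\epsilon^{-2+\lambda})$ and $\Sigma(A)$ up to $\Theta(\epsilon^{-3+\lambda})$, contradicting your claim that it has size $\tilde\OO(1/\epsilon)$. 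Plugging into Lemma~\ref{lemma:tree-alg} with $k=d/2$ gives running time $\tilde\OO\bigl(\Sigma(A)\epsilon^{1/2}+2^{d}\epsilon^{-1/d}\omega\bigr)$; the second term alone is $\tilde\OO(\epsilon^{-2+\lambda})$, which exceeds $\tilde\OO(1/\epsilon)$ for every band with $\lambda<1$, i.e.\ for \emph{all} bands. Your proposed rescue --- replacing a band by the arithmetic progression of a single smallest item --- only works when that item's value is at most the additive budget $\epsilon\hat t=\Theta(\epsilon^{-2})$, i.e.\ essentially only for $\lambda\le 0$; all intermediate bands with $0<\lambda<1$ are handled by neither device.

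What the paper actually does is structurally different and hinges on an idea you do not invoke: Klein's sparsification (Lemma~\ref{lemma:supp}), which says that every reachable sum in an unbounded instance is already reached by a vector of support $O(\log\frac1\epsilon)$. Writing each multiplicity in binary then yields, for each group with distinct-value set $G$, the identity $S(\overline M)\cap[0,\upsilon]=\bigl(G_{\log}\oplus 2G_{\log}\oplus\cdots\oplus 2^{pow(m)}G_{\log}\bigr)\cap[0,\upsilon]$, where $G_{\log}$ is the sumset of only $O(\log\frac1\epsilon)$ copies of $G$. One computes an $(\tilde\OO(\epsilon),\upsilon)$-approximate set of $G_{\log}$ by $O(\log\log\frac1\epsilon)$ rounds of approximate ``squaring'' via Observation~\ref{obs:cap_apx_ss} (each round $\tilde\OO(1/\epsilon)$), and then combines the $O(\log\frac1\epsilon)$ dyadic scalings $2^kG_{\log}$ via Lemma~\ref{lemma:sub_sum_top}. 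Lemma~\ref{lemma:tree-alg} is not used at all here; the smoothness machinery is needed only to keep the scaled values in $[\tfrac1{2\epsilon},\tfrac1\epsilon]$ so that each approximate sumset step stays $\tilde\OO(1/\epsilon)$.
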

		Given any UNBOUNDED SUBSET SUM instance un-$(X,t)$, where $X=\{x_1,x_2,\cdots,x_n\}$ is a set of different positive integers and $t >0$ is a fixed constant. Let $OPT$ be the optimal objective value of the instance un-$(X,t)$, our goal is to find a multiset $X'$ with $set(X') \subset X$ such that
		$$(1-\tilde\OO(\epsilon))OPT\le \Sigma(X') \le (1+\tilde\OO(\epsilon))t,$$ moreover, the time to determine $X'$ is $\tilde\OO(n+\epsilon^{-1})$.
		
		Without loss of generality, we can assume that $0<x \le (1+\tilde\OO(\epsilon))t$ for every $x \in X$. Moreover, we have the following useful preliminary lemma.
		
		\begin{lemma}\label{lemma:pre_2_un}
			Given any UNBOUNDED SUBSET SUM instance un-$(X,t)$, where $0< x \le (1+\tilde\OO(\epsilon))t$ for every $x\in X$. Let $OPT$ be the optimal objective value of un-$(X,t)$. In $\OO(|X|)$ processing time, we can either find a weak $(1-\tilde\OO(\epsilon))$-approximation solution for un-$(X,t)$, or assert that $\epsilon t < x < t$ for every $x\in X$.
		\end{lemma}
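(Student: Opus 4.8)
The plan is to dispose of the input numbers that are ``too large'' or ``too small'' by exhibiting an explicit trivial near‑optimal multiset, so that the only remaining case is the one in which every element lies strictly between $\epsilon t$ and $t$. First I would make a single pass over $X$ to compute $X^{min}$ and $X^{max}$, and then branch on their sizes relative to $t$ and $\epsilon t$.

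If $X^{max}\ge t$, then by the standing hypothesis $t\le X^{max}\le (1+\tilde\OO(\epsilon))t$, so a single copy of $X^{max}$ forms a multiset of value $\Sigma(\{X^{max}\})=X^{max}\in[t,(1+\tilde\OO(\epsilon))t]$; since $OPT\le t$ (the unbounded subset‑sum value never exceeds $t$), this value is at least $OPT\ge(1-\tilde\OO(\epsilon))OPT$ and at most $(1+\tilde\OO(\epsilon))t$, i.e.\ it is a weak $(1-\tilde\OO(\epsilon))$‑approximation, which we return. Otherwise $X^{max}<t$, and if moreover $X^{min}\le\epsilon t$, I would use the unboundedness of the problem and take $\lfloor t/X^{min}\rfloor\ge 1$ copies of $X^{min}$; this multiset has value in $(t-X^{min},\,t]\subseteq[(1-\epsilon)t,\,t]$, which is at least $(1-\epsilon)OPT$ and at most $t$, hence even a non‑violating $(1-\epsilon)$‑approximation, and we return it.

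If neither branch fires, then $\epsilon t<X^{min}\le X^{max}<t$, so $\epsilon t<x<t$ holds for every $x\in X$, which is exactly the claimed assertion. The procedure is one scan of $X$ plus $\OO(1)$ additional arithmetic to form the trivial solution, so it runs in $\OO(|X|)$ time. I do not expect a real obstacle here: the only bookkeeping is to check that the multiplicative slack $(1+\tilde\OO(\epsilon))$ permitted on the inputs is exactly absorbed by the right‑hand side of the weak‑approximation window, and that $OPT\le t$ makes the left‑hand inequality automatic. This lemma is a warm‑up whose sole purpose is to let the subsequent analysis assume that all input numbers lie in $(\epsilon t,t)$.
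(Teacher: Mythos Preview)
Your proposal is correct and essentially identical to the paper's proof: both handle the large case by taking a single copy of an element in $[t,(1+\tilde\OO(\epsilon))t]$ and the small case by stacking copies of a tiny element to land near $t$, then conclude that otherwise every element lies in $(\epsilon t,t)$. The only cosmetic difference is that in the small-element case you take $\lfloor t/X^{min}\rfloor$ copies (staying at most $t$) whereas the paper takes $\lfloor t/x\rfloor+1$ copies (slightly overshooting $t$); both are valid weak approximations.
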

		\begin{proof}
			Note that if there exists some $x\in X$ with $t\le x \le (1+\tilde\OO(\epsilon)) t$, then $\{x\}$ yields a weak $(1-\tilde\OO(\epsilon))$-approximation solution for un-$(X,t)$. If there exists some $x \in X$ with $0<x \le \tilde\OO(\epsilon) t$, then a multiset $X'$ with $set(X') =\{x\}$ and $|X'| =\lfloor\frac{t}{x}\rfloor+1$ yields a weak $(1-\tilde\OO(\epsilon))$-approximation solution for un-$(X,t)$.
			
			It is easy to see that, in $\OO(|X|)$ time, we can determine whether there exists some $x\in X$ satisfying $t \le x \le (1+\tilde\OO(\epsilon)) t$ or $0<x \le \tilde\OO(\epsilon) t$. \qed
		\end{proof}
		
		The rest of this section is dedicated to proving Theorem~\ref{the:un_boubded_sum}. Consider any UNBOUNDED SUBSET SUM instance un-$(X,t)$, where $\epsilon t < x < t$ for every $x \in X$. In Section~\ref{subsec:unboun_pre}, we focus on simplifying $\text{un-}(X,t)$ to obtain a BOUNDED SUBSET SUM instance $(Y,\hat{t})$, where $Y$ is divided into $\OO(\log|X|(\log \frac{1}{\epsilon})^2)$ subgroups: $Y^j_k$'s.  In particular, we derive Lemma~\ref{lemma:pre_unbo} which guarantees that towards approximating un-$(X,t)$, it is sufficient to consider $(Y,\hat{t})$. In Section~\ref{subsec:un_2_app}, we focus on approximating $(Y,\hat{t})$. Corollary~\ref{coro:sub_sum_top} guarantees that towards approximating $(Y,\hat{t})$, it is sufficient to consider each $Y^j_k$. Subsection~\ref{subsec:un_g_1} and Subsection~\ref{subsec:m>_2} are dedicated to approximating $Y^j_k$ for two separate cases.

		Before we move on to the details,	we briefly present the main idea. The crucial observation that leads to an almost linear time algorithm for UNBOUNDED SUBSET SUM is the following sparsification lemma by Klein~\cite{DBLP:conf/soda/Klein22} on the exact algorithm for UNBOUNDED SUBSET SUM.  
		\begin{lemma}[CF. Corollary 1. from~\cite{DBLP:conf/soda/Klein22}]\label{lemma:supp}
			Given a set $X = \{x_1, x_2,\cdots,x_n\}\subset \mathbb{N}$. If there is a feasible solution to the following integer program: 
			\begin{equation*} 
				\begin{aligned}
					\mathbf{IP_t:}  & \sum^{n}_{i=1} v_i x_i  = t \\  
					& \vev=(v_1,v_2,\cdots,v_n) \in \mathbb{N}^n, 
				\end{aligned}
			\end{equation*}
			Then there is a solution $\vev$ to $\mathbf{IP_t}$ such that $\text{supp}(\vev)\le \log_{2}(X^{min})+1$, where $\text{supp}(\vev)$ denotes the number of nonzero coordinates of $\vev$, i.e., $\text{supp}(\vev)=|\{i:v_i\neq 0\}|$.  
		\end{lemma}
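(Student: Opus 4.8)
Since Lemma~\ref{lemma:supp} is quoted from Klein~\cite{DBLP:conf/soda/Klein22}, the plan is to reproduce the descent argument that underlies it. Fix any feasible $\vev=(v_1,\dots,v_n)\in\mathbb N^n$ with $\sum_{i=1}^n v_i x_i=t$, and among all feasible vectors choose one that is lexicographically minimal under the following two criteria, in order: (a) the number of summands $\sum_i v_i$ is as small as possible; (b) subject to (a), the support size $|\text{supp}(\vev)|$ is as small as possible. Write $S=\text{supp}(\vev)$ and $m=|S|$. The target inequality $m\le\log_2(X^{min})+1$ is equivalent to $2^{m-1}\le X^{min}$, so it suffices to derive a contradiction from the assumption $2^{m-1}>X^{min}$.

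The first step I would carry out is a ``distinct block-sums'' claim: the $2^m$ integers $\sigma(T):=\sum_{i\in T}v_i x_i$, with $T$ ranging over subsets of $S$, are pairwise distinct. Indeed, if $\sigma(T_1)=\sigma(T_2)$ with $T_1\neq T_2$, put $A=T_1\setminus T_2$ and $B=T_2\setminus T_1$; these are disjoint, not both empty, and in fact both nonempty because every $v_ix_i>0$. Then $\sigma(A)=\sigma(B)$, and replacing $\vev$ by the vector that zeroes all coordinates in $A$ and doubles all coordinates in $B$ yields another feasible vector whose summand count differs by $\sum_{i\in B}v_i-\sum_{i\in A}v_i$. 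Orienting the pair so that this difference is $\le 0$ and invoking criterion (a) forces $\sum_{i\in A}v_i=\sum_{i\in B}v_i$; but then the modified vector has the same summand count and strictly smaller support (it loses the nonempty set $A$ and gains nothing), contradicting (b). Hence all $\sigma(T)$ are distinct.

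The second step is a pigeonhole modulo $X^{min}$. Assuming $2^m>X^{min}$, two of the distinct integers $\sigma(T_1),\sigma(T_2)$ are congruent modulo $X^{min}$; forming $A,B$ as above (disjoint, and after deleting common elements each nonempty), distinctness gives $\sigma(A)\neq\sigma(B)$, say $\sigma(A)-\sigma(B)=c\cdot X^{min}$ with $c\ge 1$, and necessarily $A\neq\emptyset$. I would then form a new feasible vector by zeroing the coordinates in $A$, doubling the coordinates in $B$, and adding $c$ extra copies of the input element equal to $X^{min}$: this preserves the total $t$, removes the nonempty set $A$ from the support, and introduces at most one new support element, so the support cannot increase; and a short calculation bounding $\sum_i v_i$ from below via $x_i\ge X^{min}$ should show the summand count does not increase either. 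Combining the two non-increases with the strict drop of $A$ from the support then contradicts the minimality of $\vev$ and proves the lemma.

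The delicate point — and the step I expect to be the real obstacle — is the summand-count bookkeeping in the last paragraph. Writing $\sigma(A)=\sum_{i\in A}v_ix_i=\sum_{i\in B}v_ix_i+cX^{min}$ and using only $x_i\ge X^{min}$ yields $\sum_{i\in A}v_i\ge \sum_{i\in B}v_i+c$ cleanly when the elements indexed by $A$ are all equal to $X^{min}$, but in general the new vector could use strictly more summands, which would violate criterion (a) outright rather than give a contradiction. Resolving this requires either a sharper choice of the minimal vector (for instance a secondary potential of the form $\sum_i v_i x_i$ weighted toward the larger elements, which the doubling-and-replacement operation is designed to decrease) or re-running the pigeonhole over a carefully chosen sub-family of the $2^m$ block-sums so that the replaced block is forced to sit at the smallest used element. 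Making this descent well-founded, so that the process cannot cycle, is the technical heart of Klein's argument~\cite{DBLP:conf/soda/Klein22}, and I would adopt his treatment for that step.
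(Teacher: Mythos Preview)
The paper supplies no proof of this lemma: it is quoted verbatim from Klein~\cite{DBLP:conf/soda/Klein22} and used as a black box, so there is nothing in the paper to compare your argument against.

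On the merits, the gap you flag in Step~2 is genuine and is not a detail. With primary potential $\sum_i v_i$, the move ``zero $A$, double $B$, add $c$ copies of $X^{\min}$'' can strictly increase $\sum_i v_i$: take $A=\{i_0\}$ with $x_{i_0}=5X^{\min}$, $v_{i_0}=1$, and $B=\emptyset$; then $c=5$ and $\sum_i v_i$ goes up by $4$. No orientation choice or restriction of the pigeonhole family repairs this for that potential, so ``adopting Klein's treatment for that step'' is really deferring the entire argument.

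The clean fix changes both the potential and the subset family. Let $j$ be the index with $x_j=X^{\min}$, set $\Phi(\vev)=\sum_{i\neq j} v_i x_i$, minimize $\Phi$ first and then the support, and run the pigeonhole over \emph{unit} subset sums $\sum_{i\in T} x_i$ for $T\subseteq S':=\text{supp}(\vev)\setminus\{j\}$ (not the block sums $\sum_{i\in T} v_i x_i$ over all of $S$). If $2^{|S'|}>X^{\min}$, two such sums agree modulo $X^{\min}$; take disjoint $A,B\subseteq S'$ with $\sum_{i\in A}x_i-\sum_{i\in B}x_i=cX^{\min}$, $c\ge 0$, not both empty, and set $\vev'=\vev-\lambda\mathbf 1_A+\lambda\mathbf 1_B+\lambda c\,e_j$ with $\lambda=\min_{i\in A}v_i$. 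Feasibility is immediate, and because $A,B\subseteq S'$ avoid $j$ one gets $\Phi(\vev')-\Phi(\vev)=-\lambda c\,X^{\min}$: strictly negative when $c\ge 1$, and when $c=0$ the potential is unchanged while the support strictly drops. Hence $2^{|S'|}\le X^{\min}$, i.e.\ $|S'|\le\log_2 X^{\min}$, and $|\text{supp}(\vev)|\le|S'|+1\le\log_2 X^{\min}+1$; the ``$+1$'' is exactly the index $j$. Your hint about ``a potential weighted toward the larger elements'' points this way, but as written your two-step plan does not close.
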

		Lemma~\ref{lemma:supp} implies that, for UNBOUNDED SUBSET SUM problem, any subset-sum, including $OPT$, can be achieved by a sparse solution $\vev$ in the sense its support $\text{supp}(\vev)$ is bounded by a logarithmic value. More precisely, given $(X,t)$ as an input, we let $T(X):=\{\sum^{n}_{i=1} v_i x_i|\vev \in \mathbb{N}^n\}$. Suppose $x_i=(1/\epsilon)^{\OO(1)}$ for simplicity, 
		then $T(X)=T_{sparse}(X)$ by Lemma~\ref{lemma:supp}, where $T_{sparse}(X):=\{\sum^{n}_{i=1} v_i x_i|\vev \in \mathbb{N}^n, \text{supp}(\vev)=\OO(\log \frac{1}{\epsilon}) \}$. 
		Now consider an arbitrary $\sum^{n}_{i=1} v_i x_i\in T_{sparse}(X)$. Those nonzero $v_i$'s do not necessarily take the same value. However, by writing each nonzero $v_i$ into a binary number, $v_i=\sum_h \gamma_h(v_i)\cdot 2^h$ where $\gamma_h(v_i)\in\{0,1\}$, we have the following equivalent expression:
		$$\sum^{n}_{i=1} v_i x_i=\sum_h 2^h\sum^{n}_{i=1} \gamma_h(v_i) x_i,  \quad\quad \text{supp}(\gamma_h(v_1),\gamma_h(v_2),\cdots,\gamma_h(v_n))=\OO(\log\frac{1}{\epsilon}).$$
		Therefore, let $G_{\log}$ stands for the set of sums of at most $\OO(\log\frac{1}{\epsilon})$ elements of $X$, then the above equation implies that $T_{sparse}(X)\subset G_{\log}\oplus 2G_{\log}\oplus\cdots\oplus 2^hG_{\log}\cdots$. On the other hand, $G_{\log}\oplus 2G_{\log}\oplus\cdots\oplus 2^hG_{\log}\cdots\subset T(X)=T_{sparse}(X)$, hence $T_{sparse}(X)= G_{\log}\oplus 2G_{\log}\oplus\cdots\oplus 2^hG_{\log}\cdots$. Using the above equation and the fact that $G_{\log}$ can be approximated in nearly linear time, Theorem~\ref{the:un_boubded_sum} can be proved. In the subsequent subsections, we provide the details. 

		\subsection{Preprocessing the UNBOUDED SUBSET SUM instance.}\label{subsec:unboun_pre}
		
		

		Given any UNBOUNDED SUBSET SUM instance un-$(X,t)$, where $X = \{x_1,x_2,\cdots,x_n\}$ is a set of distinct positive integers and $t>0$ is a fixed constant. Assuming that $\epsilon t<x < t$ for every $x \in X$, the goal of this section is to simplify $(X,t)$. 
		Formally, we have the following lemma.
		
		\begin{lemma}\label{lemma:pre_unbo}
			Given any UNBOUNDED SUBSET SUM instance un-$(X,t)$, where $\epsilon t < x < t$ for every $x\in X$. 
			Let $OPT$ be the optimal objective value of un-$(X,t)$.  In $\OO\left((|X|+ \frac{1}{\epsilon})(\log |X| )^2(\log \frac{1}{\epsilon})^2\right)$ time, the followings can be achieved: 
			\begin{enumerate}
				\item[(i).] We can obtain a modified BOUNDED SUBSET SUM instance $(Y,\hat{t})$ satisfying the following conditions:
				\begin{itemize}
					\item[(A)] $\hat{t}=(1+\epsilon)\epsilon^{-3}$. 
					\item[(B)] The optimal objective value of $(Y,\hat{t})$ is at least $\frac{(1-\epsilon)  OPT}{\epsilon^3 t}$.
					\item[(C)]  $Y$ is divided into $\OO(\log |X| (\log \frac{1}{\epsilon})^2)$ groups: $Y^j_{k}$'s, i.e., $Y = \mathop{\dot{\cup}}\limits_{j}(\mathop{\dot{\cup}}\limits_{k} Y^j_k)$. 
					\item[(D)]  $set(Y^j_{k})$ is explicitly given for every $Y^j_k$ and $ \mathop{\sum}\limits_{j}\mathop{\sum}\limits_{k} |set(Y^j_k)| \le |X|$.
					\item [(E)]  Each subgroup $Y^j_{k}$ satisfies the following conditions:
					
					\begin{enumerate}
						\item $Y^j_{k}\subset [\frac{2^{j-1}(1-\epsilon)}{\epsilon^2},\frac{2^{j}(1+\epsilon)}{\epsilon^2}]$;
						\item $Y^j_{k}= \rho^j_{k} \overline{Y}^j_k$, where $\rho^j_{k} = \Omega(\epsilon^{-1})$ and $\overline{Y}^j_k \subset \mathbb{N}_{+}\cap [\frac{1}{2\epsilon}, \frac{1}{\epsilon}]$;
						\item  Let $ n^j_k = \lfloor \frac{\hat{t}}{(Y^j_k)^{\min}}\rfloor$ and let
						\begin{align*}
							l^j_k =
							\begin{cases}
								&2^{1+pow(\log_{2}(\frac{1}{\epsilon})+1)} \hspace{9mm} \  \text{if } n^j_k  \le 2^{1+pow(\log_{2}(\frac{1}{\epsilon})+1)}\\
								&2n^j_k\cdot 2^{1+pow(\log_{2}(\frac{1}{\epsilon})+1)} \hspace{2mm} \  \text{if } n^j_k > 2^{1+pow(\log_{2}(\frac{1}{\epsilon})+1)}
							\end{cases}
						\end{align*}
						then	$Y^j_{k}$ consists of $l^j_k$ copies of $set(Y^j_{k})$, i.e., every element of $Y^j_k$ has the same multiplicity, which is $l^j_k$.
					\end{enumerate}
				\end{itemize}

				\item[(ii).]  Meanwhile, we can build an oracle for backtracking from $Y$ to $X$. Precisely, for each $Y^j_k$, given any $y \in Y^j_k$, in $\OO(1)$ time,  the oracle will return $x' \in X$ such that $|x' - \epsilon^3 t y | \le \epsilon x'$.
				
				
			\end{enumerate}
		\end{lemma}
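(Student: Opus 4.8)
The plan is to follow the same four-step preprocessing template used for Lemma~\ref{obs:obs_pre}, but now exploiting the unbounded structure and Klein's sparsification (Lemma~\ref{lemma:supp}) to control the multiplicities. First I would preprocess the target: scale $t$ and every $x\in X$ by $\frac{\epsilon^3 t}{4}$ (or a comparable factor) so that the scaled target becomes $\hat t = (1+\epsilon)\epsilon^{-3}$ and each scaled element lands in the window $[\frac{1}{\epsilon^2},\frac{4}{\epsilon^3}]$, using the hypothesis $\epsilon t < x < t$. As in Step~2 of the bounded preprocessing, partition the scaled elements into $\OO(\log\frac1\epsilon)$ dyadic bands $Y_j$ with $Y_j\subset[\frac{2^{j-1}}{\epsilon^2},\frac{2^{j}}{\epsilon^2})$; rewrite each band as $[\frac{1}{\epsilon^{2+\lambda_j}},\frac{2}{\epsilon^{2+\lambda_j}})$ with $0\le\lambda_j<2$ and invoke the Number-Theoretic Rounding Lemma (Lemma~\ref{lemma:smooth_appro}) with $\alpha_j = 1+\lambda_j$ and $d=1$ — so $\bar d = 0$ and $h_1 h_2\cdots h_{\bar d+1}$ is a single integer in $\mathbb{N}_+\cap[\frac{1}{2\epsilon},\frac1\epsilon]$ — obtaining, in $\tilde\OO(|X|+\frac1\epsilon)$ time, a set $\Delta_j$ of size $\OO(\log|X|(\log\frac1\epsilon)^{\OO(1)})$ of common divisors $\rho^j_k=\Theta(\epsilon^{-(1+\lambda_j)})=\Omega(\epsilon^{-1})$ with each rounded element of the form $\rho^j_k h$, $h\in\mathbb{N}_+\cap[\frac{1}{2\epsilon},\frac1\epsilon]$, and relative rounding error $\le\epsilon$. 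Further subdivide each $Y_j$ by which $\rho^j_k\in\Delta_j$ is used, giving the groups $Y^j_k$ with $Y^j_k = \rho^j_k\overline Y^j_k$, $\overline Y^j_k\subset\mathbb{N}_+\cap[\frac{1}{2\epsilon},\frac1\epsilon]$. Throughout, I maintain the one-to-one correspondences $\Phi_1,\Phi^j_2$ exactly as in Section~\ref{sec:proce} so that a subset of the rounded instance can be mapped back to $X$ in $\OO(1)$ per element with $|\,x' - \epsilon^3 t y\,|\le\epsilon x'$; note that in the unbounded setting we do not need the ``small elements'' step (there are none, since $x>\epsilon t$) nor the Koiliaris–Xu multiplicity-reduction step, because the multiplicities will be set directly.

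The key new ingredient is determining the multiplicity $l^j_k$ of each element in $Y^j_k$. The point is that in an unbounded instance, $S(X)$ with unbounded multiplicities equals the sumset of dyadic-scaled copies of $G_{\log}$, the set of sums of $\OO(\log\frac1\epsilon)$ elements, by the argument sketched after Lemma~\ref{lemma:supp}: any value $\le\hat t$ achievable at all is achievable with support $\OO(\log(X^{\min}))=\OO(\log\frac1\epsilon)$ coordinates (since every scaled element is $\ge\epsilon^{-2}$, so at most $\hat t\cdot\epsilon^{2}=\OO(\epsilon^{-1})$ copies total, and the support bound is $\log_2(\text{value})+1$), and writing each coordinate in binary redistributes the multiplicity across $\OO(\log\frac1\epsilon)$ ``binary layers'' $G_{\log},2G_{\log},4G_{\log},\dots$ each of which again uses support $\OO(\log\frac1\epsilon)$. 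Hence it suffices to keep, for each element, $l^j_k$ copies where $l^j_k$ is the number $2^{1+pow(\log_2(1/\epsilon)+1)}$ (enough copies for one ``layer'' of a sparse solution) when $n^j_k=\lfloor\hat t/(Y^j_k)^{\min}\rfloor$ is at most that bound, and $2n^j_k\cdot 2^{1+pow(\log_2(1/\epsilon)+1)}$ otherwise (enough copies to also cover the $\OO(\log n^j_k)$ binary layers); by construction $n^j_k=\OO(\epsilon^{-2})$ so $l^j_k=\tilde\OO(\epsilon^{-2})$ and $|Y^j_k|=\tilde\OO(\epsilon^{-3})$, which is the size we want for downstream FFT-based routines. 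I would prove correctness of this truncation by a direct support-counting argument: fix an optimal $\vev$ for the scaled instance, split it band by band, and within each band group coordinates by their common divisor $\rho^j_k$; the total multiplicity used in band $j$, group $k$ is $\le\hat t/(Y^j_k)^{\min}=n^j_k$, and writing every used coordinate's multiplicity in binary shows that each of the $\le 1+\log_2 n^j_k$ binary layers uses at most $\OO(\log\frac1\epsilon)$ distinct elements each with multiplicity $1$; summing layers shows $l^j_k$ copies suffice, so $S(Y^j_k)$ as a finite multiset recovers all the relevant unbounded subset-sums up to the overall $(1\pm\tilde\OO(\epsilon))$ rounding error.

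The running-time accounting is routine: the scaling and banding cost $\OO(|X|)$; the $\OO(\log\frac1\epsilon)$ invocations of Lemma~\ref{lemma:smooth_appro} with $d=1$, $\bar d=0$ cost $\tilde\OO((|X|+\frac1\epsilon))$ in total (with only polylogarithmic factors since $\bar d+1=1$); the further subdivision by $\rho^j_k$ and assembling the explicit $set(Y^j_k)$ with multiplicities $l^j_k$ costs $\OO(\sum_{j,k}|set(Y^j_k)|\cdot(\log\frac1\epsilon)^{\OO(1)})=\OO(|X|(\log\frac1\epsilon)^{\OO(1)})$; multiplying out the $(\log|X|)^2(\log\frac1\epsilon)^2$ factors from the number of groups and the dictionary-maintenance overhead gives the claimed $\OO((|X|+\frac1\epsilon)(\log|X|)^2(\log\frac1\epsilon)^2)$. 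The objective-value bound (B) follows because rounding multiplies each achievable value by a factor in $[1-\epsilon,1+\epsilon]$ and we scaled by $\frac{\epsilon^3 t}{4}$ up to the $\epsilon$-slack absorbed into $\hat t=(1+\epsilon)\epsilon^{-3}$, so the optimum of $(Y,\hat t)$ is at least $\frac{(1-\epsilon)OPT}{\epsilon^3 t}$; the backtracking oracle (ii) is the composition $\Phi_1\circ\Phi^j_2$ together with dividing out $\rho^j_k$, each step $\OO(1)$ per element and introducing only the stated relative error $|x'-\epsilon^3 t y|\le\epsilon x'$.

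I expect the main obstacle to be (E)(c): pinning down the exact constant/form of $l^j_k$ so that it is simultaneously (a) large enough that the finite multiset $Y^j_k$ loses no subset-sum needed to reconstruct $OPT$ via the binary-layer decomposition — this requires carefully tracking how Klein's support bound $\log_2(X^{\min})+1$ interacts with band-by-band splitting and with the binary expansion of multiplicities, since the support bound must hold per layer and the number of layers is $\OO(\log n^j_k)$ — and (b) small enough ($\tilde\OO(\epsilon^{-2})$, hence $|Y^j_k|=\tilde\OO(\epsilon^{-3})$) that the later FFT-based sumset algorithms stay within budget. The case split in the definition of $l^j_k$ (whether $n^j_k$ exceeds the single-layer threshold $2^{1+pow(\log_2(1/\epsilon)+1)}$) is precisely there to handle both regimes, and verifying it does the job is the delicate part; everything else is bookkeeping already established in Sections~\ref{sec:theoretic_lemma} and~\ref{sec:proce}.
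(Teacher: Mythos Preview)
Your proposal follows essentially the same three-step construction as the paper (scale by $\epsilon^3 t$, band dyadically, apply Lemma~\ref{lemma:smooth_appro} with $d=1$, subdivide by common divisor, set multiplicities $l^j_k$), and the backtracking oracle via $\Psi_1\circ\Psi^j_2$ is exactly what the paper does. Two minor discrepancies: the paper scales by $\epsilon^3 t$ (not $\epsilon^3 t/4$), and since $\epsilon t<x<t$ the scaled window is $(\epsilon^{-2},\epsilon^{-3})$, giving $0\le\lambda_j<1$ rather than $<2$; neither affects the argument.

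The one place you over-complicate is the justification of the multiplicity $l^j_k$. You anticipate needing Klein's support bound and the binary-layer decomposition to argue that $l^j_k$ copies suffice to preserve the optimum. In the paper, however, condition~(B) follows from the elementary observation that any feasible solution of the unbounded instance uses at most $n^j_k=\lfloor\hat t/(Y^j_k)^{\min}\rfloor$ copies from group $(j,k)$, and since $l^j_k\ge n^j_k$ in both cases of the definition, the bounded and unbounded instances have identical optima. The specific form of $l^j_k$ (with the extra factor $2^{1+pow(\log_2(1/\epsilon)+1)}$) is \emph{not} justified in this lemma at all; it is chosen so that the later Observations~\ref{obs:G_m} and~\ref{obs:G+G} (which do use Klein's lemma and the binary expansion) go through. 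In other words, Lemma~\ref{lemma:pre_unbo} merely \emph{defines} the structure; the ``main obstacle'' you identify is deferred to the proof of Lemma~\ref{lemma:sam_mul}. Your argument would still work, but it front-loads analysis that the paper carries out downstream.
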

		\noindent\textbf{Rermark.}  We briefly explain how Lemma~\ref{lemma:pre_unbo} is leveraged to prove Theorem~\ref{the:un_boubded_sum}. Roughly speaking, Lemma~\ref{lemma:pre_unbo} reduces the UNBOUNDED SUBSET SUM to the bounded version (but has a special structure) so that we may utilize techniques developed in previous sections. More precisely, consider the second part of Lemma~\ref{lemma:pre_unbo}, for any subset $U^j_k \subset Y^j_k$, if $card_{U^j_k} [y]$ is known for every $y\in set(Y^j_k)$, then in $\OO(|set(Y^j_k)|)$ time, the oracle will return a multiset $X^j_k$ such that $set(X^j_k)\subset X$ and  $| \Sigma(X^j_k) - \epsilon^3 t \Sigma(U^j_k)| \le \epsilon\Sigma(X^j_k)$. Furthermore, given subset $U^j_k \subset Y^j_k$ for every $Y^j_k$ and let $U'$ be the multiset-union of all these $U^j_k$'s. For each $U^j_k$, if $card_{U^j_k} [y]$ is known for every $y\in set(Y^j_k)$, then in $\OO(\sum_{j}\sum_{k}|set(Y^j_k)|) = \OO(|X|)$ time, the oracle will return a multiset $X'$ such that $set(X')\subset X$ and  $| \Sigma(X') - \epsilon^3 t \Sigma(U')| \le \epsilon\Sigma(X')$. Moreover, note that the optimal objective value of $(Y,\hat{t})$ is at least $\frac{(1-\epsilon) OPT}{\epsilon^3 t} $, if $U'$ is a weak $(1-\tilde\OO(\epsilon))$-approximation solution of $(Y,\hat{t})$, then $X'$ is a weak $(1-\tilde\OO(\epsilon))$-approximation solution of un-$(X,t)$. Thus towards proving Theorem~\ref{the:un_boubded_sum}, it is sufficient to consider $(Y,\hat{t})$, which shall be handled in the next subsection, Subsection~\ref{subsec:un_2_app}.  

		The rest of this section is dedicated to proving Lemma~\ref{lemma:pre_unbo}. We will step-by-step modify the given UNBOUNDED SUBSET SUM instance un-$(X,t)$, and Lemma~\ref{lemma:pre_unbo} follows directly after all the modification operations. 
		
		\paragraph{Step 1: Scaling and Grouping.}  We scale $t$ and each element in $X$ by $\epsilon^3 t$. To be specific, we scale $t$ to $\bar{t} = t \cdot \frac{1}{\epsilon^3 t}$ and scale each $x_i \in X$ to $\bar{x}_i = x_i \cdot \frac{1}{\epsilon^3 t}$. Let $\overline{X}=\{\bar{x}_1,\bar{x}_2,\cdots,\bar{x}_n\}$. Notice that the optimal objective value of UNBOUNDED SUBSET SUM instance un-$(\overline{X},\bar{t})$ is at least $\frac{OPT}{\epsilon^3 t}$. 
		
		Observe that there is a one-to-one correspondence between $X$ and $\overline{X}$: $x \leftrightarrow   \frac{x}{\epsilon^3 t}$ for any $x\in X$. We define the mapping $\Psi_1$ from $\overline{X}$ to $X$ as follows: given any $y\in \overline{X}$, mapping $\Psi_1$ returns $\epsilon^3 t y$. Denote by $\Psi_1(y)$ the element returned by $\Psi_1$ given $y\in \overline{X}$.
		
		Note that $ \overline{X} \subset [\frac{1}{\epsilon^2},\frac{1}{\epsilon^3})$, we can divide $\overline{X}$ into $\eta =  \OO(\log \frac{1}{\epsilon})$ groups, denoted by $\overline{X}_1, \overline{X}_2, \cdots \overline{X}_{\eta}$, such that $\bar{x} \in \overline{X}_j$ if and only if $x \in [\frac{2^{j-1}}{\epsilon^2},\frac{2^{j}}{\epsilon^2})\cap \overline{X}$. 
		
		The total processing time of Step 1 is $\OO(|X|)$.
		
		\paragraph{Step 2. Rounding and Further Grouping.} Consider each $\overline{X}_j$, note that $\overline{X}_j \subset  [\frac{2^{j-1}}{\epsilon^2},\frac{2^{j}}{\epsilon^2}]=[\frac{1}{\epsilon^{2+\lambda_j}}, \frac{2}{\epsilon^{2+\lambda_j}}]$, where $0 \le \lambda_j < 1$. 
		Recall Lemma~\ref{lemma:smooth_appro}, given $d = 1$ and $\alpha_j = 1+\lambda_j$, in $\OO\left((|\overline{X}_j|+\frac{1}{\epsilon})\log |\overline{X}_j|\log\frac{1}{\epsilon}\right)$ time, we can obtain a set $\Delta_j \subset \mathbb{R}$ with 	
		$\Delta_j \subset \Theta(\frac{1}{\epsilon^{1+\lambda_{j}}})$ and $|\Delta_j| = \OO(\log |\overline{X}_j|\log\frac{1}{\epsilon})$, moreover, we can round every $\bar{x} \in \overline{X}_j$ to the form $\rho h$ where $\rho \in \Delta_j$ and $h$ satisfying the following conditions: 
		\begin{subequations}
			\begin{align}
				&h\in \mathbb{N}_{+}\cap[\frac{1}{2\epsilon},\frac{1}{\epsilon}]; \label{eq:con_1_un}\\
				&|\bar{x}-\rho h| \le \epsilon \bar{x}. \label{eq:con_2_un}
			\end{align}
		\end{subequations}
		Denote by $\hat{X}_j$ the set of all such rounded elements obtained from $\overline{X}_j$. Let $Z_j = set\{ \hat{X}_j\}$ and let $\hat{t} = (1+\epsilon)\bar{t}$. We have $Z_j \subset [\frac{2^{j-1}(1-\epsilon)}{\epsilon^2},\frac{2^{j}(1+\epsilon)}{\epsilon^2}]$. Recall that the optimal objective value of UNBOUNDED SUBSET SUM instance un-$(\overline{X},\bar{t})$ is at least $\frac{OPT}{\epsilon^3 t}$.  Condition~\eqref{eq:con_2_un} guarantees that the optimal objective value of UNBOUNDED SUBSET SUM instance un-$(set({\cup}^{\eta}_{j=1} Z_j), \hat{t})$ is at least $\frac{(1-\epsilon)OPT}{\epsilon^3 t}$. 

		Notice that for each $\hat{X}_j$, there is a one-to-one correspondence between $\overline{X}_j$ and $\hat{X}_j$: $\bar{y} \leftrightarrow \hat{y}$ for any $\bar{y}\in \overline{X}_j$, where $\hat{y}$ is the factorized form of $\bar{y}$ obtained after the above rounding procedure. For each $\hat{y} \in Z_j$, let $\mathcal{M}^j_{\hat{y}}$ denote the set of all numbers in $\overline{X}_j$ factorized to the form $\hat{y}$. We define the mapping $\Psi^j_2$ from $Z_j$ to $\overline{X}_j$ as follows: given any $\hat{y} \in Z_j$, mapping $\Psi^j_2$ returns any one element in $\mathcal{M}^j_{\hat{y}}$. Denote by $\Psi^j_2(\hat{y})$ the element returned by $\Psi^j_2$ given $\hat{y} \in Z_j$.

		For each $Z_j$, let $\Delta_j = \{\rho^j_1,\rho^j_2,\cdots, \rho^j_{|\Delta_j|}\}$, where $|\Delta_j|= \OO(\log |\overline{X}_j|\log\frac{1}{\epsilon})$. We further divide $Z_j$ into $|\Delta_j|$ groups, denoted by $Z^j_1,Z^j_2,\cdots,Z^j_{|\Delta_{j}|}$, such that $y \in Z^j_k$ if and only $y$ is of the form $\rho^j_k h$.

		
		The total preprocessing time of Step 2 is $\OO((|X|+\frac{1}{\epsilon})(\log |X|)^2 (\log \frac{1}{\epsilon})^2)$.
		
		\paragraph{Step 3. Reduce the UNBOUNDED SUBSET SUM to BOUNDED SUBSET SUM.} Note that $\hat{t}  = \frac{1+\epsilon}{\epsilon^3}$ and $Z^j_k \in [\frac{2^{j-1}(1-\epsilon)}{\epsilon^2}, \frac{2^{j}(1+\epsilon)}{\epsilon^2}]$ for every $Z^{j}_k$, thus any optimal solution of $\text{un-}(set(\dot{\cup}^{\eta}_{j=1} Z_j), \hat{t})$ contains at most $n^j_k = \lfloor \frac{\hat{t}}{(Z^j_k)^{\min}} \rfloor$ elements from $Z^j_k$. Note that elements in $Z^j_k$ are different from each other. For every $Z^j_k$, we define a multiset $Y^j_k$ such that $set\{Y^j_k\} = Z^j_k$ and elements in $Y^j_k$ have the same multiplicity of $l^j_k$, where
		
		\begin{align*}
			l^j_k =
			\begin{cases}
				&2^{1+pow(\log_{2}(\frac{1}{\epsilon})+1)} \hspace{9mm} \ \text{if } n^j_k  \le 2^{1+pow(\log_{2}(\frac{1}{\epsilon})+1)}\\
				&2n^j_k\cdot 2^{1+pow(\log_{2}(\frac{1}{\epsilon})+1)} \hspace{2mm} \ \text{if } n^j_k > 2^{1+pow(\log_{2}(\frac{1}{\epsilon})+1)}
			\end{cases}
		\end{align*}
		Let $Y = \mathop{\dot{\cup}}\limits_{1\le j\le \eta; 1\le k\le |\Delta_j|} Y^j_k$. Note that any optimal solution of BOUNDED SUBSET SUM instance $(Y, \hat{t})$ is a optimal solution of UNBOUNDED SUBSET SUM instance $(set({\cup}^{\eta}_{j=1} Z_j), \hat{t})$, vice versa. Thus the optimal objective value of $(Y,\hat{t})$ is at least $\frac{(1-\epsilon)OPT}{\epsilon^3 t}$.

		
		Observe that the time to obtain $Y^j_k$ from $Z^j_k$ is $\OO(|Z^j_k| \log |Z^j_k|)$. Thus the total processing time of Step 3 is $\OO(|X| \log |X|)$.
		
		Till now, we complete the preprocessing procedure and the overall time for preprocessing UNBOUNDED SUBSET SUM instance $(X, t)$ is $\OO((|X|+\frac{1}{\epsilon})(\log |X|)^2 (\log\frac{1}{\epsilon})^2)$. This finishes the first half (item (i)) of Lemma~\ref{lemma:pre_unbo}. In the following, we prove the second half (item (ii)) of Lemma~\ref{lemma:pre_unbo}.

		\paragraph{\textbf{Modified instance after preprocessing.}} To summarize, we have reduced the UNBOUNDED SUBSET SUM instance un-$(X,t)$ to a BOUNDED SUBSET SUM instance $(Y,\hat{t})$ 
		satisfying the following conditions:
		\begin{itemize}
			\item[(A)] $\hat{t}=(1+\epsilon)\epsilon^{-3}$ 
			\item[(B)] The optimal objective value of $(Y,\hat{t})$ is at least $\frac{(1-\epsilon) OPT}{\epsilon^3 t}$.
			\item[(C)]  $Y$ has been divided into $\OO(\log |X| (\log \frac{1}{\epsilon})^2)$ groups: $Y^j_{k}$'s, i.e., $Y = \mathop{\dot{\cup}}\limits_{j} (\mathop{\dot{\cup}}\limits_{k} Y^j_k)$. 
			\item[(D)]  $set(Y^j_{k})$ is explicit given for every $Y^j_k$ and $ \sum_{j}\sum_{k} |set(Y^j_k)| \le |X|$.
			\item [(E)]  Each subgroup $Y^j_{k}$ satisfies the following conditions:
			
			\begin{enumerate}
				\item [(a)] $Y^j_{k}\subset [\frac{2^{j-1}(1-\epsilon)}{\epsilon^2},\frac{2^{j}(1+\epsilon)}{\epsilon^2}]$;
				\item [(b)] $Y^j_{k}= \rho^j_{k} \overline{Y}^j_k$, where $\rho^j_{k} = \Omega(\epsilon^{-1})$ and $\overline{Y}^j_k \subset \mathbb{N}_{+}\cap [\frac{1}{2\epsilon}, \frac{1}{\epsilon}]$;
				\item [(c)]   Let $ n^j_k = \lfloor \frac{\hat{t}}{(Y^j_k)^{\min}}\rfloor$ and let
				\begin{align*}
					l^j_k =
					\begin{cases}
						&2^{1+pow(\log_{2}(\frac{1}{\epsilon})+1)} \hspace{9mm} \text{if } n^j_k  \le 2^{1+pow(\log_{2}(\frac{1}{\epsilon})+1)}\\
						&2n^j_k\cdot 2^{1+pow(\log_{2}(\frac{1}{\epsilon})+1)} \hspace{2mm} \text{if } n^j_k > 2^{1+pow(\log_{2}(\frac{1}{\epsilon})+1)}
					\end{cases}
				\end{align*}
				$Y^j_{k}$ consists of $l^j_k$ copies of $set(Y^j_{k})$, i.e., every element in $Y^j_k$ has the same multiplicity, which is $l^j_k$.
			\end{enumerate}
		\end{itemize}

		\paragraph{\textbf{Oracle for backtracking from $Y$ to $X$.}}  We now present the oracle 
		for backtracking from $Y$ to $X$. For each $Y^j_{k}$, given $y\in Y^j_{k}$, the oracle 
		works as follows:
		\begin{itemize}
			\item 	Recall that $Y^j_k \subset Z_j$ and $\Psi^j_2$ is a mapping from $Z_j$ to $\overline{X}_j$. The oracle 
			uses $\Psi^j_2$ to first obtain $ \Psi^j_2(y)$. Observe that  $|\Psi^j_2(y) - y| \le \epsilon  \Psi^j_2(y)$  and $\Psi^j_2(y) \in \overline{X}$.
			
			\item Recall that $\Psi_1$ is a mapping from $ \overline{X}$ to $X$. Then the oracle 
			uses  $\Psi_1$ to obtain and return $\Psi_1(\Psi^j_2(y))$. Note that $\Psi_1(\Psi^j_2(y)) \in X$ and $\Psi_1(\Psi^j_2(y)) = \epsilon^3 t \Psi^j_2(y)$.
		\end{itemize}
		Let $x'= \Psi_1(\Psi^j_2(y))$. The above backtracking procedure only takes $\OO(1)$ time. To summarize, we have $| x' - \epsilon^3 t y| = \epsilon^3 t \cdot |\Psi^j_2(y) -y | \le \epsilon^3 t \cdot \epsilon  \Psi^j_2(y) =\epsilon x'$. 
		
		Till now, we have completed the proof of Lemma~\ref{lemma:pre_unbo}.

		\subsection{Algorithm for UNBOUNDED SUBSET SUM.}\label{subsec:un_2_app}
		The goal of this section is to prove Theorem~\ref{the:un_boubded_sum}. Given any UNBOUNDED SUBSET SUM instance $\text{un-}(X,t)$, where $X = \{x_1,x_2,\cdots,x_n\}$ is a set of distinct positive integers and $t>0$ is a fixed constant. Let $OPT$ be the optimal objective value of $\text{un-}(X,t)$. Our goal is to find a weak $(1-\tilde\OO(\epsilon))$-approximation solution for $\text{un-}(X,t)$. Without loss of generality, we may assume that $0< x \le (1+\tilde\OO(\epsilon))t$ for every $x\in X$. Furthermore, according to Lemma~\ref{lemma:pre_2_un}, we may assume that $\epsilon t < x <t$ for every $x\in X$. Then by Lemma~\ref{lemma:pre_unbo}, in $\OO((n+\frac{1}{\epsilon}) (\log n )^2(\log \frac{1}{\epsilon})^2)$ time, we can obtain a bounded SUBSET SUM instance $(Y,\hat{t})$ satisfying conditions (A)(B)(C)(D)(E) and meanwhile build an oracle for backtracking from $Y$ to $X$ (see Lemma~\ref{lemma:pre_unbo} in Section~\ref{subsec:unboun_pre}). Condition (B) and the backtracking oracle guarantee that towards proving Theorem~\ref{the:un_boubded_sum}, it is sufficient to consider $(Y,\hat{t})$. Precisely, Theorem~\ref{the:un_boubded_sum} follows directly from the following Lemma~\ref{lemma:re_un_1}.

		\begin{lemma}\label{lemma:re_un_1}
			Give any UNBOUNDED SUBSET SUM instance $\text{un-}(X,t)$ where $\epsilon t < x < t$ for every $x\in X$. Let $(Y,\hat{t})$ be the modified instance returned by Lemma~\ref{lemma:pre_unbo}, where $Y$ is the multiset-union of all $Y^j_k$'s.
			Then in  $\tilde\OO(|X| +\epsilon^{-1})$ processing time, we can determine $U^j_k \subset Y^j_k$ for every $U^j_k$ such that 
			\begin{enumerate}
				\item[(i).]  For each $U^j_k$, $card_{U^j_k }[y]$ is explicit given for every $y\in set(Y^j_k)$.
				\item[(ii).] The multiset-union of all $U^j_k$'s is a weak $(1-\tilde\OO(\epsilon))$-approximation solution for $(Y,\hat{t})$.						
			\end{enumerate}				
		\end{lemma}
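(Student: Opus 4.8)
\textbf{Proof proposal for Lemma~\ref{lemma:re_un_1}.}

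The plan is to follow the decomposition already set up in the excerpt: by Lemma~\ref{lemma:pre_unbo}, the instance $(Y,\hat t)$ has $Y=\dot\cup_j\dot\cup_k Y^j_k$ with each $Y^j_k=\rho^j_k\overline Y^j_k$ consisting of $l^j_k$ copies of the set $Z^j_k:=set(Y^j_k)\subset\mathbb N_+\cap[\tfrac{2^{j-1}(1-\epsilon)}{\epsilon^2},\tfrac{2^j(1+\epsilon)}{\epsilon^2}]$, and $\hat t=\Theta(\epsilon^{-3})$. Since $\sum_j\sum_k|Z^j_k|\le|X|$ and $Y$ is divided into only $\OO(\log|X|(\log\tfrac1\epsilon)^2)$ groups, Corollary~\ref{coro:sub_sum_top} reduces the whole task to: for every $Y^j_k$, compute an $(\tilde\OO(\epsilon),\hat t)$-approximate set of $S(Y^j_k)$ of cardinality $\OO(\tfrac1\epsilon)$ together with a $\tilde\OO(|Z^j_k|+\epsilon^{-1})$-time backtracking oracle that, given a target, returns a subset $U^j_k\subset Y^j_k$ represented compactly by the multiplicities $card_{U^j_k}[y]$ for $y\in Z^j_k$. (The additive errors of $\tilde\OO(\epsilon)\hat t$ per group accumulate to $\tilde\OO(\epsilon)\hat t$ over polylogarithmically many groups, which is acceptable for a weak $(1-\tilde\OO(\epsilon))$-approximation of $(Y,\hat t)$.) So the heart of the proof is a per-group lemma, and the rest is plugging into Corollary~\ref{coro:sub_sum_top} and then applying the backtracking oracle of Lemma~\ref{lemma:pre_unbo} to translate $U:=\dot\cup U^j_k$ back to a multiset $X'$ with $set(X')\subset X$ and $|\Sigma(X')-\epsilon^3 t\,\Sigma(U)|\le\tilde\OO(\epsilon)\Sigma(X')$, which by condition (B) of Lemma~\ref{lemma:pre_unbo} is a weak $(1-\tilde\OO(\epsilon))$-approximation of un-$(X,t)$.

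For the per-group lemma I would invoke the sparsification idea already spelled out in the overview via Lemma~\ref{lemma:supp}. Fix $j,k$ and write $G:=Z^j_k$, $\rho:=\rho^j_k$, so $G\subset\mathbb N_+$, $|G|\le|X|$, and each element of $G$ has the same multiplicity $l^j_k$ in $Y^j_k$, where $l^j_k\ge 2n^j_k\cdot 2^{1+pow(\log_2(1/\epsilon)+1)}$ whenever $n^j_k=\lfloor\hat t/(Y^j_k)^{\min}\rfloor$ exceeds $2^{1+pow(\log_2(1/\epsilon)+1)}$. The key claim is that $S(Y^j_k)\cap[0,\hat t]$ coincides with $\big(G^{\log}\oplus 2G^{\log}\oplus\cdots\oplus 2^{pow(n^j_k)}G^{\log}\big)\cap[0,\hat t]$ (and, when $n^j_k$ is already small, with $S(\underbrace{G\dot\cup\cdots\dot\cup G}_{2^{1+pow(\log_2(1/\epsilon)+1)}})\cap[0,\hat t]$), where $G^{\log}$ denotes the $\ell$-fold sumset of $G$ with $\ell=2^{1+pow(\log_2(1/\epsilon)+1)}=\Theta(\log\tfrac1\epsilon)$. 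The inclusion $\supseteq$ is immediate since $l^j_k$ copies of $G$ are available; for $\subseteq$, any subset-sum $s\le\hat t$ of $Y^j_k$ uses at most $n^j_k$ elements of $Z^j_k$, hence is $\sum_{x\in G}v_x x$ with $\|v\|_1\le n^j_k$, and by Lemma~\ref{lemma:supp} applied to the unbounded program (whose smallest element is $\Omega(\epsilon^{-2})$, but more importantly we only need the binary-expansion step) we may assume $supp(v)=\OO(\log\tfrac1\epsilon)$; expanding each coordinate in binary $v_x=\sum_h\gamma_h(x)2^h$ gives $s=\sum_{h=0}^{pow(n^j_k)}2^h\sum_{x\in G}\gamma_h(x)x$ with each inner sum having support $\OO(\log\tfrac1\epsilon)$, i.e.\ lying in $G^{\log}$ (here one checks $\ell\ge 2(\log_2(1/\epsilon)+2)$ covers the support bound $\log_2(X^{\min})+1$ after the binary split, using $X^{\min}\ge\epsilon t$ and the scaling, which is exactly why $\ell$ is defined with that $pow$). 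Thus approximating $S(Y^j_k)$ capped at $\hat t$ reduces to approximating $G^{\log}\oplus 2G^{\log}\oplus\cdots\oplus 2^{pow(n^j_k)}G^{\log}$ capped at $\hat t$.

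The remaining work is algorithmic and uses tools already in the excerpt. First compute an $(\tilde\OO(\epsilon),\hat t)$-approximate set $C_{\log}$ of $G^{\log}$ with $|C_{\log}|=\OO(\tfrac1\epsilon)$ and a backtracking oracle: scale $G$ down by $\rho$ so that $\tfrac1\rho G\subset\mathbb N_+\cap[\tfrac1{2\epsilon},\tfrac1\epsilon]$ (by condition (E)(b)), treat each scaled element as a singleton set, and apply Lemma~\ref{lemma:tree-fashion_} to the $\ell=\Theta(\log\tfrac1\epsilon)$ copies — this costs $\OO(|G|+\tfrac\ell\epsilon\log\tfrac1\epsilon)=\tilde\OO(|G|+\epsilon^{-1})$ and yields an approximate set with error $\OO(\epsilon\log\ell+\epsilon^2\ell)=\tilde\OO(\epsilon)$ relative to $\Sigma$, hence after rescaling by $\rho$ an $(\tilde\OO(\epsilon),\hat t)$-approximate set of $G^{\log}$ (capping the scaled-up set at $\hat t$). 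Second, observe $2^hG^{\log}$ has $(\tilde\OO(\epsilon),\hat t)$-approximate set $2^hC_{\log}$ with the oracle inherited by scaling. Third, combine the $\OO(pow(n^j_k))=\OO(\log\tfrac1\epsilon)$ scaled copies using Corollary~\ref{coro:sub_sum_top} (or directly Lemma~\ref{lemma:sub_sum_top}), again in $\tilde\OO(\epsilon^{-1})$ time, to get the $(\tilde\OO(\epsilon),\hat t)$-approximate set of $S(Y^j_k)$ and its backtracking oracle. The backtracking oracle, given a target $c$, walks down the Corollary~\ref{coro:sub_sum_top}/Lemma~\ref{lemma:tree-fashion_} trees and outputs, for each scale $h$ and each element of $G$, a $0/1$ choice $\gamma_h(x)$; summing $\sum_h\gamma_h(x)2^h$ over $h$ gives the multiplicity $card_{U^j_k}[\rho x]$, which is at most $n^j_k\le l^j_k$, so $U^j_k\subset Y^j_k$ as required, and it is output in the compact multiplicity representation. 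The total per-group time is $\tilde\OO(|Z^j_k|+\epsilon^{-1})$, so summing over the polylog groups and the $|X|$ bound on $\sum|Z^j_k|$, plus the final assembly via Corollary~\ref{coro:sub_sum_top}, gives $\tilde\OO(|X|+\epsilon^{-1})$ overall. I expect the main obstacle to be the bookkeeping in the sparsification step: carefully verifying that the support bound from Lemma~\ref{lemma:supp} after the binary split is genuinely absorbed by the chosen $\ell=2^{1+pow(\log_2(1/\epsilon)+1)}$ (so that a sparse combination of elements of $G$ with multiplicity $\le n^j_k$ decomposes into $\le\log n^j_k+1$ sets each in $G^{\log}$), and that capping at $\hat t$ commutes with all the $\oplus$ operations without losing the relevant subset-sums — i.e.\ checking that the reductions $S(Y^j_k)\cap[0,\hat t]=(\oplus_h 2^hG^{\log})\cap[0,\hat t]$ hold exactly and not just approximately.
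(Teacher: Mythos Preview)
Your proposal is correct and follows essentially the same route as the paper: reduce to a per-group problem via Corollary~\ref{coro:sub_sum_top}, use Klein's sparsification (Lemma~\ref{lemma:supp}) plus binary expansion to identify $S(Y^j_k)\cap[0,\hat t]$ with $(\oplus_h 2^h G_{\log})\cap[0,\hat t]$, approximate $G_{\log}$, and combine the scaled copies via Lemma~\ref{lemma:sub_sum_top}. The paper structures this as Lemma~\ref{lemma:un-subset-sum-l} $\Rightarrow$ Lemma~\ref{lemma:sam_mul}, with Observations~\ref{obs:G_m} and~\ref{obs:G+G} giving exactly the set identities you anticipate as the ``main obstacle,'' and Claims~\ref{claim:c_G_M} and~\ref{claim:c_G_M_} handling the backtracking-to-multiplicities step you outline.

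The one substantive difference is how $G_{\log}$ is approximated. You invoke Lemma~\ref{lemma:tree-fashion_} on $\ell=\Theta(\log\tfrac1\epsilon)$ copies of $G$, which costs $\tilde\OO(\ell/\epsilon)$ and gives error $\tilde\OO(\epsilon)\cdot\ell\,\overline G^{\max}$; this is fine since $\ell\,\overline G^{\max}=\tilde\OO(1/\epsilon)=\tilde\OO(\upsilon)$. The paper instead builds a custom linked-list doubling structure: it computes $U^h\approx U^{h-1}\oplus U^{h-1}$ for $h=1,\dots,h_p=1+pow(\log_2(1/\epsilon)+1)$ via Observation~\ref{obs:cap_apx_ss}, capping at $\min\{2^h\overline G^{\max},\upsilon\}$ at each step. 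This exploits the fact that all $\ell$ summands are the \emph{same} set, so only $O(\log\ell)=O(\log\log(1/\epsilon))$ FFTs are needed rather than the $O(\ell)$ in the generic tree. Both give $\tilde\OO(1/\epsilon)$, so the final bound is identical; the paper's route is slightly cheaper in the polylog factor and makes the capping explicit, while yours is shorter to state. Two small points to tighten: the paper first disposes of the trivial case $\Sigma(Y_{\le\hat t})<\hat t/2$ via Lemma~\ref{lemma:prea} before assuming $OPT^*\ge\hat t/2$ (you need this to turn additive error $\tilde\OO(\epsilon)\hat t$ into a weak $(1-\tilde\OO(\epsilon))$ guarantee); and for the combining step you should cite Lemma~\ref{lemma:sub_sum_top} rather than Corollary~\ref{coro:sub_sum_top}, since you are combining sumsets $2^hG_{\log}$, not subset-sums $S(X_i)$.
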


		Let $Y_{\le \hat{t}} = \{y\in Y  :  y \le \hat{t}\}$ and let $Y^{(j;k)}_{\le \hat{t}} = \{y\in Y^j_k : y \le \hat{t} \}$ for every $Y^j_k$, it holds that $Y_{\le \hat{t}} = \mathop{\dot{\cup}}\limits_j (\mathop{\dot{\cup}}\limits_k Y^{(j;k)}_{\le \hat{t}})$. Recall Lemma~\ref{lemma:pre_unbo}, $Y^j_k$ is $l^j_k$ copies of $set(Y^j_k)$, then $card_{Y^{(j;k)}_{\le \bar{t}} }[y] = l^j_k$ for every $y\in Y^{(j;k)}_{\le \bar{t}}$.  Note that $set(Y^{(j;k)}_{\le \hat{t}}) = set(Y^j_k) \cap [0,\hat{t}]$.
		Thus in $\OO(\mathop{\sum}\limits_j \mathop{\sum}\limits_k |set(Y^j_k)|)=\OO(|X|)$ time, we can obtain $Y^{(j;k)}_{\le \hat{t}}$ for every $Y^j_k$, furthermore, we can obtain $\Sigma(Y_{\le \hat{t}})=\mathop{\sum}\limits_j \mathop{\sum}\limits_k \left( l^j_k  \Sigma(set(Y^{(j;k)}_{\le \hat{t}}))\right)$.  
		
		Let $OPT^*$ be the optimal objective value of $(Y,\hat{t})$.  Recall Lemma~\ref{lemma:prea}, if $\Sigma(Y_{\le \hat{t}}) < {\hat{t}}/{2}$, then $OPT^*  < {\hat{t}}/{2}$ and $Y_{\le \hat{t}}$ is an optimal solution of $(Y,\hat{t})$, else if $\Sigma(Y_{\le \hat{t}}) \ge  {\hat{t}}/{2}$, we can assert that $OPT^* \ge {\hat{t}}/{2}$. Thus we only need to consider the case that $OPT^* \ge {\hat{t}}/{2}$.

		In the following, we assume that $OPT^* \ge {\hat{t}}/{2}$. Then a subset $Y'\subset Y$ satisfying $|\Sigma(Y') -OPT^*| \le \tilde\OO(\epsilon) \hat{t}$ is a weak $(1-\tilde\OO(\epsilon))$-approximation solution of $(Y,\hat{t})$. Recall Lemma~\ref{lemma:pre_unbo}, $set(Y^j_k)$ is explicit given for every $Y^j_k$ and $\sum_j\sum_k |set(Y^j_k)| \le |X|$. Thus the following Lemma~\ref{lemma:un-subset-sum-l} implies Lemma~\ref{lemma:re_un_1} directly.

		\begin{lemma}\label{lemma:un-subset-sum-l}
			Give any UNBOUNDED SUBSET SUM instance $\text{un-}(X,t)$ where $\epsilon t < x < t$ for every $x\in X$. Let $(Y,\hat{t})$ be the modified instance returned by Lemma~\ref{lemma:pre_unbo}, where $Y$ is the multiset-union of all $Y^j_k$'s.
			Then in $\tilde\OO(\epsilon^{-1})$ processing time, we can 
			\begin{itemize}
				\item[(i).] Compute an $(\tilde\OO(\epsilon),\hat{t})$-approximate set $C$ with cardinality of $\OO(\epsilon^{-1})$ for $S(Y)$.
				\item[(ii).] Meanwhile build an $\tilde\OO(\epsilon^{-1})$-time oracle for backtracking from $C$ to $Y$. Here $Y$ is the multiset-union of all $Y^j_k$'s and the oracle actually works as follows: given any $c\in C$, in $\tilde\OO(\epsilon^{-1})$ time, the oracle will return $U^j_k \subset Y^k_j$ for every $Y^k_j$, where $card_{U^j_k }[y]$ is explicit given for every $y\in set(Y^j_k)$.  Let $Y'$ be the multiset-union of all $U^j_k$'s, we have $Y' \subset Y$ and $|\Sigma(Y') -c| \le \tilde\OO(\epsilon)\hat{t}$.
			\end{itemize}
		\end{lemma}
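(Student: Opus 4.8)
The plan is to reduce the problem to computing an approximate capped sumset of a logarithmic number of subgroups $Y^j_k$, invoke Corollary~\ref{coro:sub_sum_top} to merge them, and for each individual subgroup exploit the special structure guaranteed by Lemma~\ref{lemma:pre_unbo}: that $Y^j_k$ consists of exactly $l^j_k$ copies of the set $Z^j_k := set(Y^j_k)$, and that after dividing by the common divisor $\rho^j_k$ each element lies in $\mathbb{N}_+\cap[\frac{1}{2\epsilon},\frac{1}{\epsilon}]$. The key point is that, because every element of $Y^j_k$ has a common divisor $\rho^j_k=\Omega(\epsilon^{-1})$, it suffices to compute an $(\tilde\OO(\epsilon),\hat t/\rho^j_k)$-approximate set of $S(\overline{Y}^j_k$ with multiplicities$)$ where $\overline{Y}^j_k=Y^j_k/\rho^j_k$, then scale back up by $\rho^j_k$; scaling preserves both the approximation guarantee and the backtracking oracle (as observed already in the proof of Observation~\ref{obs:sub_o_1}). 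So the real task is: given a set $G:=Z^j_k$ of distinct integers in $[\frac{1}{2\epsilon},\frac{1}{\epsilon}]$, each with multiplicity $l^j_k$, approximate the capped subset-sums of this multiset in $\tilde\OO(\epsilon^{-1})$ time.

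First I would handle this via the sparsification idea sketched in the overview, which will be carried out in the two subsequent subsections (the ``$m=1$'' case and the ``$m\ge 2$'' case referred to there, i.e., whether $G$ is a single element or many). The crucial structural fact is that, writing $G_{\log}$ for the sumset of at most $\OO(\log\frac{1}{\epsilon})$ elements of $G$, Klein's sparsification (Lemma~\ref{lemma:supp}) together with the binary-expansion trick gives
$$T_{\le l}(G) \;=\; G_{\log} \oplus 2G_{\log} \oplus \cdots \oplus 2^{\OO(\log(n^j_k))} G_{\log}$$
restricted to the relevant range, where $T_{\le l}(G)$ denotes subset-sums of the multiset with multiplicity $l^j_k$ per element; the choice of $l^j_k$ in Lemma~\ref{lemma:pre_unbo} is exactly large enough so that the binary digits of any support-bounded coefficient vector fit. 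Here $G_{\log}=\underbrace{G\oplus G\oplus\cdots\oplus G}_{2^{1+pow(\log_2(1/\epsilon)+1)}}$, whose elements are $\OO(\epsilon^{-2})$-bounded integers, so by Lemma~\ref{lemma:sub_sum_top} (capped at $\omega=\hat t/\rho^j_k=\OO(\epsilon^{-2})$) we can compute an $(\tilde\OO(\epsilon),\omega)$-approximate set $C_{\log}$ of $G_{\log}$ with cardinality $\OO(\epsilon^{-1})$ in $\tilde\OO(\epsilon^{-1})$ time, along with the corresponding backtracking oracle to $\dot\cup\, G$. Then I would observe that $2^h C_{\log}$ is automatically an $(\tilde\OO(\epsilon),\omega)$-approximate set of $2^h G_{\log}$ (scaling again), so a final application of Lemma~\ref{lemma:sub_sum_top} to the $\OO(\log n^j_k)=\OO(\log\frac{1}{\epsilon})$ many dyadic copies $C_{\log}, 2C_{\log}, \ldots$ yields an $(\tilde\OO(\epsilon),\omega)$-approximate set of $S(Y^j_k/\rho^j_k)$ of cardinality $\OO(\epsilon^{-1})$, hence after scaling by $\rho^j_k$ an $(\tilde\OO(\epsilon),\hat t)$-approximate set of $S(Y^j_k)$, together with an $\tilde\OO(\epsilon^{-1})$-time oracle. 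When combining via Corollary~\ref{coro:sub_sum_top} over the $\OO((\log|X|)^2(\log\frac{1}{\epsilon})^2)$ subgroups, the error blows up only by a polylogarithmic factor and the cardinality is brought back down to $\OO(\epsilon^{-1})$, so the overall guarantee and running time $\tilde\OO(\epsilon^{-1})$ follow; the backtracking oracle is assembled layer-by-layer exactly as in the proofs of Lemma~\ref{lemma:sub_sum_top} and Corollary~\ref{coro:sub_sum_top}, and at the leaves one reads off $card_{U^j_k}[y]$ for each $y\in set(Y^j_k)$ by counting, across the dyadic copies, how many times $y$ was selected and with what binary weight.

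The main obstacle I anticipate is making the sparsification identity $T_{\le l^j_k}(G)=G_{\log}\oplus 2G_{\log}\oplus\cdots$ precise \emph{as an equality restricted to $[0,\hat t]$} rather than just a containment, and in particular verifying that the multiplicity $l^j_k$ chosen in Lemma~\ref{lemma:pre_unbo} is both sufficient (so no needed selection is lost — this needs Lemma~\ref{lemma:supp} applied to $G$, whose minimum is $\Omega(1/\epsilon)$, giving support $\OO(\log\frac1\epsilon)$, and then a careful accounting of how the binary digits of the coefficients distribute across the $\OO(\log n^j_k)$ dyadic levels) and not wasteful (so that $G_{\log}$ only needs $2^{\OO(\log\frac1\epsilon)}$ summands, keeping its entries $\OO(\epsilon^{-2})$ and the whole construction within the budget). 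A secondary subtlety is controlling the interaction between the capping thresholds at the two nested applications of Lemma~\ref{lemma:sub_sum_top}: since approximate sets may slightly exceed the cap, one must, as in Corollary~\ref{coro:sub_sum_top}, cap at a slightly enlarged $\tilde\omega=(1+\tilde\OO(\epsilon))\omega$ at the inner level so that all genuinely-needed partial sums survive to the outer level; this is routine but must be tracked. Everything else — the scaling lemmas, the tree-structured oracle assembly, the cardinality reduction to $\OO(\epsilon^{-1})$ — is a direct reuse of machinery already established earlier in the paper.
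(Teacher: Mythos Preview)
Your overall strategy --- reduce to one subgroup $Y^j_k$, divide by the common divisor $\rho^j_k$, invoke Klein's support bound (Lemma~\ref{lemma:supp}) plus the binary-expansion trick to write everything as a sumset of $\OO(\log\frac{1}{\epsilon})$ dyadic copies of $G_{\log}$, and then merge across subgroups via Corollary~\ref{coro:sub_sum_top} --- is exactly the paper's route (Lemma~\ref{lemma:sam_mul} and the subsequent two subsections). The case split you allude to, though, is not ``$|G|=1$ vs.\ $|G|\ge 2$'' but rather $m\le 2^{1+pow(\log_2(1/\epsilon)+1)}$ vs.\ $m>2^{1+pow(\log_2(1/\epsilon)+1)}$; in the first case no dyadic copies are needed at all (Observation~\ref{obs:G_m}).

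There is however a real gap in your handling of the dyadic copies. You compute $C_{\log}$ as an $(\tilde\OO(\epsilon),\omega)$-approximate set of $G_{\log}$ with $\omega=\upsilon=\hat t/\rho^j_k$, and then assert that $2^hC_{\log}$ is an $(\tilde\OO(\epsilon),\omega)$-approximate set of $2^hG_{\log}$. That is false: scaling multiplies the threshold, so $2^hC_{\log}$ is only an $(\tilde\OO(\epsilon),2^h\omega)$-approximate set of $2^hG_{\log}$, meaning the absolute error in each dyadic layer is $\tilde\OO(\epsilon)\cdot 2^h\upsilon$. When you combine the layers $h=0,\ldots,pow(m)$, the accumulated error is $\tilde\OO(\epsilon)\upsilon\sum_h 2^h=\tilde\OO(\epsilon)\cdot\OO(m)\cdot\upsilon$; since $m$ can be as large as $\Theta(1/\epsilon)$, this gives only a trivial $\tilde\OO(\upsilon)$ bound. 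The paper avoids this by computing the approximate set of $G_{\log}$ relative to the much smaller threshold $\min\{(G_{\log})^{\max},\upsilon\}$: it does an iterative doubling (Section~\ref{subsec:un_g_1}) that caps at $\min\{2^hG^{\max},\upsilon\}$ at level $h$, so that the final $U^{h_p}$ has error $\tilde\OO(\epsilon)\cdot(G_{\log})^{\max}=\tilde\OO(\epsilon)\cdot 2^{h_p}G^{\max}$. Then (Observation~\ref{obs:G_k_log_} and the proof of Claim~\ref{claim:c_G_M_}) the accumulated error across dyadic layers is $\tilde\OO(\epsilon)\sum_h 2^h(G_{\log})^{\max}\le \tilde\OO(\epsilon)\cdot 2m\cdot 2^{h_p}G^{\max}$, and using $mG^{\min}\le\upsilon$ and $G^{\max}\le 2G^{\min}$ this is $\tilde\OO(\epsilon)\upsilon$ as required. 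A shorter fix to your argument: use the \emph{uncapped} Lemma~\ref{lemma:tree-fashion_} instead of Lemma~\ref{lemma:sub_sum_top} to build $C_{\log}$, since that lemma automatically gives error relative to $(G_{\log})^{\max}$ rather than the artificial cap $\upsilon$; the time bound is still $\tilde\OO(\epsilon^{-1})$ because $\ell=2^{h_p}=\OO(\log\frac{1}{\epsilon})$ and $|G|=\OO(\epsilon^{-1})$.
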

		The rest of this section is dedicated to proving Lemma~\ref{lemma:un-subset-sum-l}. 
		Recall Corollary~\ref{coro:sub_sum_top}, towards proving Lemma~\ref{lemma:un-subset-sum-l}, we only need to derive an algorithm that can solve the following \textbf{problem}-$\mathscr{F}$ in $\tilde\OO(\epsilon^{-1})$-time.
		
		\textbf{problem}-$\mathscr{F}$: for every $Y^j_k$, compute a  $(\tilde\OO(\epsilon),\hat{t})$-approximate set with cardinality of $\OO(\epsilon^{-1})$ for $S(Y^j_k)$, and build an $\tilde\OO(\epsilon^{-1})$-time oracle for backtracking from this approximate set to $Y^j_k$. 
		
		Note that all $Y^j_k$'s satisfy Condition (B) (see Lemma~\ref{lemma:pre_unbo}). To solve 	 \textbf{problem}-$\mathscr{F}$, we only need to prove the following Lemma~\ref{lemma:sam_mul}, where $M$ represents an arbitrary $Y^j_k$.
		
		\begin{lemma}\label{lemma:sam_mul}
			Given $\omega \in \Theta(\epsilon^{-3})$ and a multiset $M\subset \mathbb{R}_{\ge 0}$  satisfying the following conditions:
			\begin{enumerate}
				\item 
				$M= \beta \overline{M}$, where $\beta \in \Omega(\frac{1}{\epsilon})$ and $\overline{M}\subset \mathbb{N}_{+}\cap [\frac{1}{2\epsilon}, \frac{1}{\epsilon}]$.
				\item 
				$set(M)$ is explicitly given.
				\item 
				Let $m = \lfloor\frac{\omega}{M^{\min}}\rfloor$ and let 
				\begin{align*}
					l_M =
					\begin{cases}
						&2^{1+pow(\log_{2}(\frac{1}{\epsilon})+1)} \hspace{9mm}\text{if } m  \le 2^{1+pow(\log_{2}(\frac{1}{\epsilon})+1)}\\
						&2m\cdot 2^{1+pow(\log_{2}(\frac{1}{\epsilon})+1)} \hspace{2mm}\text{if } m > 2^{1+pow(\log_{2}(\frac{1}{\epsilon})+1)}
					\end{cases}
				\end{align*}
				$M$ consists of $l_M$ copies of $set(M)$, i.e., every element in $M$ has the same multiplicity, which is $l_M$.
			\end{enumerate}
			In $\tilde\OO(\epsilon^{-1})$ processing time, we can 
			\begin{itemize}
				\item[(i).] Compute an $(\tilde\OO(\epsilon),\omega)$-approximate set $C_M$ with cardinality of $\OO({\epsilon^{-1}})$ for $S(M)$.
				\item[(ii).] Meanwhile build an $\tilde\OO(\epsilon^{-1})$-time oracle for backtracking from $C_M$ to $M$. That is, given any $c\in C_M$, in $\tilde\OO(\epsilon^{-1})$ time, the oracle will return $M' \subset M$ such that $|\Sigma(M')-c| \le \tilde\OO(\epsilon)\omega$. Moreover,  $card_{M'}[y]$ for every $y \in set(M)$ are also returned by the oracle. 
			\end{itemize}
		\end{lemma}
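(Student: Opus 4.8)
The plan is to exploit the very rigid structure of $M$: it is $l_M$ identical copies of the set $\mathrm{set}(M)$, whose elements after dividing by $\beta$ all lie in $[\tfrac1{2\epsilon},\tfrac1\epsilon]$. Since the target $\omega=\Theta(\epsilon^{-3})$ and $M^{\min}=\Theta(\epsilon^{-2})$, an optimal solution uses at most $m=\lfloor\omega/M^{\min}\rfloor=O(\epsilon^{-1})$ elements, and (because $\epsilon t<x<t$ was arranged so that every value actually matters) we may as well pretend $M$ has $l_M=\Theta(m\log\frac1\epsilon)$ copies of every element — more copies than any $\omega$-capped subset-sum could ever need, so $S(M;[0,\omega])$ coincides with the unbounded subset-sums of $\mathrm{set}(M)$ capped at $\omega$. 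First I would reduce to $\overline M=\mathrm{set}(M)/\beta\subset\mathbb N_+\cap[\tfrac1{2\epsilon},\tfrac1\epsilon]$ and $\upsilon=\omega/\beta=O(\epsilon^{-2})$, noting that an $(\tilde\OO(\epsilon),\upsilon)$-approximate set of the unbounded subset-sums of $\overline M$ scales by $\beta$ to an $(\tilde\OO(\epsilon),\omega)$-approximate set of $S(M)$, with the backtracking oracle carrying over verbatim (and translating multiplicities in $\overline M'$ into $card_{M'}[y]$ using the known copy count $l_M$).

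Next I would invoke Klein's sparsification lemma (Lemma~\ref{lemma:supp}): every unbounded subset-sum of $\overline M$ equal to some target is realized by a vector $\vev$ with $\mathrm{supp}(\vev)\le\log_2(\overline M^{\min})+1=O(\log\frac1\epsilon)$. Writing each coordinate $v_i$ in binary, $v_i=\sum_h\gamma_h(v_i)2^h$ with $\gamma_h(v_i)\in\{0,1\}$, gives $\sum_i v_ix_i=\sum_h 2^h\sum_i\gamma_h(v_i)x_i$ where each inner sum is a sum of $O(\log\frac1\epsilon)$ distinct elements of $\overline M$. Letting $G^{\log}$ denote $\underbrace{G\oplus\cdots\oplus G}_{2^{1+pow(\log_2(1/\epsilon)+1)}}$ with $G=\overline M$, this shows the set of unbounded subset-sums of $\overline M$ equals $G^{\log}\oplus 2G^{\log}\oplus\cdots\oplus 2^{pow(m)}G^{\log}$ (the $2^h$ range is truncated at $pow(m)$ because larger powers already overshoot $\upsilon$, which is why I intersect with $[0,\upsilon]$ throughout). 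So the task becomes: approximate $G^{\log}$, then approximate the $\oplus$ of its dyadic scalings.

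Then I would carry out three computational steps. Step (i): compute an $(\tilde\OO(\epsilon),\upsilon)$-approximate set $C_{\log}$ of $G^{\log}$ with cardinality $O(\epsilon^{-1})$ together with an $\tilde\OO(\epsilon^{-1})$-time backtracking oracle. Since $G^{\log}$ is an $\oplus$ of $2^{1+pow(\log_2(1/\epsilon)+1)}=\tilde O(1)$ copies of the single set $G$, and $|G|=O(\epsilon^{-1})$, a direct application of Lemma~\ref{lemma:sub_sum_top} with $\ell=\tilde O(1)$ gives a $(\tilde\OO(\epsilon),\upsilon)$-approximate set of cardinality $O(\epsilon^{-1})$ in $\tilde\OO(\epsilon^{-1})$ time — but I should be careful here, because Lemma~\ref{lemma:sub_sum_top} accumulates error like $\OO(\epsilon\ell+\epsilon^2\ell^2)$, and with $\ell=\Theta(\log\frac1\epsilon)$ this is still $\tilde\OO(\epsilon)$, so it is fine; alternatively I iterate $C\mapsto$ (approx of $C\oplus C$) only $O(\log\ell)=O(\log\log\frac1\epsilon)$ times to keep the error cleanly $\tilde\OO(\epsilon)$. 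Step (ii): observe that $2^kC_{\log}$ is automatically an $(\tilde\OO(\epsilon),2^k\upsilon)$-approximate set of $2^kG^{\log}$ and the oracle scales trivially — no new computation. Step (iii): apply Lemma~\ref{lemma:sub_sum_top} once more to $C_{\log},2C_{\log},\ldots,2^{pow(m)}C_{\log}$ (that is $\ell=pow(m)+1=O(\log\frac1\epsilon)$ multisets each of size $O(\epsilon^{-1})$), obtaining in $\tilde\OO(\epsilon^{-1})$ time an $(\OO(\epsilon\ell+\epsilon^2\ell^2),\upsilon)=(\tilde\OO(\epsilon),\upsilon)$-approximate set of $\oplus_k 2^kG^{\log}$ of cardinality $O(\epsilon^{-1})$, plus the composed backtracking oracle; by Corollary-style reasoning (Observation~\ref{obs:apx_apx_ss}) nesting an approximate set of an approximate set only compounds the relative error additively, so the final set is a genuine $(\tilde\OO(\epsilon),\upsilon)$-approximate set of the unbounded subset-sums of $\overline M$, hence after scaling by $\beta$ of $S(M)$.

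The main obstacle I anticipate is bookkeeping the backtracking oracle through the dyadic layering: given $c\in C_M$, I need to unwind the Lemma~\ref{lemma:sub_sum_top} oracle to get a choice of one element from each $2^k\overline{M}$-slot, then unwind each $2^kC_{\log}\to 2^k(G\,\dot\cup\,\cdots\,\dot\cup\,G)$ to recover a $0/1$ incidence vector on $\overline M$ per level $k$, then recombine $\sum_k 2^k(\text{level-}k\text{ vector})$ into the final multiplicity vector $card_{M'}[y]$, and finally verify $|\Sigma(M')-c|\le\tilde\OO(\epsilon)\omega$ by summing the per-level and per-$\oplus$-step additive errors. Each individual oracle runs in $\tilde\OO(\epsilon^{-1})$ time and there are only $\tilde O(1)$ levels, so the total is $\tilde\OO(\epsilon^{-1})$; the delicate point is just ensuring the recombined vector really is dominated by $l_M$ coordinatewise (so that $M'\subset M$ as a multiset), which holds because the support bound $O(\log\frac1\epsilon)$ from Lemma~\ref{lemma:supp} forces every coordinate of the reconstructed $\vev$ to be at most $m\cdot 2^{O(\log\frac1\epsilon)}\le l_M$ by the very choice of $l_M$ in the hypothesis.
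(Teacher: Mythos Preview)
Your overall strategy matches the paper's: reduce to $(\overline M,\upsilon)$, identify $S(\overline M)\cap[0,\upsilon]$ with $(G_{\log}\oplus 2G_{\log}\oplus\cdots\oplus 2^{pow(m)}G_{\log})\cap[0,\upsilon]$ via Lemma~\ref{lemma:supp}, approximate $G_{\log}$ once, scale, and combine by Lemma~\ref{lemma:sub_sum_top}. The paper additionally splits off the small-$m$ case ($m\le 2^{1+pow(\log_2(1/\epsilon)+1)}$), where $G_{\log}$ alone already captures $S(\overline M)\cap[0,\upsilon]$ and no dyadic layering is needed; you omit this, but that by itself is minor.

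There is, however, a genuine gap in the large-$m$ case. In Step~(i) you build $C_{\log}$ as an $(\tilde\OO(\epsilon),\upsilon)$-approximate set of $G_{\log}$, so its absolute error is $\tilde\OO(\epsilon)\upsilon$. Then $2^kC_{\log}$ is only an $(\tilde\OO(\epsilon),2^k\upsilon)$-approximate set of $2^kG_{\log}$, with absolute error $2^k\tilde\OO(\epsilon)\upsilon$. When you lift from $\oplus_k 2^kC_{\log}$ to $\oplus_k 2^kG_{\log}$ (by any nesting argument in the style of Observation~\ref{obs:apx_apx_ss}), these per-level additive errors sum: $\sum_{k=0}^{pow(m)}2^k\tilde\OO(\epsilon)\upsilon=\Theta(m)\,\tilde\OO(\epsilon)\upsilon$. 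Since $m$ can be $\Theta(\epsilon^{-1})$, the total is $\tilde\OO(1)\upsilon$, not $\tilde\OO(\epsilon)\upsilon$, and the scheme fails to be an $(\tilde\OO(\epsilon),\upsilon)$-approximation. The paper's custom iterative construction of $U^{h_p}$ (Section~\ref{subsec:un_g_1}) is not just a stylistic alternative to Lemma~\ref{lemma:sub_sum_top}: at level $h$ it caps at $\min\{2^hG^{\max},\upsilon\}$, so that the final $U^{h_p}$ is an $(\tilde\OO(\epsilon),\min\{\ell G^{\max},\upsilon\})$-approximate set of $G_{\log}$ (Claim~\ref{claim:cl_f_un}). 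In the large-$m$ regime $\ell G^{\max}\le\upsilon$, so the per-level error after scaling by $2^k$ is $\tilde\OO(\epsilon)\cdot 2^k\ell G^{\max}$; since $G^{\max}\le 2G^{\min}\le 2\upsilon/m$, the dyadic total is $\tilde\OO(\epsilon)\cdot 2m\cdot\ell G^{\max}\le\tilde\OO(\epsilon)\cdot 4\ell\upsilon=\tilde\OO(\epsilon)\upsilon$, exactly the computation in the proof of Claim~\ref{claim:c_G_M_}. You could repair your approach by calling Lemma~\ref{lemma:sub_sum_top} in Step~(i) with cap $\omega=\min\{\ell G^{\max},\upsilon\}$ rather than $\upsilon$, but as written the error analysis does not close.

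A second, smaller issue: your final sentence invokes the support bound from Lemma~\ref{lemma:supp} to control the reconstructed multiplicities, but that lemma is an existence statement about \emph{some} solution and says nothing about the particular vector your backtracking produces. The correct argument (for large $m$) is purely structural: each $n_g^k\le\ell=2^{1+pow(\log_2(1/\epsilon)+1)}$ because it comes from backtracking through $G_{\log}=\underbrace{G\oplus\cdots\oplus G}_{\ell}$, whence $v_g=\sum_k 2^kn_g^k\le(2^{pow(m)+1}-1)\ell\le 2m\ell=l_M$ by the very definition of $l_M$ in that case. (Also, the per-level vector is not a $0/1$ vector on $\overline M$; it is a multiplicity vector with entries up to $\ell$.)
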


		Let $\upsilon = \frac{\omega}{\beta}$ and let $G := set(\overline{M})=\frac{set(M)}{\beta}$, it is easy to see that $\upsilon = \OO(\frac{1}{\epsilon^{2}})$ and $\lfloor\frac{\upsilon}{G^{\min}}\rfloor = \lfloor\frac{\omega}{M^{\min}} \rfloor = m = \OO(\frac{1}{\epsilon})$. 
		Observe that given any $(\tilde\OO(\epsilon),\upsilon)$-approximate set of $S(\overline{M})$, say $C_{\overline{M}}$, then $\beta C_{\overline{M}}$ is an $(\tilde\OO(\epsilon),\omega)$-approximate set of $S(M)$, furthermore, a $T$-time oracle for backtracking from $C_{\overline{M}}$ to $\overline{M}$ will directly yield an $\OO(T)$-time oracle for backtracking from $\beta C_{\overline{M}}$ to $M$. It thus suffices to consider $\overline{M}$ and $\upsilon$. Precisely, we only need to derive an $\tilde\OO(\epsilon^{-1})$-time algorithm for computing an $(\tilde\OO(\epsilon), \upsilon)$-approximate set with cardinality of $\OO(\epsilon^{-1})$ for $S(\overline{M})$ and meanwhile build an $\tilde\OO(\epsilon^{-1})$-time oracle for backtracking from this approximate set to $\overline{M}$.  Two cases, $m \le 2^{1+pow(\log_2 (\frac{1}{\epsilon})+1)}$ and $m > 2^{1+pow(\log_2 (\frac{1}{\epsilon})+1)}$, will be considered in Section~\ref{subsec:un_g_1} and Section~\ref{subsec:m>_2} separately.
		
		Let $G_{\log} =  \underbrace{G\oplus G \oplus \cdots \oplus G}_{2^{1+pow(\log_{2}(\frac{1}{\epsilon})+1)}}$. Then $2^k G_{\log} = \{2^k  x : x \in G_{\log} \}$ for $k =1,2,\cdots,pow(m)$.  Before proceeding, we first present the following useful observations.
		
		\begin{observation}\label{obs:G_m}
			When $m \le 2^{1+pow(\log_2 (\frac{1}{\epsilon})+1)}$, 
			we have $G_{\log}\subset S(\overline{M})$, moreover, we have $G_{\log }\cap [0,\upsilon] =  S(\overline{M})\cap [0,\upsilon]$.
		\end{observation}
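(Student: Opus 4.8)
\textbf{Proof proposal for Observation~\ref{obs:G_m}.}

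The plan is to prove the two assertions in turn, using the sparsification lemma of Klein (Lemma~\ref{lemma:supp}) together with the special structure of $\overline{M}$ guaranteed by Lemma~\ref{lemma:pre_unbo}. Recall that $G = set(\overline M)$, $m = \lfloor \upsilon / G^{\min}\rfloor$, and in the current case $m \le 2^{1+pow(\log_2(1/\epsilon)+1)}$, so that $M$ (and hence $\overline M$) consists of exactly $l_M = 2^{1+pow(\log_2(1/\epsilon)+1)}$ copies of each element of $G$. First I would show $G_{\log}\subset S(\overline M)$: any element of $G_{\log}$ is a sum of at most $2^{1+pow(\log_2(1/\epsilon)+1)}$ elements chosen from $G$ (with repetition allowed, since $G_{\log}$ is the $2^{1+pow(\log_2(1/\epsilon)+1)}$-fold sumset $G\oplus\cdots\oplus G$, and each $\oplus$ summand may independently take any value in $G\cup\{0\}$). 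Since every element of $G$ appears with multiplicity $l_M = 2^{1+pow(\log_2(1/\epsilon)+1)}$ in $\overline M$, we can realize any such sum as $\Sigma(Y)$ for an actual subset $Y\subset\overline M$ — the multiplicity budget $l_M$ is exactly large enough to cover the worst case where all $2^{1+pow(\log_2(1/\epsilon)+1)}$ summands coincide. Hence $G_{\log}\subset S(\overline M)$, which also gives $G_{\log}\cap[0,\upsilon]\subset S(\overline M)\cap[0,\upsilon]$.

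The reverse inclusion $S(\overline M)\cap[0,\upsilon]\subset G_{\log}\cap[0,\upsilon]$ is the substantive direction. Take any $s\in S(\overline M)\cap[0,\upsilon]$, say $s = \Sigma(Y)$ for $Y\subset\overline M$. Since $\overline M$ is a bounded instance built from the set $G$ with uniform multiplicity $l_M$, we may equivalently view $s$ as a value attained by a non-negative integer combination $\sum_{g\in G} v_g\, g$ with $0\le v_g \le l_M$; in particular $s$ is attainable by the \emph{unbounded} instance on $G$, i.e. $s$ is a feasible value of $\mathbf{IP_s}$ in the sense of Lemma~\ref{lemma:supp}. Applying Lemma~\ref{lemma:supp} to the set $G\subset\mathbb{N}_{+}\cap[\tfrac{1}{2\epsilon},\tfrac1\epsilon]$, there is a representation $s = \sum_{g\in G} v_g\, g$ with $\mathrm{supp}(v)\le \log_2(G^{\min})+1 \le \log_2(1/\epsilon)+1$. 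The key point is then a counting bound on the total number of terms: since $s\le\upsilon$ and every $g\ge G^{\min}$, we have $\sum_{g} v_g \le \upsilon/G^{\min} = m$ (up to the floor, which only helps). Combined with $m \le 2^{1+pow(\log_2(1/\epsilon)+1)}$, this shows $s$ is a sum of at most $2^{1+pow(\log_2(1/\epsilon)+1)}$ elements of $G$ counted with multiplicity, hence $s\in G_{\log}$, and since $s\le\upsilon$ we get $s\in G_{\log}\cap[0,\upsilon]$. This establishes $S(\overline M)\cap[0,\upsilon] = G_{\log}\cap[0,\upsilon]$.

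The main obstacle I anticipate is bookkeeping the two different ``budgets'' that must simultaneously hold: the \emph{support} bound $\log_2(1/\epsilon)+1$ from Lemma~\ref{lemma:supp}, and the \emph{total multiplicity} bound $m$ coming from $s\le\upsilon$ and the lower bound on elements of $G$. Neither alone suffices — a sparse representation could still use a huge multiplicity on a single coordinate, and a low-total-count representation is automatic but needs to be compatible with the realizability inside $\overline M$. The argument has to observe that Lemma~\ref{lemma:supp} gives a representation that is simultaneously sparse \emph{and} (because each $v_g g \le s \le \upsilon$ forces $\sum v_g \le m$) of small total count, and then that $m\le l_M = 2^{1+pow(\log_2(1/\epsilon)+1)}$ closes the gap to $G_{\log}$. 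I would also take care to note that $G_{\log}$, being a sumset of sets each containing $0$, is monotone under the capping $\cap[0,\upsilon]$ in the sense needed, and that no element of $G$ is $0$ so the multiplicity accounting is clean. The rest is routine.
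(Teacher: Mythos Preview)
Your argument is correct, but it takes an unnecessary detour. The invocation of Lemma~\ref{lemma:supp} (Klein's sparsification) does no work here: after applying it you immediately bound $\sum_g v_g \le \upsilon/G^{\min} \le m$, and this total-count bound holds for \emph{any} representation of $s$ as a non-negative combination of elements of $G$, in particular for the original one coming from $Y\subset\overline M$. The support bound you extract from Klein's lemma is never used. Your stated ``main obstacle'' --- needing both the support bound and the total-count bound simultaneously --- is therefore illusory: the total-count bound alone suffices to place $s$ in $G_{\log}$, and realizability inside $\overline M$ is not at issue in this direction (you are proving $s\in G_{\log}$, not reconstructing a subset of $\overline M$).

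The paper's proof is exactly this streamlined version: start with $\overline M''\subset\overline M$ satisfying $\Sigma(\overline M'')=s$, observe $|\overline M''|\cdot G^{\min}\le s\le\upsilon$, hence $|\overline M''|\le m\le 2^{1+pow(\log_2(1/\epsilon)+1)}$, so $s\in G_{\log}$. Klein's lemma is reserved for Observation~\ref{obs:G+G}, the case $m>2^{1+pow(\log_2(1/\epsilon)+1)}$, where the total count of a representation can exceed the $G_{\log}$ budget and one genuinely needs sparsity (small support) to split the representation bit-by-bit across the $2^kG_{\log}$ layers. Here, in the small-$m$ case, the argument is purely elementary.
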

		\begin{proof}
			When $m \le 2^{1+pow(\log_2 (\frac{1}{\epsilon})+1)}$, we have $l_M= 2^{1+pow(\log_{2}(\frac{1}{\epsilon})+1)}$. Note that $G=set(\overline{M})$ and elements in $\overline{M}$ have the same multiplicity of $l_M$. It follows that $G_{\log}= set\{\Sigma(\overline{M}') : \overline{M}'\subset \overline{M} \text{ \ and\ } |\overline{M}'|\le2^{1+pow(\log_{2}(\frac{1}{\epsilon})+1)} \}$.  Thus $G_{\log}\subset S(\overline{M})$, furthermore, we have $G_{
				\log}\cap [0,\upsilon] \subset  S(\overline{M})\cap [0, \upsilon]$. 
			
			It remains to prove that $S(\overline{M})\cap [0,\upsilon] \subset  G_{\log}\cap [0,\upsilon]$. Consider any $s\in S(\overline{M})\cap [0,\upsilon]$, there exists $\overline{M}'' \subset \overline{M}$ such that $s = \Sigma(\overline{M}'')$. Notice that $|\overline{M}''|\cdot G^{\min}\le \Sigma(\overline{M}'') \le \upsilon$, we have $|\overline{M}''|\le \lfloor\frac{\upsilon}{G^{\min}}\rfloor= m\le 2^{1+pow(\log_{2}(\frac{1}{\epsilon})+1)}$, which implies that $s=\Sigma(\overline{M}'') \in G_{\log}$. Hence $ S(\overline{M}) \cap [0,\upsilon] \subset G_{\log}\cap [0, \upsilon]$. \qed \end{proof}

		\begin{observation}\label{obs:G+G}
			When $m >  2^{1+pow(\log_2 (\frac{1}{\epsilon})+1)}$, 
			we have $(G_{\log}\oplus 2G_{\log} \oplus\cdots \oplus 2^{pow(m)}G_{\log})\subset S(\overline{M})$. Moreover, we have $S(\overline{M})\cap [0, \upsilon] = (G_{\log}\oplus 2G_{\log} \oplus\cdots \oplus 2^{pow(m)}G_{\log}) \cap [0, \upsilon]$.
		\end{observation}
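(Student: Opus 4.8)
The plan is to mirror the proof of Observation~\ref{obs:G_m}, proving the two inclusions separately, but now the reverse inclusion genuinely needs Klein's sparsification lemma (Lemma~\ref{lemma:supp}) to keep supports small. Throughout write $N:=2^{1+pow(\log_2(\frac1\epsilon)+1)}$, so that $G_{\log}=\underbrace{G\oplus\cdots\oplus G}_{N}$ is precisely the set of sums $\sum_{g\in G}c_g g$ with $c_g\in\mathbb{N}$ and $\sum_{g\in G}c_g\le N$ (here $G=set(\overline{M})$ and, in the regime $m>N$, every element of $\overline{M}$ has multiplicity $l_M=2mN$, which is harmless since it exceeds $N$). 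Two elementary facts about $pow$ will be used repeatedly: $2^{pow(m)}\le m$, and, applying $2^{pow(x)+1}>x$ at $x=\log_2(\frac1\epsilon)+1$, $N=2^{1+pow(\log_2(\frac1\epsilon)+1)}>\log_2(\tfrac1\epsilon)+1\ge\log_2(G^{\min})+1$ because $G^{\min}\le\frac1\epsilon$.

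For the inclusion $(G_{\log}\oplus 2G_{\log}\oplus\cdots\oplus 2^{pow(m)}G_{\log})\subset S(\overline{M})$, I would take any $s=\sum_{k=0}^{pow(m)}2^k s_k$ with $s_k\in G_{\log}\cup\{0\}$, expand $s_k=\sum_{g\in G}c_{k,g}g$ with $\sum_{g}c_{k,g}\le N$, and regroup to obtain $s=\sum_{g\in G}\big(\sum_{k=0}^{pow(m)}2^k c_{k,g}\big)g$. The multiplicity of each $g$ in this representation is at most $N\sum_{k=0}^{pow(m)}2^k<N\cdot 2^{pow(m)+1}\le 2mN=l_M$, i.e.\ at most its multiplicity in $\overline{M}$; hence $s=\Sigma(\overline{M}')$ for a suitable $\overline{M}'\subset\overline{M}$, so $s\in S(\overline{M})$. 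Intersecting with $[0,\upsilon]$ already yields the $\supseteq$ half of the ``moreover'' equality.

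For $S(\overline{M})\cap[0,\upsilon]\subseteq(G_{\log}\oplus\cdots\oplus 2^{pow(m)}G_{\log})\cap[0,\upsilon]$, I would take $s\in S(\overline{M})$ with $s\le\upsilon$. Then $s$ is feasible as a nonnegative integer combination of $G$ summing to $s$, so Lemma~\ref{lemma:supp} provides a representation $s=\sum_{g\in G}c_g g$ with $|\{g:c_g\ne 0\}|\le\log_2(G^{\min})+1<N$. Since $s\le\upsilon$ and every $g\ge G^{\min}$, the integer $\sum_{g}c_g$ is at most $\lfloor\upsilon/G^{\min}\rfloor=m$, so each $c_g\le m<2^{pow(m)+1}$. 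Writing $c_g=\sum_{k=0}^{pow(m)}\gamma_{k,g}2^k$ in binary ($\gamma_{k,g}\in\{0,1\}$) and setting $s_k:=\sum_{g\in G}\gamma_{k,g}g$ gives $s=\sum_{k=0}^{pow(m)}2^k s_k$, where each $s_k$ is a sum of fewer than $N$ distinct elements of $G$ and hence lies in $G_{\log}\cup\{0\}$; thus $s\in G_{\log}\oplus 2G_{\log}\oplus\cdots\oplus 2^{pow(m)}G_{\log}$, and of course $s\le\upsilon$. Combining with the previous inclusion gives both the set containment and the capped equality.

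I do not expect a real obstacle here; the one point needing care is the bookkeeping that aligns the three quantities — the logarithmic support $\log_2(G^{\min})+1$ from Lemma~\ref{lemma:supp}, the number of copies $N$ in the definition of $G_{\log}$, and the $pow(m)+1$ binary digits — together with the verification $N\cdot 2^{pow(m)+1}\le l_M$ that keeps the forward inclusion inside the available multiplicities of $\overline{M}$. Each of these reduces to the inequalities $2^{pow(x)+1}>x$ and $2^{pow(x)}\le x$ evaluated at $x=\log_2(\frac1\epsilon)+1$ and $x=m$, so the argument introduces nothing beyond the sparsification idea already sketched in the overview.
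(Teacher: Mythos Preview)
Your proposal is correct and follows essentially the same approach as the paper: the forward inclusion bounds the total multiplicity used (the paper phrases this as $G_{\log}\oplus 2G_{\log}\oplus\cdots\oplus 2^{pow(m)}G_{\log}\subset \underbrace{G_{\log}\oplus\cdots\oplus G_{\log}}_{2m}=\underbrace{G\oplus\cdots\oplus G}_{2mN}\subset S(\overline{M})$, which is exactly your per-element multiplicity count $N\cdot 2^{pow(m)+1}\le 2mN=l_M$), and the reverse inclusion invokes Lemma~\ref{lemma:supp}, binary-expands the coefficients, and uses the support bound $\log_2(G^{\min})+1\le \log_2(\tfrac1\epsilon)+1<N$ to place each $s_k$ in $G_{\log}$. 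The bookkeeping you flag ($2^{pow(x)+1}>x$ and $2^{pow(x)}\le x$) is precisely what the paper uses implicitly.
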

		\begin{proof}
			When $m >  2^{1+pow(\log_2 (\frac{1}{\epsilon})+1)}$, we have $l_M =2m\cdot 2^{1+pow(\log_{2}(\frac{1}{\epsilon})+1)}$. Note that $G=set(\overline{M})$ and elements in $\overline{M}$ have the same multiplicity of $l_M$. It follows that $$\underbrace{G \oplus G \oplus \cdots \oplus G}_{2m\cdot 2^{1+pow(\log_{2}(\frac{1}{\epsilon})+1)}} = set\{\Sigma(\overline{M}')  :  \overline{M}'\subset \overline{M} \text{ \ and\ } |\overline{M}'|\le 2m\cdot 2^{1+pow(\log_{2}(\frac{1}{\epsilon})+1)} \} \subset S(\overline{M}).$$ 
			Observe that $G_{\log}\oplus 2G_{\log} \oplus\cdots \oplus 2^{pow(m)}G_{\log} \subset \underbrace{G_{\log} \oplus G_{\log}  \oplus \cdots \oplus G_{\log}}_{2m} = \underbrace{G \oplus G \oplus \cdots \oplus G}_{2m\cdot 2^{1+pow(\log_{2}(\frac{1}{\epsilon})+1)}}\subset S(\overline{M})$, furthermore,  we have  $ (G_{\log}\oplus 2G_{\log} \oplus\cdots \oplus2^{pow(m)}G_{\log})\cap [0,\upsilon] \subset  S(\overline{M})\cap [0, \upsilon]$.

			It remains to prove that $S(\overline{M})\cap [0, \upsilon] \subset   (G_{\log}\oplus 2G_{\log} \oplus\cdots \oplus2^{pow(m)}G_{\log})\cap [0,\upsilon]$. 
			Let $G = \{g_1,g_2,\cdots,g_{|G|}\}$. For any $s \in S(\overline{M})\cap [0, \upsilon]$, note that $G=set(\overline{M})$, then $s \in S(G)\cap [0, \upsilon]$. Recall Lemma~\ref{lemma:supp}, there exists $\ve y=\{y_1, y_2,\cdots, y_{|G|}\} \in \mathbb{N}^{|G|}$ such that $s=\sum^{|G|}_{i=1}g_i y_i $, and $\ve y$ contains at most $\log_{2}(G^{\min})+1$ non-zero components. Notice that $G^{\min} \le \frac{1}{\epsilon}$, we have $\log_{2}(G^{\min})+1\le \log_{2}(\frac{1}{\epsilon})+1$. Observe that $\sum^{|G|}_{i=1}y_i \le \lfloor \frac{\sum^{|G|}_{i=1}g_i y_i}{G^{\min}}\rfloor  \le \lfloor \frac{\upsilon}{G^{\min}}\rfloor = m$, we have $y_i \le m$ for every $i$. Then each $y_i$ can be written as $y_i = \sum^{pow(m)}_{k=0} \gamma_{k}(y_i) \cdot 2^{k}$, 
			where $\gamma_{k}(y_i) \in \{0,1\} $ for $k =0,1,2,\cdots,pow(m)$. Note that if $y_i =0$, then $\gamma_{k}(y_i) = 0$ for every $k$. Thus the fact that $\ve y$ contains at most $(\log_{2}(\frac{1}{\epsilon})+1)$ non-zero components implies that $(\gamma_{k}(y_1), \gamma_{k}(y_2),\cdots, \gamma_{k}(y_{|G|}))$ contains at most $(\log_{2}(\frac{1}{\epsilon})+1)$ non-zero components, futher implies that $\Sigma^{|G|}_{i=1}\gamma_{k}(y_i) g_i \in  G_{\log}$.
			Then we have $s = \sum^{|G|}_{i=1}g_i y_i  = \sum^{pow(m)}_{k=0} 2^k (\Sigma^{|G|}_{i=1}\gamma_{k}(y_i) g_i) \in (G_{\log}\oplus 2G_{\log} \oplus\cdots \oplus2^{pow(m)}G_{\log})\cap [0,\upsilon]$. \qed \end{proof}
		
		

		\subsubsection{Handling the case that $m \le 2^{1+pow(\log_2 (\frac{1}{\epsilon})+1)}$.}\label{subsec:un_g_1}
		
		In this section, we consider the case that $m \le 2^{1+pow(\log_2 (\frac{1}{\epsilon})+1)}$. 
		The following Claim~\ref{claim:c_G_M} guarantees that towards proving Lemma~\ref{lemma:sam_mul} for the case $m \le 2^{1+pow(\log_2 (\frac{1}{\epsilon})+1)}$, it is sufficient to consider $G_{\log}$.
		\begin{claim}\label{claim:c_G_M}
			Assume that $m \le 2^{1+pow(\log_2 (\frac{1}{\epsilon})+1)}$. Let $C$ be an $(\tilde\OO(\epsilon), \upsilon)$-approximate set for $G_{\log} =  \underbrace{G\oplus G \oplus \cdots \oplus G}_{2^{1+pow(\log_{2}(\frac{1}{\epsilon})+1)}}$. Let $\textbf{Ora}$ be a $T$-time oracle for backtracking from $C$ to $\underbrace{G\dot{\cup} G \dot{\cup} \cdots\dot{\cup} G}_{2^{1+pow(\log_{2}(\frac{1}{\epsilon})+1)}}$, that is, given any $c \in C$, within $T$ time, $\textbf{Ora}$ will return $x_1,x_2,\cdots,x_{2^{1+pow(\log_{2}(\frac{1}{\epsilon})+1)}}\in G\cup \{0\}$
			such that $|c-{\sum}^{2^{1+pow(\log_{2}(\frac{1}{\epsilon})+1)}}_{i=1} x_i| \le \tilde\OO(\epsilon) \upsilon$. 
			
			Then $C$ is an $(\tilde\OO(\epsilon), \upsilon)$-approximate set for $S(\overline{M})$. Moreover,  $\textbf{Ora}$ directly yileds an $\OO(T+\frac{1}{\epsilon})$-time oracle for backtracking from $C$ to $\overline{M}$. That is, given any $c \in C$, by calling $\textbf{Ora}$, in $\OO(T+\frac{1}{\epsilon})$ time, we can obtain $\overline{M}' \subset \overline{M}$ such that $|c-\Sigma(\overline{M})'| \le \tilde\OO(\epsilon) \upsilon$, in particular, $card_{\overline{M}' }[x]$ is obtained for every $x\in set(\overline{M})$.  \end{claim}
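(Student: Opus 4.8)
The plan is to reduce the whole statement to Observation~\ref{obs:G_m}, which in the regime $m \le 2^{1+pow(\log_{2}(\frac{1}{\epsilon})+1)}$ already gives us both $G_{\log}\subset S(\overline{M})$ and the exact identity $G_{\log}\cap[0,\upsilon] = S(\overline{M})\cap[0,\upsilon]$. Granting this, checking that $C$ is an $(\tilde\OO(\epsilon),\upsilon)$-approximate set of $S(\overline{M})$ is just verifying the three conditions of Definition~\ref{def:e_appr}. Condition (i), $C\subset[0,(1+\tilde\OO(\epsilon))\upsilon]$, is inherited verbatim from the hypothesis that $C$ is an $(\tilde\OO(\epsilon),\upsilon)$-approximate set of $G_{\log}$, since that condition does not refer to the target set. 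For condition (ii), any $c\in C$ is within $\tilde\OO(\epsilon)\upsilon$ of some $a\in G_{\log}$, and $G_{\log}\subset S(\overline{M})$ puts this $a$ into $S(\overline{M})$. For condition (iii), any $a'\in S(\overline{M})\cap[0,\upsilon]$ lies in $G_{\log}\cap[0,\upsilon]\subset G_{\log}$ by the second half of Observation~\ref{obs:G_m}, so the approximate-set property of $C$ relative to $G_{\log}$ furnishes some $c'\in C$ with $|c'-a'|\le\tilde\OO(\epsilon)\upsilon$.

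Next I would convert the oracle $\textbf{Ora}$ for backtracking from $C$ to $\dot{\cup}^{2^{1+pow(\log_{2}(\frac{1}{\epsilon})+1)}} G$ into one for backtracking from $C$ to $\overline{M}$. Given $c\in C$, call $\textbf{Ora}$ to obtain $x_1,\dots,x_{l_M}\in G\cup\{0\}$, where here $l_M = 2^{1+pow(\log_{2}(\frac{1}{\epsilon})+1)}$, satisfying $|c-\sum_i x_i|\le\tilde\OO(\epsilon)\upsilon$. Set $\overline{M}' := \dot{\cup}\,\{x_i : x_i\neq 0\}$. The key point is $\overline{M}'\subset\overline{M}$: every nonzero $x_i$ is an element of $G = set(\overline{M})$, and since there are exactly $l_M$ indices, the multiplicity in $\overline{M}'$ of any fixed element of $G$ is at most $l_M$, which is precisely the common multiplicity of every element of $\overline{M}$ by the structural assumption on $M$ in Lemma~\ref{lemma:sam_mul}. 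Hence $\Sigma(\overline{M}')=\sum_i x_i$ and $|c-\Sigma(\overline{M}')|\le\tilde\OO(\epsilon)\upsilon$. To report $card_{\overline{M}'}[x]$ for every $x\in set(\overline{M})$ it suffices to bucket the $l_M$ returned values by value and then emit one count per element of $set(\overline{M})$; since $l_M = \OO(\log\frac{1}{\epsilon})$ and $|set(\overline{M})|\le|G|=\OO(\frac{1}{\epsilon})$ (because $\overline{M}\subset[\frac{1}{2\epsilon},\frac{1}{\epsilon}]$), this costs $\OO(\frac{1}{\epsilon})$ on top of the $T$ calls to $\textbf{Ora}$, giving the claimed $\OO(T+\frac{1}{\epsilon})$ bound.

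The argument is short and is essentially bookkeeping against Definition~\ref{def:e_appr}, Definition~\ref{def:def_for_oracle}, and Observation~\ref{obs:G_m}; I do not expect any genuine obstacle. The one place that needs care — and the reason the case split on $m$ versus $2^{1+pow(\log_{2}(\frac{1}{\epsilon})+1)}$ was set up in the first place — is the inclusion $\overline{M}'\subset\overline{M}$, i.e.\ ensuring that the number of summands produced by $\textbf{Ora}$ does not exceed the common multiplicity of the elements of $\overline{M}$; in this regime those two quantities are literally equal, so the inclusion is immediate. With Claim~\ref{claim:c_G_M} in hand, the remaining work in this branch of Lemma~\ref{lemma:sam_mul} is to actually compute, in $\tilde\OO(\epsilon^{-1})$ time, an $(\tilde\OO(\epsilon),\upsilon)$-approximate set of $G_{\log}$ of cardinality $\OO(\epsilon^{-1})$ together with the required backtracking oracle, which is taken up afterward.
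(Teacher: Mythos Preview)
Your proposal is correct and follows essentially the same route as the paper: both reduce immediately to Observation~\ref{obs:G_m} to verify the three conditions of Definition~\ref{def:e_appr}, and both build $\overline{M}'$ from the $l_M$ values returned by $\textbf{Ora}$, using the fact that in this regime $l_M$ equals the common multiplicity in $\overline{M}$ to ensure $\overline{M}'\subset\overline{M}$. Your verification of condition~(ii) via the abstract inclusion $G_{\log}\subset S(\overline{M})$ is slightly cleaner than the paper's, which phrases it in terms of the explicit summands, but the content is identical.
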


		\begin{proof}
			When $m\le 2^{1+pow(\log_2 (\frac{1}{\epsilon})+1)}$, elements in $\overline{M}$ have the same multiplicity of $l_M = 2^{1+pow(\log_2 (\frac{1}{\epsilon})+1)}$.  Moreover, recall Observation~\ref{obs:G_m}, 
			we have $G_{\log}\subset S(\overline{M})$ and  $G_{\log }\cap [0,\upsilon] =  S(\overline{M})\cap [0,\upsilon]$.  	
			
			Since $C$ is an $(\tilde\OO(\epsilon), \upsilon)$-approximate set of $G_{\log}$, we observe the followings:
			\begin{itemize}
				\item $C \subset [0, (1+\tilde\OO(\epsilon))\upsilon]$. 
				\item For any $c\in C$, there exist $x_1,x_2,\cdots,x_{2^{1+pow(\log_{2}(\frac{1}{\epsilon})+1)}} \in G\cup \{0\}$ such that $|c-\sum^{2^{1+pow(\log_{2}(\frac{1}{\epsilon})+1)}}_{i=1} x_i| \le \tilde\OO(\epsilon) \upsilon$. 
				Note that $G = set(\overline{M})$ and elements in $\overline{M}$ have the same multiplicity of $2^{1+pow(\log_2 (\frac{1}{\epsilon})+1)}$, we have $\{ x_1,x_2,\cdots,x_{2^{1+pow(\log_{2}(\frac{1}{\epsilon})+1)}}\} \subset \overline{M}$.
				\item For any $s \in S(\overline{M})\cap[0,\upsilon]$, since $G_{\log }\cap [0,\upsilon] =  S(\overline{M})\cap [0,\upsilon]$, we have $s \in G_{\log }\cap [0,\upsilon]$. Then there exists $s''\in C$ such that $|s-s''|\le \tilde\OO(\epsilon)\upsilon$.
			\end{itemize} 
			Thus $C$ is an $(\tilde\OO(\epsilon), \upsilon)$-approximate set of $S(\overline{M})$.
			
			Given $\textbf{Ora}$ defined in Claim~\ref{claim:c_G_M}.  For any $c\in C$, within $T$ time, $\textbf{Ora}$ will return $x_1,x_2,\cdots,x_{2^{1+pow(\log_{2}(\frac{1}{\epsilon})+1)}} \in G\cup \{0\}$ such that $|c-\sum^{2^{1+pow(\log_{2}(\frac{1}{\epsilon})+1)}}_{i=1} x_i| \le \tilde\OO(\epsilon) \upsilon$.  
			Let $\overline{M}' = \{ x_1,x_2,\cdots,x_{2^{1+pow(\log_{2}(\frac{1}{\epsilon})+1)}}\}$. It is easy to see that $\overline{M}' \subset \overline{M}$ and the total time to determine $card_{\overline{M}'}[x]$ for every $x\in set(\overline{M})$ is $\OO(2^{1+pow(\log_{2}(\frac{1}{\epsilon})+1)}+ |G|) = \OO(\frac{1}{\epsilon})$. Thus $\textbf{Ora}$ yields an $\OO(T+\frac{1}{\epsilon})$-time oracle for backtracking from $C$ to $\overline{M}$.\qed \end{proof}

		
		In the following, we will design an iterative approach to compute an approximate set for $G_{\log}$ and meanwhile build an oracle for backtracking. 
		\paragraph{\textbf{Approximating $\mathbf{G_{\log}}$.}}
		Given any multiset $A$, recall that $A\dot{\cup} A$ is the multiset that duplicates each element in $A$. We build a linked list structure as follows:
		\begin{enumerate}
			\item At iteration-1, we use Observation~\ref{obs:cap_apx_ss} to compute an $(\epsilon, \min\{2G^{\max},\upsilon\})$-approximate set with cardinality of $\OO(\frac{1}{\epsilon})$ for $G\oplus G$ and meanwhile derive an $\OO(\frac{1}{\epsilon} \log\frac{1}{\epsilon})$-time oracle for backtracking from this approximate set to $G \dot{\cup} G$. Denote by $U^1$ this approximate set.  
			We create a head node, and let the head node contain $U^1$ and the oracle for backtracking from $U^1$ to $G \dot{\cup} G$.  Let $f^{1}(\epsilon) := \epsilon$.
			
			Note that $G\subset [\frac{1}{2\epsilon}, \frac{1}{\epsilon}]$ and elements in $G$ are different from each other, thus $|G| \le \frac{1}{\epsilon}$. According to Observation~\ref{obs:cap_apx_ss}, the total processing time at iteration-1 is $\OO(\frac{1}{\epsilon}\log \frac{1}{\epsilon})$.
			
			\item Before proceeding to iteration-$h$, where $h \ge 2$, we assume that the following things have been down:
			\begin{itemize}
				\item Let $U^0$ denote $G$. 
				We have obtained $U^{0}, U^{1}, U^{2},\cdots, U^{h-1}$, where $U^{k}$ is an $\left(\epsilon, \min\{2^{k}G^{\max},\upsilon\}+f^{k-1}(\epsilon)\min\{2^k G^{\max},2\upsilon\}\right)$-approximate set with cardinality of $\OO(\frac{1}{\epsilon})$ for $U^{k-1}\oplus U^{k-1}$ and $k=1,2,\cdots, h-1$. 
				
				Functions in $\{f^{\iota} (\epsilon) :  \iota=0,1,\cdots,h-1 \}$ are defined by the following recurrence relation:
				$f^{0} (\epsilon) = 0$ and $f^{\iota}(\epsilon) = \epsilon+2(1+\epsilon)f^{\iota-1}(\epsilon) \text{ \ for \ } \iota =1,2,\cdots, k.$ 
				
				\item For each $U^k$, where $k= 1,\cdots, h-1$, we have built an $\OO(\frac{1}{\epsilon}\log\frac{1}{\epsilon} )$-time oracle for backtracking from $U^{k}$ to $U^{k-1} \dot{\cup}U^{k-1}$. That is, given any $y \in U^{k}$, in $\OO(\frac{1}{\epsilon}\log\frac{1}{\epsilon})$
				time, the oracle will return $y_1 \in U^{k-1} \cup \{0\}$ and $y_2 \in U^{k-1}\cup \{0\}$ such that $$|y-(y_1+y_2)| \le \epsilon \left(\min\{2^{k}G^{\max},\upsilon\}+f^{k-1}(\epsilon)\min\{2^k G^{\max},2\upsilon\}\right).$$
				Moreover, a node is created for containing $U^k$ and the oracle for backtracking from $U^{k}$ to $U^{k-1} \dot{\cup}U^{k-1}$.
			\end{itemize}
			
			Now we start iteration-$h$. We use Observation~\ref{obs:cap_apx_ss} to compute an $\left(\epsilon, \min\{2^h G^{\max},\upsilon\}+f^{h-1}(\epsilon) \min\{2^{h}G^{\max},2\upsilon\}\right)$-approximate set with cardinality of $\OO(\frac{1}{\epsilon})$ for $U^{h-1} \oplus U^{h-1}$ and meanwhile build an $\OO(\frac{1}{\epsilon}\log\frac{1}{\epsilon})$-time oracle for backtracking from this approximate set to $U^{h-1} \dot{\cup} U^{h-1}$. Denote by $U^{h}$ this approximate set.  Then we create a node behind the $(h-1)$-th node. Let this new node contain $U^{h}$ and the oracle for backtracking from  $U^{h}$ to $U^{h-1} \dot{\cup} U^{h-1}$. Let $f^{h}(\epsilon) := \epsilon+2(1+\epsilon)f^{h-1}(\epsilon)$. 
			
			Notice that $|U^{h-1}| = \OO(\frac{1}{\epsilon})$. According to Observation~\ref{obs:cap_apx_ss}, the total processing time at iteration-$h$ is $\OO(\frac{1}{\epsilon}\log \frac{1}{\epsilon})$. 
			
			\item Using the same approach in iteration-$h$ recursively, and iteratively create linked nodes. 
			
			Let $h_p: = 1+pow(1+\log_{2}\frac{1}{\epsilon})$. After $h_p$ such rounds we stop and have built a linked list whose tail node contains (i). $U^{h_p}$, which is an $\left(\epsilon, \min\{2^{h_p} G^{\max},\upsilon\}+f^{h_p-1}(\epsilon)\min\{2^{h_p}G^{\max},2\upsilon\}\right)$-approximate set for $U^{h_p-1}\oplus U^{h_p-1}$;
			(ii). an $\OO(\frac{1}{\epsilon}\log\frac{1}{\epsilon})$-time oracle for backtracking from $U^{h_p}$ to $U^{h_p-1} \dot{\cup} U^{h_p-1}$. That is, given any $y \in U^{h_p}$, in $\OO(\frac{1}{\epsilon}\log\frac{1}{\epsilon})$
			time, the oracle will return $y_1 \in U^{h_p-1} \cup \{0\}$ and $y_2 \in U^{h_p-1}\cup \{0\}$ such that $$|y-(y_1+y_2)| \le \epsilon \left(\min\{2^{h_p}G^{\max},\upsilon\}+f^{h_p-1}(\epsilon)\min\{2^{h_p} G^{\max},2\upsilon\}\right).$$

		\end{enumerate}
		
		To summarize, the overall processing time is $\OO(\frac{h_{p} }{\epsilon}\log \frac{1}{\epsilon}) = \OO(\frac{1}{ \epsilon} (\log \frac{1}{\epsilon})^2)$. Note that we have obtained $U^0, U^1,U^2, \cdots, U^{h_p}$ and defined $f^{0}(\epsilon), f^{1}(\epsilon),\cdots, f^{h_p}(\epsilon)$. For every integer $1\le h \le h_p$, $U^{h}$ is an $\left(\epsilon, \min\{2^{h} G^{\max},\upsilon\}+f^{h-1}(\epsilon)\cdot \min\{2^{h}G^{\max},2\upsilon\}\right)$-approximate set for $U^{h-1}\dot{\cup} U^{h-1}$, moreover, $|U^h|=\OO(\frac{1}{\epsilon})$.  Functions in $\left\{f^{h}(\epsilon) : h=0,1,2,\cdots,h_p\right\}$ are defined by the following recurrence relation:
		$f^{0} (\epsilon) = 0$ and $f^{h}(\epsilon) = \epsilon+2(1+\epsilon)f^{h-1}(\epsilon) $ for $ h=1,2,\cdots, h_p$.  A simple calculation shows that $f^{h_p}(\epsilon)=\OO(\epsilon (\log \frac{1}{\epsilon})^2)$. 
		
		We claim that $U^{h_p}$ is an $(\tilde\OO(\epsilon), \upsilon)$-approximate set with cardinality of $\OO(\frac{1}{\epsilon})$ for $G_{\log}$. In particular, we claim the following.
		
		\begin{claim}\label{claim:cl_f_un}
			For every $h =1,2,\cdots, h_p$, $U^h$ is an $(f^{h}(\epsilon), \min\{2^{h}G^{\max},\upsilon\})$-approximate set for $\underbrace{G \oplus G\oplus \cdots \oplus G}_{2^{h}}$. In particular, $U^{h}$ is an $(f^{h}(\epsilon), \upsilon)$-approximate set for $\underbrace{G \oplus G\oplus \cdots \oplus G}_{2^{h}}$. 
		\end{claim}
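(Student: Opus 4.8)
The plan is to prove the claim by induction on $h$, directly verifying the three conditions of Definition~\ref{def:e_appr}; this mirrors the inductive argument behind Observation~\ref{obs:apx_apx_ss}, except that it carries along the cap parameter $\omega_h:=\min\{2^{h}G^{\max},\upsilon\}$ instead of a static value. For the base case $h=1$, recall that $U^1$ was produced by applying Observation~\ref{obs:cap_apx_ss} to $B_1=B_2=G$ with cap parameter $\min\{2G^{\max},\upsilon\}=\omega_1$, so $U^1$ is an $(\epsilon,\omega_1)$-approximate set of $G\oplus G=\underbrace{G\oplus G}_{2^1}$, and since $f^1(\epsilon)=\epsilon$ this is exactly the assertion.

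For the inductive step, suppose $U^{h-1}$ is an $(f^{h-1}(\epsilon),\omega_{h-1})$-approximate set of $A:=\underbrace{G\oplus\cdots\oplus G}_{2^{h-1}}$, noting $\underbrace{G\oplus\cdots\oplus G}_{2^{h}}=A\oplus A$ and $A\subset[0,2^{h-1}G^{\max}]$. I will use three elementary facts about the minima: (1) $\min\{2^{h}G^{\max},2\upsilon\}=2\min\{2^{h-1}G^{\max},\upsilon\}=2\omega_{h-1}$, which rewrites the construction's cap parameter for $U^h$ as $\omega_h':=\omega_h+2f^{h-1}(\epsilon)\omega_{h-1}$, so $U^h$ is an $(\epsilon,\omega_h')$-approximate set of $U^{h-1}\oplus U^{h-1}$; (2) $\omega_{h-1}\le\omega_h$; and (3) $\min\{2^{h-1}G^{\max},\omega_h\}=\omega_{h-1}$, so that whenever $a'=a_1'+a_2'$ with $a_i'\in A\cup\{0\}$ and $a'\le\omega_h$, each part satisfies $a_i'\le\omega_{h-1}$. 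To check condition (iii), take $a'\in(A\oplus A)\cap[0,\omega_h]$ and decompose $a'=a_1'+a_2'$ with $a_i'\in A\cup\{0\}$; by the inductive hypothesis (condition (iii) for $U^{h-1}$, applicable since fact (3) gives $a_i'\le\omega_{h-1}$, together with the trivial case $a_i'=0$) there are $c_i'\in U^{h-1}\cup\{0\}$ with $|c_i'-a_i'|\le f^{h-1}(\epsilon)\omega_{h-1}$, hence $c':=c_1'+c_2'\in U^{h-1}\oplus U^{h-1}$ satisfies $|c'-a'|\le 2f^{h-1}(\epsilon)\omega_{h-1}$ and $c'\le a'+2f^{h-1}(\epsilon)\omega_{h-1}\le\omega_h'$; then condition (iii) of $U^h$ as an $(\epsilon,\omega_h')$-approximate set of $U^{h-1}\oplus U^{h-1}$ yields $c''\in U^h$ with $|c''-c'|\le\epsilon\omega_h'$, so $|c''-a'|\le\epsilon\omega_h'+2f^{h-1}(\epsilon)\omega_{h-1}$. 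The recurrence $f^{h}(\epsilon)=\epsilon+2(1+\epsilon)f^{h-1}(\epsilon)$ together with fact (2) gives $\epsilon\omega_h'+2f^{h-1}(\epsilon)\omega_{h-1}=\epsilon\omega_h+2(1+\epsilon)f^{h-1}(\epsilon)\omega_{h-1}\le f^{h}(\epsilon)\omega_h$, which is precisely what (iii) demands. Conditions (ii) and (i) go the same way: for (ii) an arbitrary $c\in U^h$ is first pulled back to $U^{h-1}\oplus U^{h-1}$ within $\epsilon\omega_h'$ and then to $A\oplus A$ within $2f^{h-1}(\epsilon)\omega_{h-1}$; for (i) we have $U^h\subset[0,(1+\epsilon)\omega_h']$ and $(1+\epsilon)\omega_h'\le(1+f^{h}(\epsilon))\omega_h$ by the same telescoping identity.

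Finally, the ``in particular'' follows from a short monotonicity observation: an $(f^{h}(\epsilon),\min\{2^{h}G^{\max},\upsilon\})$-approximate set of $\underbrace{G\oplus\cdots\oplus G}_{2^{h}}$ is automatically an $(f^{h}(\epsilon),\upsilon)$-approximate set of the same set, because $\min\{2^{h}G^{\max},\upsilon\}\le\upsilon$ and $\underbrace{G\oplus\cdots\oplus G}_{2^{h}}\subset[0,2^{h}G^{\max}]$ contains no point strictly between $\min\{2^{h}G^{\max},\upsilon\}$ and $\upsilon$ (this value is either $\upsilon$ itself or $2^{h}G^{\max}$), so raising the cap parameter from $\min\{2^{h}G^{\max},\upsilon\}$ to $\upsilon$ alters none of the three conditions. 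The main obstacle in this argument is purely the bookkeeping of cap parameters — confirming that the construction's choice $\omega_h+f^{h-1}(\epsilon)\min\{2^{h}G^{\max},2\upsilon\}$ is exactly the slack required for the inductive error estimate to close, and that every ``$\min$'' lines up across the recursion — rather than anything conceptual; the error telescoping through $f^h$ is routine once the parameters are aligned.
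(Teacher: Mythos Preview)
Your proposal is correct and takes essentially the same approach as the paper's proof: both argue by induction on $h$, verifying the three conditions of Definition~\ref{def:e_appr} by first pulling back/forward through the $(\epsilon,\omega_h')$-approximation of $U^{h-1}\oplus U^{h-1}$ and then through the inductive hypothesis on $U^{h-1}$, with the error telescoping via $f^{h}(\epsilon)=\epsilon+2(1+\epsilon)f^{h-1}(\epsilon)$ and $\omega_{h-1}\le\omega_h$. Your explicit bookkeeping facts (1)--(3) about the minima are a nice packaging of exactly the identities the paper uses inline; in particular, the paper's key step that $a_1,a_2\in[0,\min\{2^{h-1}G^{\max},\upsilon\}]$ whenever $a=a_1+a_2\in[0,\upsilon]$ is your fact~(3), and the ``in particular'' is handled identically in both.
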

		\begin{proof}
			We prove Claim~\ref{claim:cl_f_un} by induction.  
			
			Note that $\upsilon \ge \min \{ 2^{h} G^{\max} ,\upsilon \} $ and $\underbrace{G \oplus G \oplus \cdots \oplus G}_{2^{h}} \cap [0,\upsilon] =  \underbrace{G \oplus G \oplus \cdots \oplus G}_{2^{h}} \cap [0,\min\{2^h G^{\max},\upsilon\}]$ hold for every $h=1,2\cdots,h_p$. It is straightforward that an $(f^{h}(\epsilon), \min\{2^{h}G^{\max},\upsilon\})$-approximate set for $\underbrace{G \oplus G\oplus \cdots \oplus G}_{2^{h}}$ is automatically an $(f^{h}(\epsilon), \upsilon)$-approximate set for $\underbrace{G \oplus G\oplus \cdots \oplus G}_{2^{h}}$. So we only need to prove the first half of Claim~\ref{claim:cl_f_un}.

			For $h=1$. Recall that $U^1$ is an $\left(\epsilon, \min\{2^{1} G^{\max},\upsilon\}+f^{0}(\epsilon)\cdot \min\{2^{1}G^{\max},2\upsilon\}\right)$-approximate set for $U^{0}\dot{\cup} U^{0}$, where $f^{0}(\epsilon) = 0$ and $U^{0} = G$.   Apparently, $U^1$ is an $\left(\epsilon, \min\{2 G^{\max},\upsilon\}\right)$-approximate set for $G\dot{\cup} G$.
			
			Then we consider any $h =2,3,\cdots, h_p$.  Assume that $U^{h-1}$ is  $(f^{h-1}(\epsilon),\min\{2^{h-1}G^{\max},\upsilon\})$-approximate set of $\underbrace{G \oplus G\oplus \cdots \oplus G}_{2^{h-1}}$. In the following, we prove that $U^h$ is an $(f^{h}(\epsilon), \min\{2^{h}G^{\max},\upsilon\})$-approximate set for $\underbrace{G \oplus G\oplus \cdots \oplus G}_{2^{h}}$. Recall that $U^{h}$ is an $\left(\epsilon, \min\{2^h G^{\max},\upsilon\}+f^{h-1}(\epsilon)\min\{2^{h}G^{\max},2\upsilon\}\right)$-approximate set for $U^{h-1} \oplus U^{h-1}$ and $f^{h}(\epsilon) = \epsilon + 2(1+\epsilon)f^{h-1}(\epsilon)$. We have
			\begin{align*}
				(U^{h})^{\max}
				&\le (1+\epsilon) \left( \min\{2^h G^{\max},\upsilon\}+f^{h-1}(\epsilon)\min\{2^{h}G^{\max},2\upsilon\}\right)\\
				&= (1+\epsilon)  \left(\min\{2^h G^{\max},\upsilon\}+2f^{h-1}(\epsilon)\min\{2^{h-1}G^{\max},\upsilon\} \right)\\
				&\le (1+\epsilon)  \left(\min\{2^h G^{\max},\upsilon\}+2f^{h-1}(\epsilon)\min\{2^{h}G^{\max},\upsilon\} \right) \\
				&=  (1+f^{h}(\epsilon))\min \{ 2^{h} G^{\max} ,\upsilon \}.
			\end{align*}
			Thus $U^{h}\subset [0,(1+f^{h}(\epsilon ))\min \{ 2^{h} G^{\max} ,\upsilon \}]$.

			Given any $a \in \underbrace{G \oplus G \oplus \cdots \oplus G}_{2^{h}} \cap [0,\min \{ 2^{h} G^{\max} ,\upsilon \}] = \underbrace{G \oplus G \oplus \cdots \oplus G}_{2^{h}} \cap [0,\upsilon]$, there exist $a_1,a_2 \in \underbrace{G \oplus G \oplus \cdots \oplus G}_{2^{h-1}}\cap[0,\upsilon]=\underbrace{G \oplus G \oplus \cdots \oplus G}_{2^{h-1}} \cap [0,\min \{ 2^{h-1} G^{\max} ,\upsilon \}] $ such that $a= a_1+a_2$.  
			Recall that $U^{h-1}$ is $(f^{h-1}(\epsilon),\min\{2^{h-1}G^{\max},\upsilon\})$-approximate set for $\underbrace{G \oplus G\oplus \cdots \oplus G}_{2^{h-1}}$, thus there exists $a'_i \in U^{h-1}$ such that $|a_i -a'_i| \le f^{h-1}(\epsilon) \min\{2^{h-1}G^{\max},\upsilon\}$ for $i=1,2$. 
			Then we have 
			$a'_1+a'_2 \le a+ f^{h-1}(\epsilon) \min\{2^{h}G^{\max},2\upsilon\}\le \min\{2^{h}G^{\max}, \upsilon\}+f^{h-1}(\epsilon) \min\{2^{h}G^{\max},2\upsilon\}.$
			Thus $a'_1+a'_2\in (U^{h-1}\oplus U^{h-1} )\cap [0, \min\{2^h G^{\max},\upsilon\}+f^{h-1}(\epsilon)\min\{2^{h}G^{\max},2\upsilon\}]$. Let $a' = a'_1+a'_2$. Recall that $U^h$ is an $(\epsilon, \min\{2^h G^{\max},\upsilon\}+f^{h-1}(\epsilon)\min\{2^{h}G^{\max},2\upsilon\})$-approximate set for $U^{h-1} \oplus U^{h-1}$, then there exists $a'' \in U^{h}$ such that
			$|a''-a'|\le \epsilon \left(\min\{2^{h}G^{\max}, \upsilon\}+f^{h-1}(\epsilon) \min\{2^{h}G^{\max},2\upsilon\}\right).$ To summarize, given $a \in \underbrace{G \oplus G \oplus \cdots \oplus G}_{2^{h}} \cap [0,\upsilon]$, there exist $a'' \in U^{h}$ such that 
			\begin{align*}
				|a''-a |&\le |a''-a'|+|a'-a|\le  |a''-a'| + |a'_1-a_1|+|a'_2-a_2|\\
				&\le \epsilon \left(\min \{ 2^{h} G^{\max} ,\upsilon\}+f^{h-1}(\epsilon)  \min \{ 2^{h} G^{\max} ,2\upsilon\}\right)
				+   2f^{h-1}(\epsilon)\min \{ 2^{h-1}G^{\max},\upsilon\}\\
				&\le \left(\epsilon +2(\epsilon+1)f^{h-1}(\epsilon)\right) \min \{ 2^{h} G^{\max} ,\upsilon\}=f^{h}(\epsilon )\min \{ 2^{h} G^{\max} ,\upsilon \}.
			\end{align*}
			
			Given any $c \in U^{h}$, note that $U^{h}$ is an $\left(\epsilon, \min\{2^h G^{\max},\upsilon\}+f^{h-1}(\epsilon)\min\{2^{h}G^{\max},2\upsilon\}\right)$-approximate set for $U^{h-1} \oplus U^{h-1}$, then there exist $c'_1,c'_2 \in U^{h-1}\cup\{0\}$ such that 
			$|c-(c'_1+c'_2)| \le \epsilon \left(\min\{2^h G^{\max},\upsilon\}+f^{h-1}(\epsilon)\min\{2^{h}G^{\max},2\upsilon\}\right)$. Consider each $c'_i$. If $c'_i \in U^{h-1}$, recall that $U^{h-1}$ is an $(f^{h-1}(\epsilon),\min\{2^{h-1}G^{\max},\upsilon\})$-approximate set for $\underbrace{G \oplus G\oplus \cdots \oplus G}_{2^{h-1}}$, then there exist $\bar{c}^{(i;1)},\bar{c}^{(i;2)},\cdots,\bar{c}^{(i;2^{h-1})}  \in G \cup \{0\}$ such that $|c'_i - \sum^{2^{h-1}}_{j=1}\bar{c}^{(i;j)}| \le f^{h-1}(\epsilon) \min\{2^{h-1}G^{\max},\upsilon\}$. Else if  $c'_i = 0$, let $\bar{c}^{(i;1)} =\bar{c}^{(i;2)}=\cdots =\bar{c}^{(i;2^{h-1})} = 0$, apparently, $|c'_i - \sum^{2^{h-1}}_{j=1}\bar{c}^{(i;j)}| \le f^{h-1}(\epsilon) \min\{2^{h-1}G^{\max},\upsilon\}$. 
			To summarize, given $c \in U^{h}$, there exist $\bar{c}^{(1;1)},\bar{c}^{(1;2)},\cdots,\bar{c}^{(1;2^{h-1})}; \bar{c}^{(2;1)},\bar{c}^{(2;2)},\cdots,\bar{c}^{(2;2^{h-1})} \in G \cup \{0\}$ such that 
			\begin{align*}
				|c-\sum^2_{i=1}\sum^{2^{h-1}}_{j=1}\bar{c}^{(i;j)}| 
				&\le |c-(c'_1+c'_2)|+|c'_1 - \Sigma^{2^{h-1}}_{j=1}\bar{c}^{(1;j)}| + |c'_2 - \Sigma^{2^{h-1}}_{j=1}\bar{c}^{(2;j)}| \\ 
				&\le \epsilon \left(\min\{2^h G^{\max},\upsilon\}+f^{h-1}(\epsilon)\min\{2^{h}G^{\max},2\upsilon\}\right)+2f^{h-1}(\epsilon)\min\{2^{h-1}G^{\max},\upsilon\}\\
				&\le f^{h}(\epsilon)\min\{2^{h}G^{\max},\upsilon\}.
			\end{align*} 
			Till now, we have proved that $U^h$ is an $(f^{h}(\epsilon), \min\{2^{h}G^{\max},\upsilon\})$-approximate set of $\underbrace{G \oplus G\oplus \cdots \oplus G}_{2^{h}}$, which implies that $U^h$ is an $(f^{h}(\epsilon), \upsilon)$-approximate set for $\underbrace{G \oplus G\oplus \cdots \oplus G}_{2^{h}}$.
			Then Claim~\ref{claim:cl_f_un} follows by induction.\qed
		\end{proof}
		
		It remains to give the backtracking oracle. With the help of the linked list structure, an $\tilde\OO(\epsilon^{-1}) $-time oracle $\textbf{Ora}$ for backtracking from $U^{h_p}$ to  $\underbrace{G \dot{\cup} G\dot{\cup}\cdots \dot{\cup} G}_{2^{h_p}}$ is derived and works as follows: 
		\begin{itemize}
			\item For any $c\in U^{h_p}$, we backtrace from the tail node to the head node. Note that 	$U^{h_p}$ is an $\left(\epsilon, \min\{2^{h_p} G^{\max},\upsilon\}+f^{h_p-1}(\epsilon) \min\{2^{h_p}G^{\max},2\upsilon\}\right)$-approximate set for $U^{h_p-1}\oplus U^{h_p-1}$, and the tail node contains an $\OO(\frac{1}{\epsilon}\log\frac{1}{\epsilon})$-time oracle for backtracking from $U^{h_p}$ to $U^{h_{p} -1} \dot{\cup} U^{h_{p} -1}$. Thus in $\OO(\frac{1}{\epsilon}\log\frac{1}{\epsilon})$ time, we can determine $c^{(h_{p}-1;1)}, c^{(h_{p}-1;2)}\in U^{h_p -1}\cup\{0\}$ such that
			$$|c- (c^{(h_{p}-1;1)}+c^{(h_{p}-1;2)})| \le \epsilon \left( \min\{2^{h_p} G^{\max},\upsilon\}+f^{h_p-1}(\epsilon)  \min\{2^{h_p}G^{\max},2\upsilon\}\right).$$
			\item Back track recursively.
			
			Given $1\le h \le h_{p}-1$, assume that we have determined $c^{(h;j_h)} \in U^{h} {\cup}\{0\}$ for every $j_h = 1,2,3,\cdots,2^{h_{p}-h}$ such that 
			$$|c-\sum^{2^{h_{p}-h}}_{j_{h}=1} c^{(h;j_h)}| \le 	\sum^{h_p}_{k=h+1} 2^{h_p-k} \epsilon   \left( \min\{2^{k} G^{\max},\upsilon\}+f^{k-1}(\epsilon)  \min\{2^{k}G^{\max},2\upsilon\}\right).$$
			
			For each $c^{(h;j_h)} \in  U^{h} {\cup}\{0\}$, note that  $U^h$ is an $\left(\epsilon, \min\{2^{h} G^{\max},\upsilon\}+f^{h-1}(\epsilon) \min\{2^{h}G^{\max},2\upsilon\}\right)$-approximate set for $U^{h-1}\oplus U^{h-1}$, and the node containing $U^h$ also contains an $\OO(\frac{1}{\epsilon}\log\frac{1}{\epsilon})$-time oracle for backtracking from $U^{h}$ to $U^{h -1} \dot{\cup} U^{h-1}$.  Thus in $\OO(\frac{1}{\epsilon}\log\frac{1}{\epsilon})$ time, we can determine $c^{(h-1;2 j_h-1)}, c^{(h-1;2j_h)}\in U^{h -1}\cup\{0\}$ such that
			$$
			|c^{(h;j_h)}- (c^{(h-1;2j_h-1)}+c^{(h-1;2j_h)})| \le \epsilon \left( \min\{2^{h} G^{\max},\upsilon\}+f^{h-1}(\epsilon)  \min\{2^{h}G^{\max},2\upsilon\}\right).
			$$
			Then in total $\OO( \frac{2^{h_{p}-h}}{\epsilon} \log\frac{1}{\epsilon})$ time, we can determine $c^{(h-1;1)}, c^{(h-1;2)},c^{(h-1;3)},\cdots,c^{(h-1;2^{h_{p}-h+1})} \in U^{h -1}\cup\{0\}$ such that 
			\begin{align*}
				| \sum^{2^{h_{p}-h}}_{j_h =1} c^{(h;j_h)}-\sum^{2^{h_{p}-h+1}}_{j_{h-1}=1} c^{(h-1;j_{h-1})} | &\le \sum^{2^{h_{p}-h}}_{j_h = 1} | c^{(h;j_h)}- (c^{(h-1;2j_h-1)}+c^{(h-1;2j_h)})| \\
				&\le 2^{h_{p} -h} \epsilon\left( \min\{2^{h} G^{\max},\upsilon\}+f^{h-1}(\epsilon)  \min\{2^{h}G^{\max},2\upsilon\}\right).
			\end{align*}
			It follows that 
			$$|c-\sum^{2^{h_{p}-h+1}}_{j_{h-1}=1} c^{(h-1;j_{h-1})}| \le \sum^{h_p}_{k=h} 2^{h_p-k} \epsilon \cdot \left( \min\{2^{k} G^{\max},\upsilon\}+f^{k-1}(\epsilon)  \min\{2^{k}G^{\max},2\upsilon\}\right).$$
			
			\item After $h_p$ such rounds, we stop and have determined $c^{(1;1)}, c^{(1;2)}, c^{(1;3)},\cdots,c^{(1;2^{h_p})} \in G \cup \{0\}$ such that
			$$
			|c- \sum^{2^{h_p}}_{j_{1}=1}c^{(1;j_1)}| \le  \sum^{h_p}_{k=1} 2^{h_p-k} \epsilon \cdot \left( \min\{2^{k} G^{\max},\upsilon\}+f^{k-1}(\epsilon)  \min\{2^{k}G^{\max},2\upsilon\}\right)
			$$
			Observe that $\epsilon \left(\min\{2^k G^{\max},\upsilon\}+f^{k-1}(\epsilon)\min\{2^{k}G^{\max},2\upsilon\}\right)+2f^{k-1}(\epsilon)\min\{2^{k-1}G^{\max},\upsilon\} \le (\epsilon +2(\epsilon+1)f^{k-1}(\epsilon))\min \{ 2^{k} G^{\max} ,\upsilon\} =f^{k}(\epsilon)\min\{2^{k}G^{\max},\upsilon\}$ holds for every $k=1,2,\cdots, h_p$. 
			Then we have
			\begin{align*}
				&|c- \sum^{2^{h_p}}_{j_{1}=1}c^{(1;j_1)}| \le  \sum^{h_p}_{k=1} 2^{h_p-k} \epsilon \cdot \left( \min\{2^{k} G^{\max},\upsilon\}+f^{k-1}(\epsilon)  \min\{2^{k}G^{\max},2\upsilon\}\right)\\
				& = \sum^{h_p}_{k=2} 2^{h_p-k} \epsilon \cdot \left( \min\{2^{k} G^{\max},\upsilon\}+f^{k-1}(\epsilon)  \min\{2^{k}G^{\max},2\upsilon\}\right)+2^{h_{p}-1} f^1(\epsilon) \min \{ 2^1 G^{\max},\upsilon\}\\
				& \le  \sum^{h_p}_{k=3} 2^{h_p-k}  \epsilon \cdot \left( \min\{2^{k} G^{\max},\upsilon\}+f^{k-1}(\epsilon)  \min\{2^{k}G^{\max},2\upsilon\}\right) + 2^{h_{p}-2} f^{2}(\epsilon)\min\{2^{2}G^{\max},\upsilon\}\\
				&\cdots \\
				&\le f^{h_p}(\epsilon)\min\{2^{h_p}G^{\max},\upsilon\} \le \tilde\OO(\epsilon) \upsilon
			\end{align*}
			
		\end{itemize}
		To summarize, given any $c\in C$, in total $\OO(\sum^{h_p}_{k=1} \frac{2^{h_{p}-k}}{\epsilon} \log\frac{1}{\epsilon})=\OO(\frac{1}{\epsilon}(\log\frac{1}{\epsilon})^2)$ processing time, $\textbf{Ora}$ will return $c^{(1;1)}, c^{(1;2)}, c^{(1;3)},\cdots,c^{(1;2^{h_p})} \in G \cup \{0\}$ such that
		$|c- \sum^{2^{h_p}}_{j_{1}=1}c^{(1;j_1)}| \le \tilde\OO(\epsilon) \upsilon$.

		In conclude, within $\OO(\frac{1}{\epsilon}(\log\frac{1}{\epsilon})^2)$ time, we will build a linked list structure whose tail node contains an $(\tilde\OO(\epsilon), \upsilon)$-approximate set with cardinality of $\OO(\frac{1}{\epsilon})$ for $\underbrace{G \oplus G\oplus \cdots \oplus G}_{2^{1+pow(\log_{2}(\frac{1}{\epsilon})+1)}} = G_{\log}$. Meanwhile, with the help of this 
		linked list structure, an $\OO(\frac{1}{\epsilon}(\log\frac{1}{\epsilon})^2)$-time oracle $\textbf{Ora}$ for backtracking from this approximate set to $\underbrace{G \dot{\cup} G\dot{\cup} \cdots \dot{\cup} G}_{2^{1+pow(\log_{2}(\frac{1}{\epsilon})+1)}}$ is derived.   Then by Claim~\ref{claim:c_G_M}, we have proved Lemma~\ref{lemma:sam_mul} for the case that $m \le 2^{1+pow(\log_{2}(\frac{1}{\epsilon})+1)}$. 
		
		\subsubsection{Handling the case that $m > 2^{1+pow(\log_{2}(\frac{1}{\epsilon})+1)}$.}\label{subsec:m>_2}
		In this section, we aim to prove Lemma~\ref{lemma:sam_mul} for the case that $m > 2^{1+pow(\log_{2}(\frac{1}{\epsilon})+1)}$.

		Recall that in Section~\ref{subsec:un_g_1}, we have obtained $U^{h_p}$, which is an $(f^{h_p}(\epsilon),\min\{2^{h_p} G^{\max},\upsilon\})$-approximate set with cardinality of $\OO(\frac{1}{\epsilon})$ for $G_{\log}$. Meanwhile, we have also derived $\textbf{Ora}$, which is an $\OO(\frac{1}{\epsilon}(\log\frac{1}{\epsilon})^2)$-time oracle  for backtracking from $U^{h_p}$ to $\underbrace{G \dot{\cup} G\dot{\cup} \cdots \dot{\cup} G}_{2^{1+pow(1+\log_{2}\frac{1}{\epsilon})}} $.  Here $h_p = 1+pow(1+\log_{2}\frac{1}{\epsilon})$ and $f^{h_p}(\epsilon) = \OO(\epsilon (\log \frac{1}{\epsilon})^2)$. We have the following observation.
		\begin{observation}\label{obs:G_k_log_}
			Given any $k=1,2,\cdots, pow(m)$, $2^k U^{h_p}$ is an $\left(f^{h_p}(\epsilon), 2^k \min \{(G_{\log})^{\max},\upsilon\}\right)$-approximate set for $2^k G_{\log}$. Moreover, $\textbf{Ora}$ directly yields an $\OO(\frac{1}{\epsilon}(\log\frac{1}{\epsilon})^2))$-time oracle for backtracking from $2^k U^{h_p}$ to $2^k\underbrace{ G \dot{\cup}  G \dot{\cup} \cdots \dot{\cup}  G}_{2^{h_p}}$. That is, given any $c\in 2^k U^{h_p}$, by calling $\textbf{Ora}$, within $\OO(\frac{1}{\epsilon}(\log\frac{1}{\epsilon})^2))$ processing time, we can obtain $c'_1,c'_2,\cdots,c'_{2^{h_p}} \in  G \cup \{0\}$ such that $|c- 2^k\sum^{ 2^{h_p}}_{\iota=1} c'_{\iota}| \le f^{h_p}(\epsilon) 2^k \min \{(G_{\log})^{\max},\upsilon\}$.
		\end{observation}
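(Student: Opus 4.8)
\textbf{Proof plan for Observation~\ref{obs:G_k_log_}.}

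The plan is to reduce the scaled statement entirely to the unscaled one proved in the previous subsection, using only the elementary fact that multiplying every element of all sets involved by a fixed positive constant preserves both the notion of approximate set (with the tolerance parameter scaled by the same constant) and the existence of a backtracking oracle (with no change in running time). First I would record the general principle: if $C$ is a $(\gamma,u)$-approximate set of $A$ and $\kappa>0$, then $\kappa C$ is a $(\gamma,\kappa u)$-approximate set of $\kappa A$; this follows directly from Definition~\ref{def:e_appr} by dividing (or multiplying) the three defining inequalities by $\kappa$. Likewise, if there is a $T$-time oracle for backtracking from $C$ to $\dot{\cup}^{\ell}_{i=1}X_i$, then feeding $c/\kappa \in C$ to that oracle and multiplying each returned coordinate $x_i$ by $\kappa$ yields a $T$-time (plus $\OO(\ell)$ arithmetic) oracle for backtracking from $\kappa C$ to $\kappa (\dot{\cup}^{\ell}_{i=1}X_i)=\dot{\cup}^{\ell}_{i=1}(\kappa X_i)$. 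Here $\ell = 2^{h_p}=\OO(1/\epsilon)$, which is already subsumed by the $\OO(\tfrac1\epsilon(\log\tfrac1\epsilon)^2)$ bound, so the oracle's running time is unchanged up to constants.

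Next I would instantiate this with $\kappa = 2^k$ for each $k=1,2,\dots,pow(m)$. From the construction in Section~\ref{subsec:un_g_1} we have $U^{h_p}$, which by Claim~\ref{claim:cl_f_un} (and the definition $(G_{\log})^{\max}$, noting $\min\{2^{h_p}G^{\max},\upsilon\}=\min\{(G_{\log})^{\max},\upsilon\}$ since $G_{\log}$ is the $2^{h_p}$-fold sumset) is an $(f^{h_p}(\epsilon),\min\{(G_{\log})^{\max},\upsilon\})$-approximate set for $G_{\log}=\underbrace{G\oplus\cdots\oplus G}_{2^{h_p}}$. Applying the scaling principle with $\kappa=2^k$ gives that $2^k U^{h_p}$ is an $(f^{h_p}(\epsilon),2^k\min\{(G_{\log})^{\max},\upsilon\})$-approximate set for $2^k G_{\log}$, which is exactly the first assertion. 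For the oracle part, recall that $\textbf{Ora}$ is the $\OO(\tfrac1\epsilon(\log\tfrac1\epsilon)^2)$-time oracle for backtracking from $U^{h_p}$ to $\underbrace{G\dot{\cup}\cdots\dot{\cup}G}_{2^{h_p}}$, returning $c'_1,\dots,c'_{2^{h_p}}\in G\cup\{0\}$ with $|c-\sum_\iota c'_\iota|\le f^{h_p}(\epsilon)\min\{(G_{\log})^{\max},\upsilon\}$ for any $c\in U^{h_p}$. Given $c\in 2^k U^{h_p}$, I would apply $\textbf{Ora}$ to $c/2^k\in U^{h_p}$ to obtain such $c'_\iota$; then $|c/2^k-\sum_\iota c'_\iota|\le f^{h_p}(\epsilon)\min\{(G_{\log})^{\max},\upsilon\}$, and multiplying through by $2^k$ yields $|c-2^k\sum_\iota c'_\iota|\le f^{h_p}(\epsilon)\,2^k\min\{(G_{\log})^{\max},\upsilon\}$, which is the claimed bound. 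The running time is that of one call to $\textbf{Ora}$ plus $\OO(2^{h_p})=\OO(1/\epsilon)$ arithmetic operations, hence $\OO(\tfrac1\epsilon(\log\tfrac1\epsilon)^2)$.

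There is essentially no hard part here: the whole observation is a bookkeeping lemma that transports the work of Section~\ref{subsec:un_g_1} through a dilation. The only point requiring a small amount of care is the consistency of the tolerance bookkeeping — making sure that the ``$\min$'' in the tolerance parameter is the right one after scaling (i.e.\ that $2^k\min\{(G_{\log})^{\max},\upsilon\}$, and not $\min\{2^k(G_{\log})^{\max},2^k\upsilon\}$ with some extra factor, is what the scaled definition delivers), which is immediate since $2^k\min\{a,b\}=\min\{2^ka,2^kb\}$. A secondary point is to observe that dividing $c$ by $2^k$ lands exactly in $U^{h_p}$ (not merely approximately), which holds because $2^k U^{h_p}$ is by definition $\{2^k x : x\in U^{h_p}\}$, so every $c\in 2^k U^{h_p}$ is of the form $2^k x$ with $x\in U^{h_p}$. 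With these two remarks in place the proof is a two-line scaling argument, and I would present it as such.
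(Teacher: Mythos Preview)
Your proposal is correct and follows essentially the same approach as the paper: both reduce to the unscaled case by dividing by $2^k$, invoking the properties of $U^{h_p}$ and $\textbf{Ora}$ established in Section~\ref{subsec:un_g_1}, and then multiplying the resulting inequalities back by $2^k$. The paper verifies the three conditions of Definition~\ref{def:e_appr} directly for this specific instance, whereas you first isolate the general scaling principle and then instantiate; the content is identical.
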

		\begin{proof}
			Note that $(G_{\log})^{\max} = 2^{h_p} G^{\max}$. We first show that $2^k U^{h_p}$ is an $(f^{h_p}(\epsilon), 2^k \min\{(G_{\log})^{\max}, \upsilon\})$-approximate set for $2^k G_{\log}$. It is sufficient to observe the followings:
			(i) $(2^k U^{h_p})^{\max} = 2^k(U^{h_p})^{\max}\le  \left(1+f^{h_p}(\epsilon)\right) 2^k \min \{(G_{\log})^{\max}, \upsilon\}$; (ii) for any $c\in 2^k U^{h_p}$, we have $\frac{c}{2^k} \in U^{h_p}$, then there exist $c_1,c_2,\cdots,c_{2^{h_p}} \in G\cup \{0\}$ such that $ |\frac{c}{2^k} - \sum^{2^{h_p}}_{\iota = 1} c_{\iota}| \le f^{h_p}(\epsilon)  \min \{(G_{\log})^{\max},\upsilon\}$, i.e., $|c-2^k\sum^{2^{h_p}}_{\iota = 1} c_{\iota}| \le 2^k f^{h_p}(\epsilon)  \min \{(G_{\log})^{\max},\upsilon\}$; (iii) for any $x \in 2^k G_{\log} \cap [0, 2^k \min \{(G_{\log})^{\max},\upsilon\}]$,we have $\frac{x}{2^k} \in G_{\log} \cap [0, \min \{(G_{\log})^{\max},\upsilon\}]$, then there exists $y \in U^{h_p}$ such that $|y- \frac{x}{2^k}| \le f^{h_p}(\epsilon)  \min \{(G_{\log})^{\max},\upsilon\}$, i.e., $|2^k y- x| \le 2^kf^{h_p}(\epsilon)  \min \{(G_{\log})^{\max},\upsilon\}$. 
			
			For the second part of the observation, consider any $c\in 2^k U^{h_p}$, we have $\frac{c}{2^k} \in U^{h_p}$. Recall that $\textbf{Ora}$ is an $\OO(\frac{1}{\epsilon}(\log\frac{1}{\epsilon})^2)$-time oracle  for backtracking from $U^{h_p}$ to $\underbrace{G \dot{\cup} G\dot{\cup} \cdots \dot{\cup} G}_{2^{h_p}} $.  Thus in $\OO(\frac{1}{\epsilon}(\log\frac{1}{\epsilon})^2)$ time, $\textbf{Ora}$ will return $c'_1,c'_2,\cdots,c'_{2^{h_p}} \in G \cup \{0\}$ such that $|\frac{c}{2^k} -\sum^{2^{h_p}}_{\iota = 1} c'_{\iota}| \le f^{h_p}(\epsilon)\min \{(G_{\log})^{\max},\upsilon\}$. Then we have  $|c -2^k  \sum^{2^{h_p}}_{\iota = 1} c'_{\iota} | \le f^{h_p}(\epsilon) 2^k \min \{(G_{\log})^{\max},\upsilon\}$. \qed \end{proof}

		The following claim guarantees that towards proving Lemma~\ref{lemma:sam_mul} for the case that $m > 2^{1+pow(\log_{2}(\frac{1}{\epsilon})+1)}$, it is sufficient to consider $U^{h_p} \oplus 2U^{h_p} \oplus \cdots \oplus 2^{pow(m)}U^{h_p}=\oplus^{pow(m)}_{k=0} 2^k U^{h_p}$.
		
		\begin{claim}\label{claim:c_G_M_}
			Assume that $m > 2^{1+pow(\log_2 (\frac{1}{\epsilon})+1)}$. Given $C$, which is an $(\tilde\OO(\epsilon), (1+\tilde\OO(\epsilon))\upsilon)$-approximate set for $\oplus^{pow(m)}_{k=0} 2^k U^{h_p}$. 
			Let $\textbf{Ora*}$ be a $T$-time oracle for backtracking from $C$ to $ U^{h_p}\dot{\cup} 2 U^{h_p} \dot{\cup} \cdots\dot{\cup} 2^{pow(m)}U^{h_p}$, that is, given any $c \in C$, in $T$ processing time, $\textbf{Ora*}$ will return $\{ c'_0,c'_1,c'_2,\cdots,c'_{pow(m)}\}$ such that $|c-\sum^{pow(m)}_{k=0} c'_k| \le \tilde\OO(\epsilon) \upsilon$, where $c'_k \in 2^k U^{h_p}\cup \{0\}$ for every $k=0,1,2,\cdots,pow(m)$.
			
			Then $C$ is an $(\tilde\OO(\epsilon), \upsilon)$-approximate set of $S(\overline{M})$. Moreover,  $\textbf{Ora*}$ and $\textbf{Ora}$ directly yiled an $\OO(T+\frac{1}{\epsilon}(\log\frac{1}{\epsilon})^3)$-time oracle for backtracking from $C$ to $\overline{M}$, that is, given any $c \in C$, by calling $\textbf{Ora*}$ and $\textbf{Ora}$, within $\OO(T+\frac{1}{\epsilon}(\log\frac{1}{\epsilon})^3)$ processing time, we can compute a multiset $\overline{M}' \subset \overline{M}$ such that $|c-\Sigma(\overline{M}')| \le \tilde\OO(\epsilon) \upsilon$. 
			
		\end{claim}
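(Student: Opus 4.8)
The plan is to prove Claim~\ref{claim:c_G_M_} by the same telescoping-through-a-chain-of-approximate-sets argument used for Claim~\ref{claim:c_G_M}, now with the extra $\oplus$-layer indexed by $k=0,1,\dots,pow(m)$. The backbone is a three-link chain: by Observation~\ref{obs:G+G}, $S(\overline{M})$ agrees with $\oplus^{pow(m)}_{k=0}2^kG_{\log}$ on $[0,\upsilon]$; by Observation~\ref{obs:G_k_log_} (equivalently, by scaling Claim~\ref{claim:cl_f_un}), $2^kU^{h_p}$ approximates $2^kG_{\log}$; and by hypothesis $C$ approximates $\oplus^{pow(m)}_{k=0}2^kU^{h_p}$. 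Composing the second link over all $k$ shows $\oplus^{pow(m)}_{k=0}2^kU^{h_p}$ approximates $\oplus^{pow(m)}_{k=0}2^kG_{\log}$, and then chaining all three links via the triangle inequality yields that $C$ is an $(\tilde\OO(\epsilon),\upsilon)$-approximate set of $S(\overline{M})$.

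Concretely, I would first verify the middle link by iterating Observation~\ref{obs:apx_apx_ss} over the $pow(m)+1=\OO(\log\frac1\epsilon)$ summands: each $2^kU^{h_p}$ is an $(f^{h_p}(\epsilon),\,2^k\min\{(G_{\log})^{\max},\upsilon\})$-approximate set of $2^kG_{\log}$, so for every $s'\in(\oplus^{pow(m)}_{k=0}2^kG_{\log})\cap[0,\upsilon]$ one decomposes $s'=\sum_k s'_k$ with $s'_k\in 2^kG_{\log}\cap[0,\upsilon]$, picks $a'_k\in 2^kU^{h_p}$ within $f^{h_p}(\epsilon)\,2^k\min\{(G_{\log})^{\max},\upsilon\}$ of $s'_k$, and sets $a'=\sum_k a'_k$; the reverse direction (from $\oplus 2^kU^{h_p}$ back to $\oplus 2^kG_{\log}$) is symmetric. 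The key numerical point is that the accumulated error $\sum_{k=0}^{pow(m)}f^{h_p}(\epsilon)\,2^k\min\{(G_{\log})^{\max},\upsilon\}$ is $\tilde\OO(\epsilon)\upsilon$: since $G\subset[\tfrac1{2\epsilon},\tfrac1\epsilon]$ and $m=\lfloor\upsilon/G^{\min}\rfloor$, we have $(G_{\log})^{\max}=2^{h_p}G^{\max}\le 2^{h_p+1}G^{\min}\le 2^{h_p+1}\upsilon/m$, and $\sum_{k\le pow(m)}2^k\le 2m$, so the sum is at most $2^{h_p+2}f^{h_p}(\epsilon)\upsilon=\OO(\epsilon(\log\tfrac1\epsilon)^3)\upsilon$. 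With this, checking the three conditions of Definition~\ref{def:e_appr} for $C$ as an approximate set of $S(\overline{M})$ is routine, using $C\subset[0,(1+\tilde\OO(\epsilon))\upsilon]$.

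For the backtracking oracle, given $c\in C$ I would first call $\textbf{Ora*}$ to obtain $c'_0,\dots,c'_{pow(m)}$ with $c'_k\in 2^kU^{h_p}\cup\{0\}$ and $|c-\sum_k c'_k|\le\tilde\OO(\epsilon)\upsilon$ in time $T$; then for each nonzero $c'_k$, since $c'_k/2^k\in U^{h_p}$, invoke $\textbf{Ora}$ as in Observation~\ref{obs:G_k_log_} to get $c^{(k)}_1,\dots,c^{(k)}_{2^{h_p}}\in G\cup\{0\}$ with $|c'_k-2^k\sum_\iota c^{(k)}_\iota|\le f^{h_p}(\epsilon)\,2^k\min\{(G_{\log})^{\max},\upsilon\}$ in time $\OO(\tfrac1\epsilon(\log\tfrac1\epsilon)^2)$. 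I would then define $\overline{M}'$ to be the multiset in which each $g\in G=set(\overline{M})$ carries multiplicity $\sum\{2^k:\ c^{(k)}_\iota=g\ \text{for some}\ \iota\}$, so that $\Sigma(\overline{M}')=\sum_k 2^k\sum_\iota c^{(k)}_\iota$. That $\overline{M}'\subset\overline{M}$ uses the multiplicity structure from Lemma~\ref{lemma:pre_unbo} (here, the hypothesis that $M$ consists of $l_M$ copies of $set(M)$): the total multiplicity of any $g$ is at most $2^{h_p}\sum_{k\le pow(m)}2^k\le 2^{h_p}\cdot 2m=l_M$. The triangle inequality plus the error bound of the previous paragraph gives $|c-\Sigma(\overline{M}')|\le\tilde\OO(\epsilon)\upsilon$, and the running time is $\OO\!\big(T+(pow(m)+1)\cdot\tfrac1\epsilon(\log\tfrac1\epsilon)^2\big)=\OO(T+\tfrac1\epsilon(\log\tfrac1\epsilon)^3)$, the bookkeeping to assemble $\overline{M}'$ being dominated.

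The main obstacle I expect is the error accounting across the $\OO(\log\frac1\epsilon)$-fold $\oplus$-composition: the pieces $2^kU^{h_p}$ carry cap parameters $2^k\min\{(G_{\log})^{\max},\upsilon\}$ that can individually exceed $\upsilon$, so one must not naively add relative errors but instead exploit the $\min$ against $G^{\max}$ together with the geometric bound $\sum_{k\le pow(m)}2^k\le 2m$, which is exactly what makes the total additive slack collapse back to $\tilde\OO(\epsilon)\upsilon$ and simultaneously what keeps the reconstructed multiset inside $\overline{M}$. Everything else parallels Claim~\ref{claim:c_G_M} and the earlier tree/linked-list compositions.
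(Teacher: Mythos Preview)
Your proposal is correct and follows essentially the same route as the paper: chain Observation~\ref{obs:G+G}, Observation~\ref{obs:G_k_log_}, and the hypothesis on $C$, bound the accumulated error via $G^{\max}\le 2\upsilon/m$ together with $\sum_{k\le pow(m)}2^k\le 2m$, and reconstruct $\overline{M}'$ by summing scaled multiplicities with the cap check against $l_M=2m\cdot 2^{h_p}$. One small slip: the multiplicity of $g$ in $\overline{M}'$ must be $\sum_{k}2^k\cdot|\{\iota:c^{(k)}_\iota=g\}|$ (counting repetitions within each level), not $\sum\{2^k:\exists\,\iota,\ c^{(k)}_\iota=g\}$; your stated identity $\Sigma(\overline{M}')=\sum_k 2^k\sum_\iota c^{(k)}_\iota$ and your multiplicity bound already presuppose the corrected formula.
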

		\begin{proof}
			When $m >  2^{1+pow(\log_2 (\frac{1}{\epsilon})+1)}$, elements in $\overline{M}$ have the same multiplicity of $l_M = 2m\cdot 2^{1+pow(\log_2 (\frac{1}{\epsilon})+1)}$. Moreover, recall Observation~\ref{obs:G+G}, we have $(G_{\log}\oplus 2G_{\log} \oplus\cdots \oplus 2^{pow(m)}G_{\log})\subset S(\overline{M})$ and $S(\overline{M})\cap [0, \upsilon] = (G_{\log}\oplus 2G_{\log} \oplus\cdots \oplus 2^{pow(m)}G_{\log}) \cap [0, \upsilon]$.  Note that $G = set(\overline{M}) \subset  [\frac{1}{2\epsilon},\frac{1}{\epsilon}]$ and $m = \lfloor \frac{\upsilon}{G^{\min}}\rfloor$, we have
			$G^{\max} \le 2G^{\min}\le \frac{2\upsilon}{m}$
			
			Let $C$ be an $(\tilde\OO(\epsilon), (1+\tilde\OO(\epsilon))\upsilon)$-approximate set for $\oplus^{pow(m)}_{k=0} 2^k U^{h_p}$. We observe the followings:
			\begin{itemize}
				\item $C \subset [0, (1+\tilde\OO(\epsilon))\upsilon]$.
				
				\item For any $c \in C$, there exists $\{c'_0,c'_1,c'_2,\cdots,c'_{pow(m)}\}$ such that $|c-\mathop{\sum}\limits^{pow(m)}_{k=0} c'_k| \le \tilde\OO(\epsilon) \upsilon$, where $c'_k \in 2^k U^{h_p}\cup \{0\}$ for every $k=0,1,2,\cdots,pow(m)$. Consider each $c'_k$. Recall Observation~\ref{obs:G_k_log_}, $2^k U^{h_p}$ is an $\left(f^{h_p}(\epsilon), 2^k \min \{(G_{\log})^{\max},\upsilon\}\right)$-approximate set for $2^k G_{\log}$. Thus there exists $c^{(k;1)},c^{(k;2)}, \cdots,c^{(k;2^{h_p})}\in G \cup \{0\}$ such that $|c'_k -2^k \sum^{2^{h_p}}_{\iota =1} c^{(k;\iota)}| \le f^{h_p}(\epsilon)\cdot 2^k \min \{(G_{\log})^{\max},\upsilon\}$.

				Let $\mathcal{C}_k:=\{c^{(k;1)},c^{(k;2)}, \cdots,c^{(k;2^{h_p})}\}$ for every $k=0,1,2,\cdots,pow(m)$. We define $\overline{M}'$ as follows:
				\begin{itemize}
					\item For each $g \in G$ and each integer $0\le k \le pow(m)$, let $n^k_g$ denote the multiplicity of $g$ in $\mathcal{C}_k$, i.e., $n^k_g = card_{\mathcal{C}_k}[g]$.  Let $\mathcal{R}_{g}$ be the multiset consists of $\sum^{pow(m)}_{k=0} 2^k n^k_g$ copies of $g$, that is, $set(\mathcal{R}_{g}) = \{g\}$ and $|\mathcal{R}_{g}| = \sum^{pow(m)}_{k=0} 2^k  n^k_g$. 
					\item Define $\overline{M}' := \dot{\cup}_{g\in G} \mathcal{R}_{g}$
				\end{itemize}
				Notice that $\Sigma(\overline{M}') = \mathop{\sum}\limits_{g\in G} g  |\mathcal{R}_{g}|   =   \mathop{\sum}\limits_{g\in G}g \mathop{\sum}\limits^{pow(m)}_{k=0} 2^k  n^k_g  = \mathop{\sum}\limits^{pow(m)}_{k=0} 2^k ( \mathop{\sum}\limits_{g\in G} n^k_g  g) = \mathop{\sum}\limits^{pow(m)}_{k=0} 2^k  (\mathop{\sum}\limits^{2^{h_p}}_{\iota = 1} c^{(k;\iota)})$. Recall that $G^{\max} \le \frac{2\upsilon}{m}$, then we have  $|\sum^{pow(m)}_{k=0}c'_k-\Sigma(\overline{M}')|  \le  \sum^{pow(m)}_{k=0} |c'_k - \sum^{2^{h_p}}_{\iota = 1} c^{(k;\iota)} | \le f^{h_p} (\epsilon)\sum^{pow(m)}_{k=0} 2^k\min \{(G_{\log})^{\max},\upsilon\} \le m f^{h_p}(\epsilon) 2^{h_p+1} G^{\max}\le \OO(\epsilon (\log \frac{1}{\epsilon})^3) \upsilon$. We claim that $\overline{M}' \subset \overline{M}$, which is sufficient to observe the followings: (1). $set(\overline{M}') \subset G =set(\overline{M})$; (2). for any $g \in set(\overline{M}')$, the multiplicity of $g$ in $\overline{M}'$ is $|\mathcal{R}_{g}| =   \sum^{pow(m)}_{k=0} 2^k  n^k_g \le 2m \cdot 2^{1+pow(\log_2 \frac{1}{\epsilon}+1)}$; (3). elements in $\overline{M}$ have the same multiplicity of $l_M = 2m \cdot 2^{1+pow(\log_2 \frac{1}{\epsilon}+1)}$. 
				
				To summarize, given any $c\in C$, there exists $\overline{M}' \subset \overline{M}$ such that $|c-\Sigma(\overline{M}')| \le |c-\mathop{\sum}\limits^{pow(m)}_{k=0} c'_k| + |\sum^{pow(m)}_{k=0}c'_k-\Sigma(\overline{M}')|  \le \tilde\OO(\epsilon) \upsilon.$

				\item For any $s \in S(\overline{M})\cap [0,\upsilon]$, since $S(\overline{M})\cap [0, \upsilon] = (G_{\log}\oplus 2G_{\log} \oplus\cdots \oplus 2^{pow(m)}G_{\log}) \cap [0, \upsilon]$, we have $s\in (G_{\log}\oplus 2G_{\log} \oplus\cdots \oplus 2^{pow(m)}G_{\log}) \cap [0, \upsilon]$. Then there $\{s_1,s_2,\cdots,s_{pow(m)}\}$ such that $s = \sum^{pow(m)}_{k=1} s_{k}$, where $s_k \in 2^k G_{\log} \cap [0,\upsilon]$. Consider each $s_k$. Note that $2^k G_{\log} \cap [0,\upsilon] \subset 2^k G_{\log} \cap [0, 2^k\min\{ (G_{\log} )^{\max}, \upsilon \}]$,  then $s_k \in 2^k G_{\log} \cap [0, 2^k\min\{ (G_{\log} )^{\max}, \upsilon \}]$. Recall Observation~\ref{obs:G_k_log_}, $2^k U^{h_p}$ is an $\left(f^{h_p}(\epsilon), 2^k \min \{(G_{\log})^{\max},\upsilon\}\right)$-approximate set for $2^k G_{\log}$, thus there exist $s'_k \in 2^k U^{h_p}$ such that $|s'_k -s_k| \le f^{h_p}(\epsilon) 2^k \min \{(G_{\log})^{\max},\upsilon\}$. Let $s' = \sum^{pow(m)}_{k=0}s'_k$.  Recall that $G^{\max} \le \frac{2\upsilon}{m}$, then $|s-s'|\le f^{h_p}(\epsilon)\sum^{pow(m)}_{k=0} 2^k\min \{(G_{\log})^{\max},\upsilon\}\le f^{h_p}(\epsilon) 2^{pow(m)+1} (G_{\log})^{\max} \le m f^{h_p}(\epsilon) 2^{h_p+1} G^{\max}\le \OO(\epsilon (\log \frac{1}{\epsilon})^3) \upsilon$. Note that $s' \le s+ \tilde\OO(\epsilon)\upsilon \le (1+\tilde\OO(\epsilon))\upsilon$, we have $s' \in  \oplus^{pow(m)}_{k=0} 2^k U^{h_p} \cap [0, (1+\tilde\OO(\epsilon))\upsilon]$, then there exists $s''\in C$ such that $|s'' -s'| \le \tilde\OO(\epsilon) \upsilon$.  To summarize, for any $s \in S(\overline{M})\cap [0,\upsilon]$, there exists $s''\in C$ such that 
				$|s-s''| \le |s-s'|+|s'-s''| \le \tilde\OO(\epsilon)\upsilon.$

			\end{itemize}
			Thus $C$ is an $(\tilde\OO(\epsilon), \upsilon)$-approximate set for $S(\overline{M})$.
			
			In the next, we show that $\textbf{Ora*}$ and $\textbf{Ora}$ directly yields an $\OO(T+\frac{1}{\epsilon}(\log\frac{1}{\epsilon})^2)$-time oracle for backtracking from $C$ to $\overline{M}$. Here $\textbf{Ora*}$ is a $T$-time oracle for backtracking from $C$ to $ U^{h_p}\dot{\cup} 2 U^{h_p} \dot{\cup} \cdots\dot{\cup} 2^{pow(m)}U^{h_p}$, and $\textbf{Ora}$ is an $\OO(\frac{1}{\epsilon}(\log\frac{1}{\epsilon})^2)$-time oracle  for backtracking from $U^{h_p}$ to $\underbrace{G \dot{\cup} G\dot{\cup} \cdots \dot{\cup} G}_{2^{h_p}}$.  Given any $c\in C$, within $T$ processing time, $\textbf{Ora*}$ will return $\{c'_0,c'_1,c'_2,\cdots,c'_{pow(m)}\}$ such that $|c-\mathop{\sum}\limits^{pow(m)}_{k=0} c'_k| \le \tilde\OO(\epsilon) \upsilon$, where $c'_k \in 2^k U^{h_p}\cup \{0\}$ for every $k=0,1,2,\cdots,pow(m)$. Consider each $c'_k$. Recall Observation~\ref{obs:G_k_log_}, $\textbf{Ora}$ directly yields an $\OO(\frac{1}{\epsilon}(\log\frac{1}{\epsilon})^2))$-time oracle for backtracking from $2^k U^{h_p}$ to $2^k\underbrace{ G \dot{\cup}  G \dot{\cup} \cdots \dot{\cup}  G}_{2^{h_p}}$. Thus with the help of $\textbf{Ora}$, within $\OO(\frac{1}{\epsilon}(\log\frac{1}{\epsilon})^2))$ processing time, we can obtain $c^{(k;1)},c^{(k;2)}, \cdots,c^{(k;2^{h_p})}\in G \cup \{0\}$ such that $|c'_k -2^k \sum^{2^{h_p}}_{\iota =1} c^{(k;\iota)}| \le f^{h_p}(\epsilon) 2^k \min \{(G_{\log})^{\max},\upsilon\}$. 	Let $\mathcal{C}_k:=\{c^{(k;1)},c^{(k;2)}, \cdots,c^{(k;2^{h_p})}\}$ for every $k=0,1,2,\cdots,pow(m)$. Note that $m = \lfloor \frac{\omega}{M^{\min}}\rfloor = \OO(\frac{1}{\epsilon})$. The total time to obtain $\mathcal{C}_{k}$ for every $k$ is $\OO(pow(m)\frac{1}{\epsilon} (\log \frac{1}{\epsilon})^2) = \OO(\frac{1}{\epsilon} (\log \frac{1}{\epsilon})^3)$. 
			
			Define $\overline{M}'$ as follows:
			\begin{itemize}
				\item For each $g \in G$ and each integer $0\le k \le pow(m)$, let $n^k_g$ denote the multiplicity of $g$ in $\mathcal{C}_k$, i.e., $n^k_g = card_{\mathcal{C}_k}[g]$.  Let $\mathcal{R}_{g}$ be the multiset consists of $\sum^{pow(m)}_{k=0} 2^k n^k_g$ copies of $g$, that is, $set(\mathcal{R}_{g}) = \{g\}$ and $|\mathcal{R}_{g}| = \sum^{pow(m)}_{k=0} 2^k  n^k_g$. 
				\item Define $\overline{M}' := \dot{\cup}_{g\in G} \mathcal{R}_{g}$
			\end{itemize}
			Same as the previous discussion, we have $\overline{M}' \subset \overline{M}$ and $|c-\Sigma(\overline{M}')| \le \tilde\OO(\epsilon) \upsilon$. Observe that $|\mathcal{R}_{g}|$ is the multiplicity of $g$ in $\overline{M}'$. We now estimate the total time for determining $|\mathcal{R}_{g}|$ for every $g\in set(\overline{M}')$. Given $k$, the time to determine $n^k_g$ for every $g\in G$ is $\OO(2^{h_p})$. Then the total time to determine all $n^k_g$'s is $\OO(pow(m) 2^{h_p})$. Given $g \in G$ and given $n^k_g$ for every $k$, the time to determine $|\mathcal{R}_{g}|$ is $\OO(pow(m))$.  To summarize, the total time to determine $|\mathcal{R}_{g}|$ for every $g \in G$ is $\OO( pow(m) 2^{h_p} + pow(m)  |G|) = \OO(\frac{1}{\epsilon}(\log \frac{1}{\epsilon})^2)$.  
			
			In conclude, given any $c \in C$, with the help of $\textbf{Ora*}$ and $\textbf{Ora}$, in overall $\OO(T+\frac{1}{\epsilon}(\log\frac{1}{\epsilon})^3)$ processing time, we can obtain $\overline{M}' \subset \overline{M}$ such that $|c-\Sigma(\overline{M}')| \le \tilde\OO(\epsilon) \upsilon$, moreover, $card_{\overline{M}'}[g]$ is obtained for every $g \in set(\overline{M}')$.\qed \end{proof}

		
		
		In the following, we will approximate $\oplus^{pow(m)}_{k=0} 2^k U^{h_p}$ and meanwhile build an oracle for backtracking.
		
		\paragraph{\textbf{Approximating $\oplus^{pow(m)}_{k=0} 2^k U^{h_p}$.}} 
		
		Let $\tilde{\upsilon} = (1+\tilde\OO(\epsilon))\upsilon$. Note that $pow(m) = \OO(\log \frac{1}{\epsilon})$. Recall Lemma~\ref{lemma:sub_sum_top}, in $\tilde\OO({\epsilon^{-1}})$ processing time, we can compute an $(\tilde\OO(\epsilon),\tilde{\upsilon})$-approximate set with cardinality of $\OO({\epsilon^{-1}})$ for $\oplus^{pow(m)}_{k=0} 2^k U^{h_p}$. Denote by $C$ this approximate set. At the same time, we can build an $\tilde\OO({\epsilon^{-1}})$-time oracle $\textbf{Ora*}$ for backtracking from $C$ to $U^{h_p}\dot{\cup} 2U^{h_p}\dot{\cup} \cdots \dot{\cup} 2^{pow(m)}U^{h_p}$, that is, given any $c \in C$, within $\tilde\OO({\epsilon^{-1}})$ time, $\textbf{Ora*}$ will return $\{ c'_0,c'_1,c'_2,\cdots,c'_{pow(m)}\}$ such that $|c-\sum^{pow(m)}_{k=0} c'_k| \le \tilde\OO(\epsilon) \upsilon$, where $c'_k \in 2^k U^{h_p}\cup \{0\}$ for every $k=0,1,2,\cdots,pow(m)$.

		Then by Claim~\ref{claim:c_G_M_}, we have proved Lemma~\ref{lemma:sam_mul} for the case that $m > 2^{1+pow(\log_2 \frac{1}{\epsilon}+1)}$, which completes the proof of Lemma~\ref{lemma:sam_mul},  furthermore, completes the proof of Lemma~\ref{lemma:un-subset-sum-l}. Till now, we have completed the proof of Theorem~\ref{the:un_boubded_sum}.

		\section{Conclusion}
		In this paper we present improved approximation schemes for (unbounded) SUBSET SUM and PARTITION. In particular, we give the first subquadratic deterministic weak approximation scheme for SUBSET SUM. 
		However, it is not clear whether a better weak approximation scheme exists for SUBSET SUM and PARTITION. The existing results as well as our results seem to imply that  PARTITION admits a better approximation scheme. This is primarily due to that the target $t$ in PARTITION is $\OO(\Sigma(X))$, and can be leveraged to reduce the running time of FFT. It is not clear, however, whether this is the key fact that determines the fine-grained complexity and makes weak approximating SUBSET SUM harder than PARTITION.

			\bibliographystyle{plain}
			\bibliography{ref-soda}

		\end{document}